\theoremstyle{plain}
\newtheorem{theorem}{Theorem}
\newtheorem{lemma}[theorem]{Lemma}
\newtheorem{corollary}[theorem]{Corollary}
\theoremstyle{definition}
\newtheorem{definition}{Definition}
\theoremstyle{remark}
\newtheorem{remark}{Remark}
\newtheorem{example}{Example}%
\definecolor{vred}{rgb}{.7,0,0}
\definecolor{vblue}{rgb}{.1,.15,.62}
\tikzstyle{transition}=[-stealth]
\tikzstyle{similar state}=[double]
\tikzstyle{transition exists}=[dashed]
\tikzstyle{equivalence}=[<->,double]
\def\qedhere{}
\definecolor{vvblue}{rgb}{.14,.21,.868}%
\renewcommand{\axkey}[1]{\textcolor{vvblue}{#1}}
\renewcommand{\dL}[1][]
{\text{\upshape\textsf{d{\kern-0.05em}L}}\xspace}
\newcommand{\rref}[2][]{\prettyref{#2}}
\renewcommand{\linferPremissSeparation}{~~}%
\newcommand{\bebecomes}{\mathrel{::=}}
\newcommand{\alternative}{~|~}
\providecommand{\dfn}[2][]{\emph{#2}}
\newcommand*{\genDE}[1]{\theta}%
\newcommand{\ivr}{\psi}
\newcommand{\inv}{\varphi}
\renewcommand{\der}[1]{(#1)'}%
\newcommand{\admissible}{\text{admissible}\xspace}
\newcommand{\solf}{y}%
\newcommand*{\linterpretationsconst}[2]{\mathcal{I}}
\newcommand{\I}{\vdLint[const=I,state=\nu]}
\newcommand{\It}{\vdLint[const=I,state=\omega]}
\newcommand{\If}{\DALint[const=I,flow=\varphi]}
\newcommand*{\Iff}[1][\zeta]{\vdLint[const=I,state=\varphi(#1)]}%
\newcommand{\Idot}{\vdLint[const=I,state=]}
\newcommand{\Ia}{\iadjointSubst{\sigma}{\I}}%
\newcommand{\Ita}{\iadjointSubst{\sigma}{\It}}%
\newcommand{\Iminner}{\imodif[const]{\I}{\,\usarg}{d}}%
\newcommand{\Ialt}{\vdLint[const=J,state=\tilde{\nu}]}%
\newcommand{\Italt}{\vdLint[const=J,state=\tilde{\omega}]}%
\newcommand{\Imid}{\vdLint[const=I,state=\nu']}%
\newcommand{\Itmid}{\vdLint[const=I,state=\omega']}%
\newcommand{\Imidstep}{\vdLint[const=I,state=\nu'_z]}%
\newcommand{\Itmidstep}{\vdLint[const=I,state=\omega'_z]}%
\newcommand*{\freevarsdef}[1]{\mathop{\mathrel{\textsf{\upshape F}}\joinrel\mathrel{\textsf{\upshape V}}}(#1)}%
\newcommand*{\boundvarsdef}[1]{\mathop{\mathrel{\textsf{\upshape B}}\joinrel\mathrel{\textsf{\upshape V}}}(#1)}%
  \renewcommand{\iadjointSubst}[2]{%
    \useinterpretation{#2}%
    \edef\tmpadjointconst{{#1}^*_{\Interpretation@state}{\Interpretation@const}}%
    \iconcat[const=\tmpadjointconst]{#2}
  }
\begin{document}
\title{A Complete Uniform Substitution Calculus\\for Differential Dynamic Logic}
\author{Andr\'e Platzer\thanks{%
  Computer Science Department, Carnegie Mellon University, Pittsburgh, USA
  {aplatzer@cs.cmu.edu}
  \newline
  An extended abstract has appeared at CADE 2015 \cite{DBLP:conf/cade/Platzer15} with its proofs listed in \cite{DBLP:journals/corr/Platzer15:usubst}. 
}
}
\date{}

\maketitle
\allowdisplaybreaks
\thispagestyle{empty}

\begin{abstract}
This article introduces a relatively complete proof calculus for \emph{differential dynamic logic} (\dL) that is entirely based on \emph{uniform substitution}, a proof rule that substitutes a formula for a predicate symbol everywhere.
Uniform substitutions make it possible to use \emph{axioms} instead of axiom schemata, thereby substantially simplifying implementations.
Instead of subtle schema variables and soundness-critical side conditions on the occurrence patterns of logical variables to restrict infinitely many axiom schema instances to sound ones, the resulting calculus adopts only a finite number of ordinary \dL formulas as axioms, which uniform substitutions instantiate soundly.
The static semantics of differential dynamic logic and the soundness-critical restrictions it imposes on proof steps is captured exclusively in uniform substitutions and variable renamings as opposed to being spread in delicate ways across the prover implementation.
In addition to sound uniform substitutions, this article introduces \emph{differential forms} for differential dynamic logic that make it possible to internalize differential invariants, differential substitutions, and derivatives as first-class axioms to reason about differential equations axiomatically.
The resulting axiomatization of differential dynamic logic is proved to be sound and relatively complete.
\\[\medskipamount]
\textbf{Keywords:} {Differential dynamic logic, Uniform substitution, Axioms, Differentials, Static semantics, Axiomatization}
\end{abstract}

\section{Introduction}
\allowdisplaybreaks

\emph{Differential dynamic logic} (\dL) \cite{DBLP:journals/jar/Platzer08,DBLP:conf/lics/Platzer12b} is a logic for proving correctness properties of hybrid systems.
It has a sound and complete proof calculus relative to differential equations \cite{DBLP:journals/jar/Platzer08,DBLP:conf/lics/Platzer12b} and a sound and complete proof calculus relative to discrete systems \cite{DBLP:conf/lics/Platzer12b}.
Both sequent calculi \cite{DBLP:journals/jar/Platzer08} and Hilbert-type axiomatizations \cite{DBLP:conf/lics/Platzer12b} have been presented for \dL but only the former have been implemented.
The implementation of \dL's sequent calculus in \KeYmaera \cite{DBLP:conf/cade/PlatzerQ08} makes it straightforward for users to prove properties of hybrid systems, because it provides proof rules that perform natural decompositions for each operator.
The downside is that the implementation of the rule schemata and their different and subtle side conditions on occurrence constraints and relations of reading and writing of variables as well as rule applications in a formula context is quite nontrivial and inflexible in \KeYmaera.

The goal of this article is to identify how to, instead, make it straightforward to implement the proof calculus of differential dynamic logic in a parsimonious way by writing down a finite list of \emph{axioms} (concrete formulas, not axiom schemata that represent an infinite list of axioms subject to sophisticated soundness-critical schema variable matching and side condition checking implementations).
The resulting calculus features more modular axioms that can be combined with one another to regain the effect of a single \dL sequent proof rule.
The axioms are implemented in the object language without meta constructs, which enables a substantially simpler prover core.

As a mechanism for instantiating axioms, this article follows observations for differential game logic \cite{DBLP:journals/tocl/Platzer15} highlighting the significance and elegance of \emph{uniform substitution}, a classical proof rule for first-order logic \cite[\S35,40]{Church_1956}.
Uniform substitutions uniformly instantiate predicate and function symbols with formulas and terms, respectively, as functions of their arguments.
In the presence of the nontrivial binding structure that nondeterminism and differential equations of hybrid systems induce for the dynamic modalities of differential dynamic logic, flexible but sound uniform substitutions become more complex, but can still be read off directly from the static semantics.
The static semantics of \dL directly determines uniform substitutions (and variable renamings), which, in turn, are the only elements of the prover core that need to know anything about the language and its static semantics.
A proof may simply start from a \dL formula that is an axiom.

This approach is dual to other successful ways of solving the intricacies and subtleties of substitutions \cite{DBLP:journals/jsl/Church40,DBLP:journals/jsl/Henkin53} by imposing occurrence side conditions on axiom schemata and proof rules, which is what uniform substitutions get rid of.
The uniform substitution framework shares many goals with other logical frameworks \cite{DBLP:books/el/RV01/Pfenning01}, including leading to smaller soundness-critical cores, more flexibility when augmenting reasoning techniques, and reducing the gap between a logic and its theorem prover.
Logical frameworks shine when renaming and substitution of the object language are in line with those of the meta-language.
Uniform substitutions provide a simpler approach for languages with the intricate binding of imperative and especially hybrid system dynamics in which, e.g., the same occurrence of a variable can be both free and bound.

Side conditions for axiom schemata can be nontrivial.
Classical \dL calculi \cite{DBLP:journals/jar/Platzer08,DBLP:conf/lics/Platzer12b} have an axiom schema expressing that a formula $\phi$ holds always after following a differential equation \(\pevolve{\D{x}=\genDE{x}}\) (as expressed by \dL formula \(\dbox{\hevolve{\D{x}=\genDE{x}}}{\phi}\)) iff $\phi$ holds for all times $t\geq0$ after the discrete assignment \({\pupdate{\pumod{x}{\solf(t)}}}\) assigning the solution $\solf(t)$ to $x$:
\[
      \cinferenceRule[evolveb|$\dibox{'}$]{evolve}
      {\linferenceRule[equiv]
        {\lforall{t{\geq}0}{\dbox{\pupdate{\pumod{x}{\solf(t)}}}{\phi}}}
        {\dbox{\hevolve{\D{x}=\genDE{x}}}{\phi}}
        \quad
      }{\m{\D{\solf}(t)=\genDE{\solf}}}%
\]
Soundness-critical side conditions need to ensure that $t$ is a sufficiently fresh variable and that $\solf(t)$ indeed solves the differential equation and obeys the symbolic initial value condition $\solf(0)=x$.
Uniform substitutions obviate the need for such side conditions.
An axiom is simply a single object-level formula as opposed to an algorithm accepting infinitely many formulas under certain side conditions.
A proof rule is simply a pair of object-level formulas as opposed to an algorithm transforming  formulas.
Derived axioms, derived rules, rule application mechanisms, lemmas, definitions, and parametric invariant search are all definable from uniform substitutions.
\emph{Differential forms} are added to \dL in this article for the purpose of internalizing differential invariants \cite{DBLP:journals/logcom/Platzer10}, differential cuts \cite{DBLP:journals/logcom/Platzer10,DBLP:journals/lmcs/Platzer12}, differential ghosts \cite{DBLP:journals/lmcs/Platzer12}, differential substitutions, total differentials and Lie-derivatives \cite{DBLP:journals/logcom/Platzer10,DBLP:journals/lmcs/Platzer12} as separate first-class axioms in \dL.

This article presents a highly modular and straightforward approach.
It introduces \emph{differential-form} \dL and its dynamic semantics,
and proves its static semantics sound for the dynamic semantics (\rref{sec:dL}).
It then defines uniform substitutions that are proved sound from the static semantics (\rref{sec:usubst}).
Uniform substitutions enable a parsimoniously implementable axiomatization (\rref{sec:dL-axioms}) with concrete \dL formulas (or pairs for rules), which are proved sound individually from the dynamic semantics (\rref{sec:differential}) without having to worry how they might be instantiated.
Differential forms are used to obtain axioms for proving properties of differential equations (\rref{sec:differential}).
This modular approach with separate soundness proofs is to be contrasted with previous complex built-in algorithms that mix multiple axioms into special-purpose rules \cite{DBLP:journals/logcom/Platzer10,DBLP:journals/lmcs/Platzer12}.
Finally, the logic is proved to be sound and relatively complete (\rref{sec:differential}).
Proofs are provided in \rref{app:proofs}.
Overall, uniform substitutions significantly simplify prover core implementations, because uniform substitutions are straightforward and reduce implementing axioms and axiomatic rules to copy\&paste.

\newsavebox{\Rval}%
\sbox{\Rval}{$\scriptstyle\mathbb{R}$}
\irlabel{qear|\usebox{\Rval}}

\newsavebox{\USarg}%
\sbox{\USarg}{$\boldsymbol{\cdot}$}

\newsavebox{\UScarg}%
\sbox{\UScarg}{$\boldsymbol{\_}$}

\newsavebox{\Lightningval}%
\sbox{\Lightningval}{$\scriptstyle\textcolor{vred}{\lightning}$}
\irlabel{clash|clash\usebox{\Lightningval}} %
\irlabel{unsound|\usebox{\Lightningval}}  %

\irlabel{invind|ind}

\section{Differential-Form Differential Dynamic Logic} \label{sec:dL}

This section presents \emph{differential-form differential dynamic logic}, which adds differential forms to differential dynamic logic \cite{DBLP:journals/jar/Platzer08,DBLP:conf/lics/Platzer12b} in order to axiomatically internalize reasoning about differential equations and differentials as first-class citizens.
Because the logic itself did not otherwise change, the relationship to related work from previous presentations of differential dynamic logic \cite{DBLP:journals/jar/Platzer08,DBLP:conf/lics/Platzer12b} continues to apply.
The primary purpose of the uniform substitution approach is to lead to a significantly simpler implementation, which could benefit other approaches \cite{DavorenNerode_2000,DBLP:conf/aplas/LiuLQZZZZ10,DBLP:journals/iac/CimattiRT15}, too.

\subsection{Syntax}
This section defines the syntax of the language of (differential-form) differential dynamic logic \dL and its hybrid programs.
The syntax first defines terms.
The set of all \emph{variables} is $\allvars$.
Variables of the form $\D{x}$ for a variable $x\in\allvars$ are called \emph{differential symbols}.
Differential symbol $\D{x}$ is just an independent variable associated to variable $x$. 
For any subset $V\subseteq\allvars$ is \(\D{V}\mdefeq\{\D{x} : x\in V\}\) the set of \emph{differential symbols} $\D{x}$ for the variables in $V$.
The set of all variables is assumed to already contain all its differential symbols $\D{\allvars}\subseteq\allvars$.
So $x\in\allvars$ implies $\D{x},\D[2]{x}\in\allvars$ etc.\ even if $\D[2]{x}$ is not used here.
Function symbols are written $f,g,h$, predicate symbols $p,q,r$, and variables $x,y,z\in\allvars$ with corresponding differential symbols $\D{x},\D{y},\D{z}\in\D{\allvars}$.
Program constants are $a,b,c$.

\begin{definition}[Terms]
\emph{Terms} are defined by this grammar
(with $\theta,\eta,\theta_1,\dots,\theta_k$ as terms, $x\in\allvars$ as variable,
and $f$ as function symbol):
\[
  \theta,\eta ~\bebecomes~
  x
  \alternative
  f(\theta_1,\dots,\theta_k)
  \alternative
  \theta+\eta
  \alternative
  \theta\cdot\eta
  \alternative \der{\theta}
\]
\end{definition}
Number literals such as 0,1 are allowed as function symbols without arguments that are interpreted as the numbers they denote.
Occasionally, constructions will be simplified by considering $\theta+\eta$ and $\theta\cdot\eta$ as special cases of function symbols $f(\theta,\eta)$, but $+$ and $\cdot$ always denote addition and multiplication.
\emph{Differential-form \dL} allows \emph{differentials} \(\der{\theta}\) of terms $\theta$ as terms for the purpose of axiomatically internalizing reasoning about differential equations.
The differential \(\der{\theta}\) describes how the value of $\theta$ changes locally depending on how the values of its variables $x$ change, i.e.\ as a function of the values of the corresponding differential symbols $\D{x}$.
Differentials will make it possible to reduce reasoning about \emph{differential equations} to reasoning about \emph{equations of differentials}, which, quite unlike differential equations, have a local semantics in isolated states and are, thus, amenable to an axiomatic treatment.

Formulas and hybrid programs (\HPs) of \dL are defined by simultaneous induction, because formulas can occur in programs and programs can occur in formulas.
Similar simultaneous inductions are, thus, used throughout the proofs in this article.
\begin{definition}[\dL formula]
The \emph{formulas of (differential-form) differential dynamic logic} ({\dL}) are defined by the grammar
(with \dL formulas $\phi,\psi$, terms $\theta,\eta,\theta_1,\dots,\theta_k$, predicate symbol $p$, quantifier symbol $C$, variable $x$, \HP $\alpha$):
  \[
  \phi,\psi ~\bebecomes~
  \theta\geq\eta \alternative
  p(\theta_1,\dots,\theta_k) \alternative
  \contextapp{C}{\phi} \alternative
  \lnot \phi \alternative
  \phi \land \psi \alternative
  \lforall{x}{\phi} \alternative 
  \lexists{x}{\phi} \alternative
  \dbox{\alpha}{\phi}
  \alternative \ddiamond{\alpha}{\phi}
  \]
\end{definition}
Operators $>,\leq,<,\lor,\limply,\lbisubjunct$ are definable, e.g., \(\phi\limply\psi\) as \(\lnot(\phi\land\lnot\psi)\).
Also \(\dbox{\alpha}{\phi}\) is equivalent to \(\lnot\ddiamond{\alpha}{\lnot\phi}\) and \(\lforall{x}{\phi}\) equivalent to \(\lnot\lexists{x}{\lnot\phi}\).
The modal formula \(\dbox{\alpha}{\phi}\) expresses that $\phi$ holds after all runs of $\alpha$, while the dual \(\ddiamond{\alpha}{\phi}\) expresses that $\phi$ holds after some run of $\alpha$.
\emph{Quantifier symbols} $C$ (with formula $\phi$ as argument), i.e.\ higher-order predicate symbols that bind all variables of $\phi$, are unnecessary but included for convenience since they internalize contextual congruence reasoning efficiently with uniform substitutions.
The concrete quantifier chain in \(\lforall{x}{\lexists{y}{\phi}}\) evaluates the formula $\phi$ at multiple $x$ and $y$ values to determine whether the whole formula is true.
Similarly, an abstract quantifier symbol $C$ can evaluate its formula argument $\phi$ for different variable values to determine whether $\contextapp{C}{\phi}$ is true.
Whether $\contextapp{C}{\phi}$ is true, and where exactly $C$ evaluates its argument $\phi$ to find out, depends on the interpretation of $C$.

\begin{definition}[Hybrid program]
\emph{Hybrid programs} (\HPs) are defined by the following grammar (with $\alpha,\beta$ as \HPs, program constant $a$, variable $x$, term $\theta$ possibly containing $x$,
and with \dL formula\footnote{%
Quantifier-free formulas of first-order logic of real arithmetic are enough for most purposes.
} $\ivr$):
\[
  \alpha,\beta ~\bebecomes~
  a\alternative
  \pupdate{\pumod{x}{\theta}}
  \alternative
  \ptest{\ivr}
  \alternative
  \pevolvein{\D{x}=\genDE{x}}{\ivr}
  \alternative
  \alpha\cup\beta
  \alternative
  \alpha;\beta
  \alternative
  \prepeat{\alpha}
\]
\end{definition}
\emph{Assignments} \m{\pupdate{\pumod{x}{\theta}}} of $\theta$ to variable $x$, \emph{tests} \m{\ptest{\ivr}} of the formula $\ivr$ in the current state, \emph{differential equations} \(\pevolvein{\D{x}=\genDE{x}}{\ivr}\) restricted to the evolution domain $\ivr$, \emph{nondeterministic choices} \(\pchoice{\alpha}{\beta}\), \emph{sequential compositions} \(\alpha;\beta\), and \emph{nondeterministic repetition} \(\prepeat{\alpha}\) are as usual in \dL \cite{DBLP:journals/jar/Platzer08,DBLP:conf/lics/Platzer12b}.
The assignment \m{\pupdate{\pumod{x}{\theta}}} instantaneously changes the value of $x$ to that of $\theta$.
The test $\ptest{\ivr}$ checks whether $\ivr$ is true in the current state and discards the program execution otherwise.
The continuous evolution \(\pevolvein{\D{x}=\genDE{x}}{\ivr}\) will follow the differential equation \(\D{x}=\genDE{x}\) for any nondeterministic amount of time, but cannot leave the region where the evolution domain constraint $\ivr$ holds.
For example, \(\pevolvein{\D{x}=v\syssep\D{v}=a}{v\geq0}\) follows the differential equation where position $x$ changes with time-derivative $v$ while the velocity $v$ changes with time-derivative $a$ for any arbitrary amount of time, but without ever allowing a negative velocity $v$ (which would, otherwise, ultimately happen for negative accelerations $a<0$).
Usually, the value of differential symbol $\D{x}$ is unrelated to the value of variable $x$.
But along a differential equation \(\pevolve{\D{x}=\genDE{x}}\), differential symbol $\D{x}$ has the value of the time-derivative of the value of $x$ (and is, furthermore, equal to $\genDE{x}$).
Differential equations \(\pevolvein{\D{x}=\genDE{x}}{\ivr}\) have to be in explicit form, so $\D{y}$ and $\der{\eta}$ cannot occur in $\genDE{x}$ and $x\not\in\D{\allvars}$.
The nondeterministic choice \(\pchoice{\alpha}{\beta}\) either executes subprogram $\alpha$ or $\beta$, nondeterministically.
The sequential composition \(\alpha;\beta\) first executes $\alpha$ and then, upon completion of $\alpha$, runs $\beta$.
The nondeterministic repetition $\prepeat{\alpha}$ repeats $\alpha$ any number of times, nondeterministically.

The effect of an assignment \m{\Dupdate{\Dumod{\D{x}}{\theta}}} to differential symbol $\D{x}\in\allvars$, called \emph{differential assignment}, is like the effect of an assignment \m{\pupdate{\pumod{x}{\theta}}} to variable $x$, except that it changes the value of the differential symbol $\D{x}$ instead of the value of $x$.
It is not to be confused with the differential equation \(\pevolve{\D{x}=\genDE{x}}\), which will follow said differential equation continuously for an arbitrary amount of time.
The differential assignment \m{\Dupdate{\Dumod{\D{x}}{\theta}}}, instead, only assigns the value of $\theta$ to the differential symbol $\D{x}$ discretely once at an instant of time.
Program constants $a$ are uninterpreted, i.e.\ their behavior depends on the interpretation in the same way that the values of function symbols $f$, predicate symbols $p$, and quantifier symbols $C$ depend on their interpretation.

\begin{example}[Simple car]
The \dL formula
\begin{equation}
v\geq2 \land b>0 \limply \dbox{\prepeat{((\pchoice{\pupdate{\pumod{a}{-b}}}{\pupdate{\pumod{a}{5}}});~ \pevolvein{\D{x}=v\syssep\D{v}=a}{v\geq0})}}{\,v\geq0}
\label{eq:no-backwards}
\end{equation}
expresses that a car starting with velocity $v\geq2$ and braking constant $b>0$ will always have nonnegative velocity $v\geq0$ when following a \HP that repeatedly provides a nondeterministic control choice between putting the acceleration $a$ to braking (\m{\pupdate{\pumod{a}{-b}}}) or to a positive constant (\m{\pupdate{\pumod{a}{5}}}) 
before following the differential equation system
\(\pevolve{\D{x}=v\syssep\D{v}=a}\) restricted to the evolution domain constraint \(v\geq0\) for any amount of time.
The formula in \rref{eq:no-backwards} is true, because the car never moves backward in the \HP.
But similar questions quickly become challenging, e.g., about safe distances to other cars or for models with more detailed physical dynamics.
\end{example}

\subsection{Dynamic Semantics}

The (denotational) dynamic semantics of \dL defines, depending on the values of the symbols, what real value terms evaluate to, what truth-value formulas have, and from what initial states which final states are reachable by running its \HPs.
Since the values of variables and differential symbols can change over time, they receive their value by the state.
A \emph{state} is a mapping \(\iget[state]{\I}:\allvars\to\reals\) from variables $\allvars$ including differential symbols $\D{\allvars}\subseteq\allvars$ to $\reals$.
The set of states is denoted \(\linterpretations{\Sigma}{V}\).
The set \(\scomplement{X} = \linterpretations{\Sigma}{V}\setminus X\) is the complement of a set $X\subseteq\linterpretations{\Sigma}{V}$.
Let
\m{\iget[state]{\imodif[state]{\I}{x}{r}}} denote the state that agrees with state~$\iget[state]{\I}$ except for the value of variable~\m{x}, which is changed to~\m{r \in \reals}.
The interpretation of a function symbol $f$ of arity $n$ (i.e.\ with $n$ arguments) in \emph{interpretation} $\iget[const]{\I}$ is a (smooth, i.e.\ with derivatives of any order) function \(\iget[const]{\I}(f):\reals^n\to\reals\) of $n$ arguments (continuously differentiable suffices).
The set of interpretations is denoted $\linterpretationsconst{\Sigma}{V}$.
The semantics of a term $\theta$ is a mapping \(\ivaluation{}{\theta} : \linterpretationsconst{\Sigma}{V} \to (\linterpretations{\Sigma}{V} \to \reals)\) from interpretations and states to a real number.

\begin{definition}[Semantics of terms] \label{def:dL-valuationTerm}
The \emph{semantics of a term} $\theta$ in interpretation $\iget[const]{\I}$ and state $\iget[state]{\I}\in\linterpretations{\Sigma}{V}$ is its value \m{\ivaluation{\I}{\theta}} in $\reals$.
It is defined inductively as follows
\begin{compactenum}
\item \m{\ivaluation{\I}{x} = \iget[state]{\I}(x)} for variable $x\in\allvars$
\item \(\ivaluation{\I}{f(\theta_1,\dots,\theta_k)} = \iget[const]{\I}(f)\big(\ivaluation{\I}{\theta_1},\dots,\ivaluation{\I}{\theta_k}\big)\) for function symbol $f$
\item \m{\ivaluation{\I}{\theta+\eta} = \ivaluation{\I}{\theta} + \ivaluation{\I}{\eta}}
\item \m{\ivaluation{\I}{\theta\cdot\eta} = \ivaluation{\I}{\theta} \cdot \ivaluation{\I}{\eta}}
\item
\m{
\ivaluation{\I}{\der{\theta}}
\displaystyle
=
\sum_{x\in\allvars} \iget[state]{\I}(\D{x}) \Dp[x]{\ivaluation{\Idot}{\theta}}(\iget[state]{\I})
= \sum_{x\in\allvars} \iget[state]{\I}(\D{x}) \Dp[x]{\ivaluation{\I}{\theta}}
}
\end{compactenum}
\end{definition}

\noindent
Time-derivatives are undefined in an isolated state $\iget[state]{\I}$.
The clou is that differentials can, nevertheless, be given a local semantics in a single state:
\m{\ivaluation{\I}{\der{\theta}}} is the sum of all (analytic) spatial partial derivatives at $\iget[state]{\I}$ of the value of $\theta$ by all variables $x$ multiplied by the corresponding direction described by the value $\iget[state]{\I}(\D{x})$ of differential symbol $\D{x}$.
That sum over all variables $x\in\allvars$ is finite, because $\theta$ only mentions finitely many variables $x$ and the partial derivative by variables $x$ that do not occur in $\theta$ is 0.
As usual, $\Dp[x]{g}(\iget[state]{\I})$ is the partial derivative of function $g$ at point $\iget[state]{\I}$ by variable $x$, which is sometimes also just denoted $\Dp[x]{g(\iget[state]{\I})}$.
Hence, the partial derivative \(\Dp[x]{\ivaluation{\Idot}{\theta}}(\iget[state]{\I}) = \Dp[x]{\ivaluation{\I}{\theta}}\)
is the derivative of the one-dimensional function
\({\def\Im{\imodif[state]{\I}{x}{X}}%
\ivaluation{\Im}{\theta}}\) of $X$ at $\iget[state]{\I}(x)$.
The spatial partial derivatives exist since $\ivaluation{\I}{\theta}$ is a composition of smooth functions, so is itself smooth.
Thus, the semantics of \m{\ivaluation{\I}{\der{\theta}}} is the \emph{differential}\footnote{%
The usual point-free abuse of notation aligns \rref{def:dL-valuationTerm} with its mathematical counterparts by rewriting the differential as 
\(\ivalues{\I}{\der{\theta}} = d(\ivalues{\I}{\theta}) = \sum_{i=1}^n \Dp[x^i]{\ivalues{\I}{\theta}} dx^i\)
when $x^1,\dots,x^n$ are the variables in $\theta$ and their differentials \(dx^i\) form the basis of the cotangent space, which, when evaluated at a point $\iget[state]{\I}$ whose values \(\iget[state]{\I}(\D{x})\) determine the actual tangent vector alias vector field.
}
of (the value of) $\theta$, hence a differential one-form giving a real value for each tangent vector (i.e.\ point of a vector field) described by the values \(\iget[state]{\I}(\D{x})\).
The values \m{\iget[state]{\I}(\D{x})} of the differential symbols $\D{x}$ select the direction in which $x$ changes, locally.
The partial derivatives of \(\ivaluation{\I}{\theta}\) by $x$ describe how the value of $\theta$ changes with a change of $x$.
Along the solution of (the vector field corresponding to) a differential equation, the value of differential \m{\der{\theta}} coincides with the analytic time-derivative of $\theta$ (\rref{lem:differentialLemma}).

The semantics of a formula $\phi$ is a mapping \(\ivaluation{}{\phi} : \linterpretationsconst{\Sigma}{V} \to \powerset{\linterpretations{\Sigma}{V}}\) 
from interpretations to the set of all states in which $\phi$ is true, where \(\powerset{\linterpretations{\Sigma}{V}}\) is the powerset of \(\linterpretations{\Sigma}{V}\).
The semantics of an \HP $\alpha$ is a mapping \(\ivaluation{}{\alpha} : \linterpretationsconst{\Sigma}{V} \to \powerset{\linterpretations{\Sigma}{V}\times\linterpretations{\Sigma}{V}}\) from interpretations to a reachability relation on states.
The set of states $\imodel{\I}{\phi} \subseteq \linterpretations{\Sigma}{V}$ in which formula $\phi$ is true
and the relation \m{\iaccess[\alpha]{\I}\subseteq\linterpretations{\Sigma}{V}\times\linterpretations{\Sigma}{V}} of \HP $\alpha$ are defined by simultaneous induction as their syntax is simultaneously inductive.
The interpretation of predicate symbol $p$ with arity $n$ is an $n$-ary relation \(\iget[const]{\I}(p)\subseteq\reals^n\).
The interpretation of quantifier symbol $C$ is a functional \(\iget[const]{\I}(C) : \powerset{\linterpretations{\Sigma}{V}} \to \powerset{\linterpretations{\Sigma}{V}}\) mapping sets \(M\subseteq\linterpretations{\Sigma}{V}\) of states where its argument is true to sets \(\iget[const]{\I}(C)(M)\subseteq\linterpretations{\Sigma}{V}\) of states where $C$ applied to that argument is then true.

\begin{definition}[\dL semantics] \label{def:dL-valuation}
The \emph{semantics of a \dL formula} $\phi$, for each interpretation $\iget[const]{\I}$ with a corresponding set of states $\linterpretations{\Sigma}{V}$, is the subset \m{\imodel{\I}{\phi}\subseteq\linterpretations{\Sigma}{V}} of states in which $\phi$ is true.
It is defined inductively as follows
\begin{compactenum}
\item \(\imodel{\I}{\theta\geq\eta} = \{\iget[state]{\I} \in \linterpretations{\Sigma}{V} \with \ivaluation{\I}{\theta}\geq\ivaluation{\I}{\eta}\}\)
\item \(\imodel{\I}{p(\theta_1,\dots,\theta_k)} = \{\iget[state]{\I} \in \linterpretations{\Sigma}{V} \with (\ivaluation{\I}{\theta_1},\dots,\ivaluation{\I}{\theta_k})\in\iget[const]{\I}(p)\}\)
\item \(\imodel{\I}{\contextapp{C}{\phi}} = \iget[const]{\I}(C)\big(\imodel{\I}{\phi}\big)\) for quantifier symbol $C$
\item \(\imodel{\I}{\lnot\phi} = \scomplement{(\imodel{\I}{\phi})}
= \linterpretations{\Sigma}{V}\setminus(\imodel{\I}{\phi})\) 
\item \(\imodel{\I}{\phi\land\psi} = \imodel{\I}{\phi} \cap \imodel{\I}{\psi}\)

\item
\(\def\Im{\imodif[state]{\I}{x}{r}}%
\imodel{\I}{\lexists{x}{\phi}} =  \{\iget[state]{\I} \in \linterpretations{\Sigma}{V} \with \iget[state]{\Im} \in \imodel{\I}{\phi} ~\text{for some}~r\in\reals\}\)

\item
\(\def\Im{\imodif[state]{\I}{x}{r}}%
\imodel{\I}{\lforall{x}{\phi}} =  \{\iget[state]{\I} \in \linterpretations{\Sigma}{V} \with \iget[state]{\Im} \in \imodel{\I}{\phi} ~\text{for all}~r\in\reals\}\)

\item \label{case:diamond-semantics}
\(\imodel{\I}{\ddiamond{\alpha}{\phi}} = \iaccess[\alpha]{\I}\compose\imodel{\I}{\phi}\)
\(=\{\iget[state]{\I} \with \imodels{\It}{\phi} ~\text{for some}~\iget[state]{\It}~\text{such that}~\iaccessible[\alpha]{\I}{\It}\}\)

\item \(\imodel{\I}{\dbox{\alpha}{\phi}} = \imodel{\I}{\lnot\ddiamond{\alpha}{\lnot\phi}}\)
\(=\{\iget[state]{\I} \with \imodels{\It}{\phi} ~\text{for all}~\iget[state]{\It}~\text{such that}~\iaccessible[\alpha]{\I}{\It}\}\)

\end{compactenum}
A \dL formula $\phi$ is true at state $\iget[state]{\I}$ in $\iget[const]{\I}$, also written \(\iportray{\I} \models \phi\) iff \(\imodels{\I}{\phi}\).
A \dL formula $\phi$ is \emph{valid in $\iget[const]{\I}$}, written \m{\iget[const]{\I}\models{\phi}}, iff \m{\imodel{\I}{\phi}=\linterpretations{\Sigma}{V}}, i.e.\ \m{\imodels{\I}{\phi}} for all states $\iget[state]{\I}$.
Formula $\phi$ is \emph{valid}, written \m{\entails\phi}, iff \m{\iget[const]{\I}\models{\phi}} for all interpretations $\iget[const]{\I}$.
\end{definition}

\noindent
The relation composition operator $\circ$ in \rref{case:diamond-semantics} is also used for sets which are unary relations.
The interpretation of program constant $a$ is a state-transition relation \(\iget[const]{\I}(a)\subseteq\linterpretations{\Sigma}{V}\times\linterpretations{\Sigma}{V}\),
where \(\related{\iget[const]{\I}(a)}{\iget[state]{\I}}{\iget[state]{\It}}\) iff \HP $a$ can run from initial state $\iget[state]{\I}$ to final state $\iget[state]{\It}$.

\begin{definition}[Transition semantics of \HPs] \label{def:HP-transition}
For each interpretation $\iget[const]{\I}$, each \HP $\alpha$ is interpreted semantically as a binary transition relation \m{\iaccess[\alpha]{\I}\subseteq\linterpretations{\Sigma}{V}\times\linterpretations{\Sigma}{V}} on states, defined inductively by
\begin{compactenum}
\item \m{\iaccess[a]{\I} = \iget[const]{\I}(a)} for program constants $a$
\item
\(\def\Im{\imodif[state]{\I}{x}{r}}%
\iaccess[\pupdate{\pumod{x}{\theta}}]{\I} = \{(\iget[state]{\I},\iget[state]{\Im}) \with r=\ivaluation{\I}{\theta}\}
= \{(\iget[state]{\I},\iget[state]{\It}) \with 
 \iget[state]{\It}=\iget[state]{\I}~\text{except}~%
 \iget[state]{\It}(x)=\ivaluation{\I}{\theta}\}
\)

\item \m{\iaccess[\ptest{\ivr}]{\I} = \{(\iget[state]{\I},\iget[state]{\I}) \with \imodels{\I}{\ivr}\}}
\item
  \m{\iaccess[\pevolvein{\D{x}=\genDE{x}}{\ivr}]{\I} = \{(\iget[state]{\I},\iget[state]{\It}) \with
  \iget[state]{\I}=\iget[state]{\Iff[0]}} on $\scomplement{\{\D{x}\}}$ and \(\iget[state]{\It}=\iget[state]{\Iff[r]}\)
  for some function \m{\iget[flow]{\If}:[0,r]\to\linterpretations{\Sigma}{V}} of some duration $r$
  satisfying \m{\imodels{\If}{\D{x}=\genDE{x}\land\ivr}}$\}$
  \\
  where \m{\imodels{\If}{\D{x}=\genDE{x}\land\ivr}}
  iff
  \(\imodels{\Iff[\zeta]}{\D{x}=\genDE{x}\land\ivr}\)
  and
  \(\iget[state]{\Iff[0]}=\iget[state]{\Iff[\zeta]}\) on $\scomplement{\{x,\D{x}\}}$ 
  for all \(0\leq \zeta\leq r\)
  and if
  \(\D[t]{\iget[state]{\Iff[t]}(x)}(\zeta)\) exists and is equal to \(\iget[state]{\Iff[\zeta]}(\D{x})\) for all \(0\leq \zeta\leq r\).

\item \m{\iaccess[\pchoice{\alpha}{\beta}]{\I} = \iaccess[\alpha]{\I} \cup \iaccess[\beta]{\I}}

\item
\newcommand{\Iz}{\iconcat[state=\mu]{\I}}
\m{\iaccess[\alpha;\beta]{\I} = \iaccess[\alpha]{\I} \compose\iaccess[\beta]{\I}}
\(= \{(\iget[state]{\I},\iget[state]{\It}) : (\iget[state]{\I},\iget[state]{\Iz}) \in \iaccess[\alpha]{\I},  (\iget[state]{\Iz},\iget[state]{\It}) \in \iaccess[\beta]{\I} ~\text{for some}~\iget[state]{\Iz}\}\)

\item \m{\iaccess[\prepeat{\alpha}]{\I} = \displaystyle
\closureTransitive{\big(\iaccess[\alpha]{\I}\big)}
=
\cupfold_{n\in\naturals}\iaccess[{\prepeat[n]{\alpha}}]{\I}} 
with \m{\prepeat[n+1]{\alpha} \mequiv \prepeat[n]{\alpha};\alpha} and \m{\prepeat[0]{\alpha}\mequiv\,\ptest{\ltrue}}
\end{compactenum}
where $\closureTransitive{\rho}$ denotes the reflexive transitive closure of relation $\rho$.
\end{definition}
The equality in \m{\iaccess[\prepeat{\alpha}]{\I}} follows from the Scott-continuity of \HPs \cite[Lemma\,3.7]{DBLP:journals/tocl/Platzer15}.
The case \(\iaccess[\pevolvein{\D{x}=\genDE{x}}{\ivr}]{\I}\) expresses that $\iget[flow]{\If}$ solves the differential equation
   and satisfies $\ivr$ at all times.
   In case $r=0$, the only condition is that \(\iget[state]{\I}=\iget[state]{\It}\) on $\scomplement{\{\D{x}\}}$ and \(\iget[state]{\It}(\D{x})=\ivaluation{\It}{\genDE{x}}\) and \(\imodels{\It}{\ivr}\).
Since \(\iget[state]{\I}\) and \(\iget[state]{\Iff[0]}\) are only assumed to agree on the complement $\scomplement{\{\D{x}\}}$ of the set $\{\D{x}\}$, the initial values \(\iget[state]{\I}(\D{x})\) of differential symbols $\D{x}$ do \emph{not} influence the behavior of \(\iaccessible[\pevolvein{\D{x}=\genDE{x}}{\ivr}]{\I}{\It}\), because they may not be compatible with the time-derivatives for the differential equation, e.g. in \m{\Dupdate{\Dumod{\D{x}}{1}};\pevolve{\D{x}=2}}  with a discontinuity in $\D{x}$.
The final values \(\iget[state]{\It}(\D{x})\) after \(\pevolvein{\D{x}=\genDE{x}}{\ivr}\) will coincide with the derivatives at the final state, though, even for evolutions of duration zero.

\subsection{Static Semantics} \label{sec:static-semantics}

The dynamic semantics gives a precise meaning to \dL formulas and \HPs but is inaccessible for effective reasoning purposes.
By contrast, the static semantics of \dL and \HPs defines only simple aspects of the dynamics concerning the variable usage that follows more directly from the syntactic structure without running the programs or evaluating their dynamical effects.
The correctness of uniform substitutions depends only on the static semantics, which identifies free variables ($\freevarsdef{\theta},\freevarsdef{\phi},\freevarsdef{\alpha}$ of terms $\theta$, formulas $\phi$ and programs $\alpha$) and bound variables ($\boundvarsdef{\alpha}$).
The static semantics first characterizes free and bound variables semantically from the dynamic semantics and subsequently shows algorithms for computing them conservatively.
\begin{definition}[Static semantics] \label{def:static-semantics}
The \emph{static semantics} defines the \emph{free variables}, which are all variables that the value of an expression depends on,
as well as \emph{bound variables}, which can change their value during the evaluation of an expression, as follows:
\begin{align*}
  \freevarsdef{\theta} &= 
\renewcommand{\Ialt}{\vdLint[const=I,state=\tilde{\nu}]}%
  \cupfold \{x \in \allvars \,:~ \text{there are}~\iget[const]{\I}~\text{and}~\iget[state]{\I}=\iget[state]{\Ialt} ~\text{on}~\scomplement{\{x\}} ~ \text{such that} ~ \ivaluation{\I}{\theta}\neq\ivaluation{\Ialt}{\theta}\}\\
  \freevarsdef{\phi} &= \cupfold \{x \in \allvars \,:~ \text{there are}~\iget[const]{\I}~\text{and}~\iget[state]{\I}=\iget[state]{\Ialt} ~\text{on}~\scomplement{\{x\}} ~ \text{such that} ~ \imodels{\I}{\theta}\not\ni\iget[state]{\Ialt}\}\\
  \freevarsdef{\alpha} &= \cupfold \{x \in \allvars \,:~ \text{there are}~\iget[const]{\I},\iget[state]{\I},\iget[state]{\Ialt},\iget[state]{\It} ~\text{such that}~\iget[state]{\I}=\iget[state]{\Ialt} ~\text{on}~\scomplement{\{x\}}~ \text{and}~\iaccessible[\alpha]{\I}{\It}\\
  &
\renewcommand{\Ialt}{\vdLint[const=I,state=\tilde{\nu}]}%
  \phantom{= \cupfold \{x \in \allvars \,:~}~
  \text{but there is no}~\iget[state]{\Italt} ~\text{with}~\iget[state]{\It}=\iget[state]{\Italt} ~\text{on}~\scomplement{\{x\}} ~\text{such that}~\iaccessible[\alpha]{\Ialt}{\Italt}\}\\
  \boundvarsdef{\alpha} &= \cupfold \{x \in \allvars \,:~ \text{there are}~\iget[const]{\I}~\text{and}~\iaccessible[\alpha]{\I}{\It} ~ \text{such that} ~ \iget[state]{\I}(x)\neq\iget[state]{\It}(x)\}
\end{align*}
The \dfn{signature}, i.e.\ set of function, predicate, quantifier symbols, and program constants in $\phi$ is denoted \(\intsigns{\phi}\); accordingly \(\intsigns{\theta}\) for term\,$\theta$ and \(\intsigns{\alpha}\) for program\,$\alpha$.
\end{definition}

For example, only $\{v,b,x\}$ are free variables of the formula \rref{eq:no-backwards}, yet $\{a,x,\D{x},v,\D{v}\}$ are the bound variables of its program. 
Acceleration $a$ is not a free variable of \rref{eq:no-backwards},
because $a$ is never actually read, as $a$ must have been written on \emph{every} execution path before being read anywhere.
No execution of the program in \rref{eq:no-backwards} depends on the initial value of $a$, so $a$ is not free since $a$ is not free after the loop or in the postcondition.

The static semantics provides uniform substitutions with all they need to know to determine what changes during substitutions go unnoticed (only changes to free variables have an impact on the value of an expression \rref{lem:coincidence-term}--\ref{lem:coincidence-HP}) and what state-change an expression may cause itself (only bound variables can change their value during the evaluation of an expression \rref{lem:bound}).
Whether a uniform substitution preserves truth in a proof depends on the interaction of the free and bound variables.
If it introduces a free variable into a context where that variable is bound, then the possible change in value of that bound variable may affect the overall truth-value.

The first property that uniform substitutions depend on is that \HPs have bounded effect: only bound variables of \HP $\alpha$ are modified during runs of $\alpha$.

\begin{lemma}[Bound effect] \label{lem:bound}
  The set $\boundvarsdef{\alpha}$ is the smallest set with the bound effect property: 
  If \(\iaccessible[\alpha]{\I}{\It}\), then \(\iget[state]{\I}=\iget[state]{\It}\) on $\scomplement{\boundvarsdef{\alpha}}$.
\end{lemma}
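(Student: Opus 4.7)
The lemma packages two separate claims about the semantically-defined set $\boundvarsdef{\alpha}$: (i) that it actually has the bound effect property and (ii) that any other set with that property must contain it. My plan is to dispatch both parts by directly unfolding the semantic definition of $\boundvarsdef{\alpha}$ from \rref{def:static-semantics}, since that definition was chosen precisely so that $x \in \boundvarsdef{\alpha}$ iff some transition of $\alpha$ witnesses a change in the value of $x$.

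For (i), I would argue by contraposition. Suppose $\iaccessible[\alpha]{\I}{\It}$ and pick any $x \in \allvars$ with $\iget[state]{\I}(x) \neq \iget[state]{\It}(x)$. Then $\iget[const]{\I}$, $\iget[state]{\I}$, $\iget[state]{\It}$ witness the existential in the definition of $\boundvarsdef{\alpha}$, so $x \in \boundvarsdef{\alpha}$. Contrapositively, if $x \in \scomplement{\boundvarsdef{\alpha}}$ then $\iget[state]{\I}(x) = \iget[state]{\It}(x)$, which is exactly the bound effect property for $\alpha$.

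For (ii), let $B \subseteq \allvars$ be any set with the bound effect property for $\alpha$, i.e.\ such that $\iaccessible[\alpha]{\I}{\It}$ implies $\iget[state]{\I} = \iget[state]{\It}$ on $\scomplement{B}$. Pick an arbitrary $x \in \boundvarsdef{\alpha}$. By definition, there exist $\iget[const]{\I}$ and states $\iget[state]{\I}, \iget[state]{\It}$ with $\iaccessible[\alpha]{\I}{\It}$ and $\iget[state]{\I}(x) \neq \iget[state]{\It}(x)$. Since $B$ satisfies the bound effect property, $x$ cannot lie in $\scomplement{B}$, so $x \in B$. Hence $\boundvarsdef{\alpha} \subseteq B$, establishing minimality.

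The reason there is really no obstacle here is that the lemma is a tautological consequence of the semantic phrasing of $\boundvarsdef{\alpha}$: it is defined as the union over all variables that are demonstrably modified by some run of $\alpha$, so the bound effect property and its minimality are baked into the definition. The substantive work only appears later, when one gives a syntactic (inductive) algorithm that over-approximates $\boundvarsdef{\alpha}$ and shows that this syntactic set also satisfies the bound effect property; by the minimality established here, such a syntactic over-approximation automatically contains $\boundvarsdef{\alpha}$.
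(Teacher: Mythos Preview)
Your proof is correct and matches the paper's approach essentially line for line: both parts follow directly from unfolding \rref{def:static-semantics}, with part (i) handled by contraposition and part (ii) by showing that any set with the bound effect property must contain every $x\in\boundvarsdef{\alpha}$ (the paper phrases (ii) as a contradiction from $V\not\supseteq\boundvarsdef{\alpha}$, but the underlying argument is identical).
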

\begin{proofatend}
First prove that $\boundvarsdef{\alpha}$ has the bound effect property.
Consider any \(\iaccessible[\alpha]{\I}{\It}\) and $x\not\in\boundvarsdef{\alpha}$ then \(\iget[state]{\I}(x)=\iget[state]{\It}(x)\) by \rref{def:static-semantics}.

Suppose there was a set $V\not\supseteq\boundvarsdef{\alpha}$ satisfying the bound effect property for $\alpha$.
Then there is a variable $x\in\boundvarsdef{\alpha}\setminus V$, which implies that
there are $\iget[const]{\I}$ and there are \(\iaccessible[\alpha]{\I}{\It}\) such that \(\iget[state]{\I}(x)\neq\iget[state]{\It}(x)\).
But, then $V$ does not have the bound effect property, as \(\iaccessible[\alpha]{\I}{\It}\) but it is not the case that \(\iget[state]{\I}=\iget[state]{\It}\) on $\scomplement{V}$, since $\scomplement{V}\supseteq\{x\}$ yet \(\iget[state]{\I}(x)\neq\iget[state]{\It}(x)\).
\end{proofatend}
With a small-step operational semantics, a corresponding notion of bound variables of a formula could be defined as those that change their value during the evaluation of formulas, but that is not needed here.

The value of a term only depends on the values of its free variables.
When evaluating a term $\theta$ in two different states $\iget[state]{\I}$, $\iget[state]{\Ialt}$ that agree on its free variables $\freevarsdef{\theta}$, the values of $\theta$ in both states coincide.
Accordingly, the value of a term will agree for different interpretations \(\iget[const]{\I},\iget[const]{\Ialt}\) that agree on the symbols $\intsigns{\theta}$ that occur in $\theta$.

\begin{lemma}[Coincidence for terms] \label{lem:coincidence-term}
  The set $\freevarsdef{\theta}$ is the smallest set with the coincidence property for $\theta$:
  If \(\iget[state]{\I}=\iget[state]{\Ialt}\) on $\freevarsdef{\theta}$
  and \(\iget[const]{\I}=\iget[const]{\Ialt}\) on $\intsigns{\theta}$, then
  \m{\ivaluation{\I}{\theta}=\ivaluation{\Ialt}{\theta}}.
\end{lemma}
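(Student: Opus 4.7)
The plan is to prove the two halves of the ``smallest set'' claim separately: first that $\freevarsdef{\theta}$ itself enjoys the coincidence property, then that any other set $V$ with the coincidence property must contain $\freevarsdef{\theta}$.

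For the coincidence property itself, I would proceed by structural induction on the term $\theta$ following its grammar. The base case of a variable $x$ is immediate: $\freevarsdef{x}=\{x\}$ and $\ivaluation{\I}{x}=\iget[state]{\I}(x)=\iget[state]{\Ialt}(x)=\ivaluation{\Ialt}{x}$. For $f(\theta_1,\dots,\theta_k)$ the induction hypothesis gives $\ivaluation{\I}{\theta_i}=\ivaluation{\Ialt}{\theta_i}$ for each $i$ (since $\freevarsdef{\theta_i}\subseteq\freevarsdef{f(\theta_1,\dots,\theta_k)}$ and $\intsigns{\theta_i}\subseteq\intsigns{f(\theta_1,\dots,\theta_k)}$), and then the assumption $\iget[const]{\I}(f)=\iget[const]{\Ialt}(f)$ finishes the case; the cases for $+$ and $\cdot$ are the same pattern with fixed arithmetic operations.

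The genuinely delicate case is $\der{\theta}$, whose semantics sums over \emph{all} variables in $\allvars$. Here I would use that $\freevarsdef{\der{\theta}}$ must include $\freevarsdef{\theta}\cup\D{\freevarsdef{\theta}}$, and argue that for any $x\notin\freevarsdef{\theta}$ the partial derivative $\Dp[x]{\ivaluation{\I}{\theta}}$ vanishes: indeed, by the already-established coincidence for $\theta$, varying the value of $x$ alone does not change $\ivaluation{\I}{\theta}$, so the one-dimensional map $X\mapsto\ivaluation{\imodif[state]{\I}{x}{X}}{\theta}$ is constant, hence has derivative zero. Thus the sum defining $\ivaluation{\I}{\der{\theta}}$ collapses to one over $\freevarsdef{\theta}$, where both the factors $\iget[state]{\I}(\D{x})$ (requiring $\D{x}\in\freevarsdef{\der{\theta}}$) and $\Dp[x]{\ivaluation{\I}{\theta}}$ (by the inductive coincidence property applied pointwise) agree in $\I$ and $\Ialt$. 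The one subtlety here is justifying that agreement of the partial derivatives at $\iget[state]{\I}=\iget[state]{\Ialt}$ follows from the coincidence property for $\theta$ applied at all nearby states $\imodif[state]{\I}{x}{X}$; this is clean because modifying $x$ keeps agreement on $\freevarsdef{\theta}\setminus\{x\}$ and $\intsigns{\theta}$ intact.

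For minimality, suppose $V$ has the coincidence property for $\theta$ and pick any $x\in\freevarsdef{\theta}$. By \rref{def:static-semantics} there exist $\iget[const]{\I}$ and states $\iget[state]{\I},\iget[state]{\Ialt}$ agreeing on $\scomplement{\{x\}}$ with $\ivaluation{\I}{\theta}\neq\ivaluation{\Ialt}{\theta}$. Taking $\iget[const]{\Ialt}=\iget[const]{\I}$ ensures agreement on $\intsigns{\theta}$. If $x\notin V$ then $\iget[state]{\I}=\iget[state]{\Ialt}$ on $V$, so the coincidence property for $V$ would force $\ivaluation{\I}{\theta}=\ivaluation{\Ialt}{\theta}$, contradicting the choice of $x$. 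Hence $x\in V$, giving $\freevarsdef{\theta}\subseteq V$. The main obstacle is the differential case, where one must be careful that the inductive hypothesis gives not just a pointwise equation but agreement of the partial-derivative values needed in the sum.
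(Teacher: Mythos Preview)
Your structural induction has a genuine gap: the inclusion $\freevarsdef{\theta_i}\subseteq\freevarsdef{f(\theta_1,\dots,\theta_k)}$ you invoke is false in general. Unlike the syntactic $\freevars{}$, the semantic $\freevarsdef{}$ from \rref{def:static-semantics} is not monotone in subterms, because cancellation can make a compound term independent of variables that its pieces depend on. Concretely, take $\theta_1=x$ and the arity-zero literal $0$; then $\freevarsdef{\theta_1}=\{x\}$ but $\freevarsdef{\theta_1\cdot 0}=\emptyset$, so from agreement of the two states on $\freevarsdef{\theta_1\cdot 0}=\emptyset$ you cannot invoke the induction hypothesis for $\theta_1$. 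The same failure hits your differential case: for $\theta=x$ one has $\ivaluation{\I}{\der{x}}=\iget[state]{\I}(\D{x})$, hence $\freevarsdef{\der{x}}=\{\D{x}\}$, which does not contain $x\in\freevarsdef{\theta}$. So your claimed inclusion $\freevarsdef{\theta}\cup\D{\freevarsdef{\theta}}\subseteq\freevarsdef{\der{\theta}}$ fails, and you cannot conclude that the states agree on $\freevarsdef{\theta}$ in order to compare the partial derivatives pointwise.

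The paper's proof avoids structural induction on $\theta$ entirely for the state part. It argues directly from \rref{def:static-semantics}: by contrapositive, any single variable $z\notin\freevarsdef{\theta}$ can be changed without altering the value of $\theta$, in every interpretation and at every state. Since two states agreeing on $\freevarsdef{\theta}$ differ only at such $z$'s, the paper walks from one state to the other by an induction on the set $S\subseteq\scomplement{\freevarsdef{\theta}}$ of variables being flipped, changing one variable at a time while the value stays fixed. The interpretation part (agreement on $\intsigns{\theta}$) is then handled by a separate structural induction, which does work because $\intsigns{}$ is syntactic and hence monotone in subterms. Your minimality argument is correct and essentially identical to the paper's.
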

\begin{proofatend}
\renewcommand{\Ialt}{\vdLint[const=I,state=\tilde{\nu}]}%
To prove that $\freevarsdef{\theta}$ has the coincidence property, it is enough to show by induction that, for any set of variables $S\subseteq\scomplement{\freevarsdef{\theta}}$, the state $\iget[state]{\Imid}$ in between $\iget[state]{\I}$ and $\iget[state]{\Ialt}$ that is defined as \(\iget[state]{\Imid}=\iget[state]{\I}\) on $S$ and as \(\iget[state]{\Imid}=\iget[state]{\Ialt}$ on $\scomplement{S}$ agrees with $\iget[state]{\Ialt}$ in the value \(\ivaluation{\Imid}{\theta}=\ivaluation{\Ialt}{\theta}\).

\begin{compactenum}
\addtocounter{enumi}{-1}
\item For $S=\emptyset$, there is nothing to show as \(\iget[state]{\Imid}=\iget[state]{\Ialt}\).

\item For $S\cup\{z\}$ with a variable $z\not\in\freevarsdef{\theta}$, abbreviate the modified state $\modif{\iget[state]{\Imid}}{z}{\iget[state]{\I}(z)}$ by $\iget[state]{\Imidstep}$, which satisfies
\(\iget[state]{\Imidstep}=\iget[state]{\Imid}\) on $\scomplement{\{z\}}$,
so \(\ivaluation{\Imidstep}{\theta}=\ivaluation{\Imid}{\theta}=\ivaluation{\Ialt}{\theta}\), because $z\not\in\freevarsdef{\theta}$ and by induction hypothesis.
\end{compactenum}
When $S$ is the set of all variables where $\iget[state]{\I}$ and $\iget[state]{\Ialt}$ differ, which is $S\subseteq\scomplement{\freevarsdef{\theta}}$ by assumption, this implies $\iget[state]{\Imid}=\iget[state]{\I}$ so
\(\ivaluation{\I}{\theta}=\ivaluation{\Ialt}{\theta}\).
{%
\renewcommand{\I}{\vdLint[const=I,state=\tilde{\nu}]}%
\renewcommand{\Ialt}{\vdLint[const=J,state=\tilde{\nu}]}%
Finally, if \(\iget[const]{\I}=\iget[const]{\Ialt}\) on $\intsigns{\theta}$ then also \(\ivaluation{\I}{\theta}=\ivaluation{\Ialt}{\theta}\) by an induction.}

Suppose there was a set $V\not\supseteq\freevarsdef{\theta}$ satisfying the coincidence property for $\theta$.
Then there is a variable $x\in\freevarsdef{\theta}\setminus V$,
which implies that there are $\iget[const]{\I}$, $\iget[state]{\I}=\iget[state]{\Ialt}$ on $\scomplement{\{x\}}$ such that \(\ivaluation{\I}{\theta}\neq\ivaluation{\Ialt}{\theta}\).
Then $V$ does not have the coincidence property, as $\iget[state]{\I}=\iget[state]{\Ialt}$ on $V$ but \(\ivaluation{\I}{\theta}\neq\ivaluation{\Ialt}{\theta}\).
\end{proofatend}

\noindent
In particular, the semantics of differentials is a sum over just the free variables:
\[
\ivaluation{\I}{\der{\theta}}
=
\sum_{x\in\freevarsdef{\theta}} \iget[state]{\I}(\D{x}) \Dp[x]{\ivaluation{\Idot}{\theta}}(\iget[state]{\I})
= \sum_{x\in\freevarsdef{\theta}} \iget[state]{\I}(\D{x}) \Dp[x]{\ivaluation{\I}{\theta}}
\]

When evaluating a \dL formula $\phi$ in two different states $\iget[state]{\I}$, $\iget[state]{\Ialt}$ that agree on its free variables $\freevarsdef{\phi}$ in \(\iget[const]{\I}=\iget[const]{\Ialt}\) on $\intsigns{\phi}$, the truth-values of $\phi$ in both states coincide.

\begin{lemma}[Coincidence for formulas] \label{lem:coincidence}
  The set $\freevarsdef{\phi}$ is the smallest set with the coincidence property for $\phi$:
  If \(\iget[state]{\I}=\iget[state]{\Ialt}\) on $\freevarsdef{\phi}$
  and \(\iget[const]{\I}=\iget[const]{\Ialt}\) on $\intsigns{\phi}$, then
  \m{\imodels{\I}{\phi}} iff \m{\imodels{\Ialt}{\phi}}.
\end{lemma}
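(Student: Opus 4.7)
The plan is to prove the statement by simultaneous structural induction over \dL formulas and hybrid programs, treating this lemma together with its obvious analogue for programs (namely: if $\iget[state]{\I}=\iget[state]{\Ialt}$ on $\freevarsdef{\alpha}$ and $\iget[const]{\I}=\iget[const]{\Ialt}$ on $\intsigns{\alpha}$ and $\iaccessible[\alpha]{\I}{\It}$, then there is $\iget[state]{\Italt}$ with $\iaccessible[\alpha]{\Ialt}{\Italt}$ and $\iget[state]{\It}=\iget[state]{\Italt}$ on $\freevarsdef{\alpha}\cup\boundvarsdef{\alpha}$). The overall argument splits into a coincidence part and a minimality part, mirroring the shape of the proof of \rref{lem:coincidence-term}.

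For the coincidence part, I would first reduce to the case $\iget[const]{\I}=\iget[const]{\Ialt}$ by a separate induction on the signature, and then argue by gradually flipping one non-free variable at a time, exactly as in the term case: for any $S\subseteq\scomplement{\freevarsdef{\phi}}$, the intermediate state that agrees with $\iget[state]{\I}$ on $S$ and with $\iget[state]{\Ialt}$ elsewhere assigns the same truth-value to $\phi$ as $\iget[state]{\Ialt}$ does. The base cases $\theta\geq\eta$ and $p(\theta_1,\dots,\theta_k)$ follow directly from \rref{lem:coincidence-term}, since $\freevarsdef{\theta\geq\eta}=\freevarsdef{\theta}\cup\freevarsdef{\eta}$ and similarly for predicate applications. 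The propositional cases $\lnot\phi$ and $\phi\land\psi$ are routine. For $\lforall{x}{\phi}$ and $\lexists{x}{\phi}$, I would apply the induction hypothesis to $\phi$ under the common $x$-modification $\imodif[state]{\I}{x}{r}$ and $\imodif[state]{\Ialt}{x}{r}$, noting that these still agree on $\freevarsdef{\lforall{x}{\phi}}\cup\{x\}\supseteq\freevarsdef{\phi}$. For $\contextapp{C}{\phi}$, agreement on the interpretation of $C$ reduces the goal to $\imodel{\I}{\phi}=\imodel{\Ialt}{\phi}$, which holds because $\freevarsdef{\contextapp{C}{\phi}}$ is the whole variable set (since $C$ is a higher-order symbol applied to an argument evaluated at arbitrary states).

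The modality cases $\dbox{\alpha}{\phi}$ and $\ddiamond{\alpha}{\phi}$ are the main obstacle and require the simultaneous induction hypothesis for $\alpha$. Concretely, for $\ddiamond{\alpha}{\phi}$, given $\iget[state]{\I}=\iget[state]{\Ialt}$ on $\freevarsdef{\ddiamond{\alpha}{\phi}}$ and some $\iaccessible[\alpha]{\I}{\It}$ witnessing truth in $\iget[state]{\I}$, I would invoke the program coincidence hypothesis on $\alpha$ (using that $\freevarsdef{\alpha}\subseteq\freevarsdef{\ddiamond{\alpha}{\phi}}$) to obtain a matching $\iaccessible[\alpha]{\Ialt}{\Italt}$ with $\iget[state]{\It}=\iget[state]{\Italt}$ on $\freevarsdef{\alpha}\cup\boundvarsdef{\alpha}$. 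Combining this with \rref{lem:bound} (so $\iget[state]{\I}$ and $\iget[state]{\It}$, respectively $\iget[state]{\Ialt}$ and $\iget[state]{\Italt}$, agree outside $\boundvarsdef{\alpha}$) shows that $\iget[state]{\It}=\iget[state]{\Italt}$ on $\freevarsdef{\phi}$: variables in $\freevarsdef{\phi}\setminus\boundvarsdef{\alpha}$ are preserved by both runs, while variables in $\freevarsdef{\phi}\cap\boundvarsdef{\alpha}\subseteq\freevarsdef{\ddiamond{\alpha}{\phi}}\cap(\freevarsdef{\alpha}\cup\boundvarsdef{\alpha})$ are covered by the program coincidence conclusion. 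The inductive hypothesis on $\phi$ then gives $\imodels{\Italt}{\phi}$, as needed. The $\dbox{\alpha}{\phi}$ case follows by duality.

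For minimality, I would argue exactly as in \rref{lem:coincidence-term}: suppose $V\not\supseteq\freevarsdef{\phi}$ and pick $x\in\freevarsdef{\phi}\setminus V$. By \rref{def:static-semantics} there exist $\iget[const]{\I}$ and $\iget[state]{\I}=\iget[state]{\Ialt}$ on $\scomplement{\{x\}}$ with $\imodels{\I}{\phi}\not\ni\iget[state]{\Ialt}$. Since $x\not\in V$, we have $\iget[state]{\I}=\iget[state]{\Ialt}$ on $V$ while the truth-values differ, so $V$ fails the coincidence property for $\phi$. The harder technical issue is really hidden in the program case of the simultaneous induction, particularly for $\pevolvein{\D{x}=\genDE{x}}{\ivr}$ and $\prepeat{\alpha}$, where one must reproduce an entire solution or iteration trace in the modified interpretation; but that is handled in the companion lemma and only consumed here as a hypothesis.
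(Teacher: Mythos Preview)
Your proposal mixes two different arguments, and the one you develop in detail has real gaps.

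The variable-flipping part you mention is exactly what the paper does, and it is already the whole proof of the state half: if $z\notin\freevarsdef{\phi}$, then by the contrapositive of \rref{def:static-semantics} any two states differing only in $z$ give $\phi$ the same truth-value, so inductively walking from $\tilde{\nu}$ to $\nu$ one non-free variable at a time preserves truth. A separate routine structural induction on $\phi$ handles the change of interpretation on $\intsigns{\phi}$. No companion lemma for programs and no case analysis on the shape of $\phi$ is needed for the state part; the paper proves \rref{lem:coincidence-HP} by the same flip argument, not by simultaneous structural induction.

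The structural induction you layer on top is not just redundant but incorrect, because the semantic $\freevarsdef{\cdot}$ is not compositional. Your equality $\freevarsdef{\theta\geq\eta}=\freevarsdef{\theta}\cup\freevarsdef{\eta}$ fails for $x\geq x$ (left side $\emptyset$, right side $\{x\}$). Your inclusion $\freevarsdef{\phi}\subseteq\freevarsdef{\lforall{x}{\phi}}\cup\{x\}$ fails for $\phi\equiv(x=y)$, since $\lforall{x}{(x=y)}$ is identically false and hence has empty semantic free-variable set while $y\in\freevarsdef{\phi}$. Your inclusion $\freevarsdef{\alpha}\subseteq\freevarsdef{\ddiamond{\alpha}{\phi}}$ fails for $\alpha\equiv(\pupdate{\pumod{y}{x}})$ and $\phi\equiv\ltrue$. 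These are exactly the inclusions you need to push the inductive hypotheses through. Your companion statement for programs is also false as stated: for $\alpha\equiv(\pchoice{\pupdate{\pumod{x}{1}}}{\ptest{\ltrue}})$ one has $\freevarsdef{\alpha}=\emptyset$, yet two unrelated initial states taking the $\ptest{\ltrue}$ branch yield final states disagreeing on $x\in\boundvarsdef{\alpha}$. The structural argument you have in mind is the right one for \rref{lem:sound-static-semantics}, where the \emph{syntactic} $\freevars{\phi}$ is compositional by construction and must-bound variables repair the modality case; it does not work for the semantic $\freevarsdef{\phi}$ treated here.
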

\begin{proofatend}
\renewcommand{\Ialt}{\vdLint[const=I,state=\tilde{\nu}]}%
To prove that $\freevarsdef{\phi}$ has the coincidence property, it is enough to show by induction that, for any set of variables $S\subseteq\scomplement{\freevarsdef{\phi}}$, the state $\iget[state]{\Imid}$ in between $\iget[state]{\I}$ and $\iget[state]{\Ialt}$ that is defined as \(\iget[state]{\Imid}=\iget[state]{\I}\) on $S$ and as \(\iget[state]{\Imid}=\iget[state]{\Ialt}$ on $\scomplement{S}$ agrees with $\iget[state]{\Ialt}$ in the truth-value \(\imodels{\Imid}{\phi}\) iff \(\imodels{\Ialt}{\phi}\).

\begin{compactenum}
\addtocounter{enumi}{-1}
\item For $S=\emptyset$, there is nothing to show as \(\iget[state]{\Imid}=\iget[state]{\Ialt}\).

\item For $S\cup\{z\}$ with a variable $z\not\in\freevarsdef{\phi}$, abbreviate the modified state $\modif{\iget[state]{\Imid}}{z}{\iget[state]{\I}(z)}$ by $\iget[state]{\Imidstep}$, which satisfies
\(\iget[state]{\Imidstep}=\iget[state]{\Imid}\) on $\scomplement{\{z\}}$,
so \(\imodels{\Imidstep}{\phi}\) iff \(\imodels{\Imid}{\phi}\), because $z\not\in\freevarsdef{\phi}$, iff \(\imodels{\Ialt}{\phi}\) by induction hypothesis.
\end{compactenum}
When $S$ is the set of all variables where $\iget[state]{\I}$ and $\iget[state]{\Ialt}$ differ, which is $S\subseteq\scomplement{\freevarsdef{\phi}}$ by assumption, this implies $\iget[state]{\Imid}=\iget[state]{\I}$ so
\(\imodels{\I}{\phi}\) iff \(\imodels{\Ialt}{\phi}\).
{%
\renewcommand{\I}{\vdLint[const=I,state=\tilde{\nu}]}%
\renewcommand{\Ialt}{\vdLint[const=J,state=\tilde{\nu}]}%
Finally, if \(\iget[const]{\I}=\iget[const]{\Ialt}\) on $\intsigns{\phi}$ then also \(\imodels{\I}{\phi}\) iff \(\imodels{\Ialt}{\phi}\) by an induction.}

Suppose there was a set $V\not\supseteq\freevarsdef{\phi}$ satisfying the coincidence property for $\phi$.
Then there is a variable $x\in\freevarsdef{\phi}\setminus V$,
which implies that there are $\iget[const]{\I}$, $\iget[state]{\I}=\iget[state]{\Ialt}$ on $\scomplement{\{x\}}$ such that \(\imodels{\I}{\phi}\not\ni\iget[state]{\Ialt}\).
Then $V$ does not have the coincidence property, as $\iget[state]{\I}=\iget[state]{\Ialt}$ on $V$ but \(\imodels{\I}{\phi}\not\ni\iget[state]{\Ialt}\).
\end{proofatend}

The runs of an \HP $\alpha$ only depend on the values of its free variables, because its behavior cannot depend on the values of variables that it never reads.
If \(\iget[state]{\I}=\iget[state]{\Ialt}\) on $\freevarsdef{\alpha}$ and \(\iget[const]{\I}=\iget[const]{\Ialt}\) on $\intsigns{\phi}$
  and \(\iaccessible[\alpha]{\I}{\It}\),
  then there is an $\iget[state]{\Italt}$ such that
  \(\iaccessible[\alpha]{\Ialt}{\Italt}\)
  and $\iget[state]{\It}$ and $\iget[state]{\Italt}$ agree on $\freevarsdef{\alpha}$.
In fact, the final states $\iget[state]{\It},\iget[state]{\Italt}$ continue to agree on any set $V\supseteq\freevarsdef{\alpha}$ that the initial states $\iget[state]{\I},\iget[state]{\Ialt}$ agreed on.
The respective pairs of initial and final states of a run of \HP $\alpha$ already agree on the complement $\scomplement{\boundvarsdef{\alpha}}$ by \rref{lem:bound}.

\begin{lemma}[Coincidence for programs] \label{lem:coincidence-HP}
  The set $\freevarsdef{\alpha}$ is the smallest set with the coincidence property for $\alpha$:
  If \(\iget[state]{\I}=\iget[state]{\Ialt}\) on $V\supseteq\freevarsdef{\alpha}$,
  \(\iget[const]{\I}=\iget[const]{\Ialt}\) on $\intsigns{\alpha}$
  and \(\iaccessible[\alpha]{\I}{\It}\),
  then there is a $\iget[state]{\Italt}$ such that
  \(\iaccessible[\alpha]{\Ialt}{\Italt}\)
  and 
  \(\iget[state]{\It}=\iget[state]{\Italt}\) on $V$.
\begin{center}
\begin{tikzpicture}
  \matrix (m) [matrix of math nodes,row sep=3em,column sep=4em,minimum width=2em]
  {
     \iget[state]{\I} & \iget[state]{\It} \\
     \iget[state]{\Ialt} & \iget[state]{\Italt} \\};
  \path
    (m-1-1) edge [similar state] node [left,align=right] {on $V\supseteq\freevarsdef{\alpha}$} (m-2-1)
            edge [transition] node [below] {$\alpha$} (m-1-2)
    (m-2-1) edge [transition,transition exists] node [above] {$\alpha$}
            (m-2-2)
    (m-1-2) edge [similar state] node [right,align=right] {on $V$} (m-2-2);
  \path
    (m-1-1) edge[similar state,bend left=30] node[above] {on $\scomplement{\boundvarsdef{\alpha}}$} (m-1-2)
    (m-2-1) edge[similar state,bend right=30] node[below] {on $\scomplement{\boundvarsdef{\alpha}}$} (m-2-2);
\end{tikzpicture}
\end{center}
\end{lemma}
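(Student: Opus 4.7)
The plan is to follow the same two-step strategy as for \rref{lem:coincidence-term} and \rref{lem:coincidence}: first establish that $\freevarsdef{\alpha}$ has the coincidence property for $\alpha$, and then show that no proper subset of $\freevarsdef{\alpha}$ does. Throughout, I will treat agreement on variables and agreement on interpretations by two parallel inductions, exactly as in the earlier lemmas.

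The central observation for the first step is that the contrapositive of the semantic definition of $\freevarsdef{\alpha}$ in \rref{def:static-semantics} yields a single-variable version of the conclusion: for any $z\not\in\freevarsdef{\alpha}$, whenever $\iget[state]{\I}=\iget[state]{\Ialt}$ on $\scomplement{\{z\}}$ and $\iaccessible[\alpha]{\I}{\It}$, there exists an $\iget[state]{\Italt}$ with $\iget[state]{\It}=\iget[state]{\Italt}$ on $\scomplement{\{z\}}$ and $\iaccessible[\alpha]{\Ialt}{\Italt}$. I then lift this to arbitrary $V\supseteq\freevarsdef{\alpha}$ by the interpolation construction used for terms and formulas: for each $S\subseteq\scomplement{V}$ build the interpolating state $\iget[state]{\Imid}$ that agrees with $\iget[state]{\I}$ on $S$ and with $\iget[state]{\Ialt}$ on $\scomplement{S}$, and show by induction on $S$ that every run $\iaccessible[\alpha]{\Imid}{\Imid'}$ can be matched by some $\iaccessible[\alpha]{\Ialt}{\Italt}$ with $\iget[state]{\Imid'}=\iget[state]{\Italt}$ on $V$. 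Specializing $S$ to the set of variables on which $\iget[state]{\I}$ and $\iget[state]{\Ialt}$ disagree gives $\iget[state]{\Imid}=\iget[state]{\I}$ and thus the desired conclusion.

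The inductive step passes from $S$ to $S\cup\{z\}$ with $z\not\in V$, so $z\not\in\freevarsdef{\alpha}$. Writing $\iget[state]{\Imidstep}$ for $\iget[state]{\Imid}$ with $z$ reset to $\iget[state]{\Ialt}(z)$ makes $\iget[state]{\Imidstep}$ coincide with $\iget[state]{\Imid}$ on $\scomplement{\{z\}}$ and equal to the interpolating state for the smaller set $S$. The single-variable property applied at $z$ produces $\iget[state]{\Imidstep'}$ with $\iaccessible[\alpha]{\Imidstep}{\Imidstep'}$ and $\iget[state]{\Imid'}=\iget[state]{\Imidstep'}$ on $\scomplement{\{z\}}$; the induction hypothesis on $S$ then yields $\iget[state]{\Italt}$ with $\iaccessible[\alpha]{\Ialt}{\Italt}$ and $\iget[state]{\Imidstep'}=\iget[state]{\Italt}$ on $V$, and $z\not\in V$ lets the two agreements combine into $\iget[state]{\Imid'}=\iget[state]{\Italt}$ on $V$. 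The side condition on interpretations is handled by an analogous single-symbol induction over $\intsigns{\alpha}$.

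Minimality is immediate by contrapositive: if $V\not\supseteq\freevarsdef{\alpha}$, pick $x\in\freevarsdef{\alpha}\setminus V$; \rref{def:static-semantics} directly supplies witnesses $\iget[const]{\I}$, $\iget[state]{\I}=\iget[state]{\Ialt}$ on $\scomplement{\{x\}}$, and $\iaccessible[\alpha]{\I}{\It}$ for which no $\iget[state]{\Italt}$ with $\iget[state]{\It}=\iget[state]{\Italt}$ on $\scomplement{\{x\}}$ satisfies $\iaccessible[\alpha]{\Ialt}{\Italt}$, and since $V\subseteq\scomplement{\{x\}}$ these witnesses refute the coincidence property for $V$. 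The main obstacle I anticipate is the chaining in the interpolation step: unlike the term and formula cases, which only match truth values or numerical values, here the matched final state $\iget[state]{\Italt}$ must be built by threading single-variable matches through a tower of intermediate final states, and one must verify at every level that the region of agreement $V$ is preserved rather than eroded as each variable in $S$ is peeled off.
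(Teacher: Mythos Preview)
Your proposal is correct and follows essentially the same interpolation argument as the paper's proof; the only difference is a trivial reversal of direction---the paper interpolates from $\nu$ toward $\tilde{\nu}$ while tracking agreement on the shrinking set $\scomplement{S}$, whereas you interpolate from $\tilde{\nu}$ toward $\nu$ while tracking agreement on the fixed target set $V$. The minimality argument and the handling of the interpretation side condition are identical in both.
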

\begin{proofatend}
\renewcommand{\Ialt}{\vdLint[const=I,state=\tilde{\nu}]}%
\renewcommand{\Italt}{\vdLint[const=I,state=\tilde{\omega}]}%

To prove that $\freevarsdef{\alpha}$ has the coincidence property, it is enough to show by induction for all $S\subseteq\scomplement{\freevarsdef{\alpha}}$ that the state $\iget[state]{\Imid}$ in between $\iget[state]{\Ialt}$ and $\iget[state]{\I}$ that is defined as $\iget[state]{\Imid}=\iget[state]{\Ialt}$ on $S$ and as $\iget[state]{\Imid}=\iget[state]{\I}$ on $\scomplement{S}$ has a state $\iget[state]{\Itmid}$ that agrees with $\iget[state]{\Itmid}=\iget[state]{\It}$ on $\scomplement{S}$ such that \(\iaccessible[\alpha]{\Imid}{\Itmid}\) as well.

\begin{enumerate}
\addtocounter{enumi}{-1}
\item For $S=\emptyset$, there is nothing to show for $\iget[state]{\Imid}=\iget[state]{\I}$ and $\iget[state]{\Itmid}=\iget[state]{\It}$, because \(\iaccessible[\alpha]{\I}{\It}\).

\item For $S\cup\{z\}$ with a variable $z\not\in\freevarsdef{\alpha}$,
let $\iget[state]{\Imidstep}$ denote modified state $\modif{\iget[state]{\Imid}}{z}{\iget[state]{\Ialt}(z)}$.
Then $\iget[state]{\Imidstep}=\iget[state]{\Imid}$ on $\scomplement{\{z\}}$ and, by induction hypothesis, \(\iaccessible[\alpha]{\Imid}{\Itmid}\) for some $\iget[state]{\Itmid}$ with \(\iget[state]{\Itmid}=\iget[state]{\It}$ on $\scomplement{S}$.
Since $z\not\in\freevarsdef{\alpha}$, this implies there is a state $\iget[state]{\Itmidstep}$ such that $\iget[state]{\Itmidstep}=\iget[state]{\Itmid}$ on $\scomplement{\{z\}}$ and \(\iaccessible[\alpha]{\Imidstep}{\Itmidstep}\).
Thus, \(\iget[state]{\Itmidstep}=\iget[state]{\Itmid}=\iget[state]{\It}\) on $\scomplement{(S\cup\{z\})}$.
\end{enumerate}
Finally the state $\iget[state]{\Imid}$ resulting for $S=\scomplement{V}$ satisfies \(\iget[state]{\Imid}=\iget[state]{\Ialt}\) because \(\iget[state]{\Imid}=\iget[state]{\Ialt}\) on $\scomplement{V}$ and \(\iget[state]{\Imid}=\iget[state]{\I}=\iget[state]{\Ialt}\) on $\scomplement{(\scomplement{V})}=V$ already
and the state $\iget[state]{\Italt}$ defined as $\iget[state]{\Itmid}$ satisfies $\iget[state]{\Itmid}=\iget[state]{\It}$ on $V$ and \(\iaccessible[\alpha]{\Ialt}{\Italt}\).
{%
\renewcommand{\I}{\vdLint[const=I,state=\tilde{\nu}]}%
\renewcommand{\It}{\vdLint[const=I,state=\tilde{\nu}]}%
\renewcommand{\Ialt}{\vdLint[const=J,state=\tilde{\nu}]}%
\renewcommand{\Italt}{\vdLint[const=J,state=\tilde{\omega}]}%
Finally, if \(\iget[const]{\I}=\iget[const]{\Ialt}\) on $\intsigns{\alpha}$ then also \(\iaccessible[\alpha]{\Ialt}{\Italt}\).
}

Suppose there was a set $V\not\supseteq\freevarsdef{\alpha}$ satisfying the coincidence property for $\alpha$.
Then there is a variable $x\in\freevarsdef{\alpha}\setminus V$,
which implies that there are $\iget[const]{\I},\iget[state]{\I},\iget[state]{\Ialt},\iget[state]{\It}$ such that \(\iget[state]{\I}=\iget[state]{\Ialt}\) on $\scomplement{\{x\}}$ and \(\iaccessible[\alpha]{\I}{\It}\),
but there is no $\iget[state]{\Italt}$ with \(\iget[state]{\It}=\iget[state]{\Italt}\) on $\scomplement{\{x\}}$  such that \(\iaccessible[\alpha]{\Ialt}{\Italt}\).
Then $V$ does not have the coincidence property, because \(\iget[state]{\I}=\iget[state]{\Ialt}\) on $\scomplement{\{x\}}\supseteq V$ and \(\iaccessible[\alpha]{\I}{\It}\),
but there is no $\iget[state]{\Italt}$ with \(\iget[state]{\It}=\iget[state]{\Italt}\) on $\scomplement{\{x\}}$  such that \(\iaccessible[\alpha]{\Ialt}{\Italt}\).
\end{proofatend}

\subsection{Correct Static Semantics Computations} \label{sec:static-semantics-computation}

Lemmas~\ref{lem:bound}--\ref{lem:coincidence-HP} hold for any superset of $\boundvarsdef{\alpha},\freevarsdef{\theta},\freevarsdef{\phi},\freevarsdef{\alpha}$, respectively.
Supersets of the static semantics can be computed easily from the syntactic structure and provide the sole input that uniform substitutions depend on, which, in turn, are the only part of the calculus where the static semantics is relevant.
Only variables that are read in a formula or program can be free variables.
And only variables that have quantifiers or are written to or have differential equations can be bound variables.

Bound variables $x$ of a formula are those that are bound by $\lforall{x}{}$or $\lexists{x}{}$, but also those that are bound by modalities such as \(\dbox{\pupdate{\pumod{x}{5y}}}{}\)
or \(\ddiamond{\pevolve{\D{x}=1}}{}\)
or \(\dbox{\pchoice{\pumod{x}{1}}{\pevolve{\D{x}=1}}}{}\)
or \(\dbox{\pchoice{\pumod{x}{1}}{\ptest{\ltrue}}}{}\) because of the assignment to $x$ or differential equation for $x$ they contain.
The scope of the bound variable $x$ is limited to the quantified formula or to the postcondition and remaining program of a modality.
\begin{definition}[Bound variable] \label{def:boundvars}
  The set $\boundvars{\phi}\subseteq\allvars$ of (syntactically) \emph{bound variables} of \dL formula $\phi$ is defined inductively as:
  \begin{align*}
  \boundvars{p(\theta_1,\dots,\theta_k)} &= \emptyset
  &&\text{where $p$ can also be $\geq$}
  \\
  \boundvars{\contextapp{C}{\phi}} &= \allvars\\ %
  \boundvars{\lnot\phi} &= \boundvars{\phi}\\
  \boundvars{\phi\land\psi} &= \boundvars{\phi}\cup\boundvars{\psi}\\
  \boundvars{\lforall{x}{\phi}} = \boundvars{\lexists{x}{\phi}} &= \{x\}\cup\boundvars{\phi}\\
  \boundvars{\dbox{\alpha}{\phi}} = \boundvars{\ddiamond{\alpha}{\phi}} &= \boundvars{\alpha}\cup\boundvars{\phi}
  \end{align*}
  The set $\boundvars{\alpha}\subseteq\allvars$ of (syntactically) \emph{bound variables} of \HP $\alpha$, i.e.\ all those that may potentially be written to, is defined inductively as:
  \begin{align*}
  \boundvars{a} &= \allvars &&\text{for program constant $a$}\\
  \boundvars{\pupdate{\pumod{x}{\theta}}} &= \{x\}
  \\
  \boundvars{\ptest{\ivr}} &= \emptyset
  \\
  \boundvars{\pevolvein{\D{x}=\genDE{x}}{\ivr}} &= \{x,\D{x}\}
  \\
  \boundvars{\alpha\cup\beta} = \boundvars{\alpha;\beta} &= \boundvars{\alpha}\cup\boundvars{\beta}
  \\
  \boundvars{\prepeat{\alpha}} &= \boundvars{\alpha}
  \end{align*}
\end{definition}
\noindent
Both $x$ and $\D{x}$ are bound by a differential equation \m{\pevolve{\D{x}=\genDE{x}}}, as both may change their value.
All variables $\allvars$ is the only option for program constants and quantifier symbols $C$, since, depending on their interpretation, both may change the value of any $x\in\allvars$.

The free variables of a quantified formula are defined by removing its bound variable as \(\freevars{\lforall{x}{\phi}} = \freevars{\phi}\setminus\{x\}\), since all occurrences of $x$ in $\phi$ are bound by $\forall{x}$.
The bound variables of a program in a modality act in a similar way, except that the program itself may read variables during the computation, so its free variables need to be taken into account.
By analogy to the quantifier case, it is often suspected that
\(\freevars{\dbox{\alpha}{\phi}}\) could be defined as \(\freevars{\alpha}\cup(\freevars{\phi}\setminus\boundvars{\alpha})\).
But that would be unsound, because
\(\dbox{\pchoice{\pupdate{\pumod{x}{1}}}{\pupdate{\pumod{y}{2}}}}{\,x\geq1}\) would have no free variables then,
contradicting the fact that its truth-value depends on the initial value of $x$.
The reason is that $x$ is a bound variable of that program, but only written to on some but not on all paths. So the initial value of $x$ may be needed to evaluate the truth of the postcondition $x\geq1$ on some execution paths.
If a variable is must-bound, so written to on all paths of the program, however, it can safely be removed from the free variables of the postcondition.
The static semantics defines the subset of variables that are must-bound ($\mustboundvars{\alpha}$), so must be written to on all execution paths of $\alpha$.
This complication does not happen for ordinary quantifiers or strictly nested languages like pure $\lambda$-calculi.

\begin{definition}[Must-bound variable] \label{def:mustboundvar}
  The set $\mustboundvars{\alpha}\subseteq\boundvars{\alpha}\subseteq\allvars$ of (syntactically) \emph{must-bound variables} of \HP $\alpha$, i.e.\ all those that must be written to on all paths of $\alpha$, is defined inductively as:
  \begin{align*}
  \mustboundvars{a} &= \emptyset &&\text{for program constant $a$}\\
    \mustboundvars{\alpha} &= \boundvars{\alpha} &&\text{for atomic \HPs $\alpha$ except program constants}
    \\
  \mustboundvars{\alpha\cup\beta} &= \mustboundvars{\alpha}\cap\mustboundvars{\beta}
  \\
  \mustboundvars{\alpha;\beta} &= \mustboundvars{\alpha}\cup\mustboundvars{\beta}
  \\
  \mustboundvars{\prepeat{\alpha}} &= \emptyset
  \end{align*}
\end{definition}

Finally, the static semantics also defines which variables are free so may be read.
The definition of free variables is simultaneously inductive for formulas $(\freevars{\phi})$ and programs $(\freevars{\alpha})$ owing to their mutually recursive syntactic structure.

\begin{definition}[Free variable] \label{def:freevars}
  The set $\freevars{\theta}\subseteq\allvars$ of (syntactically) \emph{free variables} of term $\theta$, i.e.\ those that occur in $\theta$ directly or indirectly, is defined inductively as:
  \begin{align*}
  \freevars{x} &= \{x\} &&\text{hence \(\freevars{\D{x}} = \{\D{x}\}\)}\\
  \freevars{f(\theta_1,\dots,\theta_k)} &= \freevars{\theta_1}\cup\dots\cup\freevars{\theta_k}
  &&\text{where $f$ can also be $+$ or $\cdot$}
  \\
  \freevars{\der{\theta}} &= \freevars{\theta} \cup \D{\freevars{\theta}}
  \end{align*}
  The set $\freevars{\phi}$ of (syntactically) \emph{free variables} of \dL formula $\phi$, i.e.\ all that occur in $\phi$ outside the scope of quantifiers or modalities binding it, is defined inductively as:
  \begin{align*}
  \freevars{p(\theta_1,\dots,\theta_k)} &= \freevars{\theta_1}\cup\dots\cup\freevars{\theta_k}
  &&\text{where $p$ can also be $\geq$}
  \\
  \freevars{\contextapp{C}{\phi}} &= \allvars\\ %
  \freevars{\lnot\phi} &= \freevars{\phi}\\
  \freevars{\phi\land\psi} &= \freevars{\phi}\cup\freevars{\psi}\\
  \freevars{\lforall{x}{\phi}} = \freevars{\lexists{x}{\phi}} &= \freevars{\phi}\setminus\{x\}\\
  \freevars{\dbox{\alpha}{\phi}} = \freevars{\ddiamond{\alpha}{\phi}} &= \freevars{\alpha}\cup(\freevars{\phi}\setminus\mustboundvars{\alpha})
  \end{align*}
  The set $\freevars{\alpha}\subseteq\allvars$ of (syntactically) \emph{free variables} of \HP $\alpha$, i.e.\ all those that may potentially be read, is defined inductively as:
  \begin{align*}
  \freevars{a} &= \allvars &&\hspace{-10pt}\text{for program constant $a$}\\
  \freevars{\pupdate{\pumod{x}{\theta}}} &= \freevars{\theta}
  \\
  \freevars{\ptest{\ivr}} &= \freevars{\ivr}
  \\
  \freevars{\pevolvein{\D{x}=\genDE{x}}{\ivr}} &= \{x\}\cup\freevars{\genDE{x}}\cup\freevars{\ivr}
  \\
  \freevars{\pchoice{\alpha}{\beta}} &= \freevars{\alpha}\cup\freevars{\beta}
  \\
  \freevars{\alpha;\beta} &= \freevars{\alpha}\cup(\freevars{\beta}\setminus\mustboundvars{\alpha})
  \\
  \freevars{\prepeat{\alpha}} &= \freevars{\alpha}
  \end{align*}
The \emph{variables} of \dL formula $\phi$, whether free or bound, are \(\vars{\phi}=\freevars{\phi}\cup\boundvars{\phi}\).
The \emph{variables} of \HP $\alpha$, whether free or bound, are \(\vars{\alpha}=\freevars{\alpha}\cup\boundvars{\alpha}\).
\end{definition}
Soundness requires \(\freevars{\der{\theta}}\) to be the union of $\freevars{\theta}$ and its differential closure \(\D{\freevars{\theta}}\) of all differential symbols corresponding to the variables in $\freevars{\theta}$,
because
the value of \(\der{x y}\) depends on $\freevarsdef{\der{x y}}=\{x,\D{x},y,\D{y}\}$ so the current and differential symbol values. Indeed, \(\der{x y}\) will turn out to equal \(\D{x} y + x \D{y}\) (\rref{lem:derivationLemma}), which has the same set of free variables $\{x,\D{x},y,\D{y}\}$ for more obvious reasons.
Both $x$ and $\D{x}$ are bound in \(\pevolvein{\D{x}=\genDE{x}}{\ivr}\) since both change their value, but only $x$ is added to the free variables, because the behavior can only depend on the initial value of $x$, not of that of $\D{x}$.
  All variables $\allvars$ are free and bound variables for program constants $a$, because their effect depends on the interpretation $\iget[const]{\I}$, so they may read and write any variable in \(\freevars{a}=\boundvars{a}=\allvars\) but possibly not on all paths, so \(\mustboundvars{a}=\emptyset\).

For example, $\freevars{\phi}=\freevarsdef{\phi}=\{v,b,x\}$ are the free variables of the formula $\phi$ in \rref{eq:no-backwards}, while $\boundvars{\alpha}=\boundvarsdef{\alpha}=\{a,x,\D{x},v,\D{v}\}$ are the bound variables (and must-bound variables) of its program $\alpha$. 
This would have been different for the less precise definition
\(
\freevars{\alpha;\beta} = \freevars{\alpha}\cup\freevars{\beta}
\).
Of course \cite{DBLP:journals/ams/Rice53}, syntactic computations may give bigger sets, e.g.,
\(\freevars{x^2-x^2}=\{x\}\neq\freevarsdef{x^2-x^2}=\emptyset\) or 
\(\boundvars{\pupdate{\pumod{x}{x}}}=\{x\}\neq\boundvarsdef{\pupdate{\pumod{x}{x}}}=\emptyset\), or similarly when some differential equation can never be executed.

Since uniform substitutions depend on the static semantics, soundness of uniform substitutions requires the static semantics to be computed correctly.
Correctness of the static semantics is easy to prove by straightforward structural induction with some attention for differential cases.
There is a subtlety in the soundness proof for the free variables of programs and formulas, though.
The states $\iget[state]{\It}$ and $\iget[state]{\Italt}$ resulting from \rref{lem:coincidence-HP} continue to agree on $\freevars{\alpha}$ and the variables that are bound on the particular path that $\alpha$ ran for the transition \(\iaccessible[\alpha]{\I}{\It}\).
They may disagree on variables $z$ that are neither free (so the initial states $\iget[state]{\I}$ and $\iget[state]{\Ialt}$ have not been assumed to coincide) nor bound on the particular path that $\alpha$ took, because $z$ has not been written to.
\begin{example}[Bound variables may not agree after an \HP]
Let \(\iaccessible[\alpha]{\I}{\It}\).
It is not enough to assume \(\iget[state]{\I}=\iget[state]{\Ialt}\) only on $\freevars{\alpha}$ in order to guarantee \(\iget[state]{\It}=\iget[state]{\Italt}\) on $\vars{\alpha}$ for some $\iget[state]{\Italt}$ such that
  \(\iaccessible[\alpha]{\Ialt}{\Italt}\), because
\(
\alpha \mdefequiv \pchoice{\pupdate{\pumod{x}{1}}}{\pupdate{\pumod{y}{2}}}
\)
will force the final states to agree only on either $x$ or on $y$, whichever one was assigned to during the respective run of $\alpha$, not on both \(\boundvars{\alpha}=\{x,y\}\),
even though any initial states \(\iget[state]{\I},\iget[state]{\Ialt}\) agree on $\freevars{\alpha}=\emptyset$.
This can only happen because \(\emptyset=\mustboundvars{\alpha}\neq\boundvars{\alpha}=\{x,y\}\).
\end{example}

Yet, the respective resulting states $\iget[state]{\It}$ and $\iget[state]{\Italt}$ still agree on the must-bound variables that are bound on \emph{all} paths of $\alpha$, rather than just somewhere in $\alpha$.
If initial states agree on (at least) all free variables $\freevars{\alpha}$ that \HP $\alpha$ may read, then the final states continue to agree on those (even if overwritten since) as well as on all variables that $\alpha$ must write on all paths, i.e.\ $\mustboundvars{\alpha}$.
This is crucial for the soundness proof of the syntactic static semantics,
because, e.g., free occurrences in $\phi$ of must-bound variables of $\alpha$ will not be free in \(\dbox{\alpha}{\phi}\), so the initial states will not have been assumed to agree initially.
It is of similar significance that the resulting states continue to agree on any superset $V\supseteq\freevars{\alpha}$ of the free variables that the initial states agreed on.

\begin{lemma}[Soundness of static semantics] \label{lem:sound-static-semantics}
The static semantics correctly computes supersets and, thus, Lemmas~\ref{lem:bound}--\ref{lem:coincidence-HP} hold for $\boundvars{\alpha},\freevars{\theta},\freevars{\phi},\freevars{\alpha}$:
\[
\boundvars{\alpha} \supseteq \boundvarsdef{\alpha}
\quad
\freevars{\theta} \supseteq \freevarsdef{\theta}
\quad
\freevars{\phi} \supseteq \freevarsdef{\phi}
\quad
\freevars{\alpha} \supseteq \freevarsdef{\alpha}
\]
\end{lemma}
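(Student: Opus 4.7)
The plan is to prove all four inclusions by a single simultaneous structural induction, where formulas and hybrid programs are treated by mutual induction because of their interleaved syntax. Rather than establishing the inclusions directly, at each step I would verify that the syntactic set in question (e.g.\ $\freevars{\phi}$) satisfies the defining \emph{coincidence} or \emph{bound-effect} property established for its semantic counterpart in Lemmas~\ref{lem:bound}--\ref{lem:coincidence-HP}. Because those lemmas already show that $\boundvarsdef{\alpha}$, $\freevarsdef{\theta}$, $\freevarsdef{\phi}$, $\freevarsdef{\alpha}$ are the \emph{smallest} sets with the respective property, the desired inclusions follow by minimality once the syntactic sets are shown to enjoy that same property.

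The bound-variable and term cases are the easy ones. For $\boundvars{\alpha}\supseteq\boundvarsdef{\alpha}$, induction on $\alpha$ works directly from \rref{def:HP-transition}: assignments write only $x$, tests write nothing, differential equations alter only $\{x,\D{x}\}$, and program constants $a$ conservatively get $\boundvars{a}=\allvars$; compositional cases just take unions (and loops a union over iterates). For $\freevars{\theta}\supseteq\freevarsdef{\theta}$, ordinary structural induction suffices, the only delicate step being the differential $\der{\theta}$, where the observation following \rref{lem:coincidence-term} restricts the summation in the semantics of $\der{\theta}$ to indices $x\in\freevarsdef{\theta}\subseteq\freevars{\theta}$, so that changing values of variables and differential symbols outside $\freevars{\theta}\cup\D{\freevars{\theta}}=\freevars{\der{\theta}}$ leaves the value unaffected.

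The main obstacle is the mutually inductive case for formulas and programs, which has to be carried out with a strengthened induction hypothesis for programs: whenever $\iaccessible[\alpha]{\I}{\It}$ and $\iget[state]{\I}=\iget[state]{\Ialt}$ on some $V\supseteq\freevars{\alpha}$, there exists $\iget[state]{\Italt}$ with $\iaccessible[\alpha]{\Ialt}{\Italt}$ such that $\iget[state]{\It}=\iget[state]{\Italt}$ on $V\cup\mustboundvars{\alpha}$. This reinforcement is exactly what powers the definitions $\freevars{\dbox{\alpha}{\phi}}=\freevars{\alpha}\cup(\freevars{\phi}\setminus\mustboundvars{\alpha})$ and $\freevars{\alpha;\beta}=\freevars{\alpha}\cup(\freevars{\beta}\setminus\mustboundvars{\alpha})$: variables that are bound on every execution path of $\alpha$ need not be assumed to agree initially because executing $\alpha$ resynchronizes them. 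Proving the must-bound clause proceeds by the clauses of \rref{def:mustboundvar}: for $\pchoice{\alpha}{\beta}$ one splits on which branch was taken, retaining only the intersection $\mustboundvars{\alpha}\cap\mustboundvars{\beta}$; for $\alpha;\beta$ one threads the IH through $\alpha$ and then $\beta$, using that the intermediate states agree on $V\cup\mustboundvars{\alpha}\supseteq\freevars{\beta}$ so the IH for $\beta$ applies and yields agreement on $V\cup\mustboundvars{\alpha}\cup\mustboundvars{\beta}=V\cup\mustboundvars{\alpha;\beta}$. Loops $\prepeat{\alpha}$ reduce via \rref{def:HP-transition} to a union over finite unrollings $\prepeat[n]{\alpha}$ handled by induction on $n$ using the sequential composition case, and $\mustboundvars{\prepeat{\alpha}}=\emptyset$ reflects that no must-bound strengthening survives arbitrarily many repetitions. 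Once this strengthened program clause is in place, the formula cases such as $\dbox{\alpha}{\phi}$ fall out by combining the coincidence property for $\alpha$ on $V\cup\mustboundvars{\alpha}$ with the inductive coincidence property for $\phi$.
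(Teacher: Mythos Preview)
Your proposal is correct and follows essentially the same approach as the paper: establish the bound-effect property for $\boundvars{\alpha}$ by direct induction, and for $\freevars{\phi}$ and $\freevars{\alpha}$ prove the coincidence property by a simultaneous induction with the must-bound strengthening on the program side, then invoke minimality from Lemmas~\ref{lem:bound}--\ref{lem:coincidence-HP}. The only minor deviation is the term case: the paper argues $\freevars{\theta}\supseteq\freevarsdef{\theta}$ by element-chasing (take $y\in\freevarsdef{\theta}$ and show $y\in\freevars{\theta}$, with a case split for $\der{\theta}$ on whether a partial derivative or a differential-symbol value differs), whereas you uniformly show $\freevars{\theta}$ has the coincidence property and appeal to minimality---both are valid and yours is slightly more streamlined.
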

\begin{proofatend}
First prove \(\boundvars{\alpha} \supseteq \boundvarsdef{\alpha}\) by a straightforward structural induction on $\alpha$.
\begin{compactenum}
\item For program constant $a$, the statement is obvious, since \(\boundvars{a}=\allvars\).

\item \m{\iaccessible[\pupdate{\pumod{x}{\theta}}]{\I}{\It}}
\def\Im{\imodif[state]{\I}{x}{r}}%
iff 
\(\iget[state]{\It}=\iget[state]{\Im}\) with \(r=\ivaluation{\I}{\theta}\)
so \(\iget[state]{\I}=\iget[state]{\It}\) except for \(\{x\}=\boundvars{\pupdate{\pumod{x}{\theta}}} \supseteq \boundvarsdef{\pupdate{\pumod{x}{\theta}}}\).

\item \m{\iaccessible[\ptest{\ivr}]{\I}{\It} = \{(\iget[state]{\I},\iget[state]{\I}) \with \imodels{\I}{\ivr}\}}
implies \(\iget[state]{\I}=\iget[state]{\It}\) so
\(\boundvarsdef{\ptest{\ivr}}=\boundvars{\ptest{\ivr}}=\emptyset\).

\item
  \m{\iaccessible[\pevolvein{\D{x}=\genDE{x}}{\ivr}]{\I}{\It}}
  implies $\iget[state]{\I}=\iget[state]{\Iff[0]}$ on $\scomplement{\{\D{x}\}}$ and $\iget[state]{\It}=\iget[state]{\Iff[r]}$ for some $\iget[flow]{\If}$ with
  \(\imodels{\If}{\D{x}=\genDE{x}\land\ivr}\),
  so \(\iget[state]{\Iff[\zeta]}=\iget[state]{\I}\) on $\scomplement{\{x,\D{x}\}}$ for all $\zeta$.
  Thus \(\iget[state]{\I}=\iget[state]{\It}\) on \(\scomplement{\{x,\D{x}\}}\) by \rref{def:HP-transition}.
  Hence, \(\boundvarsdef{\pevolvein{\D{x}=\genDE{x}}{\ivr}} \subseteq \{x,\D{x}\} = \boundvars{\pevolvein{\D{x}=\genDE{x}}{\ivr}}\).

\item \m{\iaccessible[\pchoice{\alpha}{\beta}]{\I}{\It} = \iaccess[\alpha]{\I} \cup \iaccess[\beta]{\I}}
implies \(\iaccessible[\alpha]{\I}{\It}\) or \(\iaccessible[\beta]{\I}{\It}\),
By induction hypothesis,
\(\boundvarsdef{\alpha}\subseteq\boundvars{\alpha}\) and
\(\boundvarsdef{\beta}\subseteq\boundvars{\beta}\).
Either way, \(\iget[state]{\I}=\iget[state]{\It}\) on $\scomplement{(\boundvarsdef{\alpha}\cup\boundvarsdef{\beta})}$.
So, \(\boundvarsdef{\pchoice{\alpha}{\beta}} \subseteq \boundvarsdef{\alpha}\cup\boundvarsdef{\beta} \subseteq  \boundvars{\alpha} \cup \boundvars{\beta} =\boundvars{\pchoice{\alpha}{\beta}}\).

\item
\newcommand{\Iz}{\dLint[state=\mu]}%
\m{\iaccessible[\alpha;\beta]{\I}{\It} = \iaccess[\alpha]{\I} \compose\iaccess[\beta]{\I}},
i.e.\ there is a $\iget[state]{\Iz}$ such that \(\iaccessible[\alpha]{\I}{\Iz}\) as well as \(\iaccessible[\beta]{\Iz}{\It}\).
By induction hypothesis, \(\boundvarsdef{\alpha}\subseteq\boundvars{\alpha}\) and
\(\boundvarsdef{\beta}\subseteq\boundvars{\beta}\).
Thus, \(\iget[state]{\I}=\iget[state]{\Iz}=\iget[state]{\It}\) on $\scomplement{(\boundvarsdef{\alpha}\cup\boundvarsdef{\beta})}$.
So \(\boundvarsdef{\alpha;\beta} \subseteq \boundvarsdef{\alpha} \cup \boundvarsdef{\beta} \subseteq \boundvars{\alpha} \cup \boundvars{\beta} = \boundvars{\alpha;\beta}\).

\item The case
\m{\iaccessible[\prepeat{\alpha}]{\I}{\It} = \displaystyle\cupfold_{n\in\naturals}\iaccess[{\prepeat[n]{\alpha}}]{\I}}
follows by induction on $n$.
\end{compactenum}

\noindent
The second part proves \(\freevars{\theta} \supseteq \freevarsdef{\theta}\).
{%
\renewcommand{\Ialt}{\vdLint[const=I,state=\tilde{\nu}]}%
\renewcommand{\Italt}{\vdLint[const=I,state=\tilde{\omega}]}%
Let $y\in\freevarsdef{\theta}$, which implies that there are
$\iget[const]{\I}$ and \(\iget[state]{\I}=\iget[state]{\Ialt}\) on $\scomplement{\{x\}}$ such that \(\ivaluation{\I}{\theta}\neq\ivaluation{\Ialt}{\theta}\).
The proof is a structural induction on $\theta$.
IH is short for induction hypothesis.
\begin{compactenum}
\item For any variable including $y$, \(\freevarsdef{y}=\freevars{y}\) since $y$ is the only variable that its value depends on.

\item \m{\ivaluation{\I}{f(\theta_1,\dots,\theta_k)} \neq \ivaluation{\Ialt}{f(\theta_1,\dots,\theta_k)}}
implies for some $i$ that 
\m{\ivaluation{\I}{\theta_i} \neq \ivaluation{\Ialt}{\theta_i}}.
Hence, $y\in\freevarsdef{\theta_i}$ since $y$ is the only variable that $\iget[state]{\I}$ and $\iget[state]{\Ialt}$ differ at.
By IH, this implies $y\in\freevars{\theta_i}\subseteq\freevars{f(\theta_1,\dots,\theta_k)}\).
This includes the case where $f$ is $+$ or $\cdot$ as well.

\item
\m{\ivaluation{\I}{\der{\theta}}
\displaystyle
= \sum_x \iget[state]{\I}(\D{x}) \itimes \Dp[x]{\ivaluation{\I}{\theta}}
\neq \sum_x \iget[state]{\Ialt}(\D{x}) \itimes \Dp[x]{\ivaluation{\Ialt}{\theta}}
= \ivaluation{\Ialt}{\der{\theta}}
}.
Since $\iget[state]{\I}$ and $\iget[state]{\Ialt}$ only differ in $y$, it is the case that
\begin{inparaenum}[\it i)]
\item \label{case:partialy}
for some $x$, \(\Dp[x]{\ivaluation{\I}{\theta}} \neq \Dp[x]{\ivaluation{\Ialt}{\theta}}\), or
\item \label{case:yisxprime}
$y$ is some $\D{x}$ and \(\iget[state]{\I}(\D{x})\neq\iget[state]{\Ialt}(\D{x})\) and
\(\Dp[x]{\ivaluation{\I}{\theta}}\neq0\) or \(\Dp[x]{\ivaluation{\Ialt}{\theta}}\neq0\).
\end{inparaenum}
In \rref{case:partialy}, there are states $\iget[state]{\It}=\iget[state]{\Italt}$ that agree on $\scomplement{\{y\}}$ such that \(\ivaluation{\It}{\theta} \neq \ivaluation{\Italt}{\theta}\) as otherwise their partial derivatives by $x$ would agree.
Thus, $y\in\freevarsdef{\theta} \overset{\text{IH}}{\subseteq} \freevars{\theta}\subseteq\freevars{\der{\theta}}$ by IH.
In \rref{case:yisxprime}, $x\in\freevarsdef{\theta}$ as, otherwise, \(\Dp[x]{\ivaluation{\I}{\theta}}=0\) for all states $\iget[state]{\I}$ since there would not be any state $\iget[state]{\It}$ agreeing \(\iget[state]{\It}=\iget[state]{\I}\) on $\scomplement{\{x\}}$ with a different value \(\ivaluation{\It}{\theta} \neq \ivaluation{\I}{\theta}\) then, so that their partial derivatives by $x$ would both be 0.
By IH, $x\in\freevarsdef{\theta}$ implies $x\in\freevars{\theta}$, which implies $\D{x}\in\freevars{\der{\theta}}$. 
\end{compactenum}
}%

\noindent
The proof of soundness of $\freevars{\phi}$ and $\freevars{\alpha}$ is indirect by a simultaneous inductive prove that both satisfy their respective coincidence property and are, thus, sound \(\freevars{\phi}\supseteq\freevarsdef{\phi}\) and \(\freevars{\alpha}\supseteq\freevarsdef{\alpha}\) by \rref{lem:coincidence} and~\ref{lem:coincidence-HP}, since $\freevarsdef{\phi}$ and $\freevarsdef{\alpha}$ are the smallest sets satisfying coincidence.
In fact, the inductive proof for programs shows a stronger coincidence property augmented with must-bound variables.

The proof that $\freevars{\phi}$ satisfies the coincidence property, and thus \(\freevars{\alpha}\supseteq\freevarsdef{\alpha}\), is by structural induction on $\phi$, simultaneously with the coincidence property for programs.
To simplify the proof, doubly negated existential quantifiers are considered structurally smaller than universal quantifiers and doubly negated diamond modalities smaller than box modalities.
\begin{enumerate}
\item \(\imodels{\I}{p(\theta_1,\dots,\theta_k)}\) iff \((\ivaluation{\I}{\theta_1},\dots,\ivaluation{\I}{\theta_k})\in\iget[const]{\I}(p)\)
iff \((\ivaluation{\Ialt}{\theta_1},\dots,\ivaluation{\Ialt}{\theta_k})\in\iget[const]{\Ialt}(p)\) iff \(\imodels{\Ialt}{p(\theta_1,\dots,\theta_k)}\)
by \rref{lem:coincidence-term} and \(\freevars{\theta}\supseteq\freevarsdef{\theta}\) since $\freevars{\theta_i}\subseteq\freevars{p(\theta_1,\dots,\theta_k)}$
and $\iget[const]{\I}$ and $\iget[const]{\Ialt}$ were assumed to agree on the function symbol $p$ that occurs in the formula.
This includes the case where $p$ is $\geq$ so that $\iget[const]{\I}$ and $\iget[const]{\Ialt}$ agree by definition.

\item \(\imodels{\I}{\contextapp{C}{\phi}} = \iget[const]{\I}(C)\big(\imodel{\I}{\phi}\big)\)
iff, by IH, \(\imodels{\Ialt}{\contextapp{C}{\phi}} = \iget[const]{\Ialt}(C)\big(\imodel{\Ialt}{\phi}\big)\)
since \(\iget[state]{\I}=\iget[state]{\Ialt}\) on \(\freevars{\contextapp{C}{\phi}}=\allvars\), so \(\iget[state]{\I}=\iget[state]{\Ialt}\),
and \(\iget[const]{\I}=\iget[const]{\Ialt}\) on \(\intsigns{\contextapp{C}{\phi}}=\{C\}\cup\intsigns{\phi}\), so \(\iget[const]{\I}(C)=\iget[const]{\Ialt}(C)\) and, by induction hypothesis, implies \(\imodel{\I}{\phi}=\imodel{\Ialt}{\phi}\)
using \(\iget[const]{\I}=\iget[const]{\Ialt}\) on \(\intsigns{\phi}\subseteq\intsigns{\contextapp{C}{\phi}}\).

\item \(\imodels{\I}{\lnot\phi}\) iff \(\inonmodels{\I}{\phi}\)
iff, by IH, \(\inonmodels{\Ialt}{\phi}\) iff \(\imodels{\Ialt}{\lnot\phi}\)
using $\freevars{\lnot\phi}=\freevars{\phi}$.

\item \(\imodels{\I}{\phi\land\psi}\) iff \(\imodels{\I}{\phi}\cap\imodel{\I}{\psi}\)
iff, by IH, \(\imodels{\Ialt}{\phi}\cap\imodel{\Ialt}{\psi}\) iff \(\imodels{\Ialt}{\phi\land\psi}\)
using $\freevars{\phi\land\psi}=\freevars{\phi}\cup\freevars{\psi}$.

\item
{\def\Im{\imodif[state]{\I}{x}{r}}%
\def\Imalt{\dLint[const=I,state=\tilde{\nu}_x^r]}%
\(\imodels{\I}{\lexists{x}{\phi}}\) iff \(\iget[state]{\Im} \in \imodel{\I}{\phi} ~\text{for some}~r\in\reals\)
iff \(\iget[state]{\Imalt} \in \imodel{\I}{\phi} ~\text{for some}~r\in\reals\) iff \(\imodels{\Ialt}{\lexists{x}{\phi}}\)
for the same $r$
by induction hypothesis using that \(\iget[state]{\Im}=\iget[state]{\Imalt}\) on $\freevars{\phi}\subseteq\{x\}\cup\freevars{\lexists{x}{\phi}}$.
}

\item The case \(\lforall{x}{\phi}\) follows from the equivalence \(\lforall{x}{\phi} \mequiv \lnot\lexists{x}{\lnot\phi}\) using \(\freevars{\lnot\lexists{x}{\lnot\phi}} = \freevars{\lforall{x}{\phi}}\).

\item \(\imodels{\I}{\ddiamond{\alpha}{\phi}}\) iff there is a $\iget[state]{\It}$ such that \(\iaccessible[\alpha]{\I}{\It}\) and \(\imodels{\It}{\phi}\).
Since \(\iget[state]{\I}=\iget[state]{\Ialt}\) on $\freevars{\ddiamond{\alpha}{\phi}}\supseteq\freevars{\alpha}$
and \(\iaccessible[\alpha]{\I}{\It}\),
the simultaneous induction hypothesis implies with \(\iget[const]{\I}=\iget[const]{\Ialt}\) on $\intsigns{\alpha}\subseteq\intsigns{\ddiamond{\alpha}{\phi}}$ that there is an $\iget[state]{\Italt}$ such that
  \(\iaccessible[\alpha]{\Ialt}{\Italt}\)
  and \(\iget[state]{\It}=\iget[state]{\Italt}\) on $\freevars{\ddiamond{\alpha}{\phi}}\cup\mustboundvars{\alpha}
  = \freevars{\alpha}\cup(\freevars{\phi}\setminus\mustboundvars{\alpha})\cup\mustboundvars{\alpha}
  = \freevars{\alpha}\cup\freevars{\phi}\cup\mustboundvars{\alpha}
  \supseteq \freevars{\phi}$.

\centerline{
\includegraphics[scale=0.8]{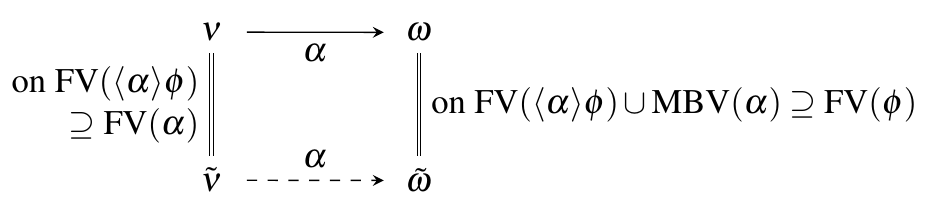}
}

Since, \(\iget[state]{\It}=\iget[state]{\Italt}\) on $\freevars{\phi}$ and \(\iget[const]{\I}=\iget[const]{\Ialt}\) on $\intsigns{\phi} \subseteq \intsigns{\ddiamond{\alpha}{\phi}}$,
the induction hypothesis implies that
\(\imodels{\Italt}{\phi}\) since \(\imodels{\It}{\phi}\).
Since \(\iaccessible[\alpha]{\Ialt}{\Italt}\), this implies
\m{\imodels{\Ialt}{\ddiamond{\alpha}{\phi}}}.

\item \(\imodels{\I}{\dbox{\alpha}{\phi}} = \imodel{\I}{\lnot\ddiamond{\alpha}{\lnot\phi}}\)
iff \(\inonmodels{\I}{\ddiamond{\alpha}{\lnot\phi}}\),
so by IH, iff \(\inonmodels{\Ialt}{\ddiamond{\alpha}{\lnot\phi}}\) iff \(\imodels{\Ialt}{\dbox{\alpha}{\phi}}\)
using that $\freevars{\ddiamond{\alpha}{\lnot\phi}}=\freevars{\dbox{\alpha}{\phi}}$.
\end{enumerate}

\noindent
The proof that $\freevars{\alpha}$ satisfies the coincidence property, and thus \(\freevars{\phi}\supseteq\freevarsdef{\phi}\), shows a stronger property.
  If \(\iget[state]{\I}=\iget[state]{\Ialt}\) on $V\supseteq\freevars{\alpha}$,
  \(\iget[const]{\I}=\iget[const]{\Ialt}\) on $\intsigns{\alpha}$
  and \(\iaccessible[\alpha]{\I}{\It}\),
  then there is a $\iget[state]{\Italt}$ such that
  \(\iaccessible[\alpha]{\Ialt}{\Italt}\)
  and 
  \(\iget[state]{\It}=\iget[state]{\Italt}\) on $V\cup\mustboundvars{\alpha}$.
The proof is by (simultaneous) induction on the structural complexity of $\alpha$, where $\prepeat{\alpha}$ is considered to be structurally more complex than \HPs of any length but with less nested repetitions, which induces a well-founded order on \HPs.
For atomic programs $\alpha$ for which \(\boundvars{\alpha}=\mustboundvars{\alpha}\), it is enough to show agreement on $\vars{\alpha}= \freevars{\alpha}\cup\boundvars{\alpha}=\freevars{\alpha}\cup\mustboundvars{\alpha}$, because any variable in $V\setminus\vars{\alpha}$ is in \(\scomplement{\boundvars{\alpha}}\), which remain unchanged by $\alpha$ according to \rref{lem:bound} and \(\boundvars{\alpha}\supseteq\boundvarsdef{\alpha}\).

\begin{compactenum}

\item Since \(\freevars{a} = \allvars\) so $\iget[state]{\I}=\iget[state]{\Ialt}$, the statement is vacuously true for program constant $a$.

\item \m{\iaccessible[\pupdate{\pumod{x}{\theta}}]{\I}{\It} = \{(\iget[state]{\I},\iget[state]{\It}) \with \iget[state]{\It}=\iget[state]{\I}~\text{except that}~\ivaluation{\It}{x}=\ivaluation{\I}{\theta}\}}
then there is \(\iaccessible[\pupdate{\pumod{x}{\theta}}]{\Ialt}{\Italt}\)
and \(\iget[state]{\Italt}(x)=\ivaluation{\Italt}{x}=\ivaluation{\Ialt}{\theta}
=\ivaluation{\I}{\theta}=\ivaluation{\It}{x} = \iget[state]{\I}(x)\)
by \rref{lem:coincidence-term},
since \(\iget[state]{\I}=\iget[state]{\Ialt}\) on $\freevars{\pupdate{\pumod{x}{\theta}}}=\freevars{\theta}$ and \(\iget[const]{\I}=\iget[const]{\Ialt}\) on $\intsigns{\theta}$.
So, \(\iget[state]{\It}=\iget[state]{\Italt}\) on $\boundvars{\pupdate{\pumod{x}{\theta}}}=\{x\}$.
Also, \(\iget[state]{\I}=\iget[state]{\It}\) on $\scomplement{\boundvars{\pupdate{\pumod{x}{\theta}}}}$
and \(\iget[state]{\Ialt}=\iget[state]{\Italt}\) on $\scomplement{\boundvars{\pupdate{\pumod{x}{\theta}}}}$ by \rref{lem:bound}.
Since \(\iget[state]{\I}=\iget[state]{\Ialt}\) on $\freevars{\pupdate{\pumod{x}{\theta}}}$, these imply
\(\iget[state]{\It}=\iget[state]{\Italt}\) on $\freevars{\pupdate{\pumod{x}{\theta}}}\setminus\boundvars{\pupdate{\pumod{x}{\theta}}}$.
Since \(\iget[state]{\It}=\iget[state]{\Italt}\) on $\boundvars{\pupdate{\pumod{x}{\theta}}}$ had been shown already, this implies
\(\iget[state]{\It}=\iget[state]{\Italt}\) on $\vars{\pupdate{\pumod{x}{\theta}}}$.

\item \m{\iaccessible[\ptest{\ivr}]{\I}{\It} = \{(\iget[state]{\I},\iget[state]{\I}) \with \imodels{\I}{\ivr}\}}
then \(\iget[state]{\It}=\iget[state]{\I}\) by \rref{def:HP-transition}.
Since, \(\imodels{\I}{\ivr}\) and \(\iget[state]{\I}=\iget[state]{\Ialt}\) on $\freevars{\ptest{\ivr}}$ and \(\iget[const]{\I}=\iget[const]{\Ialt}\) on $\intsigns{\ivr}=\intsigns{\ptest{\ivr}}$,
\rref{lem:coincidence} and the simultaneous induction for \(\freevars{\psi}\supseteq\freevarsdef{\psi}\) implies that \(\imodels{\Ialt}{\ivr}\), so \(\iaccessible[\ptest{\ivr}]{\Ialt}{\Ialt}\).
So \(\iget[state]{\I}=\iget[state]{\Ialt}\) on $\vars{\ptest{\ivr}}=\freevars{\ptest{\ivr}}$ since \(\boundvars{\ptest{\ivr}}=\emptyset\).

\item
\newcommand{\Ifalt}{\DALint[const=J,flow=\tilde{\rule{0pt}{5pt}\varphi}]}%
\newcommand*{\Iffalt}[1][\zeta]{\vdLint[const=J,state=\tilde{\rule{0pt}{5pt}\varphi}(#1)]}%
  \m{\iaccessible[\pevolvein{\D{x}=\genDE{x}}{\ivr}]{\I}{\It}} 
  implies that there is an $\iget[state]{\Italt}$ reached from $\iget[state]{\Ialt}$ by following the differential equation for the same amount it took to reach $\iget[state]{\It}$ from $\iget[state]{\I}$.
  That is, $\iget[state]{\I}=\iget[state]{\Iff[0]}$ on $\scomplement{\{\D{x}\}}$ and \(\iget[state]{\It}=\iget[state]{\Iff[r]}\)
  for some function \m{\iget[flow]{\If}:[0,r]\to\linterpretations{\Sigma}{V}}
  satisfying \m{\imodels{\If}{\D{x}=\genDE{x}\land\ivr}},
  especially
  \(\imodels{\Iff[\zeta]}{\D{x}=\genDE{x}\land\ivr}\) for all $0\leq\zeta\leq r$.
  Define \m{\iget[flow]{\Ifalt}:[0,r]\to\linterpretations{\Sigma}{V}}
  at $\zeta$ as \(\iget[state]{\Iffalt[\zeta]}=\iget[state]{\Iff[\zeta]}\) on $\{x,\D{x}\}$
  and as \(\iget[state]{\Iffalt[\zeta]}=\iget[state]{\Ialt}\) on $\scomplement{\{x,\D{x}\}}$.
  Fix any $0\leq\zeta\leq r$.
  Then it only remains to show that
  \(\imodels{\Iffalt[\zeta]}{\D{x}=\genDE{x}\land\ivr}\),
  i.e.\
  \(\imodels{\Iffalt[\zeta]}{\D{x}=\genDE{x}}\cap\imodel{\Iffalt[\zeta]}{\ivr}\),
  which follows from
  \(\imodels{\Iff[\zeta]}{\D{x}=\genDE{x}}\cap\imodel{\Iff[\zeta]}{\ivr}\)
  by \rref{lem:coincidence} and the simultaneous induction hypothesis,
  since \(\iget[const]{\Iff[\zeta]}=\iget[const]{\Iffalt[\zeta]}\)
  on $\intsigns{\D{x}=\genDE{x}}\cup\intsigns{\ivr} = \intsigns{\pevolvein{\D{x}=\genDE{x}}{\ivr}}$
  and \(\iget[state]{\Iff[\zeta]}=\iget[state]{\Iffalt[\zeta]}\)
  on $\freevars{\D{x}=\genDE{x}}\cup\freevars{\ivr} = \freevars{\pevolvein{\D{x}=\genDE{x}}{\ivr}} \cup \{\D{x}\}$.
  Here, \(\iget[state]{\Iff[\zeta]}=\iget[state]{\Iffalt[\zeta]}\) agree on $\{x,\D{x}\}$ by construction of $\iget[flow]{\Ifalt}$.
  Agreement of $\iget[state]{\Iff[\zeta]}$ and $ \iget[state]{\Iffalt[\zeta]}$ for the other free variables follows from the assumption that $\iget[state]{\I}=\iget[state]{\Ialt}$ on $\freevars{\pevolvein{\D{x}=\genDE{x}}{\ivr}}$
  since \(\iget[state]{\I}=\iget[state]{\Iff[\zeta]}\) on $\scomplement{\{x,\D{x}\}}$ by \rref{def:HP-transition}
  and since \(\iget[state]{\Ialt}=\iget[state]{\Iffalt[\zeta]}\) on $\scomplement{\{x,\D{x}\}}$ by construction.
    
\item \m{\iaccessible[\pchoice{\alpha}{\beta}]{\I}{\It} = \iaccess[\alpha]{\I} \cup \iaccess[\beta]{\I}}
implies \(\iaccessible[\alpha]{\I}{\It}\) or \(\iaccessible[\beta]{\I}{\It}\),
which since \(V\supseteq\freevars{\pchoice{\alpha}{\beta}}\supseteq\freevars{\alpha}\)
and \(V\supseteq\freevars{\pchoice{\alpha}{\beta}}\supseteq\freevars{\beta}\) implies, by induction hypothesis,
that there is an $\iget[state]{\Italt}$ such that
  \(\iaccessible[\alpha]{\Ialt}{\Italt}\)
  and \(\iget[state]{\It}=\iget[state]{\Italt}\) on $V\cup\mustboundvars{\alpha}$
or that there is an $\iget[state]{\Italt}$ such that
  \(\iaccessible[\beta]{\Ialt}{\Italt}\)
  and \(\iget[state]{\It}=\iget[state]{\Italt}\) on $V\cup\mustboundvars{\beta}$, respectively.
  In either case, there is a $\iget[state]{\Italt}$ such that
  \(\iaccessible[\pchoice{\alpha}{\beta}]{\Ialt}{\Italt}\)
  and \(\iget[state]{\It}=\iget[state]{\Italt}\) on $V\cup\mustboundvars{\pchoice{\alpha}{\beta}}$,
  because \(\iaccess[\alpha]{\Ialt}\subseteq\iaccess[\pchoice{\alpha}{\beta}]{\Ialt}\)
  and \(\iaccess[\beta]{\Ialt}\subseteq\iaccess[\pchoice{\alpha}{\beta}]{\Ialt}\) 
  and \(\mustboundvars{\pchoice{\alpha}{\beta}}=\mustboundvars{\alpha}\cap\mustboundvars{\beta}\).

\centerline{
\includegraphics[scale=0.8]{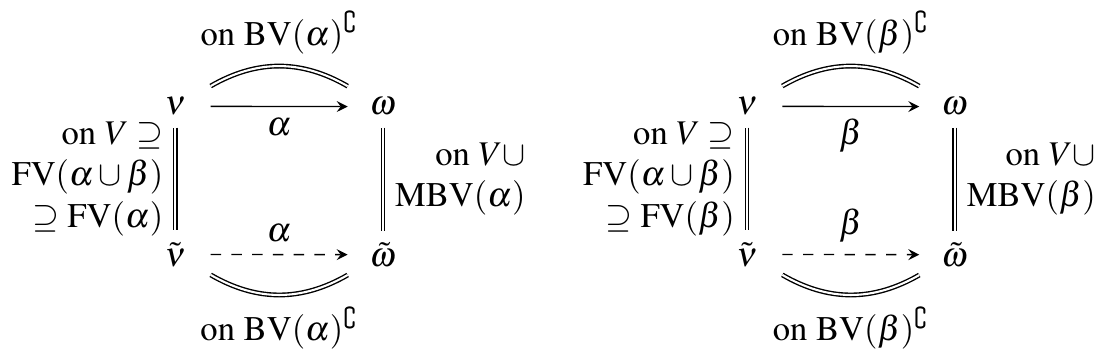}
}

\item \label{case:coincidence-compose}
\newcommand{\Iz}{\iconcat[state=\mu]{\I}}%
\newcommand{\Izalt}{\vdLint[const=J,state=\tilde{\mu}]}%
\m{\iaccessible[\alpha;\beta]{\I}{\It} = \iaccess[\alpha]{\I} \compose \iaccess[\beta]{\I}},
i.e.\ there is a state $\iget[state]{\Iz}$ such that \(\iaccessible[\alpha]{\I}{\Iz}\) and \(\iaccessible[\beta]{\Iz}{\It}\).
Since \(V\supseteq\freevars{\alpha;\beta}\supseteq\freevars{\alpha}\), by induction hypothesis, there is a $\iget[state]{\Izalt}$ such that
  \(\iaccessible[\alpha]{\Ialt}{\Izalt}\)
  and \(\iget[state]{\Iz}=\iget[state]{\Izalt}\) on $V\cup\mustboundvars{\alpha}$.
  Since \(V\supseteq\freevars{\alpha;\beta}\),
  so
  \(V\cup\mustboundvars{\alpha} \supseteq \freevars{\alpha;\beta} \cup \mustboundvars{\alpha}
  = \freevars{\alpha}\cup(\freevars{\beta}\setminus\mustboundvars{\alpha}) \cup \mustboundvars{\alpha}
  = \freevars{\alpha}\cup\freevars{\beta}\cup\mustboundvars{\alpha}
  \supseteq \freevars{\beta}\)
  by \rref{def:freevars},
  and since \(\iaccessible[\beta]{\Iz}{\It}\), the induction hypothesis implies that
  there is an $\iget[state]{\Italt}$ such that
  \(\iaccessible[\beta]{\Izalt}{\Italt}\)
  and \(\iget[state]{\It}=\iget[state]{\Italt}\) on $(V\cup\mustboundvars{\alpha})\cup\mustboundvars{\beta} = V\cup\mustboundvars{\alpha;\beta}$.

  \centerline{
  \includegraphics[scale=0.8]{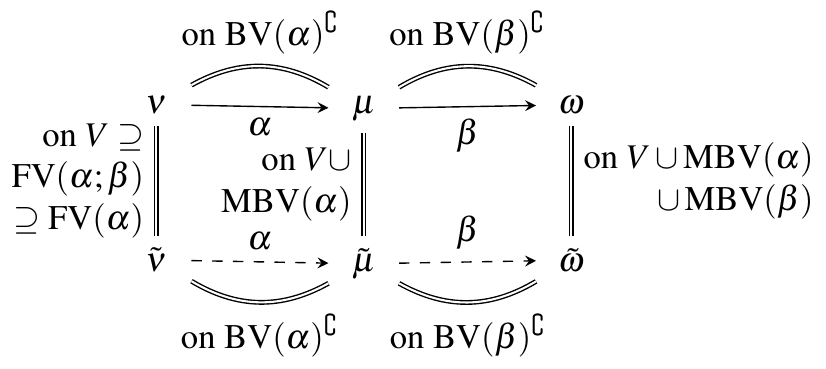}
}

\item
\renewcommand{\Iz}[1][]{\dLint[state=\nu_{#1}]}%
\m{\iaccessible[\prepeat{\alpha}]{\I}{\It} = \displaystyle\cupfold_{n\in\naturals}\iaccess[{\prepeat[n]{\alpha}}]{\I}}
iff there is an $n\in\naturals$ such that \(\iaccessible[\alpha^n]{\I}{\It}\).
The case $n=0$ follows from the assumption \(\iget[state]{\I}=\iget[state]{\Ialt}\) on $V\supseteq\freevars{\alpha}$, since \(\iget[state]{\It}=\iget[state]{\I}\) holds in that case and $\mustboundvars{\prepeat{\alpha}}=\emptyset$.
The case $n>0$ proceeds as follows.
Since \(\freevars{\prepeat[n]{\alpha}}=\freevars{\prepeat{\alpha}}=\freevars{\alpha}\), the induction hypothesis applied to the structurally simpler \HP $\prepeat[n]{\alpha} \mequiv \prepeat[n-1]{\alpha};\alpha$ with less loops (so using \rref{case:coincidence-compose}) implies
that there is an $\iget[state]{\Italt}$ such that
  \(\iaccessible[\alpha^n]{\Ialt}{\Italt}\)
  and \(\iget[state]{\It}=\iget[state]{\Italt}\) on $V\cup\mustboundvars{\prepeat[n]{\alpha}} \supseteq V = V\cup\mustboundvars{\prepeat{\alpha}}$,
  since $\mustboundvars{\prepeat{\alpha}}=\emptyset$.
  Since \(\iaccess[{\prepeat[n]{\alpha}}]{\Ialt}\subseteq\iaccess[\prepeat{\alpha}]{\Ialt}\) by \rref{def:HP-transition}, this concludes the proof.
\qedhere
\end{compactenum}
\end{proofatend}
\noindent
In particular, the final states $\iget[state]{\It}$ and $\iget[state]{\Italt}$ agree on $\vars{\alpha}$ if the initial states $\iget[state]{\I}$ and $\iget[state]{\Ialt}$ agree on $\vars{\alpha}$ and even if the initial states only agree on $\vars{\alpha}\setminus\mustboundvars{\alpha}$.

This concludes the static semantics of \dL, which computes syntactically what kind of state change formulas $\phi$ and \HPs $\alpha$ may cause (captured in $\boundvars{\phi},\boundvars{\alpha}$) and what part of the state their values and behavior depends on ($\freevars{\phi},\freevars{\alpha}$).
\rref{lem:sound-static-semantics} will be used implicitly in the sequel when referring to Lemmas~\ref{lem:bound}--\ref{lem:coincidence-HP}.

\section{Uniform Substitutions} \label{sec:usubst}

The uniform substitution rule \irref{US0} from first-order logic \cite[\S35,40]{Church_1956} substitutes \emph{all} occurrences of predicate $p(\usarg)$ by a formula $\mapply{\psi}{\usarg}$, i.e.\ it replaces all occurrences of $p(\theta)$, for any (vectorial) argument term $\theta$, by the corresponding $\mapply{\psi}{\theta}$ simultaneously:
\[
      \cinferenceRule[US0|US$_1$]{uniform substitution}
      {\linferenceRule[formula]
        {\preusubst[\phi]{p}}
        {\usubst[\phi]{p}{\psi}}
      }{}%
      \qquad\qquad
      \cinferenceRule[US|US]{uniform substitution}
      {\linferenceRule[formula]
        {\phi}
        {\applyusubst{\sigma}{\phi}}
      }{}%
\]
Soundness of rule \irref{US0} \cite{DBLP:journals/tocl/Platzer15} requires all relevant substitutions of $\mapply{\psi}{\theta}$ for $p(\theta)$ to be \emph{admissible}, i.e.\ that no $p(\theta)$ occurs in the scope of a quantifier or modality binding a variable of $\mapply{\psi}{\theta}$ other than the occurrences in $\theta$; see \cite[\S35,40]{Church_1956}.
A precise definition of admissibility is the key ingredient and will be developed from the static semantics.

This section develops rule \irref{US} as a more general and constructive definition with a precise substitution algorithm and precise admissibility conditions that allow symbols from more syntactic categories to be substituted.
The \dL calculus uses uniform substitutions that affect terms, formulas, and programs.
A \dfn{uniform substitution} $\sigma$ is a mapping
from expressions of the
form \(f(\usarg)\) to terms $\applysubst{\sigma}{f(\usarg)}$,
from \(p(\usarg)\) to formulas $\applysubst{\sigma}{p(\usarg)}$,
from \(\contextapp{C}{\uscarg}\) to formulas $\applysubst{\sigma}{\contextapp{C}{\uscarg}}$,
and from program constants \(a\) to \HPs $\applysubst{\sigma}{a}$.
Vectorial extensions are accordingly for uniform substitutions of other arities $k\geq0$.
Here $\usarg$ is a reserved function symbol of arity zero and $\uscarg$ a reserved quantifier symbol of arity zero, which mark the positions where the respective argument, e.g., argument $\theta$ to $p(\usarg)$ in the formula $p(\theta)$, will end up in the replacement $\applysubst{\sigma}{p(\usarg)}$ used for $p(\theta)$.

\begin{example}[Uniform substitutions with or without clashes] \label{ex:usubst-assign1}
The uniform substitution
\(\sigma=\usubstlist{\usubstmod{f}{x+1},\usubstmod{p(\usarg)}{(\usarg\neq x)}}\)
substitutes all occurrences of function symbol $f$ (of arity 0) by $x+1$ and simultaneously substitutes all occurrences of $p(\theta)$ with predicate symbol $p$ of any argument $\theta$ by the corresponding $(\theta\neq x)$.
Whether that uniform substitution is sound depends on admissibility of $\sigma$ for the formula $\phi$ in \irref{US} as will be defined in \rref{def:usubst-admissible}.
It will turn out to be admissible (and thus sound) for
\[
\linfer[US]
{\dbox{\pupdate{\umod{y}{f}}}{p(2y)} \lbisubjunct \dbox{\pupdate{\umod{y}{f}}}{p(2f)}}
{\dbox{\pupdate{\umod{y}{x+1}}}{\,2y\neq x} \lbisubjunct \dbox{\pupdate{\umod{y}{x+1}}}{\,2(x+1)\neq x} }
\qquad
\sigma=\usubstlist{\usubstmod{f}{x+1},\usubstmod{p(\usarg)}{(\usarg\neq x)}}
\]
but will turn out to be in-admissible (and, in fact, would be unsound) for:
\[
\newcommand*{\dangerocc}[1]{\textcolor{vred}{#1}}
\linfer[clash]
{\dbox{\pupdate{\umod{x}{f}}}{p(x)} \lbisubjunct p(f)}
{\dbox{\pupdate{\umod{\dangerocc{x}}{x+1}}}{\,x\neq x} \lbisubjunct x+1\neq x}
\qquad
\sigma=\usubstlist{\usubstmod{f}{x+1},\usubstmod{p(\usarg)}{(\usarg\neq \dangerocc{x})}}
\]
Here, $\sigma$ is not admissible, because $\sigma$ has a free variable $x$ in its replacement for $p(\usarg)$ that it introduces into a context where $x$ is bound by the modality $\dbox{\pupdate{\umod{x}{\dots}}}{}$, so the $x$ in replacement $\usarg\neq x$ for $p(\usarg)$ would refer to different values in the occurrences of $p$.
\end{example}

Figure~\ref{fig:usubst} defines the result $\applyusubst{\sigma}{\phi}$ of applying to a \dL formula~$\phi$ the \dfn{uniform substitution} $\sigma$ that uniformly replaces all occurrences of a function~$f$ by a term (instantiated with its respective argument of $f$) and all occurrences of a predicate~$p$ or a quantifier~$C$ symbol by a formula (instantiated with its argument)
as well as of a program constant $a$ by a program.
A uniform substitution can replace any number of such function, predicate, and quantifier symbols or program constants simultaneously.
The notation $\applysubst{\sigma}{f(\usarg)}$ denotes the replacement for $f(\usarg)$ according to $\sigma$, i.e.\ the value $\applysubst{\sigma}{f(\usarg)}$ of function $\sigma$ at $f(\usarg)$.
By contrast, $\applyusubst{\sigma}{\phi}$ denotes the result of applying $\sigma$ to $\phi$ according to \rref{fig:usubst} (likewise for $\applyusubst{\sigma}{\theta}$ and $\applyusubst{\sigma}{\alpha}$).
The notation $f\in\replacees{\sigma}$ signifies that $\sigma$ replaces $f$, i.e.\ \(\applysubst{\sigma}{f(\usarg)} \neq f(\usarg)\).
Finally, $\sigma$ is a total function when augmented with \(\applysubst{\sigma}{g(\usarg)}=g(\usarg)\) for all $g\not\in\replacees{\sigma}$, so that the case $g\not\in\replacees{\sigma}$ in \rref{fig:usubst} is subsumed by case $f\in\replacees{\sigma}$.
Corresponding notation is used for predicate symbols, quantifier symbols, and program constants.
The cases $g\not\in\replacees{\sigma}$, $p\not\in\replacees{\sigma}$, $C\not\in\replacees{\sigma}$, $b\not\in\replacees{\sigma}$ follow from the other cases but are listed explicitly for clarity.
Arguments are put in for the placeholder $\usarg$ recursively by uniform substitution \(\{\usarg\mapsto\applyusubst{\sigma}{\theta}\}\) in \rref{fig:usubst}, which is defined since it replaces the function symbol $\usarg$ of arity 0 by $\applyusubst{\sigma}{\theta}$, or accordingly for quantifier symbol $\uscarg$ of arity 0.

\begin{figure}[tb]
  \newcommand*{\implied}[1]{\textcolor{darkgray}{#1}}%
  \begin{displaymath}
    \begin{array}{@{}rcll@{}}
    \applyusubst{\sigma}{x} &=& x & \text{for variable $x\in\allvars$}\\
    \applyusubst{\sigma}{f(\theta)} &=& (\applyusubst{\sigma}{f})(\applyusubst{\sigma}{\theta})
  \mdefeq \applyusubst{\{\usarg\mapsto\applyusubst{\sigma}{\theta}\}}{\applysubst{\sigma}{f(\usarg)}} &
  \text{for function symbol}~f\in\replacees{\sigma}
    \\
  \implied{\applyusubst{\sigma}{g(\theta)}} &\implied{=}& \implied{g(\applyusubst{\sigma}{\theta})} &\implied{\text{for function symbol}~g\not\in\replacees{\sigma}}
  \\
  \applyusubst{\sigma}{\theta+\eta} &=& \applyusubst{\sigma}{\theta} + \applyusubst{\sigma}{\eta}
  \\
  \applyusubst{\sigma}{\theta\cdot\eta} &=& \applyusubst{\sigma}{\theta} \cdot \applyusubst{\sigma}{\eta}
  \\
  \applyusubst{\sigma}{\der{\theta}} &=& \der{\applyusubst{\sigma}{\theta}} &\text{if $\sigma$ is $\allvars$-admissible for $\theta$}
  \\
  \hline
  \applyusubst{\sigma}{\theta\geq\eta} &\mequiv& \applyusubst{\sigma}{\theta} \geq \applyusubst{\sigma}{\eta}\\
    \applyusubst{\sigma}{p(\theta)} &\mequiv& (\applyusubst{\sigma}{p})(\applyusubst{\sigma}{\theta})
  \mdefequiv \applyusubst{\{\usarg\mapsto\applyusubst{\sigma}{\theta}\}}{\applysubst{\sigma}{p(\usarg)}} &
  \text{for predicate symbol}~p\in\replacees{\sigma}\\
  \implied{\applyusubst{\sigma}{q(\theta)}} &\implied{\mequiv}& \implied{q(\applyusubst{\sigma}{\theta})} &\implied{\text{for predicate symbol}~q\not\in\replacees{\sigma}}\\
  \applyusubst{\sigma}{\contextapp{C}{\phi}} &\mequiv& \contextapp{\applyusubst{\sigma}{C}}{\applyusubst{\sigma}{\phi}}
  \mdefequiv \applyusubst{\{\uscarg\mapsto\applyusubst{\sigma}{\phi}\}}{\applysubst{\sigma}{\contextapp{C}{\uscarg}}} &
  \text{if $\sigma$ is $\allvars$-admissible for $\phi$, $C\in\replacees{\sigma}$}
  \\
  \implied{\applyusubst{\sigma}{\contextapp{C}{\phi}}} &\implied{\mequiv}& \implied{\contextapp{C}{\applyusubst{\sigma}{\phi}}} &
  \implied{\text{if $\sigma$ is $\allvars$-admissible for $\phi$, $C\not\in\replacees{\sigma}$}}
  \\
    \applyusubst{\sigma}{\lnot\phi} &\mequiv& \lnot\applyusubst{\sigma}{\phi}\\
    \applyusubst{\sigma}{\phi\land\psi} &\mequiv& \applyusubst{\sigma}{\phi} \land \applyusubst{\sigma}{\psi}\\
    \applyusubst{\sigma}{\lforall{x}{\phi}} &=& \lforall{x}{\applyusubst{\sigma}{\phi}} & \text{if $\sigma$ is $\{x\}$-admissible for $\phi$}\\
    \applyusubst{\sigma}{\lexists{x}{\phi}} &=& \lexists{x}{\applyusubst{\sigma}{\phi}} & \text{if $\sigma$ is $\{x\}$-admissible for $\phi$}\\
    \applyusubst{\sigma}{\dbox{\alpha}{\phi}} &=& \dbox{\applyusubst{\sigma}{\alpha}}{\applyusubst{\sigma}{\phi}} & \text{if $\sigma$ is $\boundvarsdef{\applyusubst{\sigma}{\alpha}}$-admissible for $\phi$}\\
    \applyusubst{\sigma}{\ddiamond{\alpha}{\phi}} &=& \ddiamond{\applyusubst{\sigma}{\alpha}}{\applyusubst{\sigma}{\phi}} & \text{if $\sigma$ is $\boundvarsdef{\applyusubst{\sigma}{\alpha}}$-admissible for $\phi$}
    \\
  \hline
    \applyusubst{\sigma}{a} &\mequiv& \applysubst{\sigma}{a} &\text{for program constant $a\in\replacees{\sigma}$}\\
    \implied{\applyusubst{\sigma}{b}} &\implied{\mequiv}& \implied{b} &\implied{\text{for program constant $b\not\in\replacees{\sigma}$}}\\
    \applyusubst{\sigma}{\pupdate{\umod{x}{\theta}}} &\mequiv& \pupdate{\umod{x}{\applyusubst{\sigma}{\theta}}}\\
    \applyusubst{\sigma}{\pevolvein{\D{x}=\genDE{x}}{\ivr}} &\mequiv&
    \hevolvein{\D{x}=\applyusubst{\sigma}{\genDE{x}}}{\applyusubst{\sigma}{\ivr}} & \text{if $\sigma$ is $\{x,\D{x}\}$-admissible for $\genDE{x},\ivr$}\\
    \applyusubst{\sigma}{\ptest{\ivr}} &\mequiv& \ptest{\applyusubst{\sigma}{\ivr}}\\
    \applyusubst{\sigma}{\pchoice{\alpha}{\beta}} &\mequiv& \pchoice{\applyusubst{\sigma}{\alpha}} {\applyusubst{\sigma}{\beta}}\\
    \applyusubst{\sigma}{\alpha;\beta} &\mequiv& \applyusubst{\sigma}{\alpha}; \applyusubst{\sigma}{\beta} &\text{if $\sigma$ is $\boundvarsdef{\applyusubst{\sigma}{\alpha}}$-admissible for $\beta$}\\
    \applyusubst{\sigma}{\prepeat{\alpha}} &\mequiv& \prepeat{(\applyusubst{\sigma}{\alpha})} &\text{if $\sigma$ is $\boundvarsdef{\applyusubst{\sigma}{\alpha}}$-admissible for $\alpha$}
    \end{array}%
  \end{displaymath}%
  \caption{Recursive application of uniform substitution~$\sigma$}%
  \index{substitution!uniform|textbf}%
  \label{fig:usubst}
\end{figure}%
\begin{definition}[Admissible uniform substitution] \label{def:usubst-admissible}
  \index{admissible|see{substitution, admissible}}
  A uniform substitution~$\sigma$ is \emph{$U$-\admissible for $\phi$} (or $\theta$ or $\alpha$, respectively) with respect to the variables $U\subseteq\allvars$ iff
  \(\freevarsdef{\restrict{\sigma}{\intsigns{\phi}}}\cap U=\emptyset\),
  where \({\restrict{\sigma}{\intsigns{\phi}}}\) is the restriction of $\sigma$ that only replaces symbols that occur in $\phi$,
  and
  \(\freevarsdef{\sigma}=\cupfold_{f\in\replacees{\sigma}} \freevarsdef{\applysubst{\sigma}{f(\usarg)}} \cup \cupfold_{p\in\replacees{\sigma}} \freevarsdef{\applysubst{\sigma}{p(\usarg)}}\)
  are the \emph{free variables} that $\sigma$ introduces. 
  A uniform substitution~$\sigma$ is \emph{\admissible for $\phi$} (or $\theta$ or $\alpha$, respectively) iff the bound variables $U$ of each operator of $\phi$ are not free in the substitution on its arguments, i.e.\ $\sigma$ is $U$-admissible.
  These admissibility conditions are listed explicitly in \rref{fig:usubst}, which defines the result $\applyusubst{\sigma}{\phi}$ of applying $\sigma$ to $\phi$.
\end{definition}

The substitution $\sigma$ is said to \emph{clash} and its result $\applyusubst{\sigma}{\phi}$ (or $\applyusubst{\sigma}{\theta}$ or $\applyusubst{\sigma}{\alpha}$) is not defined if $\sigma$ is not admissible, in which case rule \irref{US} is not applicable either.
All subsequent applications of uniform substitutions are required to be defined (no clash).
If a uniform substitution is admissible using the syntactic $\freevars{\phi}$ and $\boundvars{\alpha}$ from \rref{sec:static-semantics-computation}, then it is also admissible using the static semantics $\freevarsdef{\phi}$ and $\boundvarsdef{\alpha}$ from \rref{sec:static-semantics} by \rref{lem:sound-static-semantics}, so that the syntactic computations can be used soundly.

\begin{example}[Admissibility] 
The first use of \irref{US} in \rref{ex:usubst-assign1} is admissible, because no free variable of the substitution is introduced into a context in which that variable is bound.
The second, unsound attempt in \rref{ex:usubst-assign1} clashes, because it is not admissible, since \(x\in\freevars{\sigma}\) but also \(x\in\boundvars{\pupdate{\pumod{x}{x+1}}}\).
Occurrences of such bound variables that result from the arguments of the predicates or functions are exempt:
\[
\linfer[US]
{\dbox{\pupdate{\umod{x}{f}}}{p(x)} \lbisubjunct p(f)}
{\dbox{\pupdate{\umod{x}{x+1}}}{x\neq y} \lbisubjunct x+1\neq y}
\qquad
\sigma=\usubstlist{\usubstmod{f}{x+1},\usubstmod{p(\usarg)}{(\usarg\neq y)}}
\]
\end{example}

\subsection{Uniform Substitution Lemmas}

Soundness of rule \irref{US} requires proving that validity is preserved when replacing symbols with their uniform substitutes.
The key to its soundness proof is to relate this syntactic change to a semantic change of the interpretations such that validity of its premise in all interpretations implies validity of the premise in the semantically modified interpretation, which is then equivalent to validity of its syntactical substitute in the conclusion.
The semantic substitution corresponding to (or adjoint to) $\sigma$ modifies the interpretation of function, predicate and quantifier symbols as well as program constants semantically in the same way that $\sigma$ replaces them syntactically.
When $\sigma$ is admissible, the value of an expression in the adjoint interpretation agrees with the value of its uniform substitute in the original interpretation.
This link to the static semantics proves the following correspondence of syntactic and  semantic substitution.

Let \m{\iget[const]{\imodif[const]{\I}{\,\usarg}{d}}} denote the interpretation that agrees with interpretation~$\iget[const]{\I}$ except for the interpretation of function symbol~$\usarg$ which is changed to~\m{d\in\reals}.
Correspondingly \m{\iget[const]{\imodif[const]{\I}{\uscarg}{R}}} denotes the interpretation that agrees with $\iget[const]{\I}$ except that quantifier symbol $\uscarg$ is $R\subseteq\linterpretations{\Sigma}{V}$.

\begin{definition}[Substitution adjoints] \label{def:adjointUsubst}
\def\Ialta{\iadjointSubst{\sigma}{\Ialt}}%
The \emph{adjoint} to substitution $\sigma$ is the operation that maps $\iportray{\I}$ to the \emph{adjoint} interpretation $\iget[const]{\Ia}$ in which the interpretation of each function symbol $f\in\replacees{\sigma}$, predicate symbol $p\in\replacees{\sigma}$, quantifier symbol $C\in\replacees{\sigma}$, and program constant $a\in\replacees{\sigma}$ is modified according to $\sigma$:
\begin{align*}
  \iget[const]{\Ia}(f) &: \reals\to\reals;\, d\mapsto\ivaluation{\imodif[const]{\I}{\,\usarg}{d}}{\applysubst{\sigma}{f}(\usarg)}\\
  \iget[const]{\Ia}(p) &= \{d\in\reals \with \imodels{\imodif[const]{\I}{\,\usarg}{d}}{\applysubst{\sigma}{p}(\usarg)}\}
  \\
  \iget[const]{\Ia}(C) &: \powerset{\linterpretations{\Sigma}{V}}\to\powerset{\linterpretations{\Sigma}{V}};\, R\mapsto\imodel{\imodif[const]{\I}{\,\uscarg}{R}}{\applysubst{\sigma}{\contextapp{C}{\uscarg}}}
  \\
  \iget[const]{\Ia}(a) &= \iaccess[\applysubst{\sigma}{a}]{\I}
\end{align*}
\end{definition}
\begin{corollary}[Admissible adjoints] \label{cor:adjointUsubst}
\def\Ialta{\iadjointSubst{\sigma}{\Ialt}}%
If \(\iget[state]{\I}=\iget[state]{\It}\) on \(\freevarsdef{\sigma}\),
then \(\iget[const]{\Ia}=\iget[const]{\Ita}\).
If $\sigma$ is $U$-admissible for $\theta$ (or $\phi$ or $\alpha$, respectively) and \(\iget[state]{\I}=\iget[state]{\It}\) on $\scomplement{U}$, then
\begin{align*}
  \ivalues{\Ia}{\theta} &= \ivalues{\Ita}{\theta}
  ~\text{i.e.}~
  \ivaluation{\iconcat[state=\mu]{\Ia}}{\theta} = \ivaluation{\iconcat[state=\mu]{\Ita}}{\theta} ~\text{for all states}~\mu
  \\
  \imodel{\Ia}{\phi} &= \imodel{\Ita}{\phi}\\
  \iaccess[\alpha]{\Ia} &= \iaccess[\alpha]{\Ita}
\end{align*}
\end{corollary}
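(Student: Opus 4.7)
The plan is to reduce both statements to the coincidence lemmas by unfolding \rref{def:adjointUsubst} symbol by symbol. The key observation is that the state $\iget[state]{\I}$ enters the adjoint $\iget[const]{\Ia}$ only through the clauses for function and predicate symbols, whereas the clauses for quantifier symbols $C$ and program constants $a$ depend on $\iget[const]{\I}$ alone (and $\iget[const]{\I}=\iget[const]{\It}$ by the common constant component~$I$).

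For the first statement, I would argue symbol by symbol. For $f \in \replacees{\sigma}$, the definition $\iget[const]{\Ia}(f)(d) = \ivaluation{\imodif[const]{\I}{\,\usarg}{d}}{\applysubst{\sigma}{f(\usarg)}}$ depends on $\iget[state]{\I}$ only through the free variables of $\applysubst{\sigma}{f(\usarg)}$, which are contained in $\freevarsdef{\sigma}$ by definition; thus $\iget[state]{\I}=\iget[state]{\It}$ on $\freevarsdef{\sigma}$ together with \rref{lem:coincidence-term} yields $\iget[const]{\Ia}(f)(d) = \iget[const]{\Ita}(f)(d)$ for every $d$. For $p \in \replacees{\sigma}$ the analogous argument uses \rref{lem:coincidence}. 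For $C \in \replacees{\sigma}$ and $a \in \replacees{\sigma}$ the equality is immediate, since the adjoint clauses do not refer to $\iget[state]{\I}$. Symbols outside $\replacees{\sigma}$ trivially agree.

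For the second statement, I would localize the first-statement reasoning to $\intsigns{\theta}$. The $U$-admissibility hypothesis $\freevarsdef{\restrict{\sigma}{\intsigns{\theta}}}\cap U = \emptyset$ combined with $\iget[state]{\I}=\iget[state]{\It}$ on $\scomplement{U}$ ensures agreement on $\freevarsdef{\restrict{\sigma}{\intsigns{\theta}}}$, so rerunning the symbol-by-symbol argument restricted to $f,p \in \replacees{\sigma} \cap \intsigns{\theta}$ yields $\iget[const]{\Ia}=\iget[const]{\Ita}$ on $\intsigns{\theta}$. A final appeal to \rref{lem:coincidence-term} applied to $\theta$ --- with the shared evaluation state $\mu$ and constant parts agreeing on $\intsigns{\theta}$ --- then delivers $\ivaluation{\iconcat[state=\mu]{\Ia}}{\theta}=\ivaluation{\iconcat[state=\mu]{\Ita}}{\theta}$ for every $\mu$. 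The formula case is parallel via \rref{lem:coincidence}, and the program case follows by applying \rref{lem:coincidence-HP} with $V=\allvars$ to each pair in $\iaccess[\alpha]{\Ia}$ (and symmetrically) to establish the set equality $\iaccess[\alpha]{\Ia}=\iaccess[\alpha]{\Ita}$.

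The main care point is the layering of coincidence applications in the second statement: three state-assignments are in play --- the two $\iget[state]{\I},\iget[state]{\It}$ that determine the adjoints and the shared evaluation state $\mu$ --- and coincidence is invoked at two levels (once inside the adjoint definition to show the constant parts agree on the signature, once outside to propagate that agreement to $\theta$). Provided one tracks those layers and cites the right coincidence lemma for each kind of replacement (term for $f$, formula for $p$, and no coincidence needed for $C$ or $a$), no technical ingredient beyond Lemmas~\ref{lem:coincidence-term}--\ref{lem:coincidence-HP} is required.
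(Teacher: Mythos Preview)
Your proposal is correct and follows essentially the same route as the paper: a symbol-by-symbol unfolding of the adjoint definition, with \rref{lem:coincidence-term} and \rref{lem:coincidence} handling the $f$ and $p$ clauses (the only state-dependent ones) for the first statement, followed by localizing to $\intsigns{\theta}$ (resp.\ $\phi,\alpha$) via $U$-admissibility for the second statement and a final outer appeal to the appropriate coincidence lemma. Your explicit remark that \rref{lem:coincidence-HP} is applied with $V=\allvars$ and identical initial states to obtain equality of the transition relations is exactly what the paper's terse citation of that lemma unpacks to.
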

\begin{proofatend}
$\iget[const]{\Ia}$ is well-defined, as $\iget[const]{\Ia}(f)$ is a smooth function since its substitute term \({\applysubst{\sigma}{f}(\usarg)}\) has smooth values.
First, \(\iget[const]{\Ia}(a) = \iaccess[\applysubst{\sigma}{a}]{\I} = \iget[const]{\Ita}(a)\) holds because the adjoint to $\sigma$ for $\iportray{\I}$ in the case of programs is independent of $\iget[state]{\Ia}$ (programs have access to their initial state at runtime).
Likewise \(\iget[const]{\Ia}(C) = \iget[const]{\Ita}(C)\) for quantifier symbols, because the adjoint is independent of $\iget[state]{\Ia}$ for quantifier symbols.
By \rref{lem:coincidence-term},
\(\ivaluation{\imodif[const]{\I}{\,\usarg}{d}}{\applysubst{\sigma}{f}(\usarg)}
= \ivaluation{\imodif[const]{\It}{\,\usarg}{d}}{\applysubst{\sigma}{f}(\usarg)}\) when \(\iget[state]{\I}=\iget[state]{\It}\) on \(\freevarsdef{\applysubst{\sigma}{f}(\usarg)} \subseteq \freevarsdef{\sigma}\).
Also \(\imodels{\imodif[const]{\I}{\,\usarg}{d}}{\applysubst{\sigma}{p}(\usarg)}\)
iff \(\imodels{\imodif[const]{\It}{\,\usarg}{d}}{\applysubst{\sigma}{p}(\usarg)}\)
by \rref{lem:coincidence} when \(\iget[state]{\I}=\iget[state]{\It}\) on \(\freevarsdef{\applysubst{\sigma}{p}(\usarg)} \subseteq \freevarsdef{\sigma}\).
Thus, \(\iget[const]{\Ia}=\iget[const]{\Ita}\) when $\iget[state]{\I}=\iget[state]{\It}$ on $\freevarsdef{\sigma}$.

\newcommand{\Iaz}{\iconcat[state=\mu]{\Ia}}%
\newcommand{\Itaz}{\iconcat[state=\mu]{\Ita}}%
If $\sigma$ is $U$-admissible for $\phi$ (or $\theta$ or $\alpha$), then
  \(\freevarsdef{\applysubst{\sigma}{f(\usarg)}}\cap U=\emptyset\), i.e.\
  \(\freevarsdef{\applysubst{\sigma}{f(\usarg)}}\subseteq\scomplement{U}\)
  for every function symbol $f\in\intsigns{\phi}$ (or $\theta$ or $\alpha$) and likewise for predicate symbols $p\in\intsigns{\phi}$.
  Since \(\iget[state]{\I}=\iget[state]{\It}\) on $\scomplement{U}$ was assumed,
  \(\iget[const]{\Ita}=\iget[const]{\Ia}\) on the function and predicate symbols in $\intsigns{\phi}$ (or $\theta$ or $\alpha$).
  Finally \(\iget[const]{\Ita}=\iget[const]{\Ia}\) on $\intsigns{\phi}$ (or $\intsigns{\theta}$ or $\intsigns{\alpha}$, respectively) implies that
  \(\imodel{\Ita}{\phi}=\imodel{\Ia}{\phi}\) by \rref{lem:coincidence}
  (since \(\imodels{\Itaz}{\phi}\) iff \(\imodels{\Iaz}{\phi}\) holds for all $\iget[state]{\Itaz}$)
  and that \(\ivalues{\Ia}{\theta} = \ivalues{\Ita}{\theta}\) by \rref{lem:coincidence-term}
  and that \(\iaccess[\alpha]{\Ita} = \iaccess[\alpha]{\Ia}\) by \rref{lem:coincidence-HP}, respectively.
\qedhere
\end{proofatend}

Substituting equals for equals is sound by the compositional semantics of \dL.
The more general uniform substitutions are still sound, because the semantics of uniform substitutes of expressions agrees with the semantics of the expressions themselves in the adjoint interpretations.
The semantic modification of adjoint interpretations has the same effect as the syntactic uniform substitution.

\begin{lemma}[Uniform substitution for terms] \label{lem:usubst-term}
The uniform substitution $\sigma$ and its adjoint interpretation $\iportray{\Ia}$ for $\iportray{\I}$ have the same semantics for all \emph{terms} $\theta$:
\[\ivaluation{\I}{\applyusubst{\sigma}{\theta}} = \ivaluation{\Ia}{\theta}\]
\end{lemma}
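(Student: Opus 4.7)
The plan is to prove this by structural induction on the term $\theta$, with the substitution $\sigma$ and interpretation $\I$ fixed. The base case $\theta = x$ for a variable is immediate because $\applyusubst{\sigma}{x} = x$ and the adjoint leaves the state untouched, so $\iget[state]{\Ia}(x) = \iget[state]{\I}(x)$. The arithmetic cases $\theta + \eta$ and $\theta \cdot \eta$ follow at once from the IH by the compositional clauses of \rref{def:dL-valuationTerm}.

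For the function case $f(\theta)$ with $f \in \replacees{\sigma}$, I would first apply the IH to the smaller argument $\theta$ to obtain $d \mdefeq \ivaluation{\I}{\applyusubst{\sigma}{\theta}} = \ivaluation{\Ia}{\theta}$. By \rref{def:adjointUsubst}, $\ivaluation{\Ia}{f(\theta)} = \iget[const]{\Ia}(f)(d) = \ivaluation{\imodif[const]{\I}{\,\usarg}{d}}{\applysubst{\sigma}{f(\usarg)}}$. Unfolding the definition in \rref{fig:usubst} gives $\applyusubst{\sigma}{f(\theta)} = \applyusubst{\tau}{\applysubst{\sigma}{f(\usarg)}}$ where $\tau \mdefeq \{\usarg \mapsto \applyusubst{\sigma}{\theta}\}$, and the adjoint of $\tau$ acts on $\iget[const]{\I}$ exactly by redefining the 0-ary symbol $\usarg$ to the constant value $d$. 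Matching these up reduces the step to an instance of the lemma applied with the simpler substitution $\tau$ to the fixed term $\applysubst{\sigma}{f(\usarg)}$; I would either strengthen the induction to range uniformly over substitutions or treat this as a separate one-point substitution sublemma by structural induction on the replacement term.

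The delicate case is the differential $\der{\theta}$, which is the expected main obstacle. Here \rref{fig:usubst} is defined only when $\sigma$ is $\allvars$-admissible for $\theta$, i.e.\ $\freevarsdef{\restrict{\sigma}{\intsigns{\theta}}}=\emptyset$. Expanding both sides via \rref{def:dL-valuationTerm},
\[
\ivaluation{\I}{\der{\applyusubst{\sigma}{\theta}}} = \sum_{x} \iget[state]{\I}(\D{x})\,\Dp[x]{\ivaluation{\I}{\applyusubst{\sigma}{\theta}}}, \qquad \ivaluation{\Ia}{\der{\theta}} = \sum_{x} \iget[state]{\Ia}(\D{x})\,\Dp[x]{\ivaluation{\Ia}{\theta}},
\]
and the state coefficients agree since $\iget[state]{\Ia}=\iget[state]{\I}$. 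The IH, applied pointwise at each shifted state $\imodif[state]{\I}{x}{X}$, yields $\ivaluation{\imodif[state]{\I}{x}{X}}{\applyusubst{\sigma}{\theta}} = \ivaluation{\iadjointSubst{\sigma}{\imodif[state]{\I}{x}{X}}}{\theta}$. The $\allvars$-admissibility then lets me invoke \rref{cor:adjointUsubst} to conclude $\iget[const]{\iadjointSubst{\sigma}{\imodif[state]{\I}{x}{X}}}$ agrees with $\iget[const]{\Ia}$ on $\intsigns{\theta}$, because no free variable can leak from the relevant substitutes as the state varies in $x$. Hence $\ivaluation{\imodif[state]{\I}{x}{X}}{\applyusubst{\sigma}{\theta}} = \ivaluation{\imodif[state]{\Ia}{x}{X}}{\theta}$ as functions of $X$, and differentiating in $X$ at $X=\iget[state]{\I}(x)$ equates the two sums term-by-term.

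The obstacle, as is typical for differential-form \dL, is precisely that differentials force the evaluation of $\theta$ at a whole one-parameter family of states, so the syntactic admissibility side condition is needed solely to keep the adjoint interpretation from drifting as that parameter varies; this is what makes \rref{cor:adjointUsubst} the central lemma of this step and justifies the side condition in the differential clause of \rref{fig:usubst}.
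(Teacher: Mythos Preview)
Your proposal is correct and follows essentially the same approach as the paper: structural induction on $\theta$, with the function case handled by a nested appeal to the lemma for the structurally simpler arity-zero substitution $\{\usarg\mapsto\applyusubst{\sigma}{\theta}\}$, and the differential case discharged via $\allvars$-admissibility together with \rref{cor:adjointUsubst} so that the adjoint does not drift as the state varies. The only cosmetic difference is that the paper sets up the induction from the outset as simultaneous on $\theta$ \emph{and} the structure of $\sigma$ (with well-foundedness argued by noting that the inner substitution acts only on an arity-zero symbol, and only arises for $f$ of arity $\geq 1$), whereas you begin by fixing $\sigma$ and then observe mid-proof that you must strengthen the induction; committing to the double induction up front is cleaner and avoids the need for a separate sublemma.
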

\begin{proofatend}
\def\Im{\vdLint[const=\usebox{\ImnI},state=\nu\oplus(x\mapsto\usebox{\Imnx})]}%

The proof is by structural induction on $\theta$ and the structure of $\sigma$.
\begin{compactenum}
\item
  \(\ivaluation{\I}{\applyusubst{\sigma}{x}} 
  = \ivaluation{\I}{x} = \iget[state]{\I}(x) \)
  = \(\ivaluation{\Ia}{x}\)
  \(\text{since}~ x\not\in\replacees{\sigma}\)
  for variable $x\in\allvars$

\item
  Let \(f\in\replacees{\sigma}\).
  \(\ivaluation{\I}{\applyusubst{\sigma}{f(\theta)}}
  = \ivaluation{\I}{(\applyusubst{\sigma}{f})\big(\applyusubst{\sigma}{\theta}\big)}
  = \ivaluation{\I}{\applyusubst{\{\usarg\mapsto\applyusubst{\sigma}{\theta}\}}{\applysubst{\sigma}{f(\usarg)}}}\)
  \(\overset{\text{IH}}{=} \ivaluation{\Iminner}{\applysubst{\sigma}{f(\usarg)}}\)
    \(= (\iget[const]{\Ia}(f))(d)\)
    \\
  \(= (\iget[const]{\Ia}(f))(\ivaluation{\Ia}{\theta})
  = \ivaluation{\Ia}{f(\theta)}\)
  with
  \(d\mdefeq\ivaluation{\I}{\applyusubst{\sigma}{\theta}} 
  \overset{\text{IH}}{=} \ivaluation{\Ia}{\theta}\)
  by using the induction hypothesis twice,
  once for \(\applyusubst{\sigma}{\theta}\) on the smaller $\theta$ 
  and once for
  \(\applyusubst{\{\usarg\mapsto\applyusubst{\sigma}{\theta}\}}{\applysubst{\sigma}{f(\usarg)}}\)
  on the possibly bigger term
  \({\applysubst{\sigma}{f(\usarg)}}\)
  but the structurally simpler uniform substitution
  \(\applyusubst{\{\usarg\mapsto\applyusubst{\sigma}{\theta}\}}{\dots}\)
  that is a substitution on the symbol $\usarg$ of arity zero, not a substitution of functions with arguments.
  For well-foundedness of the induction note that the $\usarg$ substitution only happens for function symbols $f$ with at least one argument $\theta$
  so not for $\usarg$ itself.
  
\item
  \(\ivaluation{\I}{\applyusubst{\sigma}{g(\theta)}}
  = \ivaluation{\I}{g(\applyusubst{\sigma}{\theta})}\)
  = \(\iget[const]{\I}(g)\big(\ivaluation{\I}{\applyusubst{\sigma}{\theta}}\big)\)
  \(\overset{\text{IH}}{=} \iget[const]{\I}(g)\big(\ivaluation{\Ia}{\theta}\big)\)
  \(= \iget[const]{\Ia}(g)\big(\ivaluation{\Ia}{\theta}\big)
  = \ivaluation{\Ia}{g(\theta)}\)
  by induction hypothesis and since \(\iget[const]{\I}(g)=\iget[const]{\Ia}(g)\) as the interpretation of $g$ does not change in $\iget[const]{\Ia}$
  \(\text{when}~g\not\in\replacees{\sigma}\).

\item
  \(\ivaluation{\I}{\applyusubst{\sigma}{\theta+\eta}}
  = \ivaluation{\I}{\applyusubst{\sigma}{\theta} + \applyusubst{\sigma}{\eta}}
  = \ivaluation{\I}{\applyusubst{\sigma}{\theta}} + \ivaluation{\I}{\applyusubst{\sigma}{\eta}}\)
  \(\overset{\text{IH}}{=} \ivaluation{\Ia}{\theta} + \ivaluation{\Ia}{\eta}
  = \ivaluation{\Ia}{\theta+\eta}\)
  by induction hypothesis.

\item
  \(\ivaluation{\I}{\applyusubst{\sigma}{\theta\cdot\eta}}
  = \ivaluation{\I}{\applyusubst{\sigma}{\theta} \cdot \applyusubst{\sigma}{\eta}}
  = \ivaluation{\I}{\applyusubst{\sigma}{\theta}} \cdot \ivaluation{\I}{\applyusubst{\sigma}{\eta}}\)
  \(\overset{\text{IH}}{=} \ivaluation{\Ia}{\theta} \cdot \ivaluation{\Ia}{\eta}
  = \ivaluation{\Ia}{\theta\cdot\eta}\)
  by induction hypothesis.

\item
\(
\ivaluation{\I}{\applyusubst{\sigma}{\der{\theta}}}
= \ivaluation{\I}{\der{\applyusubst{\sigma}{\theta}}}
= \sum_x \iget[state]{\I}(\D{x}) \itimes \Dp[x]{\ivaluation{\I}{\applyusubst{\sigma}{\theta}}}
\overset{\text{IH}}{=} \sum_x \iget[state]{\I}(\D{x}) \itimes \Dp[x]{\ivaluation{\Ia}{\theta}}
= \ivaluation{\Ia}{\der{\theta}}
\)
by induction hypothesis,
provided $\sigma$ is $\allvars$-admissible for $\theta$, i.e. does not introduce any variables or differential symbols, 
so that \rref{cor:adjointUsubst} implies \(\iget[const]{\Ia}=\iget[const]{\Ita}\) for all $\iget[state]{\I},\iget[state]{\It}$ (that agree on $\scomplement{\allvars}=\emptyset$, which imposes no condition on $\iget[state]{\I},\iget[state]{\It}$).
In particular, the adjoint interpretation $\iget[const]{\Ia}$ is the same for all ways of changing the value of variable $x$ in state $\iget[state]{\I}$ when forming the partial derivative.
\qedhere
\end{compactenum}
\end{proofatend}

\noindent
The uniform substitute of a formula is true at $\iget[state]{\I}$ in an interpretation iff the formula itself is true at $\iget[state]{\I}$ in its adjoint interpretation.
Uniform substitution lemmas are proved by simultaneous induction for formulas and programs, as they are mutually recursive.

\begin{lemma}[Uniform substitution for formulas] \label{lem:usubst}
The uniform substitution $\sigma$ and its adjoint interpretation $\iportray{\Ia}$ for $\iportray{\I}$ have the same semantics for all \emph{formulas} $\phi$:
\[\imodels{\I}{\applyusubst{\sigma}{\phi}} ~\text{iff}~ \imodels{\Ia}{\phi}\]
\end{lemma}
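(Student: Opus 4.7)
The plan is to prove this by structural induction on $\phi$, carried out simultaneously with the analogous uniform substitution lemma for hybrid programs (\(\iaccess[\applyusubst{\sigma}{\alpha}]{\I} = \iaccess[\alpha]{\Ia}\)), since formulas and programs are mutually recursive. As in \rref{lem:usubst-term}, the induction is carried out on the lexicographic pair consisting of the structure of the expression and the structure of the uniform substitution $\sigma$, so that the placeholder-substitutions \(\{\usarg\mapsto\applyusubst{\sigma}{\theta}\}\) and \(\{\uscarg\mapsto\applyusubst{\sigma}{\phi}\}\) on the possibly larger replacement terms/formulas \(\applysubst{\sigma}{p(\usarg)}\) and \(\applysubst{\sigma}{\contextapp{C}{\uscarg}}\) count as structurally smaller substitutions.

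The atomic cases are routine. For \(\theta\geq\eta\) and for \(q(\theta_1,\dots,\theta_k)\) with \(q\not\in\replacees{\sigma}\), the result follows directly from \rref{lem:usubst-term} together with \(\iget[const]{\I}(q)=\iget[const]{\Ia}(q)\). For \(p\in\replacees{\sigma}\), I would unfold \(\applyusubst{\sigma}{p(\theta)} = \applyusubst{\{\usarg\mapsto\applyusubst{\sigma}{\theta}\}}{\applysubst{\sigma}{p(\usarg)}}\), apply the induction hypothesis to the structurally simpler substitution on \(\applysubst{\sigma}{p(\usarg)}\) to get \(\imodels{\Iminner}{\applysubst{\sigma}{p(\usarg)}}\) where \(d=\ivaluation{\I}{\applyusubst{\sigma}{\theta}}=\ivaluation{\Ia}{\theta}\) by \rref{lem:usubst-term}, and then recognize the right-hand side as the definition of \(\iget[const]{\Ia}(p)\) from \rref{def:adjointUsubst}. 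The quantifier symbol case \(\contextapp{C}{\phi}\) is handled analogously, using $\allvars$-admissibility of $\sigma$ for $\phi$ so that the induction hypothesis applied under the placeholder \(\{\uscarg\mapsto\applyusubst{\sigma}{\phi}\}\) can be invoked in every state. The propositional connectives \(\lnot,\land\) follow immediately from the induction hypothesis.

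The main obstacle is the binding cases: quantifiers and modalities. For \(\lforall{x}{\phi}\), admissibility of $\sigma$ for $\phi$ with respect to \(\{x\}\) guarantees by \rref{cor:adjointUsubst} that the adjoint \(\iget[const]{\Ia}\) agrees with \(\iget[const]{(\imodif[state]{\I}{x}{r})^*_\sigma}\) for every \(r\in\reals\), because $\iget[state]{\I}$ and \(\iget[state]{\imodif[state]{\I}{x}{r}}\) differ only on $\{x\}\subseteq\scomplement{\scomplement{\{x\}}}$. Thus the induction hypothesis applied to $\phi$ in the modified state transports satisfaction back and forth, yielding \(\imodels{\I}{\lforall{x}{\applyusubst{\sigma}{\phi}}}\) iff \(\imodels{\Ia}{\lforall{x}{\phi}}\); the \(\lexists{x}{\phi}\) case is dual. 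For the modalities \(\dbox{\alpha}{\phi}\) and \(\ddiamond{\alpha}{\phi}\), the analogous argument uses $\boundvarsdef{\applyusubst{\sigma}{\alpha}}$-admissibility: the simultaneous induction hypothesis for programs gives \(\iaccess[\applyusubst{\sigma}{\alpha}]{\I}=\iaccess[\alpha]{\Ia}\), and for any reached state $\iget[state]{\It}$ we have \(\iget[state]{\I}=\iget[state]{\It}\) on \(\scomplement{\boundvarsdef{\applyusubst{\sigma}{\alpha}}}\) by \rref{lem:bound}, so \rref{cor:adjointUsubst} yields \(\iget[const]{\Ia}=\iget[const]{\Ita}\), which is precisely what is needed to apply the induction hypothesis to $\phi$ at $\iget[state]{\It}$ and conclude.

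The delicate point throughout is that the admissibility side conditions listed in \rref{fig:usubst} are exactly what \rref{cor:adjointUsubst} requires to allow moving from \(\iget[const]{\Ia}\) (based on the current state) to \(\iget[const]{\Ita}\) (based on the state reached after binding), so each binder in the syntax triggers one invocation of the corollary. Once this pattern is isolated, the modality case relies crucially on the yet-unproved program analogue, which is why the two statements must be proved in lockstep rather than separately.
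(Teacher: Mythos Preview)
Your proposal is correct and follows essentially the same route as the paper: simultaneous structural induction with the program lemma, the well-founded order on (expression, substitution) pairs to handle the placeholder substitutions for $p(\usarg)$ and $\contextapp{C}{\uscarg}$, and an invocation of \rref{cor:adjointUsubst} at each binder. The paper treats $\lexists{x}{}$ explicitly and obtains $\lforall{x}{}$ by duality (and likewise $\ddiamond{\alpha}{}$ then $\dbox{\alpha}{}$), but this is immaterial.

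One small imprecision worth fixing: in the quantifier and modality cases you write that \rref{cor:adjointUsubst} yields $\iget[const]{\Ia}=\iget[const]{\Ita}$ as interpretations. That conclusion would require $\iget[state]{\I}=\iget[state]{\It}$ on all of $\freevarsdef{\sigma}$, but $U$-admissibility \emph{for $\phi$} only controls $\freevarsdef{\restrict{\sigma}{\intsigns{\phi}}}$; if $\sigma$ also replaces a symbol not occurring in $\phi$ by something mentioning $x$ (or a bound variable of $\applyusubst{\sigma}{\alpha}$), the two adjoint interpretations can differ on that symbol. What the second part of \rref{cor:adjointUsubst} actually delivers, and what the paper uses, is the weaker but sufficient conclusion $\imodel{\Ia}{\phi}=\imodel{\Ita}{\phi}$. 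With that correction your argument goes through exactly as in the paper.
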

\begin{proofatend}
The proof is by structural induction on $\phi$ and the structure on $\sigma$, simultaneously with \rref{lem:usubst-HP}.
\begin{compactenum}
\item
  \(\imodels{\I}{\applyusubst{\sigma}{\theta\geq\eta}}\)
  iff \(\imodels{\I}{\applyusubst{\sigma}{\theta} \geq \applyusubst{\sigma}{\eta}}\)
  iff \(\ivaluation{\I}{\applyusubst{\sigma}{\theta}} \geq \ivaluation{\I}{\applyusubst{\sigma}{\eta}}\),
  by \rref{lem:usubst-term},
  iff \(\ivaluation{\Ia}{\theta} \geq \ivaluation{\Ia}{\eta}\)
  iff \(\ivaluation{\Ia}{\theta\geq\eta}\).

\item
  Let \(p\in\replacees{\sigma}\).
  Then
  \(\imodels{\I}{\applyusubst{\sigma}{p(\theta)}}\)
  iff \(\imodels{\I}{(\applyusubst{\sigma}{p})\big(\applyusubst{\sigma}{\theta}\big)}\)
  iff \(\imodels{\I}{\applyusubst{\{\usarg\mapsto\applyusubst{\sigma}{\theta}\}}{\applysubst{\sigma}{p(\usarg)}}}\)
  iff, by IH, \(\imodels{\Iminner}{\applysubst{\sigma}{p(\usarg)}}\)
  iff \(d \in \iget[const]{\Ia}(p)\)
  iff \((\ivaluation{\Ia}{\theta}) \in \iget[const]{\Ia}(p)\)
  iff \(\imodels{\Ia}{p(\theta)}\)
  with \(d\mdefeq\ivaluation{\I}{\applyusubst{\sigma}{\theta}} = \ivaluation{\Ia}{\theta}\)
  by using \rref{lem:usubst-term} for \(\applyusubst{\sigma}{\theta}\)
  and by using the induction hypothesis
  for \(\applyusubst{\{\usarg\mapsto\applyusubst{\sigma}{\theta}\}}{\applysubst{\sigma}{p(\usarg)}}\)
  on the possibly bigger formula \({\applysubst{\sigma}{p(\usarg)}}\) but the structurally simpler uniform substitution \(\applyusubst{\{\usarg\mapsto\applyusubst{\sigma}{\theta}\}}{\dots}\) that is a mere substitution on function symbol $\usarg$ of arity zero, not a substitution of predicates.

\item
  Let \(q\not\in\replacees{\sigma}\). Then
  \(\imodels{\I}{\applyusubst{\sigma}{q(\theta)}}\)
  iff \(\imodels{\I}{q(\applyusubst{\sigma}{\theta})}\)
  iff \(\big(\ivaluation{\I}{\applyusubst{\sigma}{\theta}}\big)\in \iget[const]{\I}(q)\)
  so, by \rref{lem:usubst-term}, that holds iff \(\big(\ivaluation{\Ia}{\theta}\big) \in \iget[const]{\I}(q)\)
  iff \(\big(\ivaluation{\Ia}{\theta}\big) \in \iget[const]{\Ia}(q)\)
  iff \(\imodels{\Ia}{q(\theta)}\)
  since \(\iget[const]{\I}(q)=\iget[const]{\Ia}(q)\) as the interpretation of $q$ does not change in $\iget[const]{\Ia}$ when \(q\not\in\replacees{\sigma}\).

\item
\def\ImM{\imodif[const]{\I}{\uscarg}{R}}%
\let\ImN\ImM%
\def\IaM{\imodif[const]{\Ia}{\uscarg}{R}}%
  For the case \({\applyusubst{\sigma}{\contextapp{C}{\phi}}}\), first show 
  \(\imodel{\I}{\applyusubst{\sigma}{\phi}} = \imodel{\Ia}{\phi}\).
  By induction hypothesis for the smaller $\phi$:
  \(\imodels{\It}{\applyusubst{\sigma}{\phi}}\)
  iff
  \(\imodels{\Ita}{\phi}\),
  where 
  \(\imodel{\Ita}{\phi}=\imodel{\Ia}{\phi}\)
  by \rref{cor:adjointUsubst}
  for all $\iget[state]{\Ia},\iget[state]{\Ita}$
  (that agree on $\scomplement{\allvars}=\emptyset$, which imposes no condition on $\iget[state]{\I},\iget[state]{\It}$)
  since $\sigma$ is $\allvars$-admissible for $\phi$.
  The proof proceeds:

  \(\imodels{\I}{\applyusubst{\sigma}{\contextapp{C}{\phi}}}\)
  \(=\imodel{\I}{\contextapp{\applyusubst{\sigma}{C}}{\applyusubst{\sigma}{\phi}}}\)
  \(= \imodel{\I}{\applyusubst{\{\uscarg\mapsto\applyusubst{\sigma}{\phi}\}}{\applysubst{\sigma}{\contextapp{C}{\uscarg}}}}\),
  so, by induction hypothesis for the structurally simpler uniform substitution ${\{\uscarg\mapsto\applyusubst{\sigma}{\phi}\}}$ that is a mere substitution on quantifier symbol $\uscarg$ of arity zero, iff
  \(\imodels{\ImM}{\applysubst{\sigma}{\contextapp{C}{\uscarg}}}\)
  since the adjoint to \(\{\uscarg\mapsto\applyusubst{\sigma}{\phi}\}\) is $\iget[const]{\ImM}$ with \(R\mdefeq\imodel{\I}{\applyusubst{\sigma}{\phi}}\) by definition.
  
  Also
  \(\imodels{\Ia}{\contextapp{C}{\phi}}\)
  \(= \iget[const]{\Ia}(C)\big(\imodel{\Ia}{\phi}\big)\)
  \(= \imodel{\ImN}{\applysubst{\sigma}{\contextapp{C}{\uscarg}}}\)
  for \(R=\imodel{\Ia}{\phi}=\imodel{\I}{\applyusubst{\sigma}{\phi}}\) by induction hypothesis.
  Both sides are, thus, equivalent.
  
\item
  The case \({\applyusubst{\sigma}{\contextapp{C}{\phi}}}\) for $C\not\in\replacees{\sigma}$ again first shows
  \(\imodel{\I}{\applyusubst{\sigma}{\phi}} = \imodel{\Ia}{\phi}\)
  for all $\iget[state]{\I}$ using that $\sigma$ is $\allvars$-admissible for $\phi$.
  Then
  \(\imodels{\I}{\applyusubst{\sigma}{\contextapp{C}{\phi}}}\)
  \(= \imodel{\I}{\contextapp{C}{\applyusubst{\sigma}{\phi}}}\)
  \(= \iget[const]{\I}(C)\big(\imodel{\I}{\applyusubst{\sigma}{\phi}}\big)\)
  \(= \iget[const]{\I}(C)\big(\imodel{\Ia}{\phi}\big)\)
  \(= \iget[const]{\Ia}(C)\big(\imodel{\Ia}{\phi}\big)\)
  \(= \imodel{\Ia}{\contextapp{C}{\phi}}\)
  iff \(\imodels{\Ia}{\contextapp{C}{\phi}}\)

\item
  \(\imodels{\I}{\applyusubst{\sigma}{\lnot\phi}}\)
  iff \(\imodels{\I}{\lnot\applyusubst{\sigma}{\phi}}\)
  iff \(\inonmodels{\I}{\applyusubst{\sigma}{\phi}}\),
  so by IH,
  iff \(\inonmodels{\Ia}{\phi}\)
  iff \(\imodels{\Ia}{\lnot\phi}\)

\item
  \(\imodels{\I}{\applyusubst{\sigma}{\phi\land\psi}}\)
  iff \(\imodels{\I}{\applyusubst{\sigma}{\phi} \land \applyusubst{\sigma}{\psi}}\)
  iff \(\imodels{\I}{\applyusubst{\sigma}{\phi}}\) and \(\imodels{\I}{\applyusubst{\sigma}{\psi}}\),
  by induction hypothesis,
  iff \(\imodels{\Ia}{\phi}\) and \(\imodels{\Ia}{\psi}\)
  iff \(\imodels{\Ia}{\phi\land\psi}\)

\item
\def\Imd{\imodif[state]{\I}{x}{d}}%
\def\Iamd{\imodif[state]{\Ia}{x}{d}}%
\def\Imda{\iadjointSubst{\sigma}{\Imd}}%
  \(\imodels{\I}{\applyusubst{\sigma}{\lexists{x}{\phi}}}\)
  iff \(\imodels{\I}{\lexists{x}{\applyusubst{\sigma}{\phi}}}\)
  (provided that $\sigma$ is $\{x\}$-admissible for $\phi$)
  iff \(\imodels{\Imd}{\applyusubst{\sigma}{\phi}}\) for some $d$,
  so, by induction hypothesis,
  iff \(\imodels{\Imda}{\phi}\) for some $d$,
  which is equivalent to
  \(\imodels{\Iamd}{\phi}\) by \rref{cor:adjointUsubst} as $\sigma$ is $\{x\}$-admissible for $\phi$ and $\iget[state]{\I}=\iget[state]{\Imd}$ on $\scomplement{\{x\}}$.
  Thus, this is equivalent to
  \(\imodels{\Ia}{\lexists{x}{\phi}}\).
  
\item The case
  \(\imodels{\I}{\applyusubst{\sigma}{\lforall{x}{\phi}}}\)
  follows by duality \(\lforall{x}{\phi} \mequiv \lnot\lexists{x}{\lnot\phi}\), which is respected in the definition of uniform substitutions.

\item
  \newcommand{\Iat}{\iconcat[state=\omega]{\Ia}}%
  \(\imodels{\I}{\applyusubst{\sigma}{\ddiamond{\alpha}{\phi}}}\)
  iff \(\imodels{\I}{\ddiamond{\applyusubst{\sigma}{\alpha}}{\applyusubst{\sigma}{\phi}}}\)
  (provided $\sigma$ is $\boundvarsdef{\applyusubst{\sigma}{\alpha}}$-admissible for $\phi$)
  iff there is a $\iget[state]{\It}$ such that
  \(\iaccessible[\applyusubst{\sigma}{\alpha}]{\I}{\It}\) and \(\imodels{\It}{\applyusubst{\sigma}{\phi}}\),
  which, by \rref{lem:usubst-HP} and induction hypothesis, respectively, is equivalent to:
  there is a $\iget[state]{\Ita}$ such that
  \(\iaccessible[\alpha]{\Ia}{\Ita}\) and \(\imodels{\Ita}{\phi}\),
  which is equivalent to
  \(\imodels{\Ia}{\ddiamond{\alpha}{\phi}}\),
  because \(\imodels{\Ita}{\phi}\) is equivalent to \(\imodels{\Iat}{\phi}\) by \rref{cor:adjointUsubst}
  as $\sigma$ is $\boundvarsdef{\applyusubst{\sigma}{\alpha}}$-admissible for $\phi$ and \(\iget[state]{\Ia}=\iget[state]{\Iat}\) on $\scomplement{\boundvarsdef{\applyusubst{\sigma}{\alpha}}}$ by \rref{lem:bound} since
  \(\iaccessible[\applyusubst{\sigma}{\alpha}]{\I}{\It}\).

\item The case
  \(\imodels{\I}{\applyusubst{\sigma}{\dbox{\alpha}{\phi}}}\)
  follows by duality \(\dbox{\alpha}{\phi} \mequiv \lnot\ddiamond{\alpha}{\lnot\phi}\), which is respected in the definition of uniform substitutions.
\qedhere
\end{compactenum}
\end{proofatend}

\noindent
The uniform substitute of a program has a run from $\iget[state]{\I}$ to $\iget[state]{\It}$ in an interpretation iff the program itself has a run from $\iget[state]{\I}$ to $\iget[state]{\It}$ in its adjoint interpretation.

\begin{lemma}[Uniform substitution for programs] \label{lem:usubst-HP}
The uniform substitution $\sigma$ and its adjoint interpretation $\iportray{\Ia}$ for $\iportray{\I}$ have the same semantics for all \emph{programs} $\alpha$:
\[
\iaccessible[{\applyusubst{\sigma}{\alpha}}]{\I}{\It}
~\text{iff}~
\iaccessible[\alpha]{\Ia}{\Ita}
\]
\end{lemma}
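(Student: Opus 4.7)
The plan is to prove this by structural induction on $\alpha$, carried out simultaneously with the induction for \rref{lem:usubst}, invoking \rref{lem:usubst-term} in the cases that reach into terms. For each program constructor I will need to show both directions of the biconditional, but since the semantic clauses in \rref{def:HP-transition} are stated as equalities (or set-theoretic constructions), each case reduces to a direct computation once the right instance of \rref{cor:adjointUsubst} is applied to switch between $\iget[const]{\Ia}$ and some other adjoint interpretation along the run.

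First I would handle the atomic cases. For a program constant $a\in\replacees{\sigma}$, we have $\applyusubst{\sigma}{a} = \applysubst{\sigma}{a}$ and by \rref{def:adjointUsubst}, $\iget[const]{\Ia}(a) = \iaccess[\applysubst{\sigma}{a}]{\I}$, so $\iaccessible[\applyusubst{\sigma}{a}]{\I}{\It}$ iff $(\iget[state]{\I},\iget[state]{\It}) \in \iget[const]{\Ia}(a) = \iaccess[a]{\Ia}$ iff $\iaccessible[a]{\Ia}{\Ita}$. For assignments $\pupdate{\pumod{x}{\theta}}$, the uniform substitute is $\pupdate{\pumod{x}{\applyusubst{\sigma}{\theta}}}$; equality of target values follows from \rref{lem:usubst-term}. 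For tests $\ptest{\ivr}$, the relation is the diagonal on states satisfying $\ivr$, so the case reduces to the simultaneously proved \rref{lem:usubst} applied to $\ivr$.

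The substantive cases are the differential equation and sequential composition. For $\pevolvein{\D{x}=\genDE{x}}{\ivr}$, $\sigma$ is by assumption $\{x,\D{x}\}$-admissible for $\genDE{x}$ and $\ivr$. Given a solution $\iget[flow]{\If}:[0,r]\to\linterpretations{\Sigma}{V}$ witnessing $\iaccessible[\applyusubst{\sigma}{\pevolvein{\D{x}=\genDE{x}}{\ivr}}]{\I}{\It}$, I claim the same $\iget[flow]{\If}$ also witnesses the run in the adjoint: along the flow, each $\iget[state]{\Iff[\zeta]}$ only differs from $\iget[state]{\I}$ on $\{x,\D{x}\}$, so \rref{cor:adjointUsubst} gives $\ivalues{\Iffa[\zeta]}{\genDE{x}}=\ivalues{\Iff[\zeta]}{\applyusubst{\sigma}{\genDE{x}}}$ and $\imodels{\Iffa[\zeta]}{\ivr}$ iff $\imodels{\Iff[\zeta]}{\applyusubst{\sigma}{\ivr}}$; combined with \rref{lem:usubst-term} and the simultaneous \rref{lem:usubst}, the ODE and domain conditions transfer between $\iportray{\I}$ and $\iportray{\Ia}$. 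The converse is symmetric.

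For $\alpha;\beta$ with admissibility on $\boundvarsdef{\applyusubst{\sigma}{\alpha}}$ for $\beta$, an intermediate state $\iget[state]{\Iz}$ with $\iaccessible[\applyusubst{\sigma}{\alpha}]{\I}{\Iz}$ and $\iaccessible[\applyusubst{\sigma}{\beta}]{\Iz}{\It}$ gives by the induction hypothesis $\iaccessible[\alpha]{\Ia}{\Iza}$. The point to verify is that $\iaccessible[\beta]{\Iza}{\Ita}$, which uses the inductive hypothesis for $\beta$ at the interpretation based on $\iget[state]{\Iz}$ together with \rref{cor:adjointUsubst}: because $\iget[state]{\I}$ and $\iget[state]{\Iz}$ agree on $\scomplement{\boundvarsdef{\applyusubst{\sigma}{\alpha}}}$ by \rref{lem:bound} and $\sigma$ is $\boundvarsdef{\applyusubst{\sigma}{\alpha}}$-admissible for $\beta$, the adjoints $\iget[const]{\Ia}$ and $\iget[const]{(\iadjointSubst{\sigma}{\vdLint[const=I,state=\mu]})}$ coincide on $\intsigns{\beta}$, so the two instances of the inductive hypothesis glue correctly. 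Nondeterministic choice is immediate. For $\prepeat{\alpha}$ with the $\boundvarsdef{\applyusubst{\sigma}{\alpha}}$-admissibility, I induct on the number of iterations $n$, where the step $\prepeat[n+1]{\alpha}\equiv\prepeat[n]{\alpha};\alpha$ reuses the sequential-composition argument.

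The main obstacle is the sequential composition and loop case: the admissibility condition is tailored precisely to let \rref{cor:adjointUsubst} bridge the two adjoints at the intermediate state, and the bookkeeping of which variables may have changed along $\applyusubst{\sigma}{\alpha}$ (via \rref{lem:bound} and \rref{lem:sound-static-semantics}) versus which variables $\sigma$ may introduce into $\beta$ is exactly what makes the soundness of uniform substitution through mutable state work. The ODE case has an analogous flavor, with the flow $\iget[flow]{\If}$ playing the role of a whole continuum of intermediate states where admissibility in $\{x,\D{x}\}$ keeps $\iget[const]{\Iffa[\zeta]}$ unchanged on $\intsigns{\genDE{x}}\cup\intsigns{\ivr}$.
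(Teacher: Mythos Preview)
Your proposal is correct and follows essentially the same structural induction as the paper's proof, hitting the same key points: \rref{lem:usubst-term} for assignments, the simultaneous \rref{lem:usubst} for tests, and the crucial combination of \rref{lem:bound} with \rref{cor:adjointUsubst} to align the adjoints at intermediate states in the composition, loop, and ODE cases. The only cosmetic difference is that the paper unfolds the loop case as a chain of $n$ states and applies \rref{cor:adjointUsubst} at each link directly, rather than packaging it via the sequential-composition case as you do, but the content is the same.
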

\begin{proofatend}
The proof is by structural induction on $\alpha$, simultaneously with \rref{lem:usubst}.
\begin{compactenum}
\item
  \(\iaccessible[\applyusubst{\sigma}{a}]{\I}{\It} = \iaccess[\applysubst{\sigma}{a}]{\I} = \iget[const]{\Ia}(a) = \iaccess[a]{\Ia}\)
  for program constant $a\in\replacees{\sigma}$
  (the proof is accordingly for $a\not\in\replacees{\sigma}$).

\item 
  \(\iaccessible[\applyusubst{\sigma}{\pumod{x}{\theta}}]{\I}{\It}
  = \iaccess[\pumod{x}{\applyusubst{\sigma}{\theta}}]{\I}\)
  iff \(\iget[state]{\It} = \modif{\iget[state]{\I}}{x}{\ivaluation{\I}{\applyusubst{\sigma}{\theta}}}\)
  = \(\modif{\iget[state]{\I}}{x}{\ivaluation{\Ia}{\theta}}\)
  by \rref{lem:usubst}, which is, thus, equivalent to
  \(\iaccessible[\pumod{x}{\theta}]{\Ia}{\Ita}\).

\item
  \(\iaccessible[\applyusubst{\sigma}{\ptest{\ivr}}]{\I}{\It}
  = \iaccess[\ptest{\applyusubst{\sigma}{\ivr}}]{\I}\)
  iff \(\iget[state]{\It}=\iget[state]{\I}\)
  and \(\imodels{\I}{\applyusubst{\sigma}{\ivr}}\),
  iff, by \rref{lem:usubst},
  \(\iget[state]{\Ita}=\iget[state]{\Ia}\)     and
  \(\imodels{\Ia}{\ivr}\),
  which is equivalent to
  \(\iaccessible[\ptest{\ivr}]{\Ia}{\Ita}\).

\item
\newcommand{\Izeta}{\iconcat[state=\varphi(t)]{\I}}
\def\Izetaa{\iadjointSubst{\sigma}{\Izeta}}%
\newcommand{\Iazeta}{\iconcat[state=\varphi(t)]{\Ia}}
  \(\iaccessible[\applyusubst{\sigma}{\pevolvein{\D{x}=\genDE{x}}{\ivr}}]{\I}{\It}
  = \iaccess[\pevolvein{\D{x}=\applyusubst{\sigma}{\genDE{x}}}
  {\applyusubst{\sigma}{\ivr}}]{\I}\)
  (provided that $\sigma$ is $\{x,\D{x}\}$-ad\-mis\-si\-ble for $\genDE{x},\ivr$)
  iff \(\mexists{\varphi:[0,T]\to\linterpretations{\Sigma}{V}}\)
  with \(\varphi(0)=\iget[state]{\I}\) on $\scomplement{\{\D{x}\}}$, \(\varphi(T)=\iget[state]{\It}\) and for all $t\geq0$:
  \(\D{\varphi}(t) = \ivaluation{\Izeta}{\applyusubst{\sigma}{\genDE{x}}}
  = \ivaluation{\Izetaa}{\genDE{x}}\) by \rref{lem:usubst-term}
  and
  \(\imodels{\Izeta}{\applyusubst{\sigma}{\ivr}}\),
  which, by \rref{lem:usubst}, holds iff
  \(\imodels{\Izetaa}{\ivr}\).
  
  Also
  \(\iaccessible[\pevolvein{\D{x}=\genDE{x}}{\ivr}]{\Ia}{\Ita}\)
  iff \(\mexists{\varphi:[0,T]\to\linterpretations{\Sigma}{V}}\)
  with \(\varphi(0)=\iget[state]{\I}\) on $\scomplement{\{\D{x}\}}$ and \(\varphi(T)=\iget[state]{\It}\) and for all $t\geq0$:
  \(\D{\varphi}(t) = \ivaluation{\Iazeta}{\genDE{x}}\)
  and
  \(\imodels{\Iazeta}{\ivr}\).
  Finally,
  \(\ivalues{\Iazeta}{\genDE{x}}=\ivalues{\Izetaa}{\genDE{x}}\) and
  \(\imodel{\Izetaa}{\ivr}=\imodel{\Iazeta}{\ivr}\)
  by \rref{cor:adjointUsubst}
  since $\sigma$ is $\{x,\D{x}\}$-admissible for $\genDE{x}$ and $\ivr$ and \(\iget[state]{\I}=\iget[state]{\Iazeta}\) on $\scomplement{\boundvarsdef{\pevolvein{\D{x}=\genDE{x}}{\ivr}}}\supseteq\scomplement{\{x,\D{x}\}}$ by \rref{lem:bound}.
  
\item  
  \(\iaccessible[\applyusubst{\sigma}{\pchoice{\alpha}{\beta}}]{\I}{\It}
  = \iaccess[\pchoice{\applyusubst{\sigma}{\alpha}}{\applyusubst{\sigma}{\beta}}]{\I}\)
  = \(\iaccess[\applyusubst{\sigma}{\alpha}]{\I} \cup \iaccess[\applyusubst{\sigma}{\beta}]{\I}\),
  which, by induction hypothesis, is equivalent to
  \(\iaccessible[\alpha]{\Ia}{\Ita}\) or \(\iaccessible[\beta]{\Ia}{\Ita}\),
  which is 
  \(\iaccessible[\alpha]{\Ia}{\Ita} \cup \iaccess[\beta]{\Ia} = \iaccess[\pchoice{\alpha}{\beta}]{\Ia}\).

\item
{\newcommand{\Iz}{\iconcat[state=\mu]{\I}}%
\newcommand{\Iza}{\iadjointSubst{\sigma}{\Iz}}%
\newcommand{\Iaz}{\iconcat[state=\mu]{\Ia}}%
  \(\iaccessible[\applyusubst{\sigma}{\alpha;\beta}]{\I}{\It}
  = \iaccess[\applyusubst{\sigma}{\alpha}; \applyusubst{\sigma}{\beta}]{\I}\)
  = \(\iaccess[\applyusubst{\sigma}{\alpha}]{\I} \compose \iaccess[\applyusubst{\sigma}{\beta}]{\I}\)
  (provided $\sigma$ is $\boundvarsdef{\applyusubst{\sigma}{\alpha}}$-admissible for $\beta$)
  iff there is a $\iget[state]{\Iz}$ such that
  \(\iaccessible[\applyusubst{\sigma}{\alpha}]{\I}{\Iz}\) and \(\iaccessible[\applyusubst{\sigma}{\beta}]{\Iz}{\It}\),
  which, by induction hypothesis, is equivalent to
  \(\iaccessible[\alpha]{\Ia}{\Iza}\) and \(\iaccessible[\beta]{\Iza}{\Ita}\).
  Yet, \(\iaccess[\beta]{\Iza}=\iaccess[\beta]{\Ia}\)
  by \rref{cor:adjointUsubst}, because $\sigma$ is $\boundvarsdef{\applyusubst{\sigma}{\alpha}}$-admissible for $\beta$ and \(\iget[state]{\I}=\iget[state]{\It}\) on $\scomplement{\boundvarsdef{\applyusubst{\sigma}{\alpha}}}$ by \rref{lem:bound} since \(\iaccessible[\applyusubst{\sigma}{\alpha}]{\I}{\Iz}\).
  Finally,
  \(\iaccessible[\alpha]{\Ia}{\Iaz}\) and \(\iaccessible[\beta]{\Iaz}{\Ita}\) for some $\iget[state]{\Iaz}$
  is equivalent to \(\iaccessible[\alpha;\beta]{\Ia}{\Ita}\).

}

\item
\newcommand{\Iz}[1][]{\iconcat[state=\nu_{#1}]{\I}}%
\newcommand{\Iza}[1][]{\iadjointSubst{\sigma}{\Iz[#1]}}%
\newcommand{\Iaz}[1][]{\iconcat[state=\nu_{#1}]{\Ia}}%
  \(\iaccessible[\applyusubst{\sigma}{\prepeat{\alpha}}]{\I}{\It}
  = \iaccess[\prepeat{(\applyusubst{\sigma}{\alpha})}]{\I}
  = \closureTransitive{\big(\iaccess[\applyusubst{\sigma}{\alpha}]{\I}\big)}
  = \cupfold_{n\in\naturals} (\iaccess[\applyusubst{\sigma}{\alpha}]{\I})^n
  \)
  (provided that $\sigma$ is $\boundvarsdef{\applyusubst{\sigma}{\alpha}}$-admissible for $\alpha$)
  iff there are $n\in\naturals$ and \(\iget[state]{\Iz[0]}=\iget[state]{\Ia},\iget[state]{\Iz[1]},\dots,\iget[state]{\Iz[n]}=\iget[state]{\Ita}\) such that
  \(\iaccessible[\applyusubst{\sigma}{\alpha}]{\Iz[i]}{\Iz[i+1]}\) for all $i<n$.
  By $n$ uses of the induction hypothesis, this is equivalent to
  \(\iaccessible[\alpha]{\Iza[i]}{\Iza[i+1]}\) for all $i<n$,
  which is equivalent to
  \(\iaccessible[\alpha]{\Iaz[i]}{\Iza[i+1]}\) 
  by \rref{cor:adjointUsubst}
  since $\sigma$ is $\boundvarsdef{\applyusubst{\sigma}{\alpha}}$-admissible for $\alpha$
  and \(\iget[state]{\Iza[i+1]}=\iget[state]{\Iza[i]}\) on $\scomplement{\boundvarsdef{\applyusubst{\sigma}{\alpha}}}$ by \rref{lem:bound} as \(\iaccessible[\applyusubst{\sigma}{\alpha}]{\Iz[i]}{\Iz[i+1]}\) for all $i<n$.
  Thus, this is equivalent to
  \(\iaccessible[\prepeat{\alpha}]{\Ia}{\Ita}
  = \closureTransitive{\big(\iaccess[\alpha]{\Ia}\big)}\).
\qedhere
\end{compactenum}
\end{proofatend}

\subsection{Soundness}

The uniform substitution lemmas are the key insights for the soundness of proof rule \irref{US}, which is only applicable if its uniform substitution is defined.
A proof rule is \emph{sound} iff validity of all its premises implies validity of its conclusion.

\begin{theorem}[Soundness of uniform substitution] \label{thm:usubst-sound}
  The proof rule \irref{US} is sound.
  \[
  \cinferenceRuleQuote{US}
  \]
\end{theorem}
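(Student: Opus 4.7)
The plan is to derive soundness of rule \irref{US} almost immediately from \rref{lem:usubst}, which is the real technical heart of the matter. Rule \irref{US} concludes $\applyusubst{\sigma}{\phi}$ from a valid premise $\phi$, and \rref{lem:usubst} says exactly that $\imodels{\I}{\applyusubst{\sigma}{\phi}}$ iff $\imodels{\Ia}{\phi}$, so the proof amounts to observing that the validity of $\phi$ in all interpretations covers the adjoint interpretation $\iportray{\Ia}$.

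In more detail, assume the premise $\phi$ is valid, i.e., $\imodels{\It}{\phi}$ for every interpretation $\iget[const]{\It}$ and every state $\iget[state]{\It}$. To establish validity of the conclusion, fix an arbitrary interpretation $\iget[const]{\I}$ and state $\iget[state]{\I}$; the goal is $\imodels{\I}{\applyusubst{\sigma}{\phi}}$. Form the adjoint interpretation $\iget[const]{\Ia}$ from \rref{def:adjointUsubst}, which is well-defined because \rref{fig:usubst} requires $\sigma$ to be admissible for $\phi$ throughout; the applicability of \irref{US} already presupposes that $\applyusubst{\sigma}{\phi}$ is defined (no clash). By \rref{lem:usubst} applied to $\phi$, $\iportray{\I}$, and $\sigma$, we have $\imodels{\I}{\applyusubst{\sigma}{\phi}}$ iff $\imodels{\Ia}{\phi}$. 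Since $\phi$ is valid by assumption, it holds in particular in $\iportray{\Ia}$, i.e., $\imodels{\Ia}{\phi}$, so $\imodels{\I}{\applyusubst{\sigma}{\phi}}$ as required. As $\iget[const]{\I}$ and $\iget[state]{\I}$ were arbitrary, $\applyusubst{\sigma}{\phi}$ is valid.

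There is essentially no genuine obstacle at this stage of the development; all the work has been front-loaded into \rref{lem:usubst} (and its companions \rref{lem:usubst-term} and \rref{lem:usubst-HP}), whose admissibility side conditions are precisely what justify switching between the syntactic substitution $\applyusubst{\sigma}{\cdot}$ and the semantic modification of the interpretation to $\iportray{\Ia}$. The only subtle point to mention in the writeup is that the proof relies on $\applyusubst{\sigma}{\phi}$ being defined, which is guaranteed because \irref{US} is only applicable under this proviso, and that the conclusion is uniform in $\iget[const]{\I},\iget[state]{\I}$ precisely because \rref{lem:usubst} holds for every choice of $\iportray{\I}$ and the construction of the adjoint $\iportray{\Ia}$ in \rref{def:adjointUsubst} is pointwise in $\iportray{\I}$.
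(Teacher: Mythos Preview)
Your proof is correct and follows essentially the same approach as the paper's own proof: assume validity of $\phi$, fix an arbitrary $\iportray{\I}$, invoke \rref{lem:usubst} to reduce $\imodels{\I}{\applyusubst{\sigma}{\phi}}$ to $\imodels{\Ia}{\phi}$, and conclude by validity of $\phi$ in the adjoint interpretation. The paper additionally notes that rule \irref{US0} is the special case of \irref{US} where $\sigma$ substitutes only a predicate symbol, but this is incidental to the soundness argument itself.
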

\begin{proof}
\def\Ia{\iadjointSubst{\sigma}{\I}}%
Let the premise $\phi$ of \irref{US} be valid, i.e.\ \m{\imodels{\I}{\phi}} for all interpretations and states $\iportray{\I}$.
To show that the conclusion is valid, consider any interpretation and state $\iportray{\I}$ and show \(\imodels{\I}{\applyusubst{\sigma}{\phi}}\).
By \rref{lem:usubst}, \(\imodels{\I}{\applyusubst{\sigma}{\phi}}\) iff \(\imodels{\Ia}{\phi}\).
Now \(\imodels{\Ia}{\phi}\) holds, because \(\imodels{\I}{\phi}\) for all $\iportray{\I}$, including for $\iportray{\Ia}$, by premise.
The rule \irref{US0} is the special case of \irref{US} where $\sigma$ only substitutes predicate symbol $p$.
\qedhere
\end{proof}
Uniform substitutions can also be used to soundly instantiate locally sound proof rules or whole proofs just like proof rule \irref{US} soundly instantiates axioms or other valid formulas (\rref{thm:usubst-sound}).
An inference or proof rule is \emph{locally sound} iff its conclusion is valid in any interpretation $\iget[const]{\I}$ in which all its premises are valid.
All locally sound proof rules are sound.
The use of \rref{thm:usubst-rule} in a proof is marked \irref{USR}.

\begin{theorem}[Soundness of uniform substitution of rules] \label{thm:usubst-rule}
  All uniform substitution instances (with $\freevarsdef{\sigma}=\emptyset$) of locally sound inferences are \emph{locally sound}:
  \[
\linfer
{\phi_1 \quad \dots \quad \phi_n}
{\psi}
~~\text{locally sound}\qquad\text{implies}\qquad
\linfer%
{\applyusubst{\sigma}{\phi_1} \quad \dots \quad \applyusubst{\sigma}{\phi_n}}
{\applyusubst{\sigma}{\psi}}
~~\text{locally sound}
\irlabel{USR|USR}
  \]
\end{theorem}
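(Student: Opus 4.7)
The plan is to reduce local soundness of the substituted rule to local soundness of the original rule via Lemma~\ref{lem:usubst}, using the hypothesis $\freevarsdef{\sigma}=\emptyset$ to ensure that the adjoint interpretation is genuinely an interpretation (state-independent on its constant part), so that ``validity in $\iget[const]{\Ia}$'' makes sense.

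First, I would fix an interpretation $\iget[const]{\I}$ and assume that each premise $\applyusubst{\sigma}{\phi_i}$ is valid in $\iget[const]{\I}$, i.e.\ $\imodels{\I}{\applyusubst{\sigma}{\phi_i}}$ for every state $\iget[state]{\I}$. By \rref{lem:usubst}, this is equivalent to $\imodels{\Ia}{\phi_i}$ for every $\iget[state]{\I}$. The subtle point is that the adjoint $\iportray{\Ia}$ depends a priori on $\iget[state]{\I}$ through the replacements for function and predicate symbols (see \rref{def:adjointUsubst}). Here \rref{cor:adjointUsubst} saves the day: since $\freevarsdef{\sigma}=\emptyset$, any two states agree on $\freevarsdef{\sigma}$ vacuously, hence $\iget[const]{\Ia}=\iget[const]{\Ita}$ for all $\iget[state]{\I},\iget[state]{\It}$. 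Thus the constant part of the adjoint is a single, well-defined interpretation that I will simply call $\iget[const]{\Ia}$.

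Next I would conclude that $\phi_i$ is valid in the interpretation $\iget[const]{\Ia}$, because $\imodels{\Ia}{\phi_i}$ holds in every state $\iget[state]{\Ia}$ (the ranges of $\iget[state]{\I}\mapsto\iget[state]{\Ia}$ cover all states, as the state is not modified by taking adjoints). Local soundness of the original inference then gives that $\psi$ is valid in $\iget[const]{\Ia}$, i.e.\ $\imodels{\Ia}{\psi}$ for every state. Applying \rref{lem:usubst} in the reverse direction yields $\imodels{\I}{\applyusubst{\sigma}{\psi}}$ for every $\iget[state]{\I}$, which is exactly validity of the substituted conclusion in $\iget[const]{\I}$. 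Since $\iget[const]{\I}$ was arbitrary (and we only used that the premises hold in that one $\iget[const]{\I}$), this proves local soundness.

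The main obstacle—and the reason for the side condition $\freevarsdef{\sigma}=\emptyset$—is precisely the step where one needs the adjoint interpretation to be state-independent on its constant part. Without $\freevarsdef{\sigma}=\emptyset$, varying $\iget[state]{\I}$ could produce genuinely different adjoints $\iget[const]{\Ia}$, and from ``$\phi_i$ holds at each $\iget[state]{\Ia}$ in its own adjoint'' one could not conclude that $\phi_i$ is valid in any single interpretation to which the original rule's local soundness can be applied. Once that point is handled by \rref{cor:adjointUsubst}, the rest is a clean double application of the uniform substitution lemma around the hypothesis of local soundness.
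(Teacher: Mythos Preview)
Your proposal is correct and follows essentially the same approach as the paper: fix $\iget[const]{\I}$, transfer validity of the substituted premises to validity of the original premises in the adjoint interpretation via \rref{lem:usubst}, use $\freevarsdef{\sigma}=\emptyset$ with \rref{cor:adjointUsubst} to make the adjoint state-independent so that local soundness applies, and then transfer back. The only cosmetic difference is that the paper fixes an arbitrary reference state $\iget[state]{\It}$ and works with $\iget[const]{\Ita}$ throughout, whereas you directly argue that the adjoint's constant part is a single interpretation; these are equivalent formulations of the same step.
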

\begin{proof}
\def\locproof{\mathcal{D}}%
Let $\locproof$ be the inference on the left and let $\applyusubst{\sigma}{\locproof}$ be the substituted inference on the right.
Assume $\locproof$ to be locally sound.
To show that $\applyusubst{\sigma}{\locproof}$ is locally sound, consider any $\iget[const]{\I}$ in which all premises of $\applyusubst{\sigma}{\locproof}$ are valid, i.e.\
\(\iget[const]{\I}\models{\applyusubst{\sigma}{\phi_j}}\) for all $j$.
That is, \(\imodels{\I}{\applyusubst{\sigma}{\phi_j}}\) for all $\iget[state]{\I}$ and all $j$.
By \rref{lem:usubst}, \(\imodels{\I}{\applyusubst{\sigma}{\phi_j}}\) is equivalent to
\(\imodels{\Ia}{\phi_j}\),
which, thus, also holds for all $\iget[state]{\I}$ and all $j$.
By \rref{cor:adjointUsubst}, \(\imodel{\Ia}{\phi_j}=\imodel{\Ita}{\phi_j}\) for any $\iget[state]{\Ita}$, since $\freevarsdef{\sigma}=\emptyset$.
Fix an arbitrary state $\iget[state]{\Ita}$.
\newcommand{\Itar}{\iconcat[state=\nu]{\Ita}}%
Then \(\imodels{\Itar}{\applyusubst{\sigma}{\phi_j}}\)  holds for all $\iget[state]{\Itar}$ and all $j$ for the same (arbitrary) $\iget[state]{\Ita}$ that determines $\iget[const]{\Ita}$.

Consequently, all premises of $\locproof$ are valid in the same $\iget[const]{\Ita}$, i.e. \(\iget[const]{\Ita}\models{\phi_j}\) for all $j$.
Thus, \(\iget[const]{\Ita}\models{\psi}\) by local soundness of $\locproof$.
That is, \(\imodels{\Ia}{\psi}=\imodel{\Ita}{\psi}\) by \rref{cor:adjointUsubst} for all $\iget[state]{\Ia}$.
By \rref{lem:usubst}, \(\imodels{\Ia}{\psi}\) is equivalent to \(\imodels{\I}{\applyusubst{\sigma}{\psi}}\),
which continues to hold for all $\iget[state]{\I}$.
Thus, \(\iget[const]{\I}\models{\applyusubst{\sigma}{\psi}}\), i.e.\ the conclusion of $\applyusubst{\sigma}{\locproof}$ is valid in $\iget[const]{\I}$, hence $\applyusubst{\sigma}{\locproof}$ is locally sound.
Consequently, all uniform substitution instances $\applyusubst{\sigma}{\locproof}$ of locally sound inferences $\locproof$ with $\freevarsdef{\sigma}=\emptyset$ are locally sound.
\qedhere
\end{proof}
If $\psi$ has a proof (i.e.\ $n=0$), \irref{USR} preserves local soundness even if $\freevarsdef{\sigma}\neq\emptyset$, because \irref{US} proves \(\applyusubst{\sigma}{\psi}\) from the provable $\psi$, which makes this inference locally sound, since local soundness is equivalent to soundness for $n=0$ premises.
If $\psi$ has a proof, uniform substitution of rules \irref{USR} for $n=0$ premises is identical to rule \irref{US}.

\begin{example}[Uniform substitutions are only globally sound]
Rule \irref{US} itself is only sound but not locally sound, so it cannot have been used on any unproved premises at any point during a proof that is to be instantiated by proof rule \irref{USR} from \rref{thm:usubst-rule}.
The following sound proof on the left with a modus ponens (marked \irref{MP}) has an unproved premise on which \irref{US} has been used at some point during the proof:

\noindent
\begin{minipage}{5.5cm}
\begin{sequentdeduction}[array]
\linfer[MP]
{\linfer
  {\lclose}
  {\lsequent{1=0} {\dbox{\pevolve{\D{x}=2}}{\,x<5}}}
 !
  \linfer[US]
  {\lsequent{} {f(x)=0}}
  {\lsequent{} {1=0}}
}%
{\lsequent{} {\dbox{\pevolve{\D{x}=2}}{\,x<5}}}
\end{sequentdeduction}\end{minipage}%
\qquad\text{\irref{USR} not applicable}\qquad\quad%
\begin{minipage}{2cm}
\[
\linfer[clash]
{\lsequent{} {0=0}}
{\lsequent{} {\dbox{\pevolve{\D{x}=2}}{\,x<5}}}
\]
\end{minipage}

\noindent
This use of \irref{US}, which substitutes $1$ for $f(\usarg)$, makes the left proof sound but \emph{not} locally sound.
That prevents rule \irref{USR} of \rref{thm:usubst-rule} from (unsoundly) concluding the uniform substitution instance on the right with
\(\sigma = \usubstlist{\usubstmod{f(\usarg)}{0}}\).
Rule \irref{US} assumes that its premise is valid (in all interpretations $\iget[const]{\I}$), but the (clashing) substitution instance on the right only proves one choice for $f$ to satisfy premise \(f(x)=0\).
Rule \irref{US} can still be used in the proof of a premise that proves without endangering local soundness, because proved premises are valid in all interpretations by soundness.
\end{example}

\section{Differential Dynamic Logic Axioms} \label{sec:dL-axioms}

Proof rules and axioms for a Hilbert-type axiomatization of \dL from prior work \cite{DBLP:conf/lics/Platzer12b} are shown in \rref{fig:dL}, except that, thanks to proof rule \irref{US}, axioms and proof rules now reduce to  the finite list of concrete \dL formulas in \rref{fig:dL} as opposed to an infinite collection of axioms from a finite list of axiom schemata along with schema variables, side conditions, and implicit instantiation rules.
Soundness of the axioms follows from soundness of corresponding axiom schemata \cite{Harel_et_al_2000,DBLP:conf/lics/Platzer12b}, but is easier to prove standalone, because it is a finite list of formulas without the need to prove soundness for all their instances.
Soundness of axioms, thus, reduces to validity of one formula as opposed to validity of all formulas that can be generated by the instantiation mechanism complying with the respective side conditions for that axiom schema.
The proof rules in \rref{fig:dL} are \emph{axiomatic rules}, i.e.\ pairs of concrete \dL formulas to be instantiated by \irref{USR}.
Soundness of axiomatic rules reduces to proving that their concrete conclusion formula is a consequence of their premise formula.
Further, $\usall$ is the vector of all relevant variables, which is finite-dimensional, or  considered as a built-in vectorial term. Proofs in the uniform substitution \dL calculus use \irref{US} (and variable renaming such as \(\lforall{x}{p(x)}\) to \(\lforall{y}{p(y)}\)) to instantiate the axioms from \rref{fig:dL} to the required form.

\begin{figure}[tb]
  \renewcommand*{\irrulename}[1]{\text{#1}}%
  \begin{calculuscollections}{\columnwidth}
    \begin{calculus}
      \cinferenceRule[diamond|$\didia{\cdot}$]{diamond axiom}
      {\linferenceRule[equiv]
        {\lnot\dbox{a}{\lnot p(\usall)}}
        {\axkey{\ddiamond{a}{p(\usall)}}}
      }
      {}
      \cinferenceRule[assignb|$\dibox{:=}$]{assignment / substitution axiom}
      {\linferenceRule[equiv]
        {p(f)}
        {\axkey{\dbox{\pupdate{\umod{x}{f}}}{p(x)}}}
      }
      {}%
      \irlabel{Dassignb|$\dibox{:=}$}%
      \cinferenceRule[testb|$\dibox{?}$]{test}
      {\linferenceRule[equiv]
        {(q \limply p)}
        {\axkey{\dbox{\ptest{q}}{p}}}
      }{}
      \cinferenceRule[choiceb|$\dibox{\cup}$]{axiom of nondeterministic choice}
      {\linferenceRule[equiv]
        {\dbox{a}{p(\usall)} \land \dbox{b}{p(\usall)}}
        {\axkey{\dbox{\pchoice{a}{b}}{p(\usall)}}}
      }{}
      \cinferenceRule[composeb|$\dibox{{;}}$]{composition} %
      {\linferenceRule[equiv]
        {\dbox{a}{\dbox{b}{p(\usall)}}}
        {\axkey{\dbox{a;b}{p(\usall)}}}
      }{}
      \cinferenceRule[iterateb|$\dibox{{}^*}$]{iteration/repeat unwind} %
      {\linferenceRule[equiv]
        {p(\usall) \land \dbox{a}{\dbox{\prepeat{a}}{p(\usall)}}}
        {\axkey{\dbox{\prepeat{a}}{p(\usall)}}}
      }{}
      \cinferenceRule[K|K]{K axiom / modal modus ponens} %
      {\linferenceRule[impl]
        {\dbox{a}{(p(\usall)\limply q(\usall))}}
        {(\dbox{a}{p(\usall)}\limply\axkey{\dbox{a}{q(\usall)}})}
      }{}
      \cinferenceRule[I|I]{loop induction}
      {\linferenceRule[impl]
        {\dbox{\prepeat{a}}{(p(\usall)\limply\dbox{a}{p(\usall)})}}
        {(p(\usall)\limply\axkey{\dbox{\prepeat{a}}{p(\usall)}})}
      }{}
      \cinferenceRule[V|V]{vacuous $\dbox{}{}$}
      {\linferenceRule[impl]
        {p}
        {\axkey{\dbox{a}{p}}}
      }{}%
    \end{calculus}
    \qquad
    \begin{calculus}
      \cinferenceRule[G|G]{$\dbox{}{}$ generalisation} %
      {\linferenceRule[formula]
        {p(\usall)}
        {\dbox{a}{p(\usall)}}
      }{}
      \cinferenceRule[gena|$\forall{}$]{$\forall{}$ generalisation}
      {\linferenceRule[formula]
        {p(x)}
        {\lforall{x}{p(x)}}
      }{}%
      \cinferenceRule[MP|MP]{modus ponens}
      {\linferenceRule[formula]
        {p\limply q \quad p}
        {q}
      }{}%
      \cinferenceRule[CQ|CQ]{congequal congruence of equations on formulas (convert term congruence to formula congruence: term congruence on formulas)}
      {\linferenceRule[formula]
        {f(\usall) = g(\usall)}
        {p(f(\usall)) \lbisubjunct p(g(\usall))}
      }{}%
      \cinferenceRule[CE|CE]{congequiv congruence of equivalences on formulas}
      {\linferenceRule[formula]
        {p(\usall) \lbisubjunct q(\usall)}
        {\contextapp{C}{p(\usall)} \lbisubjunct \contextapp{C}{q(\usall)}}
      }{}%
    \end{calculus}%
  \end{calculuscollections}
  \caption{Differential dynamic logic axioms and proof rules}
  \label{fig:dL}
\end{figure}
\emph{Diamond axiom} \irref{diamond} expresses the duality of the $\dbox{\cdot}{}$ and $\ddiamond{\cdot}{}$ modalities.
\emph{Assignment axiom} \irref{assignb} expresses that $p(x)$ holds after the assignment $\pupdate{\pumod{x}{f}}$ iff $p(f)$ holds initially.
\emph{Test axiom} \irref{testb} expresses that $p$ holds after the test $\ptest{q}$ iff $p$ is implied by $q$, because test~$\ptest{q}$ only runs when~$q$ holds.
\emph{Choice axiom} \irref{choiceb} expresses that $p(\usall)$ holds after all runs of $\pchoice{a}{b}$ iff $p(\usall)$ holds after all runs of $a$ and after all runs of $b$.
\emph{Sequential composition axiom} \irref{composeb} expresses that $p(\usall)$ holds after all runs of $a;b$ iff, after all runs of $a$, it is the case that $p(\usall)$ holds after all runs of $b$.
\emph{Iteration axiom} \irref{iterateb} expresses that $p(\usall)$ holds after all repetitions of $a$ iff it holds initially and, after all runs of $a$, it is the case that $p(\usall)$ holds after all repetitions of $a$.
Axiom \irref{K} is the \emph{modal modus ponens} from modal logic \cite{HughesCresswell96}.
\emph{Induction axiom} \irref{I} expresses that
if, no matter how often $a$ repeats, $p(\usall)$ holds after all runs of $a$ if it was true before, then, if $p(\usall)$ holds initially, it holds after all repetitions of $a$.
\emph{Vacuous axiom} \irref{V} expresses that arity 0 predicate symbol $p$ continues to hold after all runs of $a$ if it was true before.

\emph{G\"odel's generalization} rule \irref{G} expresses that $p(\usall)$ holds after all runs of $a$ if $p(\usall)$ is valid.
Accordingly \irref{gena} is the \emph{$\forall$-generalization} rule.
\irref{MP} is \emph{modus ponens}.
\emph{Congruence} rules \irref{CQ}, \irref{CE} are not needed but included to efficiently use axioms in any context.
Congruence rule \irref{CT} derives from \irref{CQ} using \(p(\usarg) \mdefequiv \big(c(\usarg)=c(g(\usall))\big)\) and reflexivity:
\[
      \dinferenceRule[CT|CT]{congterm congruence on terms}
      {\linferenceRule[formula]
        {f(\usall) = g(\usall)}
        {c(f(\usall)) = c(g(\usall))}
      }{}%
\]

\begin{remark} \label{rem:usall}%
The use of variable vector $\usall$ is not essential but simplifies concepts.
An equivalent axiomatization is obtained when considering $p(\usall)$ to be a quantifier symbol of arity 0 in the axiomatization, or as $\contextapp{C}{\ltrue}$ with a quantifier symbol of arity 1.
Neither replacements of quantifier symbols nor (vectorial) placeholders $\usarg$ for the substitutions \(\usubstlist{\usubstmod{p(\usarg)}{\psi}}\) that are used for $p(\usall)$ cause any free variables in the substitution.
The mnemonic notation \(\sigma=\usubstlist{\usubstmod{p(\usall)}{\phi}}\)
adopted for such uniform substitutions reminds that the variables $\usall$ are not free in $\sigma$ even if they occur in the replacement $\phi$.
\end{remark}

Sound axioms are just valid formulas, so true in all states.
For example, in any state where \(\dbox{a}{\dbox{b}{p(\usall})}\) is true, \(\dbox{a;b}{p(\usall)}\) is true, too, by equivalence axiom \irref{composeb}.
Using axiom \irref{composeb} to replace one by the other is a truth-preserving transformation, i.e.\ in any state in which one is true, the other is true, too.
Sound rules are validity-preserving, i.e.\ the conclusion is valid if the premises are valid, which is weaker than truth-preserving transformation.
For proof search, the \dL axioms are meant to be used to reduce the \emph{axiom key} (marked \axkey{blue}) to the structurally simpler remaining conditions (right-hand sides of equivalences and the conditions assumed in implications).

\paragraph{Real Quantifiers.}

Besides (decidable) real arithmetic (whose use is denoted \irref{qear}), complete axioms for first-order logic can be adopted to express
universal instantiation \irref{allinst} (if $p$ is true of all $x$, it is also true of constant function symbol $f$),
distributivity \irref{alldist},
and vacuous quantification \irref{vacuousall} (predicate $p$ of arity zero does not depend on $x$).

  \begin{calculuscollections}{\columnwidth}
    \begin{calculus}
      \cinferenceRule[allinst|$\forall$i]{universal instantiation}
       {(\axkey{\lforall{x}{p(x)}}) \limply p(f)}
       {}
       \cinferenceRule[alldist|$\forall{\limply}$]{$\forall$ distributes over $\limply$}
       {\lforall{x}{(p(x)\limply q(x))} \limply (\lforall{x}{p(x)} \limply \axkey{\lforall{x}{q(x)}})}
       {}
       \cinferenceRule[vacuousall|V$_\forall$]{vacuous universal quantifier}
       {p \limply \axkey{\lforall{x}{p}}}
       {}%
    \end{calculus}
  \end{calculuscollections}

\paragraph{The Significance of Clashes.}

This section illustrates how uniform substitutions tell sound instantiations apart from unsound proof attempts.
Rule \irref{US} clashes exactly when the substitution introduces a free variable into a bound context, which would be unsound.
\rref{ex:usubst-assign1} on p.~\pageref{ex:usubst-assign1} already showed that even an occurrence of $p(x)$ in a context where $x$ is bound does not permit mentioning $x$ in the replacement except in the $\usarg$ places.
\irref{US} can directly handle even nontrivial binding structures, though, e.g. from \irref{assignb}
with the substitution \(\sigma=\usubstlist{\usubstmod{f}{x^2},\usubstmod{p(\usarg)}{\dbox{\prepeat{(z:=\usarg+z)};z:=\usarg+yz}{y\geq\usarg}}}\):
\[
\linfer[US]
{\dbox{\pupdate{\umod{x}{f}}}{p(x)} \lbisubjunct p(f)}
{\dbox{x:=x^2}{\dbox{\prepeat{(z:=x{+}z)};z:=x{+}yz}{\,y{\geq}x}} \lbisubjunct
\dbox{\prepeat{(z:=x^2{+}z)};z:=x^2{+}yz}{\,y{\geq}x^2}
}
\]
It is soundness-critical that \irref{US} clashes when trying to instantiate $p$ in \irref{vacuousall} with a formula that mentions the bound variable $x$:
\[
\linfer[clash]
{p \limply \lforall{x}{p}}
{x\geq0 \limply \lforall{x}{(x\geq0)}}
\qquad
\usubstlist{\usubstmod{p}{x\geq0}}
\]
It is soundness-critical that \irref{US} clashes when substituting $p$ in vacuous program axiom \irref{V} with a formula with a free occurrence of a variable bound by the replacement of $a$:
\[
\linfer[clash]
{p \limply \dbox{a}{p}}
{x\geq0 \limply \dbox{\pevolve{\D{x}=-1}}{\,x\geq0}}
\qquad
\usubstlist{\usubstmod{a}{\pevolve{\D{x}=-1}},\usubstmod{p}{x\geq0}}
\]
Additional free variables are acceptable, though, e.g.\ in replacements for $p$ as long as they are not bound in the particular context into which they will be substituted:
\[
\linfer[US]
{p \limply \dbox{a}{p}}
{y\geq0 \limply \dbox{\pevolve{\D{x}=-1}}{\,y\geq0}}
\qquad
\usubstlist{\usubstmod{a}{\pevolve{\D{x}=-1}},\usubstmod{p}{y\geq0}}
\]
Complex formulas are acceptable as replacements for $p$ if their free variables are not bound in the context, e.g., using \(\sigma = \usubstlist{\usubstmod{a}{\Dupdate{\Dumod{\D{x}}{5x}}},\usubstmod{p}{\dbox{\pevolve{\D{x}=x^2-2x+2}}{x\geq1}}}\):
\[
\linfer[US]
{p \limply \dbox{a}{p}}
{\dbox{\pevolve{\D{x}=x^2-2x+2}}{\,x\geq1} \limply \dbox{\Dupdate{\Dumod{\D{x}}{5x}}}{\dbox{\pevolve{\D{x}=x^2-2x+2}}{\,x\geq1}}}
\]
But it is soundness-critical that \irref{US} clashes when substituting a formula with a free dependence on $\D{x}$ for $p$ into a context where $\D{x}$ will be bound after the substitution:
\[
\linfer[clash]
{p \limply \dbox{a}{p}}
{\der{x-1}\geq0 \limply \dbox{\Dupdate{\Dumod{\D{x}}{5x}}}{\der{x-1}\geq0}}
\qquad
\usubstlist{\usubstmod{a}{\Dupdate{\Dumod{\D{x}}{5x}}},\usubstmod{p}{\der{x-1}\geq0}}
\]
G\"odel's generalization rule \irref{G} uses $p(\usall)$ instead of the $p$ that \irref{V} uses, so its \irref{USR} instance allows all variables $\usall$ to occur in the replacement without causing a clash:
\[
\linfer[G+USR]
{(-x)^2\geq0}
{\dbox{\pevolve{\D{x}=-1}}{(-x)^2\geq0}}
\qquad
\usubstlist{\usubstmod{a}{\pevolve{\D{x}=-1}},\usubstmod{p(\usall)}{(-x)^2\geq0}}
\]
Intuitively, the argument $\usall$ in this uniform substitution instance of \irref{G} was not introduced as part of the substitution but merely put in for the placeholder $\usarg$ instead.
Let \(\usall=(x,y)\), 
\irref{US} 
\(\usubstlist{\usubstmod{a}{x:=x+1},\usubstmod{b}{x:=0;\pevolve{\D{y}=-2}},\usubstmod{p(\usall)}{x\geq y}}\) derives from \irref{choiceb}:%
\begin{sequentdeduction}
  \linfer[US]
    {\dbox{\pchoice{a}{b}}{p(\usall)} \lbisubjunct \dbox{a}{p(\usall)} \land \dbox{b}{p(\usall)}}
  {\dbox{\pchoice{x:=x+1}{(x:=0;\pevolve{\D{y}=-2})}}{\,x\geq y} \lbisubjunct \dbox{x:=x+1}{\,x\geq0} \land \dbox{x:=0;\pevolve{\D{y}=-2}}{\,x\geq y}}
\end{sequentdeduction}

\noindent
With \(\usall=(x,y)\) and
\(\usubstlist{\usubstmod{a}{\pchoice{x:=x+1}{y:=0}},\usubstmod{b}{\pevolve{\D{y}=-1}},\usubstmod{p(\usall)}{x\geq y}}\), \irref{US} yields:
\begin{sequentdeduction}
\hspace{-0.5cm}
  \linfer[US]
    {\dbox{a;b}{p(\usall)} \lbisubjunct \dbox{a}{\dbox{b}{p(\usall)}}}
  {\dbox{(\pchoice{x:=x+1}{y:=0});\pevolve{\D{y}=-1}}{\,x\geq y} \lbisubjunct \dbox{\pchoice{x:=x+1}{y:=0}}{\dbox{\pevolve{\D{y}=-1}}{\,x\geq y}}}
\end{sequentdeduction}

Not all axioms fit to the uniform substitution framework, though.
The Barcan schema was used in a completeness proof for the Hilbert-type calculus for differential dynamic logic \cite{DBLP:conf/lics/Platzer12b} (but not in the completeness proof for its sequent calculus \cite{DBLP:journals/jar/Platzer08}):
\[
      \cinferenceRule[B|B]{Barcan$\dbox{}{}\forall{}$} %
      {\linferenceRule[impl]
        {\lforall{x}{\dbox{\alpha}{p(x)}}}
        {\dbox{\alpha}{\lforall{x}{p(x)}}}
      }{\m{x\not\in\alpha}}
\]
Axiom schema \irref{B} is unsound without the restriction \(x\not\in\alpha\), though, so that the following formula, which cannot enforce $x\not\in a$, would be an unsound axiom
\begin{equation}
{\lforall{x}{\dbox{a}{p(x)}}\limply{\dbox{a}{\lforall{x}{p(x)}}}}
\label{eq:unsound-B-attempt}
\end{equation}
Indeed, the effect of program constant $a$ might depend on the value of $x$ or it might write to $x$.
In \rref{eq:unsound-B-attempt}, $x$ cannot be written by $a$ without violating soundness:
\[
\linfer[unsound]
  {\lforall{x}{\dbox{a}{p(x)}}\limply{\dbox{a}{\lforall{x}{p(x)}}}}
  {\lforall{x}{\dbox{x:=0}{\,x\geq0}}\limply{\dbox{x:=0}{\lforall{x}{(x\geq0)}}}}
\qquad
\usubstlist{\usubstmod{a}{x:=0},\usubstmod{p(\usarg)}{\usarg\geq0}}
\]
nor can $x$ be read by $a$ in \rref{eq:unsound-B-attempt} without violating soundness:
\[
\linfer[unsound]
  {\lforall{x}{\dbox{a}{p(x)}}\limply{\dbox{a}{\lforall{x}{p(x)}}}}
  {\lforall{x}{\dbox{\ptest{(y=x^2)}}{\,y=x^2}}\limply{\dbox{\ptest{(y=x^2)}}{\lforall{x}{y=x^2}}}}
\qquad
\usubstlist{\usubstmod{a}{\ptest{(y=x^2)}},\usubstmod{p(\usarg)}{y=\usarg^2}}
\]

Thus, the completeness proof for differential dynamic logic from prior work \cite{DBLP:conf/lics/Platzer12b} does not carry over.
A more general completeness result for differential game logic \cite{DBLP:journals/tocl/Platzer15} implies, however, that Barcan schema \irref{B} is unnecessary for completeness.

\section{Differential Equations and Differential Axioms} \label{sec:differential}

\providecommand{\sol}{x}%
\providecommand{\solf}{\sol}%
\providecommand{\solutionfor}[2][]{{\solf}_{#1}}%

\rref{sec:dL-axioms} leverages uniform substitutions to obtain a finite list of axioms without side-conditions.
They lack axioms for differential equations, though.
Classical calculi for \dL have axiom schema \irref{evolveb} from p.~\pageref{ir:evolveb} for replacing differential equations with time quantifiers and discrete assignments for their solutions.
In addition to being limited to simple solvable differential equations, such axiom schemata have quite nontrivial soundness-critical side conditions.

This section leverages \irref{US} and the new differential forms in \dL to obtain a logically internalized version of differential invariants and related proof rules for differential equations \cite{DBLP:journals/logcom/Platzer10,DBLP:journals/lmcs/Platzer12} as axioms (without schema variables or side-conditions).
These axioms can prove properties of more general ``unsolvable'' differential equations. They can also prove all properties of differential equations that can be proved with solutions \cite{DBLP:journals/lmcs/Platzer12} while guaranteeing correctness of the solution as part of the proof.

\subsection{Differentials: Invariants, Cuts, Effects, and Ghosts} \label{sec:diffind}

Figure~\ref{fig:dL-ODE} shows axioms for proving properties of differential equations (\irref{DW}--\irref{DS}), and differential axioms for differentials (\irref{Dplus+Dtimes+Dcompose}) which are equations of differentials.
Axiom \irref{Dvar} identifying \(\der{x}=\D{x}\) for variables $x\in\allvars$ and axiom \irref{Dconst} for functions $f$ and number literals of arity 0 are used implicitly to save space.
Some axioms use reverse implication notation \(\phi\lylpmi\psi\) instead of the equivalent \(\psi\limply\phi\) for emphasis.%
\begin{figure}[tb]
  \begin{calculuscollections}{\columnwidth}
  \renewcommand*{\irrulename}[1]{\text{#1}}%
    \begin{calculus}
      \cinferenceRule[DW|DW]{differential evolution domain} %
      {\axkey{\dbox{\pevolvein{\D{x}=f(x)}{q(x)}}{q(x)}}}
      {}
      \cinferenceRule[DC|DC]{differential cut} %
      {\linferenceRule[lpmi]
        {\big(\axkey{\dbox{\pevolvein{\D{x}=f(x)}{q(x)}}{p(x)}} \lbisubjunct \dbox{\pevolvein{\D{x}=f(x)}{q(x)\land r(x)}}{p(x)}\big)}
        {\dbox{\pevolvein{\D{x}=f(x)}{q(x)}}{r(x)}}
      }
      {}%
      \cinferenceRule[DE|DE]{differential effect} %
      {\linferenceRule[viuqe]
        {\axkey{\dbox{\pevolvein{\D{x}=f(x)}{q(x)}}{p(x,\D{x})}}}
        {\dbox{\pevolvein{\D{x}=f(x)}{q(x)}}{\dbox{\Dupdate{\Dumod{\D{x}}{f(x)}}}{p(x,\D{x})}}}
      }
      {}%
      \cinferenceRule[DI|DI]{differential invariant} %
      {\linferenceRule[lpmi]
        {\big(\axkey{\dbox{\pevolvein{\D{x}=f(x)}{q(x)}}{p(x)}} \lbisubjunct \dbox{\ptest{q(x)}}{p(x)}\big)}
        {\big(q(x)\limply \dbox{\pevolvein{\D{x}=f(x)}{q(x)}}{\der{p(x)}}\big)}
      }
      {}%
      \cinferenceRule[DG|DG]{differential ghost variables} %
      {\linferenceRule[viuqe]
        {\axkey{\dbox{\pevolvein{\D{x}=f(x)}{q(x)}}{p(x)}}}
        {\lexists{y}{\dbox{\pevolvein{\D{x}=f(x)\syssep\D{y}=a(x)y+b(x)}{q(x)}}{p(x)}}}
      }
      {}%
      \cinferenceRule[DS|DS]{(constant) differential equation solution} %
      {\linferenceRule[viuqe]
        {\axkey{\dbox{\pevolvein{\D{x}=f}{q(x)}}{p(x)}}}
        {\lforall{t{\geq}0}{\big((\lforall{0{\leq}s{\leq}t}{q(x+f\itimes s)}) \limply \dbox{\pupdate{\pumod{x}{x+f\itimes t}}}{p(x)}\big)}}
      }
      {}%
      \cinferenceRule[Dconst|$c'$]{derive constant}
      {\linferenceRule[eq]
        {0}
        {\axkey{\der{f}}}
      }
      {}%
      \cinferenceRule[Dvar|$x'$]{derive variable}
      {\linferenceRule[eq]
        {\D{x}}
        {\axkey{\der{x}}}
      }
      {}%
      \cinferenceRule[Dplus|$+'$]{derive sum}
      {\linferenceRule[eq]
        {\der{f(\usall)}+\der{g(\usall)}}
        {\axkey{\der{f(\usall)+g(\usall)}}}
      }
      {}
      \cinferenceRule[Dtimes|$\cdot'$]{derive product}
      {\linferenceRule[eq]
        {\der{f(\usall)}\cdot g(\usall)+f(\usall)\cdot\der{g(\usall)}}
        {\axkey{\der{f(\usall)\cdot g(\usall)}}}
      }
      {}
      \cinferenceRule[Dcompose|$\compose'$]{derive composition}
      {
        \dbox{\pupdate{\pumod{y}{g(x)}}}{\dbox{\Dupdate{\Dumod{\D{y}}{1}}}
        {\big( \axkey{\der{f(g(x))}} = \der{f(y)}\stimes\der{g(x)}\big)}}
      }
      {}%
   \end{calculus}%
\end{calculuscollections}%
  \caption{Differential equation axioms and differential axioms}
  \label{fig:dL-ODE}
\end{figure}

\emph{Differential weakening} axiom \irref{DW} internalizes that differential equations never leave their evolution domain $q(x)$.
The evolution domain $q(x)$ holds after all evolutions of \(\pevolvein{\D{x}=f(x)}{q(x)}\), because differential equations cannot leave their evolution domains.
\irref{DW} derives\footnote{%
The implication \(\dbox{\pevolvein{\D{x}=f(x)}{q(x)}}{(q(x)\limply p(x))} \limply \dbox{\pevolvein{\D{x}=f(x)}{q(x)}}{p(x)}\) derives by \irref{K} from \irref{DW}.
The converse implication
\(\dbox{\pevolvein{\D{x}=f(x)}{q(x)}}{p(x)} \limply \dbox{\pevolvein{\D{x}=f(x)}{q(x)}}{(q(x)\limply p(x))}\)
derives by \irref{K} since \irref{G} derives \(\dbox{\pevolvein{\D{x}=f(x)}{q(x)}}{\big(p(x)\limply(q(x)\limply p(x))\big)}\)
from the tautology \(p(x)\limply(q(x)\limply p(x))\).
}
\({\dbox{\pevolvein{\D{x}=f(x)}{q(x)}}{p(x)}} \lbisubjunct
        {\dbox{\pevolvein{\D{x}=f(x)}{q(x)}}{(q(x)\limply p(x))}} \irlabel{diffweaken|DW}\),
which allows to export the evolution domain to the postcondition and is also called \irref{diffweaken}.
Its (right) assumption is best proved by \irref{G} yielding premise \(q(x)\limply p(x)\).
The \emph{differential cut} axiom \irref{DC} is a cut for differential equations.
It internalizes that differential equations always staying in $r(x)$ also always stay in $p(x)$ iff $p(x)$ always holds after the differential equation that is restricted to the smaller evolution domain \(\pevolvein{}{q(x)\land r(x)}\).
\irref{DC} is a differential variant of modal modus ponens axiom \irref{K}.

\emph{Differential effect} axiom \irref{DE} internalizes that the effect on differential symbols along a differential equation is a differential assignment assigning the right-hand side $f(x)$ to the left-hand side $\D{x}$.
The differential assignment \(\Dupdate{\Dumod{\D{x}}{f(x)}}\) in \irref{DE} instantaneously mimics the (continuous) effect that the differential equation \(\pevolvein{\D{x}=f(x)}{q(x)}\) has on $\D{x}$, thereby selecting the appropriate vector field for subsequent differentials.
Axiom \irref{DI} internalizes \emph{differential invariants} \cite{DBLP:journals/logcom/Platzer10},
i.e.\ that $p(x)$ holds always after a differential equation \(\pevolvein{\D{x}=f(x)}{q(x)}\) iff it holds after $\ptest{q(x)}$, provided its differential $\der{p(x)}$ always holds after the differential equation \(\pevolvein{\D{x}=f(x)}{q(x)}\).
This axiom reduces future truth to present truth when the truth of $p(x)$ does not change along the differential equation because $\der{p(x)}$ holds all along.
The differential equation also vacuously stays in $p(x)$ if it starts outside $q(x)$, since it is stuck then.
The assumption of \irref{DI} is best proved by \irref{DE} to select the appropriate vector field \(\D{x}=f(x)\) for the differential $\der{p(x)}$
and a subsequent \irref{diffweaken+G} to make the evolution domain constraint $q(x)$ available as an assumption when showing $\der{p(x)}$.
The condition \(\dbox{\ptest{q(x)}}{p(x)}\) in \irref{DI} is equivalent to \(q(x)\limply p(x)\) by axiom \irref{testb}.
While a general account of $\der{p(x)}$ is possible \cite{DBLP:journals/corr/Platzer15:dGI}, this article focuses on atomic postconditions with the equivalences \(\der{\theta\geq\eta} \mequiv \der{\theta>\eta} \mequiv \der{\theta}\geq\der{\eta}\)
and \(\der{\theta=\eta} \mequiv \der{\theta\neq\eta} \mequiv \der{\theta}=\der{\eta}\), etc.\ for \irref{DI} axioms.
Note \(\der{\theta\neq\eta}\) cannot be \(\der{\theta}\neq\der{\eta}\), because different rates of change from different initial values do not imply the values would remain different.
Conjunctions can be handled separately by
\(\dbox{\alpha}{(p(\usall)\land q(\usall))} \lbisubjunct  \dbox{\alpha}{p(\usall)} \land \dbox{\alpha}{q(\usall)}\)
which derives from \irref{K}.
Disjunctions split into separate disjuncts, which is equivalent to classical differential invariants \cite{DBLP:journals/logcom/Platzer10} but easier.
Axiom \irref{DG} internalizes \emph{differential ghosts} \cite{DBLP:journals/lmcs/Platzer12},
i.e.\ that additional differential equations can be added whose solutions exist long enough, which can enable new invariants that are not otherwise provable \cite{DBLP:journals/lmcs/Platzer12}.
Axiom \irref{DS} solves constant differential equations, and, as \rref{sec:example-proofs} will demonstrate, more complex solvable differential equations with the help of \irref{DG+DC+DI}.
Vectorial generalizations to systems of differential equations are possible for the axioms in \rref{fig:dL-ODE}.

The differential axioms for differentials (\irref{Dplus+Dtimes+Dcompose+Dconst+Dvar}) axiomatize differentials of polynomials.
They are related to corresponding rules for time-derivatives, except that those would be ill-defined in a local state, so it is crucial to work with differentials that have a local semantics in individual states.
Uniform substitutions correctly maintain that $y$ does not occur in replacements for $a(x),b(x)$ for \irref{DG} and that $x$ does not occur in replacements for $f$ in \irref{DS}, which are both soundness-critical.
Occurrences of $x$ in replacements for $f$ are acceptable when using axiom \irref{Dassignb} on \({\dbox{\Dupdate{\Dumod{\D{x}}{f}}}{p(\D{x})}} \lbisubjunct {p(f)}\).

Most axioms in \rref{fig:dL} and~\ref{fig:dL-ODE} are independent, because there is exactly one axiom per operator.
Exceptions in \rref{fig:dL} are \irref{K+I+V}, but there is a complete calculus without \irref{iterateb+V} \cite{DBLP:conf/lics/Platzer12b} and one without \irref{G+K+I+V} that uses two extra rules instead \cite{DBLP:journals/tocl/Platzer15}.
The congruence rules \irref{CQ+CE} are redundant and can be proved on a per-instance basis as well.
Axiom \irref{DW} is the only one that can use the evolution domain,
axiom \irref{DC} the only one that can change the evolution domain, and axiom \irref{DG} the only one that can change differential equations.
Axiom \irref{DE} is the only one that can use the right-hand side of the differential equation.
Axiom \irref{DI} is the only axiom that relates truth of a postcondition after a differential equation to truth at the initial state.
Finally, axiom \irref{DS} is needed for proving diamond properties of differential equations, because it is the only one (besides the limited \irref{DW}) that does not reduce a property of a differential equation to another property of a differential equation and, thus, the only axiom that ultimately proves them without the help of \irref{G+V+K}, which are not sound for $\ddiamond{\alpha}{}$.

\subsection{Example Proofs} \label{sec:example-proofs}

This section illustrates how the uniform substitution calculus for \dL can be used to realize a number of different reasoning techniques from the same proof primitives.
While the same flexibility enables these different techniques also for proofs of hybrid systems, the following examples focus on differential equations to additionally illustrate how the differential equation axioms in \rref{fig:dL-ODE} are meant to be combined.

\begin{example}[Contextual equivalence proof] \label{ex:diffind-CE-proof}
The following proof proves a property of a differential equation using differential invariants without having to solve that differential equation.
One use of rule \irref{US} is shown explicitly, other uses of \irref{US} are similar to obtain and use the other axiom instances.
\irref{CE} is used together with \irref{MP}.
\let\orgcdot\cdot%
\def\cdot{{\orgcdot}}%
\begin{sequentdeduction}[array]
\linfer[DI]
{\linfer[DE]
 {\linfer[CE]%
  {\linfer[G]
    {\linfer[Dassignb]
      {\linfer[qear]
        {\lclose}
        {\lsequent{}{x^3\cdot x + x\cdot x^3\geq0}}
      }%
      {\lsequent{}{\dbox{\Dupdate{\Dumod{\D{x}}{x^3}}}{\D{x}\cdot x+x\cdot\D{x}\geq0}}}
    }%
    {\lsequent{}{\dbox{\pevolve{\D{x}=x^3}}{\dbox{\Dupdate{\Dumod{\D{x}}{x^3}}}{\D{x}\cdot x+x\cdot\D{x}\geq0}}}}
    !
    \linfer[Dconst]%
    {\linfer[CQ] %
      {\linfer[Dvar]
        {\linfer[US]
          {\linfer[Dtimes]
            {\lclose}
            {\lseqalign{\der{f(\usall)\cdot g(\usall)}} {= \der{f(\usall)}\cdot g(\usall) + f(\usall)\cdot\der{g(\usall)}}}
          }
        {\lseqalign{\der{x\cdot x}} {= \der{x}\cdot x + x\cdot\der{x}}}
      }
      {\lseqalign{\der{x\cdot x}} {= \D{x}\cdot x + x\cdot\D{x}}}
      }
      {\lseqalign{\der{x\cdot x}\geq0}{\lbisubjunct\D{x}\cdot x+x\cdot\D{x}\geq0}}
    }
    {\lseqalign{\der{x\cdot x\geq1}}{\lbisubjunct\D{x}\cdot x+x\cdot\D{x}\geq0}}
  }%
  {\lsequent{}{\dbox{\pevolve{\D{x}=x^3}}{\dbox{\Dupdate{\Dumod{\D{x}}{x^3}}}{\der{x\cdot x\geq1}}}}}
 }%
  {\lsequent{}{\dbox{\pevolve{\D{x}=x^3}}{\der{x\cdot x\geq1}}}}
}%
{\lsequent{x\cdot x\geq1} {\dbox{\pevolve{\D{x}=x^3}}{x\cdot x\geq1}}}
\end{sequentdeduction}
Previous calculi \cite{DBLP:journals/logcom/Platzer10,DBLP:journals/lmcs/Platzer12} collapse this proof into a single proof step with complicated built-in operator implementations that silently perform the same reasoning in a non-transparent way.
The approach presented here combines separate axioms to achieve the same effect in a modular way, with axioms of individual responsibilities internalizing separate logical reasoning principles in \emph{differential-form} \dL.
Tactics combining the axioms as indicated make the axiomatic way equally convenient.
Clever proof structuring, cuts or \irref{MP} uses enable proofs in which the main argument remains as fast \cite{DBLP:journals/logcom/Platzer10,DBLP:journals/lmcs/Platzer12} while the additional premises subsequently check soundness.
Inferences in context such as those portrayed in \irref{CE+CQ} are impossible in sequent calculus \cite{DBLP:journals/jar/Platzer08}.
\end{example}

\begin{example}[Flat proof] \label{ex:diffind-direct-proof}
Rules \irref{CQ+CE} simplify the proof in \rref{ex:diffind-CE-proof} substantially but are not needed because a proof without contextual equivalence is possible:%
\renewcommand{\linferPremissSeparation}{~~~~}%
\begin{sequentdeduction}%
\linfer[MP]
  {\linfer
    {\lclose}
    {{\dots}\limply(\der{x{\cdot} x}{\geq}0 \lbisubjunct \D{x}{\cdot} x {+} x{\cdot}\D{x}{\geq}0)}
  &\linfer[Dvar]
    {\linfer[US]
      {\linfer[Dtimes]
        {\lclose}
        {\der{f(\usall)\cdot g(\usall)} = \der{f(\usall)}\cdot g(\usall) + f(\usall)\cdot\der{g(\usall)}}
      }
      {\der{x\cdot x} = \der{x}\cdot x + x\cdot\der{x}}
    }
    {\der{x\cdot x} = \D{x}\cdot x + x\cdot\D{x}}
  }
  {\linfer[G]
    {\der{x\cdot x}\geq0 \lbisubjunct \D{x}\cdot x + x\cdot\D{x}\geq0}
    {\linfer[K+K]
      {\dbox{\Dupdate{\Dumod{\D{x}}{x^3}}}{(\der{x\cdot x}\geq0 \lbisubjunct \D{x}\cdot x + x\cdot\D{x}\geq0)}}
      {\dbox{\Dupdate{\Dumod{\D{x}}{x^3}}}{\der{x\cdot x}\geq0} \lbisubjunct \dbox{\Dupdate{\Dumod{\D{x}}{x^3}}}{\D{x}\cdot x + x\cdot\D{x}\geq0}}
    }
}
\end{sequentdeduction}%

\noindent
\hfill\begin{minipage}{8.5cm}
\vspace{-0.5\baselineskip}
\begin{sequentdeduction}[array]
\linfer[DI]
{\linfer[DE]
 {\linfer[G]
  {\linfer[MP]
    {\text{see above}
    !\linfer[Dassignb]
      {\linfer[qear]
        {\lclose}
        {\lsequent{}{x^3\cdot x + x\cdot x^3\geq0}}
      }%
      {\lsequent{}{\dbox{\Dupdate{\Dumod{\D{x}}{x^3}}}{\D{x}\cdot x + x\cdot\D{x}\geq0}}}
    }%
    {\lsequent{}{\dbox{\Dupdate{\Dumod{\D{x}}{x^3}}}{\der{x\cdot x}\geq0}}}
  }%
  {\lsequent{}{\dbox{\pevolve{\D{x}=x^3}}{\dbox{\Dupdate{\Dumod{\D{x}}{x^3}}}{\der{x\cdot x}\geq0}}}}
 }%
  {\lsequent{}{\dbox{\pevolve{\D{x}=x^3}}{\der{x\cdot x}\geq0}}}
}%
{\lsequent{x\cdot x\geq1} {\dbox{\pevolve{\D{x}=x^3}}{x\cdot x\geq1}}}
\end{sequentdeduction}
\end{minipage}%
\end{example}%

\newcommand{\mydiffcond}[1][x,\D{x}]{j(#1)}%
\begin{example}[Parametric proof] \label{ex:diffind-free-parametric-proof}
The proofs in \rref{ex:diffind-CE-proof} and \ref{ex:diffind-direct-proof} use (implicit) cuts with equivalences that predict the outcome of the right premise, which is conceptually simple but inconvenient for proof search.
More constructively, a direct proof can use a free function symbol $\mydiffcond$ to obtain a straightforward parametric proof, instead:
\begin{sequentdeduction}[array]
\linfer[DI]
{\linfer[DE]
 {\linfer[CE]%
  {\linfer[G]
    {\linfer[Dassignb]
        {\lsequent{}{\mydiffcond[x,x^3]\geq0}}
      {\lsequent{}{\dbox{\Dupdate{\Dumod{\D{x}}{x^3}}}{\mydiffcond\geq0}}}
    }%
    {\lsequent{}{\dbox{\pevolve{\D{x}=x^3}}{\dbox{\Dupdate{\Dumod{\D{x}}{x^3}}}{\mydiffcond\geq0}}}}
    !
    \linfer%
    {\linfer[CQ] %
      {\lseqalign{\der{x\cdot x}} {= \mydiffcond}}
      {\lseqalign{\der{x\cdot x}\geq0}{\lbisubjunct\mydiffcond\geq0}}
    }
    {\lseqalign{\der{x\cdot x\geq1}}{\lbisubjunct\mydiffcond\geq0}}
  }%
  {\lsequent{}{\dbox{\pevolve{\D{x}=x^3}}{\dbox{\Dupdate{\Dumod{\D{x}}{x^3}}}{\der{x\cdot x\geq1}}}}}
 }%
  {\lsequent{}{\dbox{\pevolve{\D{x}=x^3}}{\der{x\cdot x\geq1}}}}
}%
{\lsequent{x\cdot x\geq1} {\dbox{\pevolve{\D{x}=x^3}}{x\cdot x\geq1}}}
\end{sequentdeduction}
After conducting this proof with two open premises, the free function symbol $\mydiffcond$ can be instantiated as needed by a uniform substitution (\irref{USR} from \rref{thm:usubst-rule}).
The above proof justifies the locally sound inference on the left whose two open premises and conclusions are instantiated by \irref{USR} leading to the new sound proof on the right:
\[
\linfer
{\lsequent{}{\mydiffcond[x,x^3]\geq0}
&\quad \lseqalign{\der{x\cdot x}} {= \mydiffcond}}
{\lsequent{x\cdot x\geq1} {\dbox{\pevolve{\D{x}=x^3}}{x\cdot x\geq1}}}
~~~\text{implies}~~~~\quad
\linfer[USR]
{\lsequent{}{x^3\cdot x + x\cdot x^3\geq0}
&\quad \lseqalign{\der{x\cdot x}} {= \D{x} \cdot x + x\cdot\D{x}}}
{\lsequent{x\cdot x\geq1} {\dbox{\pevolve{\D{x}=x^3}}{x\cdot x\geq1}}}
\]
After the instantiation of $\mydiffcond$ by \irref{USR}, the right proof completes as follows:
\begin{sequentdeduction}[array]
\linfer[USR]
        {\linfer[qear]
          {\lclose}
          {\lsequent{}{x^3\cdot x + x\cdot x^3\geq0}}
        !
        \linfer[Dvar]
        {\linfer[US]
          {\linfer[Dtimes]
            {\lclose}
            {\lseqalign{\der{f(\usall)\cdot g(\usall)}} {= \der{f(\usall)}\cdot g(\usall) + f(\usall)\cdot\der{g(\usall)}}}
          }%
        {\lseqalign{\der{x\cdot x}} {= \der{x}\cdot x + x\cdot\der{x}}}
        }%
      {\lseqalign{\der{x\cdot x}} {= \D{x} \cdot x + x\cdot\D{x}}}
      }
{\lsequent{x\cdot x\geq1} {\dbox{\pevolve{\D{x}=x^3}}{x\cdot x\geq1}}}
\end{sequentdeduction}
This technique helps invariant search, where a free predicate symbol $p(\usall)$ is instantiated lazily by \irref{USR} once all conditions become clear.
This reduction saves considerable proof effort compared to eager invariant instantiation in sequent calculi \cite{DBLP:journals/jar/Platzer08}.
\end{example}

\begin{example}[Forward computation proof] \label{ex:diffind-forward-proof}
The proof in \rref{ex:diffind-free-parametric-proof} involves less search than the proofs of the same formula in \rref{ex:diffind-CE-proof} and \ref{ex:diffind-direct-proof}.
But it still ultimately requires foresight to identify the appropriate instantiation of $\mydiffcond$ for which the proof closes.
For invariant search, such proof search is essentially unavoidable \cite{DBLP:conf/lics/Platzer12b} even if the technique in \rref{ex:diffind-free-parametric-proof} maximally postpones the search.  

When used from left to right, the differential axioms \irref{Dconst+Dvar+Dplus+Dtimes+Dcompose} compute deterministically and always simplify terms by pushing differential operators inside.
For example, all backwards proof search in the right branch of the last proof of \rref{ex:diffind-free-parametric-proof} can be replaced by a deterministic forward computation proof starting from reflexivity \(\der{x\cdot x} =\der{x\cdot x}\) and drawing on axiom instances (used in a term context via \irref{CT}) as needed in a forward proof, until the desired output shape is identified:
\[
\left\downarrow
\begin{minipage}{3.6cm}
\vspace{-6pt}
\begin{sequentdeduction}[array]
\linfer[Dvar]
{\linfer[Dtimes]
  {\linfer[qear]
    {\lclose}
    {\lseqalign{\der{x\cdot x}} {=\der{x\cdot x}}}
  }
  {\lseqalign{\der{x\cdot x}} {=\der{x}\cdot x + x \cdot \der{x}}}
}%
{\lseqalign{\der{x\cdot x}} {=\D{x} \cdot x + x \cdot \D{x}}}
\end{sequentdeduction}
\end{minipage}
\right.
\]
Efficient proof search combines this forward computation proof technique with the backward proof search from \rref{ex:diffind-free-parametric-proof} with advantages similar to other combinations of computation and axiomatic reasoning \cite{DBLP:journals/jar/DowekHK03}.
Even the remaining positions where axioms still match can be precomputed as a simple function of the axiom that has been applied, e.g., from its fixed pattern of occurrences of differential operators.
\end{example}

\begin{example}[Axiomatic differential equation solver] \label{ex:ode-solver}
Axiomatic equivalence proofs for solving differential equations involve \irref{DG} for introducing a time variable $t$, \irref{DC} to cut the solutions in, \irref{DW} to export the solution to the postcondition, inverse \irref{DC} to remove the evolution domain constraints again, inverse \irref{DG} (or the universal strengthening of \irref{DG} with $\forall{y}$ instead of $\exists{y}$ from \rref{thm:dL-sound}) to remove the original differential equations, and finally \irref{DS} to solve the differential equation for time:
\def\prem{\phi}%
\begin{sequentdeduction}[array]
\linfer[DG]
 {\linfer%
   {\linfer[DC]
     {\linfer[DC]
       {\linfer[diffweaken]
         {\linfer[G+K]%
           {\linfer[DC]
             {\linfer[DC]
               {\linfer[DG]
                 {\linfer[DG]
                   {\linfer[DS]
                     {\linfer[assignb]
                       {\linfer[qear]
                         {\lclose}
                         {\lsequent{\prem} {\lforall{s{\geq}0}{(x_0+\frac{a}{2}s^2+v_0s\geq0)}}}
                       }%
                       {\lsequent{\prem} {\lforall{s{\geq}0}{\dbox{\pupdate{\pumod{t}{0+1s}}}{x_0+\frac{a}{2}t^2+v_0t\geq0}}}}
                     }%
                     {\lsequent{\prem} {\dbox{\pevolve{\D{t}=1}}{x_0+\frac{a}{2}t^2+v_0t\geq0}}}
                   }%
                   {\lsequent{\prem} {\dbox{\pevolve{\D{v}=a\syssep\D{t}=1}}{x_0+\frac{a}{2}t^2+v_0t\geq0}}}
                 }%
                 {\lsequent{\prem} {\dbox{\pevolve{\D{x}=v\syssep\D{v}=a\syssep\D{t}=1}}{x_0+\frac{a}{2}t^2+v_0t\geq0}}
                 \hfill\triangleright}
               }%
               {\lsequent{\prem} {\dbox{\pevolvein{\D{x}=v\syssep\D{v}=a\syssep\D{t}=1}{v=v_0+at}}{x_0+\frac{a}{2}t^2+v_0t\geq0}}
               \hfill\triangleright}
             }%
             {\lsequent{\prem} {\dbox{\pevolvein{\D{x}=v\syssep\D{v}=a\syssep\D{t}=1}{v=v_0+at\land x=x_0+\frac{a}{2}t^2+v_0t}}{x_0+\frac{a}{2}t^2+v_0t\geq0}}} 
           }%
           {\lsequent{\prem} {\dbox{\pevolvein{\D{x}=v\syssep\D{v}=a\syssep\D{t}=1}{v=v_0{+}at\land x=x_0{+}\frac{a}{2}t^2{+}v_0t}}{(x{=}x_0{+}\frac{a}{2}t^2{+}v_0t\limply x{\geq}0)}}}
         }%
         {\lsequent{\prem} {\dbox{\pevolvein{\D{x}=v\syssep\D{v}=a\syssep\D{t}=1}{v=v_0+at\land x=x_0+\frac{a}{2}t^2+v_0t}}{x\geq0}}
         \hfill\triangleright} 
       }%
       {\lsequent{\prem} {\dbox{\pevolvein{\D{x}=v\syssep\D{v}=a\syssep\D{t}=1}{v=v_0+at}}{x\geq0}}
       \hfill\triangleright} 
     }%
     {\lsequent{\prem} {\dbox{\pevolve{\D{x}=v\syssep\D{v}=a\syssep\D{t}=1}}{x\geq0}}} 
   }%
   {\lsequent{\prem} {\lexists{t}{\dbox{\pevolve{\D{x}=v\syssep\D{v}=a\syssep\D{t}=1}}{x\geq0}}}}
 }%
 {\lsequent{\prem} {\dbox{\pevolve{\D{x}=v\syssep\D{v}=a}}{x\geq0}}} 
\end{sequentdeduction}
where $\prem$ is \({a\geq0\land v=v_0\geq0 \land x=x_0\geq0}\). %
The existential quantifier for $t$ is instantiated by $0$ (suppressed in the proof for readability reasons).
The 4 uses of \irref{DC} lead to 2 additional premises (marked by $\triangleright$) proving that \(v=v_0+at\) and then \(x=x_0+\frac{a}{2}t^2+v_0t\) are differential invariants (using \irref{DI+DE+diffweaken}).
Shortcuts using only \irref{diffweaken} instead are possible.
But the elaborate proof above generalizes to $\ddiamond{}{}$ because it is an equivalence proof.
The additional premises for \irref{DC} with \(v=v_0+at\) prove as follows:
\begin{sequentdeduction}[array]
\linfer[DI]
{\linfer[DE]
 {\linfer[G]
   {\linfer[CE]%
    {\linfer[Dassignb]
      {\linfer[qear]
        {\lclose}
        {\lsequent{}{a=0+a\cdot1}}
      }%
      {\lsequent{} {\dbox{\Dupdate{\Dumod{\D{v}}{a}}}{\dbox{\Dupdate{\Dumod{\D{t}}{1}}}{\D{v}=0+a\D{t}}}}}
    !
    \linfer%
    {\linfer[CQ] %
      {\linfer[Dtimes]%
        {\linfer[US]
          {\linfer[Dplus]
            {\lclose}
            {\lseqalign{\der{f(\usall)+ g(\usall)}} {= \der{f(\usall)} + \der{g(\usall)}}}
          }
        {\lseqalign{\der{v_0+at}} {= \der{v_0}+\der{a t}}}
      }
      {\lseqalign{\der{v_0+at}} {= 0+a(\D{t})}}
      }
      {\lseqalign{\D{v}=\der{v_0+at}}{\lbisubjunct\D{v}=0+a\D{t}}}
    }
    {\lseqalign{\der{v=v_0+at}}{\lbisubjunct\D{v}=0+a\D{t}}}
    }%
    {\lsequent{} {\dbox{\Dupdate{\Dumod{\D{v}}{a}}}{\dbox{\Dupdate{\Dumod{\D{t}}{1}}}{\der{v=v_0+at}}}}}
  }%
  {\lsequent{} {\dbox{\pevolve{\D{x}=v\syssep\D{v}=a\syssep\D{t}=1}}{\dbox{\Dupdate{\Dumod{\D{v}}{a}}}{\dbox{\Dupdate{\Dumod{\D{t}}{1}}}{\der{v=v_0+at}}}}}}
 }%
  {\lsequent{} {\dbox{\pevolve{\D{x}=v\syssep\D{v}=a\syssep\D{t}=1}}{\der{v=v_0+at}}}}
}%
{\lsequent{\prem} {\dbox{\pevolve{\D{x}=v\syssep\D{v}=a\syssep\D{t}=1}}{v=v_0+at}}}
\end{sequentdeduction}
After that, the additional premises for \irref{DC} with \(x=x_0+\frac{a}{2}t^2+v_0t\) prove as follows:
{\footnotesize
\begin{sequentdeduction}[array]
\hspace*{-2cm}
\linfer[DI]
{\linfer[DE]
 {\linfer[diffweaken]
 {\linfer[G]
   {\lsequent{} {\hspace*{-0.5cm}\begin{minipage}{11.5cm}\begin{sequentdeduction}[array]
   \linfer[CE]%
    {\linfer[Dassignb]
      {\linfer[qear]
        {\lclose}
        {\lsequent{}{{v=v_0+at} \limply v=at\cdot1+v_0\cdot1}}
      }%
      {\lsequent{} {{v=v_0+at} \limply \dbox{\Dupdate{\Dumod{\D{x}}{v}}}{\dbox{\Dupdate{\Dumod{\D{t}}{1}}}{\D{x}=at\D{t}+v_0\D{t}}}}}
    !
    \linfer%
    {\linfer[CQ] %
      {\linfer[Dplus+Dtimes]%
        {\linfer[qear]
          {\lclose}
          {\lseqalign{2\frac{a}{2}t\D{t}+v_0\D{t}} {= at\D{t}+v_0\D{t}}}
        }
      {\lseqalign{\der{x_0+\frac{a}{2}t^2+v_0t}} {= at\D{t}+v_0\D{t}}}
      }
      {\lseqalign{\D{x}=\der{x_0+\frac{a}{2}t^2+v_0t}}{\lbisubjunct\D{x}=at\D{t}+v_0\D{t}}}
    }
    {\lseqalign{\der{x=x_0+\frac{a}{2}t^2+v_0t}}{\lbisubjunct\D{x}=at\D{t}+v_0\D{t}}}
    }%
    {\lsequent{} {{v=v_0+at} \limply \dbox{\Dupdate{\Dumod{\D{x}}{v}}}{\dbox{\Dupdate{\Dumod{\D{t}}{1}}}{\der{x=x_0+\frac{a}{2}t^2+v_0t}}}}}
    \end{sequentdeduction}\end{minipage}}
  }%
  {\lsequent{} {\dbox{\pevolvein{\D{x}=v\syssep\D{v}=a\syssep\D{t}=1}{v=v_0+at}}{({v=v_0+at}\limply\dbox{\Dupdate{\Dumod{\D{x}}{v}}}{\dbox{\Dupdate{\Dumod{\D{t}}{1}}}{\der{x=x_0+\frac{a}{2}t^2+v_0t}}})}}}
  }%
  {\lsequent{} {\dbox{\pevolvein{\D{x}=v\syssep\D{v}=a\syssep\D{t}=1}{v=v_0+at}}{\dbox{\Dupdate{\Dumod{\D{x}}{v}}}{\dbox{\Dupdate{\Dumod{\D{t}}{1}}}{\der{x=x_0+\frac{a}{2}t^2+v_0t}}}}}}
 }%
  {\lsequent{} {\dbox{\pevolvein{\D{x}=v\syssep\D{v}=a\syssep\D{t}=1}{v=v_0+at}}{\der{x=x_0+\frac{a}{2}t^2+v_0t}}}}
}%
{\lsequent{\prem} {\dbox{\pevolvein{\D{x}=v\syssep\D{v}=a\syssep\D{t}=1}{v=v_0+at}}{x=x_0+\frac{a}{2}t^2+v_0t}}}
\end{sequentdeduction}
}%
This axiomatic differential equation solving technique is not limited to differential equation systems that can be solved in full, but also works when only part of the differential equations have definable solutions.
Contrast this constructive formal proof with the unverified use of a differential equation solver in axiom schema \irref{evolveb} from p.~\pageref{ir:evolveb}.
\end{example}%

\subsection{Differential Substitution Lemmas}

In similar ways how the uniform substitution lemmas are the key ingredients that relate syntactic and semantic substitution for the soundness of proof rule \irref{US}, this section proves the key ingredients relating syntax and semantics of differentials that will be used for the soundness proofs of the differential axioms.
Differentials $\der{\eta}$ have a local semantics in isolated states, which is crucial for well-definedness.
The \irref{DI} axiom relates truth along a differential equation to initial truth with truth of differentials along a differential equation.
The key insight for its soundness is that the analytic time-derivative of the value of a term $\eta$ along any differential equation \(\pevolvein{\D{x}=\genDE{x}}{\ivr}\) agrees with the values of its differential $\der{\eta}$ along that differential equation.
Recall from \rref{def:HP-transition} that \m{\imodels{\If}{\D{x}=\genDE{x}\land\ivr}} indicates that the function $\iget[flow]{\If}$ solves the differential equation \(\pevolvein{\D{x}=\genDE{x}}{\ivr}\) in interpretation $\iget[const]{\I}$, of which the only important part for the next lemma is that it gives $\D{x}$ the value of the time-derivative of $x$ along the solution $\iget[flow]{\If}$.

\begin{lemma}[Differential] \label{lem:differentialLemma}
  If \m{\imodels{\If}{\D{x}=\genDE{x}\land\ivr}}
  holds for some solution \m{\iget[flow]{\If}:[0,r]\to\linterpretations{\Sigma}{V}} 
of any duration $r>0$,
  then for all times $0\leq\zeta\leq r$ and all terms $\eta$ with $\freevarsdef{\eta}\subseteq\{x\}$:
  \[
  \ivaluation{\Iff[\zeta]}{\der{\eta}}
  = \D[t]{\ivaluation{\Iff[t]}{\eta}} (\zeta)
  \]
\end{lemma}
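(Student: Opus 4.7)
The plan is to compute both sides of the claimed equation directly from \rref{def:dL-valuationTerm} and match them via the chain rule, using that $\freevarsdef{\eta}\subseteq\{x\}$ collapses the differential sum to a single term.

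First I would unfold the left-hand side. By \rref{def:dL-valuationTerm} and the remark after \rref{lem:coincidence-term} that the sum in the semantics of $\der{\eta}$ only needs to range over $\freevarsdef{\eta}$, and since $\freevarsdef{\eta}\subseteq\{x\}$, the value reduces to
\[
\ivaluation{\Iff[\zeta]}{\der{\eta}} \;=\; \iget[state]{\Iff[\zeta]}(\D{x})\cdot \Dp[x]{\ivaluation{\Iff[\zeta]}{\eta}}.
\]
From $\imodels{\If}{\D{x}=\genDE{x}\land\ivr}$ in \rref{def:HP-transition}, the solution $\iget[flow]{\If}$ satisfies that $\D[t]{\iget[state]{\Iff[t]}(x)}(\zeta)$ exists and equals $\iget[state]{\Iff[\zeta]}(\D{x})$, so the left-hand side equals $\D[t]{\iget[state]{\Iff[t]}(x)}(\zeta)\cdot \Dp[x]{\ivaluation{\Iff[\zeta]}{\eta}}$.

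Next I would compute the right-hand side by the chain rule. Since $\freevarsdef{\eta}\subseteq\{x\}$, \rref{lem:coincidence-term} implies that $\ivaluation{\Iff[t]}{\eta}$ depends on $t$ only through $\iget[state]{\Iff[t]}(x)$, i.e.\ the function $t\mapsto\ivaluation{\Iff[t]}{\eta}$ factors as $t\mapsto\iget[state]{\Iff[t]}(x)\mapsto\ivaluation{\imodif[state]{\Iff[\zeta]}{x}{X}}{\eta}\big|_{X=\iget[state]{\Iff[t]}(x)}$. The inner map $t\mapsto\iget[state]{\Iff[t]}(x)$ is differentiable at $\zeta$ by assumption on $\iget[flow]{\If}$; the outer map $X\mapsto\ivaluation{\imodif[state]{\Iff[\zeta]}{x}{X}}{\eta}$ is smooth (compositions of smooth interpretations of function symbols, as in \rref{def:dL-valuationTerm}) and its derivative at $X=\iget[state]{\Iff[\zeta]}(x)$ is precisely $\Dp[x]{\ivaluation{\Iff[\zeta]}{\eta}}$. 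The classical chain rule then yields
\[
\D[t]{\ivaluation{\Iff[t]}{\eta}}(\zeta) \;=\; \Dp[x]{\ivaluation{\Iff[\zeta]}{\eta}} \cdot \D[t]{\iget[state]{\Iff[t]}(x)}(\zeta),
\]
which matches the expression obtained for the left-hand side.

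The main delicate point is step two: one has to justify that the time-dependence of $\ivaluation{\Iff[t]}{\eta}$ truly goes only through $\iget[state]{\Iff[t]}(x)$, which is where the hypothesis $\freevarsdef{\eta}\subseteq\{x\}$ (so in particular $\D{x}\notin\freevarsdef{\eta}$, even though $\iget[state]{\Iff[t]}(\D{x})$ also varies with $t$) is used via \rref{lem:coincidence-term}. Once that factorization is in place, the chain rule is routine and smoothness is inherited from \rref{def:dL-valuationTerm}. No induction on $\eta$ is needed: the identity is a one-line semantic statement, with the structural work already done inside the definition of the differential.
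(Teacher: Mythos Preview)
Your proposal is correct and follows essentially the same approach as the paper: expand the semantics of $\der{\eta}$, use $\freevarsdef{\eta}\subseteq\{x\}$ (via \rref{lem:coincidence-term}) to reduce to the single variable $x$, and match against the time-derivative via the chain rule together with the identity $\iget[state]{\Iff[\zeta]}(\D{x})=\D[t]{\iget[state]{\Iff[t]}(x)}(\zeta)$ from \rref{def:HP-transition}. The only cosmetic difference is that the paper phrases the chain-rule step in multivariable form (gradient dotted with $\D{\iget[flow]{\If}}$, noting the gradient has finite support), whereas you first collapse to one coordinate and then apply the single-variable chain rule; the content is the same.
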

\begin{proofatend}
By \rref{def:dL-valuationTerm} the left side is:
\[
\ivaluation{\Iff[\zeta]}{\der{\eta}}
= \sum_{x\in\allvars} \iget[state]{\Iff[\zeta]}(\D{x}) \itimes \Dp[x]{\ivaluation{\Idot}{\eta}}(\iget[state]{\Iff[\zeta]})
\]
By chain rule (\rref{lem:chain} in the beginning of the appendix) the right side is:%
\[
\D[t]{\ivaluation{\Iff[t]}{\eta}} (\zeta)
=
\D{(\ivaluation{\Idot}{\eta} \compose \iget[flow]{\If})}(\zeta) = (\gradient{\ivaluation{\Idot}{\eta}})\big(\iget[flow]{\If}(\zeta)\big)
\stimes \D{\iget[flow]{\If}}(\zeta)
= \sum_{x\in\allvars} \Dp[x]{\ivaluation{\Idot}{\eta}}\big(\iget[flow]{\If}(\zeta)\big) \D{\iget[flow]{\If}}(\zeta)(x)
\]
where \((\gradient{\ivaluation{\Idot}{\eta}})\big(\iget[flow]{\If}(\zeta)\big)\), the gradient \(\gradient{\ivaluation{\Idot}{\eta}}\)
of $\ivaluation{\Idot}{\eta}$ at \(\iget[flow]{\If}(\zeta)\),
is the vector of
\(\Dp[x]{\ivaluation{\Idot}{\eta}}\big(\iget[flow]{\If}(\zeta)\big)\),
which has finite support by \rref{lem:coincidence-term} so is 0 for all but finitely many variables.
Both sides, thus, agree since
\(
\iget[state]{\Iff[\zeta]}(\D{x})
= \D[t]{\iget[state]{\Iff[t]}(x)}(\zeta)
= \D{\iget[flow]{\If}}(\zeta)(x)
\)
by \rref{def:HP-transition} for all $x\in\freevarsdef{\eta}$.
The same proof works for vectorial differential equations as long as all free variables of $\eta$ have some differential equation so that their differential symbols agree with their time-derivatives.
\qedhere
\end{proofatend}

The differential effect axiom \irref{DE} axiomatizes the effect of differential equations on the differential symbols.
The key insight for its soundness is that differential symbol $\D{x}$ already has the value $\genDE{x}$ along the differential equation \(\pevolve{\D{x}=\genDE{x}}\) such that the subsequent differential assignment \(\Dupdate{\Dumod{\D{x}}{\genDE{x}}}\) that assigns the value of $\genDE{x}$ to $\D{x}$ has no effect on the truth of the postcondition.
The differential substitution resulting from a subsequent use of axiom \irref{Dassignb} is crucial to relay the values of the time-derivatives of the state variables $x$ along a differential equation by way of their corresponding differential symbol $\D{x}$, though.
In combination, this makes it possible to soundly substitute the right-hand side of a differential equation for its left-hand side in a proof.

\begin{lemma}[Differential assignment] \label{lem:differentialAssignLemma}
  If \m{\imodels{\If}{\D{x}=\genDE{x}\land\ivr}}
  where \m{\iget[flow]{\If}:[0,r]\to\linterpretations{\Sigma}{V}} 
is a solution of any duration $r\geq0$,
  then
  \[
  \imodels{\If}{\phi \lbisubjunct \dbox{\Dupdate{\Dumod{\D{x}}{\genDE{x}}}}{\phi}}
  \]
\end{lemma}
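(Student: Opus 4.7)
The plan is to observe that the differential assignment $\Dupdate{\Dumod{\D{x}}{\genDE{x}}}$ is a no-op at every state along the flow $\iget[flow]{\If}$, because the flow already forces $\D{x}$ to equal $\genDE{x}$ pointwise. The equivalence $\phi \lbisubjunct \dbox{\Dupdate{\Dumod{\D{x}}{\genDE{x}}}}{\phi}$ then reduces to the trivial equivalence of $\phi$ at a state with itself.

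First, I would fix an arbitrary $\zeta \in [0,r]$ and unpack the hypothesis. By the definition of $\imodels{\If}{\D{x}=\genDE{x}\land\ivr}$ given in \rref{def:HP-transition}, it follows that $\imodels{\Iff[\zeta]}{\D{x}=\genDE{x}}$, which by \rref{def:dL-valuation} means
\[
\iget[state]{\Iff[\zeta]}(\D{x}) \,=\, \ivaluation{\Iff[\zeta]}{\genDE{x}}.
\]
Next, I would apply \rref{def:HP-transition} to the atomic differential assignment: its transition relation from $\iget[state]{\Iff[\zeta]}$ leads to the unique state $\iget[state]{\It}$ that agrees with $\iget[state]{\Iff[\zeta]}$ everywhere except possibly at $\D{x}$, where $\iget[state]{\It}(\D{x}) = \ivaluation{\Iff[\zeta]}{\genDE{x}}$. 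Combined with the equation above, this new value coincides with the old value, so $\iget[state]{\It} = \iget[state]{\Iff[\zeta]}$.

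Consequently, the box semantics in \rref{def:dL-valuation} gives
\[
\imodels{\Iff[\zeta]}{\dbox{\Dupdate{\Dumod{\D{x}}{\genDE{x}}}}{\phi}} \quad \text{iff} \quad \imodels{\Iff[\zeta]}{\phi},
\]
establishing the desired biimplication at $\zeta$. Since $\zeta$ was arbitrary in $[0,r]$, the convention $\imodels{\If}{\cdot}$ for formulas along a flow yields the lemma.

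No real obstacle is expected; the proof is a one-line unfolding of the semantics. The only minor care needed is the edge case $r=0$, but \rref{def:HP-transition} explicitly stipulates that even for duration-zero evolutions $\iget[state]{\Iff[0]}(\D{x})$ equals $\ivaluation{\Iff[0]}{\genDE{x}}$, so the same argument applies uniformly.
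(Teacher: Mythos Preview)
Your proposal is correct and matches the paper's own proof essentially line for line: both fix an arbitrary $\zeta\in[0,r]$, use the hypothesis to obtain $\iget[state]{\Iff[\zeta]}(\D{x}) = \ivaluation{\Iff[\zeta]}{\genDE{x}}$, observe that the differential assignment therefore has no effect (i.e.\ $\iaccessible[\Dupdate{\Dumod{\D{x}}{\genDE{x}}}]{\Iff[\zeta]}{\Iff[\zeta]}$), and conclude the equivalence pointwise along the flow.
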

\begin{proofatend}
\m{\imodels{\If}{\D{x}=\genDE{x}\land\ivr}} implies
\(\imodels{\Iff[\zeta]}{\D{x}=\genDE{x}\land\ivr}\),
i.e. \(\iget[state]{\Iff[\zeta]}(\D{x}) = \ivaluation{\Iff[\zeta]}{\genDE{x}}\) and also \(\imodels{\Iff[\zeta]}{\ivr}\)
for all $0\leq \zeta\leq r$.
Thus, since $\D{x}$ already has the value \(\ivaluation{\Iff[\zeta]}{\genDE{x}}\) in state $\iget[state]{\Iff[\zeta]}$, the differential assignment \(\Dupdate{\Dumod{\D{x}}{\genDE{x}}}\) has no effect, thus,
\(\iaccessible[\Dupdate{\Dumod{\D{x}}{\genDE{x}}}]{\Iff[\zeta]}{\Iff[\zeta]}\)
so that $\phi$ and \(\dbox{\Dupdate{\Dumod{\D{x}}{\genDE{x}}}}{\phi}\) are equivalent along $\iget[flow]{\If}$.
Hence, \(\imodels{\If}{(\phi \lbisubjunct \dbox{\Dupdate{\Dumod{\D{x}}{\genDE{x}}}}{\phi})}\).
\qedhere
\end{proofatend}

The final insights for differential invariant reasoning for differential equations are syntactic ways of computing differentials, which can be internalized as axioms (\irref{Dconst+Dvar+Dplus+Dtimes+Dcompose}), since differentials are represented syntactically in differential-form \dL. 
It is the local semantics as differential forms that makes it possible to soundly capture the interaction of differentials with arithmetic operators by local equations.
\begin{lemma}[Derivations] \label{lem:derivationLemma}
  The following equations of differentials are valid:
  \begin{align}%
    \der{f} & = 0
      &&\text{for arity 0 functions or numbers}~f
    \label{eq:Dconstant}\\
    \der{x} & =  \D{x}
      &&\text{for variables}~x\in\allvars\label{eq:Dpolynomial}\\
    \der{\theta+\eta} & = \der{\theta} + \der{\eta}
    \label{eq:Dadditive}\\
    \der{\theta\cdot \eta} & = \der{\theta}\cdot \eta + \theta\cdot\der{\eta}
    \label{eq:DLeibniz}
    \\
    \dbox{\pupdate{\pumod{y}{\theta}}}{\dbox{\Dupdate{\Dumod{\D{y}}{1}}}
    {}&{\big( \der{f(\theta)} = \der{f(y)}\stimes\der{\theta}\big)}}
    &&\text{for $y,\D{y}\not\in\freevarsdef{\theta}$}
    \label{eq:Dcompose}
  \end{align}%
\end{lemma}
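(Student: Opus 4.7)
The plan is to prove each identity by unfolding the semantic definition
\[
\ivaluation{\I}{\der{\theta}} = \sum_{x\in\allvars} \iget[state]{\I}(\D{x}) \Dp[x]{\ivaluation{\I}{\theta}}
\]
from Definition~\ref{def:dL-valuationTerm} at an arbitrary state $\iget[state]{\I}$, and then invoking the appropriate standard property of partial derivatives. The function $\ivaluation{\I}{\theta}$ (viewed as a function of the variable values) is smooth by definition of interpretations, so all partial derivatives exist and the classical rules apply.

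For \eqref{eq:Dconstant}, since $f$ is of arity~0, $\iget[const]{\I}(f)$ is a constant, hence $\Dp[x]{\ivaluation{\I}{f}}=0$ for every $x$, and the semantic sum collapses to $0$. For \eqref{eq:Dpolynomial}, $\Dp[y]{\ivaluation{\I}{x}}$ equals $1$ if $y=x$ and $0$ otherwise, so the sum reduces to $\iget[state]{\I}(\D{x})$, which is exactly $\ivaluation{\I}{\D{x}}$. For \eqref{eq:Dadditive}, linearity of $\Dp[x]{}$ gives $\Dp[x]{\ivaluation{\I}{\theta+\eta}}=\Dp[x]{\ivaluation{\I}{\theta}}+\Dp[x]{\ivaluation{\I}{\eta}}$, then distribute the sum over $\iget[state]{\I}(\D{x})$. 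For \eqref{eq:DLeibniz}, apply the Leibniz rule $\Dp[x]{(\ivaluation{\I}{\theta}\ivaluation{\I}{\eta})}= \Dp[x]{\ivaluation{\I}{\theta}}\ivaluation{\I}{\eta}+\ivaluation{\I}{\theta}\Dp[x]{\ivaluation{\I}{\eta}}$ pointwise, then split the sum and factor out the term-values that do not depend on $x$.

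The composition case \eqref{eq:Dcompose} is the main obstacle and requires more care. The assertion is a box formula, so fix any $\iget[state]{\I}$ and consider the unique successor state $\iget[state]{\It}$ of the assignment $\pupdate{\pumod{y}{\theta}};\Dupdate{\Dumod{\D{y}}{1}}$: that is, $\iget[state]{\It}=\iget[state]{\I}$ except $\iget[state]{\It}(y)=\ivaluation{\I}{\theta}$ and $\iget[state]{\It}(\D{y})=1$. It suffices to show $\ivaluation{\It}{\der{f(\theta)}}=\ivaluation{\It}{\der{f(y)}\cdot\der{\theta}}$. The hypothesis $y,\D{y}\not\in\freevarsdef{\theta}$, together with Lemma~\ref{lem:coincidence-term}, ensures $\ivaluation{\It}{\theta}=\ivaluation{\I}{\theta}=\iget[state]{\It}(y)$ and that $\ivaluation{\It}{\der{\theta}}=\ivaluation{\I}{\der{\theta}}$ (no $\D{y}$ summand contributes since $y\notin\freevarsdef{\theta}$). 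By the univariate chain rule, $\Dp[x]{\ivaluation{\It}{f(\theta)}} = (\iget[const]{\I}(f))'(\ivaluation{\It}{\theta})\cdot \Dp[x]{\ivaluation{\It}{\theta}}$ for every $x\in\allvars$ (the $x=y$ case contributes $0$ since $y\notin\freevarsdef{\theta}$). Summing over $x$ weighted by $\iget[state]{\It}(\D{x})$ factors the $x$-independent $(\iget[const]{\I}(f))'(\iget[state]{\It}(y))$ out:
\[
\ivaluation{\It}{\der{f(\theta)}} = (\iget[const]{\I}(f))'(\iget[state]{\It}(y))\cdot\ivaluation{\It}{\der{\theta}}.
\]
On the other hand, $f(y)$ has only $y$ as free variable, so $\ivaluation{\It}{\der{f(y)}}=\iget[state]{\It}(\D{y})\cdot(\iget[const]{\I}(f))'(\iget[state]{\It}(y))=(\iget[const]{\I}(f))'(\iget[state]{\It}(y))$ using $\iget[state]{\It}(\D{y})=1$, which closes the equation. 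The delicate points in this last step, and the ones to watch in the write-up, are the use of the assignments to arrange $\iget[state]{\It}(y)=\ivaluation{\It}{\theta}$ and $\iget[state]{\It}(\D{y})=1$, and the repeated appeal to $y,\D{y}\notin\freevarsdef{\theta}$ via Lemma~\ref{lem:coincidence-term} to suppress spurious $y$- and $\D{y}$-dependencies in $\theta$ and $\der{\theta}$.
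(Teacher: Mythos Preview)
Your proposal is correct and follows essentially the same approach as the paper: both unfold the semantic definition of $\der{\cdot}$ and appeal to the standard partial-derivative rules (constant, identity, linearity, Leibniz) for \eqref{eq:Dconstant}--\eqref{eq:DLeibniz}, and to the chain rule together with coincidence (\rref{lem:coincidence-term}) for \eqref{eq:Dcompose}. The only cosmetic difference is that for \eqref{eq:Dcompose} you carry out the entire computation in the post-assignment state $\iget[state]{\It}$, whereas the paper first uses coincidence to move $\der{f(\theta)}$ and $\der{\theta}$ back to the original state $\iget[state]{\I}$ before applying the chain rule there; the ingredients and the logic are the same.
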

\def\Imyypyval{\ivaluation{\I}{\theta}}%
\def\Imyyp{\vdLint[const=I,state=\omega]}%
\begin{proofatend}
The proof shows each equation separately.
The first part considers any constant function (i.e. arity 0) or number literal $f$ for \rref{eq:Dconstant} and then aligns the differential \(\der{x}\) of a term that happens to be a variable $x\in\allvars$ with its corresponding differential symbol $\D{x}\in\D{\allvars}$ for \rref{eq:Dpolynomial}.
The other cases exploit linearity for \rref{eq:Dadditive} and Leibniz properties of partial derivatives for \rref{eq:DLeibniz}.
Case \rref{eq:Dcompose} exploits the chain rule and assignments and differential assignments for the fresh $y,\D{y}$ to mimic partial derivatives.
Equation \rref{eq:Dcompose} generalizes to functions $f$ of arity $n>1$, in which case $\stimes$ is the (definable) Euclidean scalar product.
\def\aptag#1{\tag{#1}}%
\begin{align}
\ivaluation{\I}{\der{f}}
&= \sum_x \iget[state]{\I}(\D{x}) \itimes \Dp[x]{\ivaluation{\Idot}{f}}(\iget[state]{\I})
= \sum_x \iget[state]{\I}(\D{x}) \itimes \Dp[x]{\iget[const]{\Idot}(f)}(\iget[state]{\I})
= 0
\aptag{\ref{eq:Dconstant}}
\\
\ivaluation{\I}{\der{x}}
&=
\sum_y \iget[state]{\I}(\D{y}) \itimes \Dp[y]{\ivaluation{\Idot}{x}}(\iget[state]{\I})
= \iget[state]{\I}(\D{x})
= \ivaluation{\I}{\D{x}}
\aptag{\ref{eq:Dpolynomial}}
\\
\ivaluation{\I}{\der{\theta+\eta}}
&= \sum_x \iget[state]{\I}(\D{x}) \itimes \Dp[x]{\ivaluation{\Idot}{\theta+\eta}}(\iget[state]{\I})
= \sum_x \iget[state]{\I}(\D{x}) \itimes \Dp[x]{(\ivaluation{\Idot}{\theta}+\ivaluation{\Idot}{\eta})}(\iget[state]{\I})
\notag
\\&= \sum_x \iget[state]{\I}(\D{x}) \itimes \Big(\Dp[x]{\ivaluation{\Idot}{\theta}}(\iget[state]{\I}) + \Dp[x]{\ivaluation{\Idot}{\eta}}(\iget[state]{\I})\Big)
\notag
= \sum_x \iget[state]{\I}(\D{x}) \itimes \Dp[x]{\ivaluation{\Idot}{\theta}}(\iget[state]{\I}) + \sum_x \iget[state]{\I}(\D{x}) \itimes \Dp[x]{\ivaluation{\Idot}{\eta}}(\iget[state]{\I})
\notag
\\&= \ivaluation{\I}{\der{\theta}} + \ivaluation{\I}{\der{\eta}}
= \ivaluation{\I}{\der{\theta} + \der{\eta}}
\aptag{\ref{eq:Dadditive}}
\\
\ivaluation{\I}{\der{\theta\cdot\eta}}
&= \sum_x \iget[state]{\I}(\D{x}) \itimes \Dp[x]{\ivaluation{\Idot}{\theta\cdot\eta}}(\iget[state]{\I})
= \sum_x \iget[state]{\I}(\D{x}) \itimes \Dp[x]{(\ivaluation{\Idot}{\theta}\cdot\ivaluation{\Idot}{\eta})}(\iget[state]{\I})
\notag
\\
&= \sum_x \iget[state]{\I}(\D{x}) \itimes \Big(\ivaluation{\I}{\eta} \itimes \Dp[x]{\ivaluation{\Idot}{\theta}}(\iget[state]{\I})
+ \ivaluation{\I}{\theta} \itimes \Dp[x]{\ivaluation{\Idot}{\eta}}(\iget[state]{\I})\Big)
\notag
\\
&=
\ivaluation{\I}{\eta} \sum_x \iget[state]{\I}(\D{x}) \itimes \Dp[x]{\ivaluation{\Idot}{\theta}}(\iget[state]{\I})
+ \ivaluation{\I}{\theta} \sum_x \iget[state]{\I}(\D{x}) \itimes \Dp[x]{\ivaluation{\Idot}{\eta}}(\iget[state]{\I})
\notag
\\
&= 
\ivaluation{\I}{\der{\theta}}\cdot \ivaluation{\I}{\eta} + \ivaluation{\I}{\theta}\cdot\ivaluation{\I}{\der{\eta}}
= \ivaluation{\I}{\der{\theta}\cdot \eta + \theta\cdot\der{\eta}}
\aptag{\ref{eq:DLeibniz}}
\intertext{Proving that
\(\imodels{\I}{\dbox{\pupdate{\pumod{y}{\theta}}}{\dbox{\Dupdate{\Dumod{\D{y}}{1}}}{\big(\der{f(\theta)} = \der{f(y)}\stimes\der{\theta}\big)}}}\)
requires
\(\imodels{\Imyyp}{\der{f(\theta)} = \der{f(y)}\stimes\der{\theta}}\),
i.e.\ that
\(\ivaluation{\Imyyp}{\der{f(\theta)}} = \ivaluation{\Imyyp}{\der{f(y)}\stimes\der{\theta}}\), where $\iget[state]{\Imyyp}$ agrees with state $\iget[state]{\I}$ except that \(\iget[state]{\Imyyp}(y)=\ivaluation{\I}{\theta}\) and \(\iget[state]{\Imyyp}(\D{y})=1\).
This is equivalent to
\(\ivaluation{\I}{\der{f(\theta)}} = \ivaluation{\Imyyp}{\der{f(y)}}\stimes\ivaluation{\I}{\der{\theta}}\)
by \rref{lem:coincidence-term} since
\(\iget[state]{\I}=\iget[state]{\Imyyp}\) on $\scomplement{\{y,\D{y}\}}$ and
\(y,\D{y}\not\in\freevarsdef{\theta}\) by assumption,
so \(y,\D{y}\not\in\freevarsdef{\der{f(\theta)}}\) and \(y,\D{y}\not\in\freevarsdef{\der{\theta}}\).
The latter equation proves using the chain rule (\rref{lem:chain}) and a fresh variable $z$ when denoting \(\ivaluation{\Idot}{f} \mdefeq \iget[const]{\Idot}(f)\) using \rref{lem:coincidence-term}:}
    \ivaluation{\I}{\der{f(\theta)}} & =
    \sum_x \iget[state]{\I}(\D{x}) \Dp[x]{\ivaluation{\Idot}{f(\theta)}}(\iget[state]{\I})
    = \sum_x \iget[state]{\I}(\D{x}) \Dp[x]{(\ivaluation{\Idot}{f}\compose\ivaluation{\Idot}{\theta})}(\iget[state]{\I})
  \notag
    \\&
     \overset{\text{chain}}{=} \sum_x \iget[state]{\I}(\D{x}) \Dp[y]{\ivaluation{\Idot}{f}}\big(\ivaluation{\I}{\theta}\big) \stimes \Dp[x]{\ivaluation{\Idot}{\theta}}(\iget[state]{\I})
\notag
     \\&
     = \Dp[y]{\ivaluation{\Idot}{f}}\big(\ivaluation{\I}{\theta}\big) \stimes \sum_x \iget[state]{\I}(\D{x}) \Dp[x]{\ivaluation{\Idot}{\theta}}(\iget[state]{\I})
     = \Dp[y]{\ivaluation{\Idot}{f}}\big(\ivaluation{\I}{\theta}\big) \stimes \ivaluation{\I}{\der{\theta}}
\notag
     \\&= \Dp[y]{\iget[const]{\Idot}(f)}\big(\ivaluation{\I}{\theta}\big) \stimes \ivaluation{\I}{\der{\theta}}
\notag
    =
    \Dp[z]{\iget[const]{\Idot}(f)}\big(\ivaluation{\Imyyp}{y}\big) \itimes \Dp[y]{\ivaluation{\Idot}{y}}(\iget[state]{\Imyyp}) \stimes \ivaluation{\I}{\der{\theta}}
\notag
    \\&\overset{\text{chain}}{=}
    \Dp[y]{(\iget[const]{\Idot}(f)\compose\ivaluation{\Idot}{y})}(\iget[state]{\Imyyp}) \stimes \ivaluation{\I}{\der{\theta}}
\notag
    =
    \left(\Dp[y]{\ivaluation{\Idot}{f(y)}}(\iget[state]{\Imyyp})\right) \stimes \ivaluation{\I}{\der{\theta}}
\notag
    \\&=
    \left(\iget[state]{\Imyyp}(\D{y}) \Dp[y]{\ivaluation{\Idot}{f(y)}}(\iget[state]{\Imyyp})\right) \stimes \ivaluation{\I}{\der{\theta}}
\notag
    =
    \left(\sum_{x\in\{y\}} \iget[state]{\Imyyp}(\D{x}) \Dp[x]{\ivaluation{\Idot}{f(y)}}(\iget[state]{\Imyyp})\right) \stimes \ivaluation{\I}{\der{\theta}}
\notag
    \\&=
    \ivaluation{\Imyyp}{\der{f(y)}} \stimes \ivaluation{\I}{\der{\theta}}
    \aptag{\ref{eq:Dcompose}}
\end{align}
\qedhere
\end{proofatend}

\subsection{Soundness}

The uniform substitution calculus for differential-form \dL is \emph{sound}, i.e.\ all formulas that it proves from valid premises are valid.
The soundness argument is entirely modular.
The concrete \dL axioms in \rref{fig:dL} and~\ref{fig:dL-ODE} are valid formulas and the axiomatic proof rules (i.e.\ pairs of formulas) in \rref{fig:dL} are locally sound, which implies soundness.
The uniform substitution rule is sound so only concludes valid formulas from valid premises (\rref{thm:usubst-sound}), which implies that \dL axioms (and other provable \dL formulas) can only be instantiated soundly by rule \irref{US}.
Uniform substitution instances of locally sound axiomatic proof rules (and other locally sound inferences) are locally sound (\rref{thm:usubst-rule}), which implies that \dL axiomatic proof rules in \rref{fig:dL} can only be instantiated soundly by uniform substitutions (\irref{USR}).

The soundness proof follows a high-level strategy that is similar to earlier proofs \cite{DBLP:conf/lics/Platzer12b,DBLP:journals/logcom/Platzer10,DBLP:journals/lmcs/Platzer12}, but ends up in stronger results since all axioms for differential equations are equivalences now.
The availability of differentials and differential assignments as syntactic elements in differential-form \dL as well as the instantiation support from uniform substitutions makes those soundness proofs significantly more modular, too.
For example, what used to be a single proof rule for differential invariants \cite{DBLP:journals/logcom/Platzer10} can now be decomposed into separate modular axioms.

\begin{theorem}[Soundness] \label{thm:dL-sound}
  The uniform substitution calculus for \dL is \emph{sound}, that is, every formula that is provable by the \dL axioms and proof rules is valid, i.e.\ true in all states of all interpretations.
  The axioms in \rref{fig:dL} and \ref{fig:dL-ODE}  are valid formulas and the axiomatic proof rules in \rref{fig:dL} are locally sound.
\end{theorem}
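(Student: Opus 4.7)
The plan is to proceed by induction on the length of a proof. The base case reduces to showing that each axiom in \rref{fig:dL} and \rref{fig:dL-ODE} is a valid formula, while the inductive step handles the axiomatic proof rules together with the meta-rules \irref{US} and \irref{USR}. Rule \irref{US} is already sound by \rref{thm:usubst-sound}, and \irref{USR} preserves local soundness of inferences by \rref{thm:usubst-rule}, so a proved instance of any axiomatic rule inherits local soundness provided the underlying rule is itself locally sound. It therefore suffices to verify validity of each axiom and local soundness of each axiomatic rule separately, treating the two figures one line at a time.

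The propositional, modal, and first-order axioms (\irref{diamond}, \irref{testb}, \irref{assignb}, \irref{choiceb}, \irref{composeb}, \irref{iterateb}, \irref{K}, \irref{I}, \irref{V}, \irref{allinst}, \irref{alldist}, \irref{vacuousall}) follow directly by unfolding \rref{def:dL-valuation} and \rref{def:HP-transition}. For \irref{V} and \irref{vacuousall} the crucial observation is that the predicate symbol $p$ has arity $0$, so $\iget[state]{\I}\in\imodel{\I}{p}$ is independent of $\iget[state]{\I}$, and \rref{lem:coincidence} rules out any change of truth value through bound variables. The rules \irref{G}, \irref{gena}, and \irref{MP} are locally sound straight from the semantics. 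The congruence rules \irref{CQ} and \irref{CE} are locally sound because their premises, when valid in $\iget[const]{\I}$, give $\ivalues{\I}{f(\usall)}=\ivalues{\I}{g(\usall)}$ or $\imodel{\I}{p(\usall)}=\imodel{\I}{q(\usall)}$, and the compositional semantics lifts this equality into any term or formula context.

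The differential axioms are where the real work lies, and I would handle them in order of increasing difficulty. Axioms \irref{Dconst}, \irref{Dvar}, \irref{Dplus}, \irref{Dtimes}, and \irref{Dcompose} follow directly from \rref{lem:derivationLemma}. Axiom \irref{DW} is immediate from \rref{def:HP-transition}, since every reached state satisfies $\ivr$. Axiom \irref{DC} is a differential analogue of \irref{K}: if every trajectory of $\pevolvein{\D{x}=f(x)}{q(x)}$ stays in $r(x)$, then its trajectories coincide with those of $\pevolvein{\D{x}=f(x)}{q(x)\land r(x)}$, which yields both directions of the equivalence. Axiom \irref{DS} reduces to elementary verification that $t\mapsto x+ft$ is the unique solution of $\D{x}=f$ together with the evolution domain check. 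Axiom \irref{DE} is a direct consequence of \rref{lem:differentialAssignLemma}: along the ODE, $\D{x}$ already has value $f(x)$, so the inner differential assignment is semantically a no-op in every state on the flow.

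The hardest two axioms are \irref{DI} and \irref{DG}. For \irref{DI}, under the assumption $q(x)$ and validity of $\dbox{\pevolvein{\D{x}=f(x)}{q(x)}}{\der{p(x)}}$, I would combine \rref{lem:differentialLemma} with the mean value theorem: for an atomic postcondition such as $\theta\geq\eta$, \rref{lem:differentialLemma} identifies the time-derivative of $\ivaluation{\Iff[t]}{\theta-\eta}$ with $\ivaluation{\Iff[t]}{\der{\theta}-\der{\eta}}$, whose non-negativity along the flow forces monotonicity and hence preservation of $p(x)$; the converse direction is just the vacuous case where $q(x)$ fails initially, reducing $\dbox{\ptest{q(x)}}{p(x)}$ to a triviality. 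For \irref{DG}, the forward direction projects a solution of the extended system onto $x$ and reuses it for the original ODE. The backward direction is the genuine obstacle: it requires that for every starting value of $y$, a solution of the linear-in-$y$ ODE $\D{y}=a(x)y+b(x)$ exists on the full interval on which the $x$-solution exists. This follows from Picard--Lindel\"of applied to a linear ODE whose coefficients are bounded along the compact image of the underlying $x$-trajectory, so the $y$-solution cannot blow up before the $x$-solution ends. This global existence of the ghost solution is the main technical subtlety; once it is in place, soundness of the entire calculus assembles from the modular pieces above.
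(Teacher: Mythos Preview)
Your proposal is correct and takes essentially the same route as the paper: establish validity of each axiom and local soundness of each rule individually, using \rref{lem:derivationLemma} for the derivative axioms, \rref{lem:differentialAssignLemma} for \irref{DE}, \rref{lem:differentialLemma} together with the mean-value theorem for \irref{DI}, and Picard--Lindel\"of for the ghost extension in \irref{DG}. The only slip is in your treatment of the easy direction of \irref{DI}: it is not handled by the vacuous $\lnot q(x)$ case alone---when $q(x)$ holds initially, the implication from $\dbox{\pevolvein{\D{x}=f(x)}{q(x)}}{p(x)}$ to $\dbox{\ptest{q(x)}}{p(x)}$ still needs the zero-duration solution to extract $p(x)$, which the paper notes holds even without the differential hypothesis.
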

\begin{proof}
The axioms (and most proof rules) in \rref{fig:dL} are special instances
of corresponding axiom schemata and proof rules for differential dynamic logic \cite{DBLP:conf/lics/Platzer12b} and, thus, sound.\footnote{%
The uniform substitution proof calculus improves modularity and gives stronger equivalence formulations of axioms, though.
}
All proof rules in \rref{fig:dL} (but not \irref{US} itself) are even locally sound, which implies soundness, i.e.\ that their conclusions are valid (in all $\iget[const]{\I}$) if their premises are.
In preparation for a completeness argument, note that rules \irref{gena+MP} can be augmented soundly to use $p(\usall)$ instead of $p(x)$ or $p$, respectively, such that the $\freevarsdef{\sigma}=\emptyset$ requirement of \rref{thm:usubst-rule} will be met during \irref{USR} instances of all axiomatic proof rules.
The axioms in \rref{fig:dL-ODE} are new and need new soundness arguments.
\begin{compactenum}
\item[\irref{DW}]
Soundness of \irref{DW} uses that differential equations never leave their evolution domain by \rref{def:HP-transition}.
To show \(\imodels{\I}{\dbox{\pevolvein{\D{x}=f(x)}{q(x)}}{q(x)}}\), consider any \m{\iget[flow]{\If}:[0,r]\to\linterpretations{\Sigma}{V}} of any duration $r\geq0$ solving
\(\imodels{\If}{\D{x}=f(x)\land q(x)}\).
Then \(\imodels{\If}{q(x)}\) especially \(\imodels{\Iff[r]}{q(x)}\).

\item[\irref{DC}]
Soundness of \irref{DC} is a stronger version of  soundness for the differential cut rule \cite{DBLP:journals/logcom/Platzer10}.
\irref{DC} is a differential version of the modal modus ponens \irref{K}.
Only the direction ``$\lylpmi$'' of the equivalence in \irref{DC} needs the outer assumption \(\dbox{\pevolvein{\D{x}=f(x)}{q(x)}}{r(x)}\), but the proof of the conditional equivalence in \irref{DC} is simpler:
\[
{\dbox{\pevolvein{\D{x}=f(x)}{q(x)}}{r(x)}}
\limply
        {\big(\dbox{\pevolvein{\D{x}=f(x)}{q(x)}}{p(x)} \lbisubjunct \dbox{\pevolvein{\D{x}=f(x)}{q(x)\land r(x)}}{p(x)}\big)}
\]
The core is that if \(\dbox{\pevolvein{\D{x}=f(x)}{q(x)}}{r(x)}\), so $r(x)$ holds after that differential equation,
and if $p(x)$ holds after the differential equation \(\pevolvein{\D{x}=f(x)}{q(x)\land r(x)}\) that is additionally restricted to $r(x)$,
then $p(x)$ holds after the differential equation \(\pevolvein{\D{x}=f(x)}{q(x)}\) with no additional restriction.
Let \(\imodels{\I}{\dbox{\pevolvein{\D{x}=f(x)}{q(x)}}{r(x)}}\).
Since all restrictions of solutions are solutions, this is equivalent to
\(\imodels{\If}{r(x)}\) for all $\iget[flow]{\If}$ of any duration solving \(\imodels{\If}{\D{x}=f(x)\land q(x)}\) and starting in \(\iget[state]{\Iff[0]}=\iget[state]{\I}\) on $\scomplement{\{\D{x}\}}$.
So, for all $\iget[flow]{\If}$ starting in \(\iget[state]{\Iff[0]}=\iget[state]{\I}\) on $\scomplement{\{\D{x}\}}$:
\(\imodels{\If}{\D{x}=f(x)\land q(x)}\) is equivalent to
\(\imodels{\If}{\D{x}=f(x)\land q(x) \land r(x)}\).
Hence, \(\imodels{\I}{\dbox{\pevolvein{\D{x}=f(x)}{q(x)\land r(x)}}{p(x)}}\)
is equivalent to \(\imodels{\I}{\dbox{\pevolvein{\D{x}=f(x)}{q(x)}}{p(x)}}\).

\item[\irref{DE}]
Axiom \irref{DE} is new to differential-form \dL. Its soundness proof exploits \rref{lem:differentialAssignLemma}.
Consider any state $\iget[state]{\I}$.
Then
\(\imodels{\I}{\dbox{\pevolvein{\D{x}=f(x)}{q(x)}}{p(x,\D{x})}}\)
iff
\(\imodels{\Iff[r]}{p(x,\D{x})}\)
for all solutions $\iget[flow]{\If}:[0,r]\to\linterpretations{\Sigma}{V}$ of
\(\imodels{\If}{\D{x}=f(x)\land q(x)}\) of any duration $r$ starting in \(\iget[state]{\Iff[0]}=\iget[state]{\I}\) on $\scomplement{\{\D{x}\}}$.
That is equivalent to: for all $\iget[flow]{\If}$,
if
\(\imodels{\If}{\D{x}=f(x)\land q(x)}\)
then
\(\imodels{\If}{p(x,\D{x})}\).
By \rref{lem:differentialAssignLemma}, \(\imodels{\If}{p(x,\D{x})}\) iff
\(\imodels{\If}{\dbox{\Dupdate{\Dumod{\D{x}}{f(x)}}}{p(x,\D{x})}}\),
so that is equivalent to
\(\imodels{\Iff[r]}{\dbox{\Dupdate{\Dumod{\D{x}}{f(x)}}}{p(x,\D{x})}}\)
for all solutions $\iget[flow]{\If}:[0,r]\to\linterpretations{\Sigma}{V}$ of
\(\imodels{\If}{\D{x}=f(x)\land q(x)}\) of any duration $r$ starting in \(\iget[state]{\Iff[0]}=\iget[state]{\I}\) on $\scomplement{\{\D{x}\}}$,
which is, consequently, equivalent to
\(\imodels{\I}{\dbox{\pevolvein{\D{x}=f(x)}{q(x)}}{\dbox{\Dupdate{\Dumod{\D{x}}{f(x)}}}{p(x,\D{x})}}}\).

\item[\irref{DI}]
Soundness of \irref{DI} has some relation to the soundness proof for differential invariants \cite{DBLP:journals/logcom/Platzer10}, yet proves an equivalence and is generalized to leverage differentials.
If \(\imodels{\I}{\dbox{\pevolvein{\D{x}=f(x)}{q(x)}}{p(x)}}\) then the solution $\iget[flow]{\If}$ of duration 0 implies that \(\imodels{\I}{p(x)}\) since $\D{x}\not\in\freevars{p(x)}$ by \rref{lem:coincidence}, provided that \(\imodels{\Iff[0]}{q(x)}\), i.e.\ \(\imodels{\I}{q(x)}\) since $\D{x}\not\in\freevars{q(x)}$, such that there is a solution at all.
Thus, \({\dbox{\pevolvein{\D{x}=f(x)}{q(x)}}{p(x)}} \limply {\dbox{\ptest{q(x)}}{p(x)}}\) is valid even without the assumption.
Converse
\(
{\big(q(x)\limply \dbox{\pevolvein{\D{x}=f(x)}{q(x)}}{\der{p(x)}}\big)}
\limply
{\dbox{\ptest{q(x)}}{p(x)} \limply\dbox{\pevolvein{\D{x}=f(x)}{q(x)}}{p(x)}}
\)
is only shown for \m{p(x)\mdefequiv g(x)\geq0}, where \(\der{p(x)} \mequiv (\der{g(x)}\geq0)\), because the variations for other formulas are the same as the variations in previous work \cite{DBLP:journals/logcom/Platzer10}.
Consider a state $\iget[state]{\I}$ in which
\(\imodels{\I}{q(x) \limply \dbox{\pevolvein{\D{x}=f(x)}{q(x)}}{\der{p(x)}}}\).
If \(\inonmodels{\I}{q(x)}\), there is nothing to show, because there is no solution of \({\pevolvein{\D{x}=f(x)}{q(x)}}\) for any duration since $\D{x}\not\in\freevars{q(x)}$, so \(\imodels{\I}{\dbox{\pevolvein{\D{x}=f(x)}{q(x)}}{p(x)}}\) holds vacuously.
Otherwise, \(\imodels{\I}{q(x)}\), which implies\\ \(\imodels{\I}{\dbox{\pevolvein{\D{x}=f(x)}{q(x)}}{\der{p(x)}}}\) by assumption.
Assume \(\imodels{\I}{\dbox{\ptest{q(x)}}{p(x)}}\), so \(\imodels{\I}{p(x)}\) since \(\imodels{\I}{q(x)}\).
To show that
\(\imodels{\I}{\dbox{\pevolvein{\D{x}=f(x)}{q(x)}}{p(x)}}\) consider any solution $\iget[flow]{\If}$ of any duration $r\geq0$.
The case $r=0$ follows from \(\imodels{\I}{p(x)}\) by \rref{lem:coincidence} since \(\freevars{p(x)}=\{x\}\) is disjoint from \(\{\D{x}\}\), which, unlike $x$, is changed by evolutions of any duration, including 0.
That leaves the case \(r>0\).

Let $\iget[flow]{\If}$ be a solution of \(\pevolvein{\D{x}=f(x)}{q(x)}\) according to \rref{def:HP-transition}, so
\(\iget[state]{\I}=\iget[state]{\Iff[0]}\) on $\scomplement{\{\D{x}\}}$
and \(\imodels{\If}{\pevolvein{\D{x}=f(x)}{q(x)}}\).
Now
\(\imodels{\I}{\dbox{\pevolvein{\D{x}=f(x)}{q(x)}}{\der{p(x)}}}\)
implies
\(\imodels{\If}{\der{p(x)}}\).
As \(r>0\), \rref{lem:differentialLemma} implies
\(0\leq\ivaluation{\Iff[\zeta]}{\der{g(x)}}
= \D[t]{\ivaluation{\Iff[t]}{g(x)}} (\zeta)\)
for all $\zeta$ since $\freevars{g(x)}=\{x\}$.
Together with \(\imodels{\Iff[0]}{p(x)}\) (by \rref{lem:coincidence} and \(\freevars{p(x)}\cap\{\D{x}\}=\emptyset\) from \(\imodels{\I}{p(x)}\)), which is
\(\imodels{\Iff[0]}{g(x)\geq0}\),
this implies \(\imodels{\Iff[\zeta]}{g(x)\geq0}\) for all $\zeta$, including $r$,
by the mean-value theorem, since
\(\ivaluation{\Iff[r]}{g(x)}-\ivaluation{\Iff[0]}{g(x)} = (r-0) \D[t]{\ivaluation{\Iff[t]}{g(x)}} (\zeta) \geq 0\) for some $\zeta\in(0,r)$.
The mean-value theorem (\rref{lem:mean-value} in appendix) is applicable since the value \(\ivaluation{\Iff[t]}{g(x)}\) of term $g(x)$ along $\iget[flow]{\If}$ is continuous in $t$ on $[0,r]$ and differentiable on $(0,r)$
as compositions of the, by \rref{def:dL-valuationTerm} smooth, evaluation function and the differentiable solution $\iget[state]{\Iff[t]}$ of a differential equation.

\item[\irref{DG}]
\def\Imd{\imodif[state]{\I}{y}{d}}%
\newcommand{\Ify}{\DALint[const=I,flow=\tilde{\varphi}]}
\newcommand{\Iffy}[1][t]{\vdLint[const=I,state=\tilde{\varphi}(#1)]}
Soundness of \irref{DG} is a constructive variation of the soundness proof for differential auxiliaries \cite{DBLP:journals/lmcs/Platzer12}.
Let \(\imodels{\I}{\lexists{y}{\dbox{\pevolvein{\D{x}=f(x)\syssep\D{y}=a(x)y+b(x)}{q(x)}}{p(x)}}}\),
that is,\\ \(\imodels{\Imd}{\dbox{\pevolvein{\D{x}=f(x)\syssep\D{y}=a(x)y+b(x)}{q(x)}}{p(x)}}\) for some value $d\in\reals$.
In order to show that \(\imodels{\I}{\dbox{\pevolvein{\D{x}=f(x)}{q(x)}}{p(x)}}\),
consider any \(\iget[flow]{\If}:[0,r]\to\linterpretations{\Sigma}{V}\) such that
\(\imodels{\If}{\D{x}=f(x)\land q(x)}\) and \(\iget[state]{\Iff[0]}=\iget[state]{\I}\) on $\scomplement{\{\D{x}\}}$.
By modifying the values of $y,\D{y}$ along $\iget[flow]{\If}$, this function can be augmented to a solution 
\(\iget[flow]{\Ify}:[0,r]\to\linterpretations{\Sigma}{V}\) such that
\(\imodels{\Ify}{\D{x}=f(x)\land\D{y}=a(x)y+b(x)\land q(x)}\) and \(\iget[state]{\Iffy[0]}(y)=d\) as shown below.
The assumption then implies \(\imodels{\Iffy[r]}{p(x)}\), which, by \rref{lem:coincidence}, is equivalent to \(\imodels{\Iff[r]}{p(x)}\) since \(y,\D{y}\not\in\freevars{p(x)}\) and \(\iget[state]{\Iff[r]}=\iget[state]{\Iffy[r]}\) on $\scomplement{\{y,\D{y}\}}$,
which implies \(\imodels{\I}{\dbox{\pevolvein{\D{x}=f(x)}{q(x)}}{p(x)}}\), since $\iget[flow]{\If}$ was arbitrary.

The construction of the modification $\iget[flow]{\Ify}$ of $\iget[flow]{\If}$ on $\{y,\D{y}\}$ proceeds as follows.
By Picard-Lindel\"of's theorem (\rref{thm:PicardLindelof-global} in the appendix), there is a unique solution \(y:[0,r]\to\reals\) of the initial-value problem
\begin{equation}
\begin{aligned}
  y(0) &=d\\
  \D{y}(t) &= F(t,y(t)) \mdefeq y(t) (\ivaluation{\Iff[t]}{a(x)}) + \ivaluation{\Iff[t]}{b(x)}
\end{aligned}
\label{eq:diffghost-extra-ODE}
\end{equation}
because \(F(t,y)\) is continuous on \([0,r]\times\reals\)
(since \(\ivaluation{\Iff[t]}{a(x)}\) and \(\ivaluation{\Iff[t]}{b(x)}\) are continuous in $t$ as compositions of the, by \rref{def:dL-valuationTerm} smooth, evaluation function and the continuous solution $\iget[state]{\Iff[t]}$ of a differential equation)
and because \(F(t,y)\) satisfies the Lipschitz condition
\[
\norm{F(t,y)-F(t,z)} = \norm{(y-z)(\ivaluation{\Iff[t]}{a(x)})} \leq \norm{y-z} \max_{t\in[0,r]} \ivaluation{\Iff[t]}{a(x)}
\]
where the maximum exists, because it is a maximum of a continuous function on the compact set $[0,r]$.
The modification $\iget[flow]{\Ify}$ agrees with $\iget[flow]{\If}$ on $\scomplement{\{y,\D{y}\}}$.
On $\{y,\D{y}\}$, the modification $\iget[flow]{\Ify}$ is defined as \(\iget[state]{\Iffy[t]}(y) = y(t)\) and \(\iget[state]{\Iffy[t]}(\D{y}) = F(t,y(t))\), respectively, for the solution $y(t)$ of \rref{eq:diffghost-extra-ODE}.
In particular \(\iget[state]{\Iffy[t]}(\D{y})\) agrees with the time-derivative \(\D{y}(t)\) of the value \(\iget[state]{\Iffy[t]}(y) = y(t)\) of $y$ along $\iget[flow]{\Ify}$.
By construction, \(\iget[state]{\Iffy[0]}(y)=d\) and
\(\imodels{\Ify}{\D{x}=f(x)\land\D{y}=a(x)y+b(x)\land q(x)}\),
because \(\D{y}=a(x)y+b(x)\) holds along $\iget[flow]{\Ify}$ by \rref{eq:diffghost-extra-ODE} and
because \(\iget[state]{\Iff[t]}=\iget[state]{\Iffy[t]}\) on $\scomplement{\{y,\D{y}\}}$ so that
\({\D{x}=f(x)}\land{q(x)}\) continues to hold along $\iget[flow]{\Ify}$ by \rref{lem:coincidence-term} because \(y,\D{y}\not\in\freevars{{\D{x}=f(x)}\land{q(x)}}\).

\def\Imyd{\imodif[state]{\I}{y}{d}}%
\newcommand{\Ifrest}{\DALint[const=I,flow=\restrict{\varphi}{\scomplement{\{y,\D{y}\}}}]}%
\newcommand*{\Iffrest}[1][\zeta]{\vdLint[const=I,state=\restrict{\varphi}{\scomplement{\{y,\D{y}\}}}(#1)]}%
Conversely, let \(\imodels{\I}{\dbox{\pevolvein{\D{x}=f(x)}{q(x)}}{p(x)}}\).
This direction shows a stronger version of
\(\imodels{\I}{\lexists{y}{\dbox{\pevolvein{\D{x}=f(x)\syssep\D{y}=a(x)y+b(x)}{q(x)}}{p(x)}}}\)
by showing for all terms $\eta$ that\\
\(\imodels{\I}{\lforall{y}{\dbox{\pevolvein{\D{x}=f(x)\syssep\D{y}=\eta}{q(x)}}{p(x)}}}\).
Consider any $d\in\reals$ and term $\eta$ and show\\
\(\imodels{\Imyd}{\dbox{\pevolvein{\D{x}=f(x)\syssep\D{y}=\eta}{q(x)}}{p(x)}}\).
Consider any \(\iget[flow]{\If}:[0,r]\to\linterpretations{\Sigma}{V}\) such that\\
\(\imodels{\If}{\D{x}=f(x)\land\D{y}=\eta\land q(x)}\)
with \(\iget[state]{\Iff[0]}=\iget[state]{\Imyd}\) on $\scomplement{\{\D{x},\D{y}\}}$.
Then the restriction \(\iget[flow]{\Ifrest}\) of $\iget[flow]{\If}$ to $\scomplement{\{y,\D{y}\}}$ with \(\iget[state]{\Iffrest[t]}=\iget[state]{\Imyd}\) on $\{y,\D{y}\}$ for all $t\in[0,r]$ still solves
\(\imodels{\Ifrest}{\D{x}=f(x)\land q(x)}\) by \rref{lem:coincidence-term} since \(\iget[flow]{\Ifrest}=\iget[flow]{\If}\) on $\scomplement{\{y,\D{y}\}}$ and \(y,\D{y}\not\in\freevars{{\D{x}=f(x)}\land{q(x)}}\).
It also satisfies \(\iget[state]{\Iffrest[0]}=\iget[state]{\Imyd}\) on $\scomplement{\{\D{x}\}}$,
because \(\iget[state]{\Iff[0]}=\iget[state]{\Imyd}\) on $\scomplement{\{\D{x},\D{y}\}}$ yet \(\iget[state]{\Iffrest[t]}(\D{y})=\iget[state]{\Imyd}(\D{y})\).
Thus, by assumption, \(\imodels{\Iffrest[r]}{p(x)}\),
which implies \(\imodels{\Iff[r]}{p(x)}\)
by \rref{lem:coincidence}, because
\(\iget[flow]{\If}=\iget[flow]{\Ifrest}\) on $\scomplement{\{y,\D{y}\}}$ and $y,\D{y}\not\in\freevars{p(x)}$.

In particular, axiom \irref{DG} continues to be sound when replacing $\exists{y}$ by $\forall{y}$.

\item[\irref{DS}]
\def\Izeta{\imodif[state]{\I}{t}{\zeta}}%
\def\constODE{f}%
Soundness of the solution axiom \irref{DS} follows from existence and uniqueness of global solutions of constant differential equations.
Consider any state $\iget[state]{\I}$.
By \rref{thm:PicardLindelof-global}, there is a unique global solution $\iget[flow]{\If}:[0,\infty)\to\linterpretations{\Sigma}{V}$ defined as \(\iget[state]{\Iff[\zeta]}(x) \mdefeq \ivaluation{\Izeta}{x+\constODE t}\)
and \(\iget[state]{\Iff[\zeta]}(\D{x}) \mdefeq \D[t]{\iget[state]{\Iff[t]}(x)}(\zeta) = \iget[const]{\Izeta}(\constODE)\)
and \(\iget[state]{\Iff[\zeta]} = \iget[state]{\I}\) on $\scomplement{\{x,\D{x}\}}$.
This solution satisfies
\(\iget[state]{\Iff[0]}=\iget[state]{\I}(x)\) on $\scomplement{\{\D{x}\}}$ 
and
  \m{\imodels{\If}{\D{x}=\constODE}},
  i.e.
  \(\imodels{\Iff[\zeta]}{\D{x}=\constODE}\)
  for all \(0\leq \zeta\leq r\).
All solutions of \(\pevolve{\D{x}=\constODE}\) from initial state $\iget[state]{\I}$ are restrictions of $\iget[flow]{\If}$ to subintervals of $[0,\infty)$.
The (unique) state $\iget[state]{\It}$ that satisfies \(\iaccessible[\pupdate{\pumod{x}{x+\constODE t}}]{\Izeta}{\It}\) satisfies the agreement \(\iget[state]{\It}=\iget[state]{\Iff[\zeta]}\) on $\scomplement{\{\D{x}\}}$, so that, by $\D{x}\not\in\freevars{p(x)}$, \rref{lem:coincidence} implies that \(\imodels{\It}{p(x)}\) iff \(\imodels{\Iff[\zeta]}{p(x)}\).

First consider axiom
\({\dbox{\pevolve{\D{x}=\constODE}}{p(x)}} \lbisubjunct {\lforall{t{\geq}0}{\dbox{\pupdate{\pumod{x}{x+\constODE t}}}{p(x)}}}\) for the special case $q(x)\mequiv\ltrue$.
If \(\imodels{\I}{\dbox{\pevolve{\D{x}=\constODE}}{p(x)}}\),
then \(\imodels{\Iff[\zeta]}{p(x)}\) for all $\zeta\geq0$,
because the restriction of $\iget[flow]{\If}$ to $[0,\zeta]$ solves \(\D{x}=\constODE\) from $\iget[state]{\I}$,
thus \(\imodels{\It}{p(x)}\) since \(\iget[state]{\It}=\iget[state]{\Iff[\zeta]}\) on $\scomplement{\{\D{x}\}}$ and $\D{x}\not\in\freevars{p(x)}$ by \rref{lem:coincidence},
which implies
\(\imodels{\Izeta}{\dbox{\pupdate{\pumod{x}{x+\constODE t}}}{p(x)}}\),
so \(\imodels{\I}{\lforall{t{\geq}0}{\dbox{\pupdate{\pumod{x}{x+\constODE t}}}{p(x)}}}\) as $\zeta\geq0$ was arbitrary.

Conversely, \(\imodels{\I}{\lforall{t{\geq}0}{\dbox{\pupdate{\pumod{x}{x+\constODE t}}}{p(x)}}}\)
implies \(\imodels{\Izeta}{\dbox{\pupdate{\pumod{x}{x+\constODE t}}}{p(x)}}\)
for all $\zeta\geq0$,
i.e. $\imodels{\It}{p(x)}$ when \(\iaccessible[\pupdate{\pumod{x}{x+\constODE t}}]{\Izeta}{\It}\).
\rref{lem:coincidence} again implies \(\imodels{\Iff[\zeta]}{p(x)}\) for all $\zeta\geq0$ since $\D{x}\not\in\freevars{p(x)}$, so \(\imodels{\I}{\dbox{\pevolve{\D{x}=\constODE}}{p(x)}}\), since all solutions are restrictions of $\iget[flow]{\If}$.

Soundness of \irref{DS} follows using that all solutions $\iget[flow]{\If}:[0,r]\to\linterpretations{\Sigma}{V}$ of \(\pevolvein{\D{x}=f(x)}{q(x)}\) satisfy \(\imodels{\Iff[\zeta]}{q(x)}\) for all $0\leq\zeta\leq r$, which, using \rref{lem:coincidence} as above, is equivalent to \(\imodels{\I}{\lforall{0{\leq}s{\leq}t}{q(x+\constODE s)}}\) when \(\iget[state]{\I}(t)=r\).

\item[\irref{Dplus+Dtimes+Dcompose+Dconst+Dvar}] Soundness of the derivation axioms \irref{Dplus+Dtimes+Dcompose} as well as \irref{Dconst+Dvar} follows from \rref{lem:derivationLemma}, since they are special instances of \rref{eq:Dadditive}, \rref{eq:DLeibniz} and \rref{eq:Dcompose} as well as \rref{eq:Dconstant} and \rref{eq:Dpolynomial}, respectively.
For axiom \irref{Dcompose} observe that $y,\D{y}\not\in\freevars{g(x)}$.

\item[\irref{G}]
Let the premise \(p(\usall)\) be valid in some $\iget[const]{\I}$, i.e.\ \m{\iget[const]{\I}\models{p(\usall)}}, i.e.\ \(\imodels{\It}{p(\usall)}\) for all $\iget[state]{\It}$.
Then, the conclusion \(\dbox{a}{p(\usall)}\) is valid in the same $\iget[const]{\I}$,
i.e.\ \(\imodels{\I}{\dbox{a}{p(\usall)}}\) for all $\iget[state]{\I}$,
because \(\imodels{\It}{p(\usall)}\) for all $\iget[state]{\It}$, so also for all $\iget[state]{\It}$ with \(\iaccessible[a]{\I}{\It}\).
Thus, \irref{G} is locally sound.

\item[\irref{gena}]
\def\Imd{\imodif[state]{\I}{x}{d}}%
Let the premise \(p(x)\) be valid in some $\iget[const]{\I}$, i.e.\ \m{\iget[const]{\I}\models{p(x)}}, i.e.\ \(\imodels{\It}{p(x)}\) for all $\iget[state]{\It}$.
Then, the conclusion \(\lforall{x}{p(x)}\) is valid in the same $\iget[const]{\I}$,
i.e.\ \(\imodels{\I}{\lforall{x}{p(x)}}\) for all $\iget[state]{\I}$, i.e.\ \(\imodels{\Imd}{p(x)}\) for all $d\in\reals$,
because \(\imodels{\It}{p(x)}\) for all $\iget[state]{\It}$, so in particular for all $\iget[state]{\It}=\iget[state]{\Imd}$ for any real $d\in\reals$.
Thus, \irref{gena} is locally sound.

\item[\irref{CQ}]
Let the premise \(f(\usall)=g(\usall)\) be valid in some $\iget[const]{\I}$, i.e.\ \m{\iget[const]{\I}\models{f(\usall)=g(\usall)}}, which is \(\imodels{\I}{f(\usall)=g(\usall)}\) for all $\iget[state]{\I}$,
i.e.\ \(\ivaluation{\I}{f(\usall)}=\ivaluation{\I}{g(\usall)}\) for all $\iget[state]{\I}$.
Consequently, \(\ivaluation{\I}{f(\usall)}\in\iget[const]{\I}(p)\) iff \(\ivaluation{\I}{g(\usall)}\in\iget[const]{\I}(p)\).
So, \(\iget[const]{\I}\models {p(f(\usall)) \lbisubjunct p(g(\usall))}\).
Thus, \irref{CQ} is locally sound.

\item[\irref{CE}]
Let the premise \(p(\usall)\lbisubjunct q(\usall)\) be valid in some $\iget[const]{\I}$, i.e.\ \m{\iget[const]{\I}\models{p(\usall) \lbisubjunct q(\usall)}}, which is \(\imodels{\I}{p(\usall) \lbisubjunct q(\usall)}\) for all $\iget[state]{\I}$.
Consequently, \(\imodel{\I}{p(\usall)} = \imodel{\I}{q(\usall)}\).
Thus,
\(\imodel{\I}{\contextapp{C}{p(\usall)}}
= \iget[const]{\I}(C)\big(\imodel{\I}{p(\usall)}\big)
= \iget[const]{\I}(C)\big(\imodel{\I}{q(\usall)}\big)
= \imodel{\I}{\contextapp{C}{q(\usall)}}\).
This implies
\(\iget[const]{\I}\models{\contextapp{C}{p(\usall)} \lbisubjunct \contextapp{C}{q(\usall)}}\),
hence the conclusion is valid in $\iget[const]{\I}$.
Thus, \irref{CE} is locally sound.

\item[\irref{MP}]
Modus ponens \irref{MP} is locally sound with respect to interpretation $\iget[const]{\I}$ \emph{and} state $\iget[state]{\I}$, which implies local soundness.
If \(\imodels{\I}{p\limply q}\) and \(\imodels{\I}{p}\) then \(\imodels{\I}{q}\).

\item[\irref{US}]
Rule \irref{USR} preserves local soundness by \rref{thm:usubst-rule} and rule \irref{US} is sound by \rref{thm:usubst-sound}, just not locally sound.
\qedhere
\end{compactenum}
\end{proof}

Observe that uniform substitutions are not limited to merely instantiating \dL axioms and axiomatic proof rules.
Rule \irref{US} can be used to instantiate any \dL formula soundly (\rref{thm:usubst-sound}), which, in particular, gives a simple mechanism for derived axioms and lemmas, which are just \dL formulas that have a proof.
Uniform substitutions can instantiate any locally sound proof as well (\rref{thm:usubst-rule}), which, in particular, gives a simple mechanism for derived axiomatic rules, definitions, and invariant search with lazy instantiation of invariants.
These are just proofs from the \dL rules and axioms in \rref{fig:dL} and \ref{fig:dL-ODE} whose premises and conclusions are uniformly substituted to instantiate the requisite function or predicate symbols (recall \rref{ex:diffind-free-parametric-proof}).

\subsection{Completeness}
\newcommand{\reduct}[1]{#1^\flat}%
\newcommand{\LBase}{\textit{L}\xspace}%

By \rref{thm:dL-sound}, the \dL calculus is \emph{sound}, so every \dL formula that is provable using the \dL axioms and proof rules is valid, i.e.\ true in all states of all interpretations.
The more intriguing converse whether the \dL calculus is \emph{complete}, i.e.\ can prove all \dL formulas that are valid, has an answer, too.
Previous calculi for \dL were proved to be complete relative to differential equations \cite{DBLP:journals/jar/Platzer08,DBLP:conf/lics/Platzer12b} and also proved complete relative to discrete dynamics \cite{DBLP:conf/lics/Platzer12b}.
A generalization of the Hilbert calculus to hybrid games was even proved complete schematically \cite{DBLP:journals/tocl/Platzer15}.
The uniform substitution calculus for differential-form \dL is, to a large extent, a specialization of previous calculi tailored to significantly simplify soundness arguments.
Yet, completeness does not transfer when restricting proof calculi.
In fact, one key question is whether the restrictions imposed upon proofs for soundness purposes by the simple technique of uniform substitutions does also preserve completeness.
Indeed, completeness can be shown to carry over from a previous schematic completeness proof for differential game logic \cite{DBLP:journals/tocl/Platzer15} using expressiveness results from previous completeness proofs \cite{DBLP:journals/jar/Platzer08,DBLP:conf/lics/Platzer12b} by augmenting the schematic completeness proof with instantiability proofs.

The first challenge is to prove that uniform substitutions are flexible enough to prove all required instances of the \dL axioms and axiomatic proof rules.
For simplicity, consider $p(\usall)$ to be a quantifier symbol of arity 0.
  A \dL formula $\varphi$ is called \emph{surjective} iff rule \irref{US} can instantiate $\varphi$ to any of its axiom schema instances, which are those formulas that are obtained by uniformly replacing program constants $a$ by any hybrid programs and quantifier symbols $\contextapp{C}{}$ by formulas.
  An axiomatic rule is called \emph{surjective} iff \irref{USR} can instantiate it to any of its proof rule schema instances.

\begin{lemma}[Surjective axioms] \label{lem:surjectiveaxiom}
  If $\varphi$ is a \dL formula that is built only from quantifier symbols of arity 0 and program constants but no function or predicate symbols,
  then $\varphi$ is surjective.
  Axiomatic rules consisting of surjective \dL formulas are surjective.
\end{lemma}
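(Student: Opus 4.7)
The plan is to show that the admissibility side-conditions built into uniform substitution (\rref{fig:usubst}) are automatically satisfied for any $\sigma$ whose replacees are only program constants and (arity 0) quantifier symbols, since such $\sigma$ introduces no free variables. Given a schema instance of $\varphi$ in which each program constant $a$ is to be replaced by an \HP $\alpha_a$ and each quantifier symbol $C$ (arity 0) is to be replaced by a formula $\psi_C$, I define the uniform substitution $\sigma$ by setting $\applysubst{\sigma}{a}\mdefeq\alpha_a$ and $\applysubst{\sigma}{\contextapp{C}{\uscarg}}\mdefeq\psi_C$ (treating the arity 0 case per \rref{rem:usall}), and leaving $\sigma$ the identity on all other symbols. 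The claim will be that $\applyusubst{\sigma}{\varphi}$ equals the desired schema instance.

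The crucial observation is that by \rref{def:usubst-admissible}, $\freevarsdef{\sigma} = \cupfold_{f\in\replacees{\sigma}} \freevarsdef{\applysubst{\sigma}{f(\usarg)}} \cup \cupfold_{p\in\replacees{\sigma}} \freevarsdef{\applysubst{\sigma}{p(\usarg)}}$ is taken only over function and predicate symbols, not over program constants nor quantifier symbols. Since $\varphi$ contains no function or predicate symbols by hypothesis, $\replacees{\sigma}$ can be taken to contain only program constants and quantifier symbols from $\intsigns{\varphi}$, so $\freevarsdef{\sigma}=\emptyset$. This is precisely the property advertised in \rref{rem:usall}. Consequently, for every $U\subseteq\allvars$ the $U$-admissibility condition $\freevarsdef{\restrict{\sigma}{\intsigns{\phi}}}\cap U=\emptyset$ is vacuously true for every subexpression $\phi$ of $\varphi$.

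With admissibility settled, the proof of $\applyusubst{\sigma}{\varphi} = (\text{desired instance})$ follows by a straightforward simultaneous structural induction on $\varphi$ using \rref{fig:usubst}: no clause in that figure is blocked, because each admissibility proviso (for $\forall x$, $\exists x$, differential equations, sequential composition, repetition, contexts $\contextapp{C}{\cdot}$, and differentials) reduces to $\freevarsdef{\restrict{\sigma}{\intsigns{\cdot}}}\cap U=\emptyset$, which is trivially $\emptyset\cap U=\emptyset$. The atomic cases apply the defining equations $\applysubst{\sigma}{a}=\alpha_a$ and $\applysubst{\sigma}{\contextapp{C}{\uscarg}}=\psi_C$ directly, matching exactly what the schema instance prescribes.

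For the second statement, given an axiomatic rule whose premises and conclusion are all surjective \dL formulas, any proof rule schema instance is obtained by uniformly replacing the shared program constants and quantifier symbols simultaneously in all premises and the conclusion. The same $\sigma$ built as above produces this instance in each slot, and since $\freevarsdef{\sigma}=\emptyset$, \rref{thm:usubst-rule} (\irref{USR}) applies, yielding the desired rule instance. The main subtlety to keep track of is merely bookkeeping: one must use a single global $\sigma$ across premises and conclusion so that the same program constant gets the same HP in both, and one should remember that arity 0 quantifier symbol substitutions are handled through the $\uscarg$-placeholder mechanism of \rref{rem:usall} to ensure no spurious free variables are attributed to $\sigma$; beyond that, no real obstacle arises because the hypothesis on $\varphi$ was designed precisely to dodge every admissibility condition.
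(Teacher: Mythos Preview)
Your proposal is correct and follows essentially the same approach as the paper: both hinge on the observation that a substitution replacing only program constants and arity-0 quantifier symbols has $\freevarsdef{\sigma}=\emptyset$, so every admissibility side-condition in \rref{fig:usubst} is vacuously satisfied, and a structural induction then confirms the result equals the desired instance. The only cosmetic difference is that you define $\sigma$ globally upfront, whereas the paper constructs it bottom-up by unioning the substitutions obtained for subexpressions (its ``uniform replacement argument''); your presentation is slightly more direct but the content is the same.
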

\begin{proofatend}
Let $\tilde{\varphi}$ be the desired instance of the axiom schema belonging to $\varphi$, that is, let $\tilde{\varphi}$ be obtained from $\varphi$ by uniformly replacing each quantifier symbol $\contextapp{C}{}$ by some formula, na\"ively but consistently (same replacement for $\contextapp{C}{}$ in all places) and accordingly for program constants $a$.
The proof follows a structural induction on $\varphi$ to show that there is a uniform substitution $\sigma$ with $\freevars{\sigma}=\emptyset$ such that \(\applyusubst{\sigma}{\varphi}=\tilde{\varphi}\).
The proof for formulas is by a mostly straightforward simultaneous induction with programs:
\begin{enumerate}
\item Consider quantifier symbol \(\contextapp{C}{}\) of arity 0 and let $\tilde{\varphi}$ be the desired instance.
Define $\sigma=\usubstlist{\usubstmod{\contextapp{C}{}}{\tilde{\varphi}}}$, which has \(\freevars{\sigma}=\emptyset\), because it only substitutes quantifier symbols.
Then
\(\applyusubst{\sigma}{\contextapp{C}{}}
\mequiv \applysubst{\sigma}{\contextapp{C}{}}
\mequiv \tilde{\varphi}\).
The substitution is admissible for all arguments, since there are none.

\item \label{case:surjective-and}
Consider \(\phi\land\psi\) and let \(\tilde{\phi}\land\tilde{\psi}\) be the desired instance (which has to have this shape to qualify as a schema instance).
By induction hypothesis, there are uniform substitutions $\sigma,\tau$ with $\freevars{\sigma}=\freevars{\tau}=\emptyset$ such that
\(\applyusubst{\sigma}{\phi}=\tilde{\phi}\) and
\(\applyusubst{\sigma}{\psi}=\tilde{\psi}\).
Then the union \(\usubstjoin{\sigma}{\tau}\) of uniform substitutions $\sigma$ and $\tau$ is defined, because for all symbols $a$ of any syntactic category: if $a\in\replacees{\sigma}$ and $a\in\replacees{\tau}$, then \(\applysubst{\sigma}{a}=\applysubst{\tau}{a}\) since all replacements are uniform, so the same replacement is used everywhere in $\phi\land\psi$ for the same symbol~$a$.
Consequently,
\(\applyusubst{(\usubstjoin{\sigma}{\tau})}{\phi}=\applyusubst{\sigma}{\phi}=\tilde{\phi}\) and
\(\applyusubst{(\usubstjoin{\sigma}{\tau})}{\psi}=\applyusubst{\tau}{\psi}=\tilde{\psi}\), because all symbols that are replaced are replaced uniformly everywhere so either do not occur in $\phi$ or are already handled by $\sigma$ in the same way (and likewise either do not occur in $\psi$ or are already handled by $\tau$).
Finally, \(\freevars{\usubstjoin{\sigma}{\tau}}=\freevars{\sigma}\cup\freevars{\tau}=\emptyset\).

\item \rref{case:surjective-and} generalizes to a general \emph{uniform replacement argument}: the induction hypothesis and uniform replacement assumptions imply for each subexpression \(\theta \circ \eta\) of $\varphi$ with any operator $\circ$ that the corresponding desired instance has to have the same shape \(\tilde{\theta} \circ \tilde{\eta}\) and that there are uniform substitutions $\sigma,\tau$ with \(\freevars{\sigma}=\freevars{\tau}=\emptyset\) such that their union
\(\usubstjoin{\sigma}{\tau}\) is defined and 
\(\applyusubst{(\usubstjoin{\sigma}{\tau})}{\theta \circ \eta}
= \applyusubst{\sigma}{\theta} \circ \applyusubst{\tau}{\eta}=\tilde{\theta} \circ \tilde{\eta}\)
and \(\freevars{\usubstjoin{\sigma}{\tau}}=\freevars{\sigma}\cup\freevars{\tau}=\emptyset\).

This shows the cases $\phi\lor\psi$, $\phi\limply\psi$, $\phi\lbisubjunct\psi$ and, after a moment's thought, also $\lnot\phi$.

\item Consider \(\lforall{x}{\phi}\) with desired instance \(\lforall{x}{\tilde{\phi}}\), which has to have this shape.
By induction hypothesis, there is a uniform substitution $\sigma$ with $\freevars{\sigma}=\emptyset$ such that \(\applyusubst{\sigma}{\phi}=\tilde{\phi}\).
Thus, \(\applyusubst{\sigma}{\lforall{x}{\phi}} = \lforall{x}{\applyusubst{\sigma}{\phi}} = \lforall{x}{\tilde{\phi}}\), which is $\{x\}$-admissible because $\freevars{\sigma}=\emptyset$.

\item The case \(\lexists{x}{\phi}\) is accordingly.

\item Consider \(\dbox{\alpha}{\phi}\) with desired instance \(\dbox{\tilde{\alpha}}{\tilde{\phi}}\).
By induction hypothesis and the uniform replacement argument, there are uniform substitutions $\sigma,\tau$ such that
\(\applyusubst{(\usubstjoin{\sigma}{\tau})}{\dbox{\alpha}{\phi}}
= \dbox{\applyusubst{\sigma}{\alpha}}{\applyusubst{\tau}{\phi}}
= \dbox{\tilde{\alpha}}{\tilde{\phi}}\)
which is admissible, because $\usubstjoin{\sigma}{\tau}$ is $\boundvars{\applyusubst{(\usubstjoin{\sigma}{\tau})}{\alpha}}$-admissible for $\dbox{\alpha}{\phi}$ since \(\freevars{\usubstjoin{\sigma}{\tau}}=\emptyset\).

\item The case \(\ddiamond{\alpha}{\phi}\) is accordingly.
\end{enumerate}

\noindent
The proof for hybrid programs is by simultaneous induction with formulas, where most cases are in analogy to the previous cases, except:
\begin{enumerate}
\item Consider program constant $a$ with desired instance $\tilde{a}$.
Then \(\sigma=\usubstlist{\usubstmod{a}{\tilde{a}}}\) has \(\freevars{\sigma}=\emptyset\) and satisfies \(\applyusubst{\sigma}{a}=\applysubst{\sigma}{a}=\tilde{a}\).

\item Consider the case \(\pevolvein{\D{x}=\theta}{\ivr}\) with desired instance \(\pevolvein{\D{x}=\tilde{\theta}}{\tilde{\ivr}}\), which has to have this shape.
By induction hypothesis and uniform replacement argument, there are uniform substitutions $\sigma,\tau$ such that
\(\applyusubst{(\usubstjoin{\sigma}{\tau})}{\pevolvein{\D{x}=\theta}{\ivr}}
\mequiv \pevolvein{\D{x}=\applyusubst{\sigma}{\theta}}{\applyusubst{\tau}{\ivr}}
\mequiv \pevolvein{\D{x}=\tilde{\theta}}{\tilde{\ivr}}\).
Admissibility follows from \(\freevars{\usubstjoin{\sigma}{\tau}}=\emptyset\).

\item Consider the case \(\prepeat{\alpha}\) with desired instance \(\prepeat{(\tilde{\alpha})}\), which has to have this shape.
By induction hypothesis, there is a uniform substitution $\sigma$ such that
\(\applyusubst{\sigma}{\alpha}
\mequiv \tilde{\alpha}\)
and \(\freevars{\sigma}=\emptyset\).
Then
\(\applyusubst{\sigma}{\prepeat{\alpha}}
\mequiv \prepeat{(\applyusubst{\sigma}{\alpha})}
\mequiv \prepeat{(\tilde{\alpha})}\),
which is $\boundvars{\applyusubst{\sigma}{\alpha}}$-admissible since \(\freevars{\sigma}=\emptyset\).

\item The case \(\alpha; \beta\) is similar and case \(\pchoice{\alpha}{\beta}\) follows directly from the uniform replacement argument.
\end{enumerate}

\noindent
The corresponding result for axiomatic rules built from surjective \dL formulas follows since surjective \dL formulas can be instantiated by rule \irref{US} to any instance, which, thus, continues to hold for the premises and conclusions in rule \irref{USR}.
\qedhere
\end{proofatend}
\noindent
\rref{lem:surjectiveaxiom} generalizes to quantifier symbols with arguments that have no function or predicate symbols, since those are always $\allvars$-admissible.
Generalizations to function and predicate symbol instances are possible with adequate care.
The axiom \irref{testb} is surjective, because it does not have any bound variables, so admissibility of its instances is obvious.
Similarly rules \irref{MP} and, with the twist from the proof of \rref{thm:dL-sound}, rule \irref{gena} become surjective.
Axioms \irref{allinst+alldist+vacuousall} can be augmented for surjectivity in similar ways,
where \irref{vacuousall} is surjective when $p$ is instantiated such that $x$ does not occur free, which is a soundness-critical restriction,
and \irref{allinst} is instantiated respecting its shape.

A previous schematic completeness result \cite{DBLP:journals/tocl/Platzer15} shows completeness relative to any differentially expressive\footnote{%
A logic \LBase closed under first-order connectives
is \emph{differentially expressive} (for \dL) if every \dL formula $\phi$ has an equivalent $\reduct{\phi}$ in \LBase and all equivalences of the form \(\ddiamond{\pevolve{\D{x}=\genDE{x}}}{G} \lbisubjunct \reduct{(\ddiamond{\pevolve{\D{x}=\genDE{x}}}{G})}\) for $G$ in \LBase are provable in its calculus.
} logic.
\rref{lem:surjectiveaxiom} makes it easy to augment this proof to show that the schema instantiations required for completeness are provable by \irref{US+USR} from axioms or axiomatic rules.
Both the first-order logic of differential equations \cite{DBLP:journals/jar/Platzer08} and discrete dynamic logic \cite{DBLP:conf/lics/Platzer12b} are differentially expressive for \dL.

\begin{theorem}[Relative completeness] \label{thm:dL-complete}%
  The \dL calculus is a \emph{sound and complete axiomatization} of hybrid systems relative to \emph{any} differentially expressive logic \LBase, i.e.\
  every valid \dL formula is provable in the \dL calculus from \LBase tautologies.
\end{theorem}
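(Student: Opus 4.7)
The plan is to reduce to the schematic completeness theorem for \dL (or its generalization for differential game logic cited in the excerpt), and verify that every schematic instantiation invoked in that proof can be executed in the uniform substitution calculus via rules \irref{US} and \irref{USR} starting from the concrete axioms of \rref{fig:dL} and \rref{fig:dL-ODE}. Fix a differentially expressive logic \LBase and a valid \dL formula $\phi$. The prior schematic completeness result produces a derivation of $\phi$ from \LBase-tautologies using the schematic versions of the axioms and proof rules. It therefore suffices to show that each schema instance used in that derivation arises as an image under some uniform substitution applied to the corresponding concrete axiom or axiomatic rule.

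First I would handle the purely propositional/modal skeleton: the axioms \irref{diamond}, \irref{testb}, \irref{choiceb}, \irref{composeb}, \irref{iterateb}, \irref{K}, \irref{I}, \irref{V} and the axiomatic rules \irref{G}, \irref{gena} (augmented to use $p(\usall)$ as in the proof of \rref{thm:dL-sound}), \irref{MP}, \irref{CQ}, \irref{CE}. These axioms are built only from quantifier symbols of arity $0$ (the $p(\usall)$ entries, viewed as quantifier symbols per \rref{rem:usall}) and program constants, with no function or predicate symbols constraining arguments. \rref{lem:surjectiveaxiom} then gives surjectivity directly, so every required schema instance follows by \irref{US} (and by \irref{USR} using $\freevarsdef{\sigma}=\emptyset$ for the axiomatic rules) from the concrete axiom.

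Second, I would treat the differential equation axioms \irref{DW}, \irref{DC}, \irref{DE}, \irref{DI}, \irref{DG}, \irref{DS} together with \irref{assignb}, \irref{Dplus}, \irref{Dtimes}, \irref{Dcompose}, which contain function and predicate symbols such as $f(x)$, $p(x)$, $q(x)$ with the bound variable as argument. The deliberate choice to expose $x$ as an argument means that any substitution instance in which $x$ (or $\D{x}$, $y$, $t$) must appear free is perfectly matched by placing that variable in the $\usarg$ position, and the classical schematic side conditions (freshness of $y$ in \irref{DG}, freshness of $t$ and $x\not\in f$ in \irref{DS}, $x\not\in\alpha$ in the Barcan direction, etc.) translate verbatim into the uniform substitution admissibility conditions of \rref{def:usubst-admissible}. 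I would verify, case by case for these axioms, that every instance permitted by the schematic side conditions is realized by an admissible uniform substitution; this is the uniform-replacement argument of \rref{lem:surjectiveaxiom} extended pointwise through the occurrences of $f(\usarg)$ and $p(\usarg)$.

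The main obstacle is precisely this case-by-case correspondence between schematic side conditions and uniform-substitution admissibility, especially where multiple quantified or bound-variable contexts nest (e.g.\ \irref{DG} with its $\exists y$ wrapping a differential equation that binds $x,\D{x}$ and the extra equation binding $y,\D{y}$). I expect the correspondence to work uniformly because \rref{def:freevars} and \rref{def:boundvars} have been calibrated to track exactly the occurrences that the schematic side conditions were designed to forbid, and the Barcan schema \irref{B} is not required by the underlying schematic completeness argument (as noted in the excerpt following \rref{eq:unsound-B-attempt}). Once every schematic step is replayed as a \irref{US}/\irref{USR} step, the schematic derivation lifts to a \dL derivation in the uniform substitution calculus, yielding the desired relative completeness.
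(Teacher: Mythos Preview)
Your overall plan---lift the prior schematic completeness proof and check that each schematic step is realizable by \irref{US}/\irref{USR}---is exactly the paper's strategy, and your treatment of the modal skeleton via \rref{lem:surjectiveaxiom} is right. But your second paragraph misidentifies where the work lies.

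The differential-equation axioms \irref{DW}, \irref{DC}, \irref{DE}, \irref{DI}, \irref{DG}, \irref{DS} play no role in the completeness argument. The prior schematic completeness result you are invoking handles the cases $\dbox{\pevolvein{\D{x}=\theta}{\psi}}{G}$ and $\ddiamond{\pevolvein{\D{x}=\theta}{\psi}}{G}$ entirely through the \emph{differential expressiveness} hypothesis on \LBase: one replaces $G$ by its \LBase-equivalent $\reduct{G}$, uses that \(\ddiamond{\pevolve{\D{x}=\theta}}{\reduct{G}} \lbisubjunct \reduct{(\ddiamond{\pevolve{\D{x}=\theta}}{\reduct{G}})}\) is provable by assumption, and then proceeds by monotonicity. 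No instantiation of \rref{fig:dL-ODE} axioms is required, so the case analysis you outline for them is unnecessary (and, since those axioms are new to this paper, they are not even present in the schematic proof you are replaying).

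The genuine subtlety you are glossing over is \irref{assignb}. It is \emph{not} covered by \rref{lem:surjectiveaxiom} because it contains the nullary function symbol $f$ and the predicate $p(\usarg)$ in a context that binds $x$; a substitution putting $x$ into $f$ clashes. The paper resolves this by deriving an equational form \(\dbox{\pupdate{\umod{x}{f}}}{p(\usall)} \lbisubjunct \lforall{x}{(x=f\limply p(\usall))}\) from \irref{assignb}, and then combining it with bound-variable renaming to a fresh $x$ so that the required instance of $f$ never mentions the bound variable. That maneuver, not the differential axioms, is what your instantiability argument needs to supply.
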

\begin{proofatend}
\let\Oracle\LBase%
This proof refines the completeness proof for the axiom schemata of differential game logic \cite{DBLP:journals/tocl/Platzer15} with explicit proofs of instantiability by \irref{US} and \irref{USR}.
Write \m{\infers[\Oracle] \phi} to indicate that \dL formula $\phi$ can be derived in the \dL proof calculus from valid \LBase formulas.
Soundness follows from \rref{thm:dL-sound}, so it remains to prove completeness.
  For every valid \dL formula $\phi$ it has to be proved that $\phi$
  can be derived from valid \LBase tautologies within the \dL calculus:
  from \m{\entails\phi} prove \m{\infers[\Oracle] \phi}.
  The proof proceeds as follows: By propositional
  recombination, inductively identify fragments of $\phi$ that correspond to
  \m{\phi_1 \limply \ddiamond{\alpha}{\phi_2}}
  or
  \m{\phi_1 \limply \dbox{\alpha}{\phi_2}} logically.
  Find structurally simpler formulas from which these properties can be derived in the \dL calculus by uniform substitution instantiations, taking care that the resulting formulas are simpler than the original one in a well-founded order.
  Finally, prove that the original \dL formula can be re-derived from the subproofs in the \dL calculus by uniform substitution instantiations.

  The first insight is that, with the rules \irref{MP} and \irref{gena} and (by \rref{lem:surjectiveaxiom}, all) relevant instances of \irref{allinst+alldist+vacuousall} and real arithmetic, the \dL calculus contains a complete axiomatization of first-order logic.
  Thus, all first-order logic tautologies can be used without further notice in the remainder of the proof.
  Furthermore, by \rref{lem:surjectiveaxiom}, all instances of \irref{diamond+choiceb+composeb+iterateb+K+I} can be proved by rule \irref{US} in the \dL calculus.

  By appropriate propositional derivations, assume $\phi$ to be given in conjunctive normal form.
  Assume that negations are pushed inside over modalities using the dualities
  \m{\lnot\dbox{\alpha}{\phi} \mequiv \ddiamond{\alpha}{\lnot\phi}}
  and
  \m{\lnot\ddiamond{\alpha}{\phi} \mequiv \dbox{\alpha}{\lnot\phi}} that are provable by axiom \irref{diamond},
  and that negations are pushed inside over quantifiers using definitorial first-order equivalences
  \m{\lnot\lforall{x}{\phi} \mequiv \lexists{x}{\lnot\phi}}
  and
  \m{\lnot\lexists{x}{\phi} \mequiv \lforall{x}{\lnot\phi}}.
  The remainder of the proof follows an induction on a well-founded partial order~$\prec$ from previous work \cite{DBLP:journals/tocl/Platzer15} induced on \dL formulas by the lexicographic ordering of the overall structural complexity of the hybrid programs in the formula and the structural complexity of the formula itself, with the logic \LBase placed at the bottom of the partial order~$\prec$.
  The base logic \LBase is considered of lowest complexity by relativity, because \m{\entails F} immediately implies \m{\infers[\Oracle] F} for all formulas $F$ of \LBase.
  The monotonicity rules derive from \irref{G+K+diamond} by \rref{lem:surjectiveaxiom} with a classical argument:
  \[
      \cinferenceRule[M|M]{$\ddiamond{}{}$ monotone / $\ddiamond{}{}$-generalization} %
      {\linferenceRule[formula]
        {p(\usall)\limply q(\usall)}
        {\ddiamond{a}{p(\usall)}\limply\ddiamond{a}{q(\usall)}}
      }{}
      \qquad
      \cinferenceRule[Mbox|M$_{\dibox{\cdot}}$]{$\ddiamond{}{}$ monotone / $\ddiamond{}{}$-generalization} %
      {\linferenceRule[formula]
        {p(\usall)\limply q(\usall)}
        {\dbox{a}{p(\usall)}\limply\dbox{a}{q(\usall)}}
      }{}
  \]
  The proof follows the syntactic structure of \dL formulas.
  \begin{enumerate}
  \addtocounter{enumi}{-1}%
  \item \label{case:dGL-complete-0}
    If $\phi$ has no hybrid programs, then~$\phi$ is a first-order formula; hence provable by assumption (even decidable if in first-order real arithmetic \cite{tarski_decisionalgebra51}, i.e.\ no uninterpreted symbols occur).
  
  \item $\phi$ is of the form \m{\lnot\phi_1}; then~$\phi_1$ is first-order and quantifier-free, as negations are assumed to be pushed inside, so \rref{case:dGL-complete-0} applies.
  
  \item $\phi$ is of the form \m{\phi_1 \land \phi_2}, then \m{\entails\phi_1} and \m{\entails\phi_2}, so individually deduce simpler proofs for
   \m{\infers[\Oracle] \phi_1} and \m{\infers[\Oracle] \phi_2}
    by IH, which combine propositionally
    to a proof for \m{\infers[\Oracle] \phi_1\land\phi_2} using \irref{MP} twice with the propositional tautology \m{\phi_1 \limply (\phi_2 \limply \phi_1\land\phi_2)}.
  
  \item The case where $\phi$ is of the form \m{\lexists{x}{\phi_2}}, \m{\lforall{x}{\phi_2}}, \m{\ddiamond{\alpha}{\phi_2}} or \m{\dbox{\alpha}{\phi_2}} is included in \rref{case:dGL-complete-or} with \(\phi_1\mequiv\lfalse\).

  \item \label{case:dGL-complete-or}
    \newcommand{\precondf}{F}%
    \newcommand{\postcondf}{G}%
    $\phi$ is a disjunction and---without loss of generality---has one of the following forms
    (otherwise use provable associativity and commutativity to reorder):
    \[
    \begin{array}{r@{~}c@{~}l}
      \phi_1 &\lor& \ddiamond{\alpha}{\phi_2}\\
      \phi_1 &\lor& \dbox{\alpha}{\phi_2}\\
      \phi_1 &\lor& \lexists{x}{\phi_2}\\
      \phi_1 &\lor& \lforall{x}{\phi_2}
      .
    \end{array}
    \]
    Let \m{\phi_1 \lor \dmodality{\alpha}{\phi_2}} be a unified notation for those cases.
    Then, \m{\phi_2\prec\phi},
    since~\(\phi_2\) has less modalities or quantifiers.
    Likewise, \m{\phi_1\prec\phi} because
    \m{\dmodality{\alpha}{\phi_2}} contributes one modality or quantifier to
    $\phi$ that is not part of $\phi_1$.
    When abbreviating the simpler formulas $\lnot\phi_1$ by $\precondf$ and $\phi_2$ by $\postcondf$, the validity \m{\entails \phi} yields \m{\entails \lnot \precondf \lor \dmodality{\alpha}{\postcondf}}, so \m{\entails \precondf \limply \dmodality{\alpha}{\postcondf}}, from which the remainder of the proof inductively derives
    \begin{equation} \label{eq:derivable-triple}
      \infers[\Oracle]{\precondf \limply \dmodality{\alpha}{\postcondf}}
      .
    \end{equation}
    
    The proof of \rref{eq:derivable-triple} is by structural induction on $\dmodality{\alpha}{}$.
    This proof focuses on the quantifier and $\dbox{}{}$ cases, because most $\ddiamond{}{}$ cases derive by axiom \irref{diamond} with \rref{lem:surjectiveaxiom} from the $\dbox{}{}$ equivalences.

  \begin{enumerate}
  \item \label{case:complete-forall}
  If $\dmodality{\alpha}{}$ is the operator $\lforall{x}{}$ then \m{\entails \precondf \limply \lforall{x}{\postcondf}}, where $x$ can be assumed not to occur in $\precondf$ by a bound variable renaming.
  Hence, \m{\entails \precondf \limply \postcondf}.
  Since \(\postcondf \prec \lforall{x}{\postcondf}\), 
  because it has less quantifiers, also \((\precondf\limply \postcondf) \prec (\precondf\limply\lforall{x}{\postcondf})\),
  hence \(\infers[\Oracle] \precondf \limply \postcondf\) is derivable by IH.
  Then, \(\infers[\Oracle] \precondf \limply \lforall{x}{\postcondf}\) derives with \rref{lem:surjectiveaxiom} by generalization rule \irref{gena}, since $x$ does not occur in $\precondf$:
  \begin{sequentdeduction}[Hilbert+array]
    \linfer[vacuousall]
      {\linfer[alldist]
        {\linfer[gena]
          {\lsequent{}{\precondf \limply \postcondf}}
          {\lsequent{}{\lforall{x}{(\precondf \limply \postcondf)}}}
        }%
        {\lsequent{}{\lforall{x}{\precondf} \limply \lforall{x}{\postcondf}}}
      }%
      {\lsequent{}{\precondf \limply \lforall{x}{\postcondf}}}
  \end{sequentdeduction}
  The instantiations succeed by the remark after \rref{lem:surjectiveaxiom} using for \irref{vacuousall} that $x\not\in\vars{F}$.
  The formula \(\precondf \limply \lforall{x}{\postcondf}\) is even decidable if in first-order real arithmetic \cite{tarski_decisionalgebra51}.
  The remainder of the proof concludes
  \((\precondf\limply \psi) \prec (\precondf\limply \phi)\) from \(\psi \prec \phi\) without further notice.
  The operator $\lforall{y}{}$ can be obtained correspondingly by uniform renaming.

  \item If $\dmodality{\alpha}{}$ is the operator $\lexists{x}{}$ then \m{\entails \precondf \limply \lexists{x}{\postcondf}}.
  If $\precondf$ and $\postcondf$ are \LBase formulas, then, since \LBase is closed under first-order connectives, so is the valid formula \m{\precondf \limply \lexists{x}{\postcondf}}, which is, then, provable by IH and even decidable if in first-order real arithmetic \cite{tarski_decisionalgebra51}.
  
  Otherwise, $\precondf,\postcondf$ correspond to \LBase formulas by expressiveness of \LBase, which implies the existence of an \LBase formula $\reduct{\postcondf}$ such that
  \(\entails \reduct{\postcondf} \lbisubjunct \postcondf\).
  Since \LBase is closed under first-order connectives \cite{DBLP:journals/tocl/Platzer15}, the valid formula \m{\precondf \limply \lexists{x}{(\reduct{\postcondf})}} is provable by IH, because
  \((\precondf \limply \lexists{x}{(\reduct{\postcondf})}) \prec (\precondf \limply \lexists{x}{\postcondf})\) since $\reduct{\postcondf}\in\LBase$ while $\postcondf\not\in\LBase$.
  Now, \(\entails \reduct{\postcondf} \lbisubjunct \postcondf\)
  implies \(\entails \reduct{\postcondf} \limply \postcondf\), which is derivable by IH, because
  \((\reduct{\postcondf} \limply \postcondf) \prec \phi\) since $\reduct{\postcondf}$ is in \LBase.
  From \(\infers[\Oracle] \reduct{\postcondf} \limply \postcondf\), the derivable dual of axiom \irref{alldist}, 
  (\(\lforall{x}{(p(x)\limply q(x))} \limply (\lexists{x}{p(x)} \limply \lexists{x}{q(x)})\)), derives
  \(\infers[\Oracle] \lexists{x}{(\reduct{\postcondf})} \limply \lexists{x}{\postcondf}\),
  which combines with
  \(\infers[\Oracle] \precondf \limply \lexists{x}{(\reduct{\postcondf})}\)
  essentially by rule \irref{MP} to
  \(\infers[\Oracle] \precondf \limply \lexists{x}{\postcondf}\).
  \begin{sequentdeduction}[Hilbert+array]
  \linfer[MP]%
    {\lsequent{\precondf} {\lexists{x}{(\reduct{\postcondf})}}
    !\linfer[alldist]%
      {\linfer[gena]
        {\lsequent{} {\reduct{\postcondf} \limply \postcondf}}
        {\lsequent{} {\lforall{x}{(\reduct{\postcondf} \limply \postcondf)}}}
      }%
      {\lsequent{} {\lexists{x}{(\reduct{\postcondf})} \limply \lexists{x}{\postcondf}}}
    }
    {\lsequent{\precondf} {\lexists{x}{\postcondf}}}
  \end{sequentdeduction}
  The instantiations succeed by \rref{lem:surjectiveaxiom} and its subsequent remark.

  \item \m{\entails \precondf \limply \ddiamond{\pevolve{\D{x}=\genDE{x}}}{\postcondf}}
  implies
  \m{\entails \precondf \limply \reduct{(\ddiamond{\pevolve{\D{x}=\genDE{x}}}{\reduct{\postcondf}})}},
  which is derivable by IH,
  as \((\precondf \limply\reduct{(\ddiamond{\pevolve{\D{x}=\genDE{x}}}{\reduct{\postcondf}})})\) \(\prec \phi\)
  since \(\reduct{(\ddiamond{\pevolve{\D{x}=\genDE{x}}}{\reduct{\postcondf}})}\) is in \LBase.
  \LBase is differentially expressive, so
  \(\infers[\Oracle] \ddiamond{\pevolve{\D{x}=\genDE{x}}}{\reduct{\postcondf}} \lbisubjunct \reduct{(\ddiamond{\pevolve{\D{x}=\genDE{x}}}{\reduct{\postcondf}})}\)
  is provable.
  Hence
  \m{\infers[\Oracle] \precondf \limply \ddiamond{\pevolve{\D{x}=\genDE{x}}}{\reduct{\postcondf}}} derives by propositional congruence.
  Now \(\reduct{\postcondf}\limply\postcondf\) is simpler (since $\reduct{\postcondf}$ is in $\LBase$) so derivable by IH,
  so \(\ddiamond{\pevolve{\D{x}=\genDE{x}}}{\reduct{\postcondf}}\limply\ddiamond{\pevolve{\D{x}=\genDE{x}}}{\postcondf}\)
  derives by \irref{M}.
  Together, both derive
  \m{\infers[\Oracle] \precondf \limply \ddiamond{\pevolve{\D{x}=\genDE{x}}}{\postcondf}} propositionally.
  
  \item \m{\entails \precondf \limply \dbox{\pevolve{\D{x}=\genDE{x}}}{\postcondf}}
  implies
  \m{\entails \precondf \limply \lnot\ddiamond{\pevolve{\D{x}=\genDE{x}}}{\lnot \postcondf}}.
  Thus,
  \m{\entails \precondf \limply \lnot\reduct{(\ddiamond{\pevolve{\D{x}=\genDE{x}}}{\lnot \reduct{\postcondf}})}},
  which is derivable by IH,
  because \((\precondf \limply\lnot\reduct{(\ddiamond{\pevolve{\D{x}=\genDE{x}}}{\reduct{\lnot \postcondf}})}) \prec \phi\)
  as \(\reduct{(\ddiamond{\pevolve{\D{x}=\genDE{x}}}{\reduct{\lnot \postcondf}})}\) is in \LBase.
  Since \LBase is differentially expressive,
  \(\infers[\Oracle] \ddiamond{\pevolve{\D{x}=\genDE{x}}}{\reduct{\lnot \postcondf}} \lbisubjunct \reduct{(\ddiamond{\pevolve{\D{x}=\genDE{x}}}{\lnot \reduct{\postcondf}})}\)
  is provable, so
  \(\infers[\Oracle] \precondf \limply \lnot\ddiamond{\pevolve{\D{x}=\genDE{x}}}{\lnot \reduct{\postcondf}}\)
  derives from \(\infers[\Oracle] \precondf \limply \lnot\reduct{(\ddiamond{\pevolve{\D{x}=\genDE{x}}}{\reduct{\lnot \postcondf}})}\) by propositional congruence.
  Axiom \irref{diamond}, thus, derives
  \(\infers[\Oracle] \precondf \limply \dbox{\pevolve{\D{x}=\genDE{x}}}{\reduct{\postcondf}}\)
  with \rref{lem:surjectiveaxiom}.
  Now \(\reduct{\postcondf}\limply\postcondf\) is simpler (as $\reduct{\postcondf}$ is in $\LBase$) so derivable by IH,
  so \(\dbox{\pevolve{\D{x}=\genDE{x}}}{\reduct{\postcondf}}\limply\dbox{\pevolve{\D{x}=\genDE{x}}}{\postcondf}\)
  derives by \irref{M}.
  Together, both derive
  \m{\infers[\Oracle] \precondf \limply \dbox{\pevolve{\D{x}=\genDE{x}}}{\postcondf}} propositionally.

\item \m{\entails \precondf \limply \dbox{\pevolvein{\D{x}=\genDE{x}}{\ivr}}{\postcondf}},
then this formula has an equivalent \cite[Lemma\,3.4]{DBLP:journals/tocl/Platzer15} without evolution domains which can be used as a definitorial abbreviation to conclude this case.
  Similarly for \m{\entails \precondf \limply \ddiamond{\pevolvein{\D{x}=\genDE{x}}{\ivr}}{\postcondf}}.

  \item \m{\entails \precondf \limply \dbox{\pupdate{\pumod{y}{\theta}}}{\postcondf}} then this formula can be proved, using a fresh variable $x\not\in\vars{\theta}\cup\vars{\postcondf}$, with the following derivation by bound variable renaming (rule \irref{Brename}), in which $\urename[\postcondf]{y}{x}$ is the result of uniformly renaming $y$ to $x$ in $\postcondf$

\irlabel{Brename|BR}%
  \begin{sequentdeduction}[array]
    \linfer[Brename] %
    {\linfer[assignbeq] %
      {\lsequent{\precondf} {\lforall{x}{(x=\theta\limply\urename[\postcondf]{y}{x})}}}
    {\lsequent{\precondf} {\dbox{\pupdate{\pumod{x}{\theta}}}{\urename[\postcondf]{y}{x}}}}
    }%
    {\lsequent{\precondf} {\dbox{\pupdate{\pumod{y}{\theta}}}{\postcondf}}}
  \end{sequentdeduction}
  using the following equational form of the assignment axiom \irref{assignb}
\[
      \dinferenceRule[assignbeq|$\dibox{:=}_{=}$]{assignment / equational axiom}
      {\linferenceRule[equiv]
        {\lforall{x}{(x=f \limply p(\usall))}}
        {\dbox{\pupdate{\umod{x}{f}}}{p(\usall)}}
      }
      {}%
\]
  The above proof only used equivalence transformations, so its premise is valid iff its conclusion is, which it is by assumption.
  The assumption, thus, implies
  \(\entails \precondf \limply \lforall{x}{(x=\theta\limply\urename[\postcondf]{y}{x})}\).
  Since \((\precondf \limply \lforall{x}{(x=\theta\limply\urename[\postcondf]{y}{x})}) \prec (\precondf \limply \dbox{\pupdate{\pumod{y}{\theta}}}{\postcondf})\),
  because there are less hybrid programs,
  \(\infers[\Oracle] \precondf \limply \lforall{x}{(x=\theta\limply\urename[\postcondf]{y}{x})}\)
  by IH. %
  The above proof, thus, derives \(\infers[\Oracle] \precondf \limply \dbox{\pupdate{\pumod{y}{\theta}}}{\postcondf}\).
  
  The equational assignment axiom \irref{assignbeq} above can either be adopted as an axiom in place of \irref{assignb}.
  Or it can be derived from axiom \irref{assignb} with the uniform substitution 
  \(\sigma = \usubstlist{\usubstmod{q(\usarg)}{p(\usarg,X)}}\)
  when splitting the variables $\usall$ into the variable $x$ and the other variables $X$ such that $x\not\in X$:
  \newcommand*{\ucons}[2]{#1,#2}%
\irlabel{FOL|FOL}%
  \begin{sequentdeduction}[array]
  \linfer%
  {\linfer[US]
    {\linfer[assignb]
      {\linfer[FOL] %
        {\lclose}
        {\lseqalign{q(f)} {\lbisubjunct \lforall{x}{(x=f \limply q(x))}}}
      }%
      {\lseqalign{\dbox{\pupdate{\umod{x}{f}}}{q(x)}} {\lbisubjunct \lforall{x}{(x=f \limply q(x))}}}
    }%
    {\lseqalign{\dbox{\pupdate{\umod{x}{f}}}{p(\ucons{x}{X})}} {\lbisubjunct \lforall{x}{(x=f \limply p(\ucons{x}{X}))}}}
  }%
  {\lseqalign{\dbox{\pupdate{\umod{x}{f}}}{p(\usall)}} {\lbisubjunct \lforall{x}{(x=f \limply p(\usall))}}}
  \end{sequentdeduction}

  It only remains to be shown that \irref{assignbeq} can be instantiated as indicated in the above proof.
  This follows from \rref{lem:surjectiveaxiom} with the additional observation that the required uniform substitution \(\usubstlist{\usubstmod{f}{\theta}}\) of function symbol $f$ of arity 0 without argument $\usall$ will not cause a clash during \irref{US}, because the only bound variable $x$ in \irref{assignbeq} is not free in the substitution since $x\not\in\vars{\theta}$.

  Other proofs involving stuttering and renaming are possible.
  Direct proofs of \m{\precondf \limply \dbox{\pupdate{\pumod{y}{\theta}}}{\postcondf}} by axiom \irref{assignb} are possible if the substitution is admissible.

  \item \m{\entails \precondf \limply \dbox{\ptest{\ivr}}{\postcondf}} implies \m{\entails \precondf \limply (\ivr \limply \postcondf)}.
  Since \((\ivr\limply \postcondf) \prec \dbox{\ptest{\ivr}}{\postcondf}\), because it has less modalities,
  \(\infers[\Oracle] \precondf \limply (\ivr \limply \postcondf)\) is derivable by IH.
  Hence, with the remark after \rref{lem:surjectiveaxiom}, axiom \irref{testb} instantiates to \(\dbox{\ptest{\ivr}}{\postcondf} \lbisubjunct (\ivr \limply \postcondf)\),
  so derives
  \(\infers[\Oracle] \precondf \limply \dbox{\ptest{\ivr}}{\postcondf}\)
  by propositional congruence, which is used without further notice subsequently.
  
  \item \m{\entails \precondf \limply \dbox{\pchoice{\beta}{\gamma}}{\postcondf}} implies \m{\entails \precondf \limply \dbox{\beta}{\postcondf} \land \dbox{\gamma}{\postcondf}}.
  Since \(\dbox{\beta}{\postcondf}\land\dbox{\gamma}{\postcondf} \prec \dbox{\pchoice{\beta}{\gamma}}{\postcondf}\), because, even if the propositional and modal structure increased, the structural complexity of both hybrid programs $\beta$ and $\gamma$ is smaller than that of $\pchoice{\beta}{\gamma}$ (formula $\postcondf$ did not change),
  \(\infers[\Oracle] \precondf \limply \dbox{\beta}{\postcondf} \land \dbox{\gamma}{\postcondf}\) is derivable by IH.
  Hence, with \rref{lem:surjectiveaxiom}, axiom \irref{choiceb} instantiates to \(\dbox{\pchoice{\beta}{\gamma}}{\postcondf} \lbisubjunct \dbox{\beta}{\postcondf} \lor \dbox{\gamma}{\postcondf}\),
  so derives \(\infers[\Oracle] \precondf \limply \dbox{\pchoice{\beta}{\gamma}}{\postcondf}\) by propositional congruence.
  
  \item \m{\entails \precondf \limply \dbox{\beta;\gamma}{\postcondf}}, which implies \m{\entails \precondf \limply \dbox{\beta}{\dbox{\gamma}{\postcondf}}}.
  Since \(\dbox{\beta}{\dbox{\gamma}{\postcondf}} \prec \dbox{\beta;\gamma}{\postcondf}\), because, even if the number of modalities increased, the overall structural complexity of the hybrid programs decreased because there are less sequential compositions, 
  \(\infers[\Oracle] \precondf \limply \dbox{\beta}{\dbox{\gamma}{\postcondf}}\) is derivable by IH.
  Hence, with \rref{lem:surjectiveaxiom}, \(\infers[\Oracle] \precondf \limply \dbox{\beta;\gamma}{\postcondf}\) derives by axiom \irref{composeb} by propositional congruence.

\item \m{\entails \precondf \limply \dbox{\prepeat{\beta}}{\postcondf}}
    can be derived by induction as follows.
    Formula \m{\dbox{\prepeat{\beta}}{\postcondf}}, which expresses that all numbers of repetitions of~$\prepeat{\beta}$ satisfy~$\postcondf$,
    is an inductive invariant of $\prepeat{\beta}$, because \m{\dbox{\prepeat{\beta}}{\postcondf}\limply\dbox{\beta}{\dbox{\prepeat{\beta}}{\postcondf}}} is valid, even provable by \irref{iterateb}.
    Thus, its equivalent \LBase encoding is also an inductive invariant:
    \[
    \inv \mequiv
    \reduct{(\dbox{\prepeat{\beta}}{\postcondf})}
    .
    \]
    Then \m{\precondf \limply \inv} and \m{\inv \limply \postcondf} are
    valid (zero repetitions are possible), so derivable by IH, since \((\precondf\limply \inv) \prec \phi\) and \((\inv\limply \postcondf) \prec \phi\) hold, because $\inv$ is in \LBase.
    As above, \m{\inv \limply \dbox{\beta}{\inv}} is valid, and thus derivable by IH, since~$\beta$ has less loops than $\prepeat{\beta}$.
    By \irref{Mbox} as well as
    rule \irref{invind} (from \(p(\usall)\limply\dbox{a}{p(\usall)}\) conclude \(p(\usall)\limply\dbox{\prepeat{a}}{p(\usall)}\)), which derives from \irref{I+G} by \rref{lem:surjectiveaxiom}, the respective rules can be instantiated by \rref{lem:surjectiveaxiom} and the resulting derivations combine by \irref{MP}:
    \begin{sequentdeduction}[array]
    \linfer[MP]
    {\lsequent{\precondf} {\inv}
    !
    \linfer[MP]
    {%
    \linfer[invind]
     {\lsequent{\inv} {\dbox{\beta}{\inv}}}
     {\lsequent{\inv} {\dbox{\prepeat{\beta}}{\inv}}}
    !
      \linfer[Mbox]
      {\lsequent{\inv} {\postcondf}}
      {\lsequent{\dbox{\prepeat{\beta}}{\inv}} {\dbox{\prepeat{\beta}}{\postcondf}}}
    }
    {\lsequent{\inv} {\dbox{\prepeat{\beta}}{\postcondf}}}
    }
  {\lsequent{\precondf} {\dbox{\prepeat{\beta}}{\postcondf}}}
    \end{sequentdeduction}

\item \m{\entails \precondf \limply \ddiamond{\prepeat{\beta}}{\postcondf}}.
   \def\vec#1{#1}%
    Let $\vec{x}$ be the vector of free variables \m{\freevars{\ddiamond{\prepeat{\beta}}{\postcondf}}}.
    Since $\ddiamond{\prepeat{\beta}}{\postcondf}$ is a least pre-fixpoint \cite{DBLP:journals/tocl/Platzer15}, for all \dL formulas $\psi$ with \(\freevars{\psi}\subseteq\freevars{\ddiamond{\prepeat{\beta}}{\postcondf}}\):
    \[
    \entails \lforall{\vec{x}}{(\postcondf\lor\ddiamond{\beta}{\psi}\limply\psi)} \limply (\ddiamond{\prepeat{\beta}}{\postcondf} \limply\psi)
    \]
    In particular, this holds for a fresh predicate symbol $p$ with arguments $\vec{x}$:
    \[
    \entails \lforall{\vec{x}}{(\postcondf\lor\ddiamond{\beta}{p(\vec{x})}\limply p(\vec{x}))} \limply (\ddiamond{\prepeat{\beta}}{\postcondf} \limply p(\vec{x}))
    \]
    Using \m{\entails \precondf \limply \ddiamond{\prepeat{\beta}}{\postcondf}}, this implies
    \[
    \entails \lforall{\vec{x}}{(\postcondf\lor\ddiamond{\beta}{p(\vec{x})}\limply p(\vec{x}))} \limply (\precondf \limply p(\vec{x}))
    \]
    As \((\lforall{\vec{x}}{(\postcondf\lor\ddiamond{\beta}{p(\vec{x})}\limply p(\vec{x}))} \limply (\precondf \limply p(\vec{x}))) \prec \phi\), because, even if the formula complexity increased, the structural complexity of the hybrid programs decreased, since $\phi$ has one more loop, this fact is derivable by IH:
    \[
    \infers[\Oracle] \lforall{\vec{x}}{(\postcondf\lor\ddiamond{\beta}{p(\vec{x})}\limply p(\vec{x}))} \limply (\precondf \limply p(\vec{x}))
    \]
    The uniform substitution \(\sigma=\usubstlist{\usubstmod{p(\vec{x})}{\ddiamond{\prepeat{\beta}}{\postcondf}}}\) is admissible since \(\freevars{\sigma}=\emptyset\) as $\ddiamond{\prepeat{\beta}}{\postcondf}$ has free variables $\vec{x}$.
    Since, furthermore, $p\not\in \intsigns{\precondf}\cup\intsigns{\postcondf}\cup\intsigns{\beta}$, \irref{US} derives:
    \begin{sequentdeduction}[array]
    \linfer[US]
      {\lsequent{} {\lforall{\vec{x}}{(\postcondf\lor\ddiamond{\beta}{p(\vec{x})}\limply p(\vec{x}))} \limply (\precondf \limply p(\vec{x}))}}
      {\lsequent{} {\lforall{\vec{x}}{(\postcondf\lor\ddiamond{\beta}{\ddiamond{\prepeat{\beta}}{\postcondf}}\limply \ddiamond{\prepeat{\beta}}{\postcondf})} \limply (\precondf \limply \ddiamond{\prepeat{\beta}}{\postcondf})}}
    \end{sequentdeduction}
    The dual \(\ddiamond{\prepeat{a}}{p(\usall)} \lbisubjunct p(\usall) \lor \ddiamond{a}{\ddiamond{\prepeat{a}}{p(\usall)}}\) resulting from axiom \irref{iterateb} with axiom \irref{diamond} by \rref{lem:surjectiveaxiom} continues this derivation by \rref{lem:surjectiveaxiom}:
\renewcommand{\linferPremissSeparation}{~}%

\begin{minipage}{\textwidth}
\advance\leftskip-1.1cm
\begin{minipage}{\textwidth}
    \begin{sequentdeduction}[Hilbert+array]
    \linfer[MP]
    {
    \lsequent{} {\lforall{\vec{x}}{(\postcondf\lor\ddiamond{\beta}{\ddiamond{\prepeat{\beta}}{\postcondf}}\limply \ddiamond{\prepeat{\beta}}{\postcondf})} \limply (\precondf \limply \ddiamond{\prepeat{\beta}}{\postcondf})}
    !
    \linfer[gena]
    {\linfer[iterateb+diamond]
      {\lclose}
      {\lsequent{} {\postcondf\lor\ddiamond{\beta}{\ddiamond{\prepeat{\beta}}{\postcondf}} \limply \ddiamond{\prepeat{\beta}}{\postcondf}}}
    }
    {\lsequent{} {\lforall{\vec{x}}{(\postcondf\lor\ddiamond{\beta}{\ddiamond{\prepeat{\beta}}{\postcondf}} \limply \ddiamond{\prepeat{\beta}}{\postcondf})}}}
    }
    {\lsequent{\precondf} {\ddiamond{\prepeat{\beta}}{\postcondf}}}
    \end{sequentdeduction}
    \end{minipage}
    \end{minipage}
    Observe that rule \irref{gena} (and \irref{MP}) instantiates as needed with \irref{USR} by \rref{lem:surjectiveaxiom}.
    \end{enumerate}
    \noindent
    This concludes the derivation of \rref{eq:derivable-triple}, because all operators \m{\dmodality{\alpha}{}} for the form \rref{eq:derivable-triple} have been considered.
    From \rref{eq:derivable-triple}, which is 
    \m{\infers[\Oracle] \lnot\phi_1 \limply \dmodality{\alpha}{\phi_2}}, hence,
    \m{\infers[\Oracle] \phi_1 \lor \dmodality{\alpha}{\phi_2}}
    derives propositionally.
    
  \end{enumerate}
This completes the proof of completeness (\rref{thm:dL-complete}), because all syntactical forms of \dL formulas have been covered.
\qedhere
\end{proofatend}
\noindent
With the expected exceptions of loops and differential equations, the proof of \rref{thm:dL-complete} confirms that successive unification with axiom keys gives a complete proof strategy.
The search for applicable positions is deterministic using recursive computations as in \rref{ex:diffind-forward-proof}.
Loops and differential equations need corresponding (differential) invariant search using parametric predicates $\mydiffcond$ as in \rref{ex:diffind-free-parametric-proof}.

This result proves that a very simple mechanism, essentially the single proof rule of uniform substitution, makes it possible to prove differential dynamic logic formulas from a parsimonious soundness-critical core with a few concrete formulas as axioms and without losing the completeness that axiom schema calculi enjoy.

\section{Conclusions}

Uniform substitutions lead to a simple and modular proof calculus that is entirely based on axioms and axiomatic rules, instead of soundness-critical schema variables with side-conditions in axiom schemata and proof rules.
The \irref{US} calculus is straightforward to implement, since axioms are just formulas and axiomatic rules are just pairs of formulas and since the uniform substitutions themselves have a straightforward recursive definition.
The key ingredient enabling such modularity for differential equations are differential forms that have a local semantics and make it possible to reduce reasoning about \emph{differential equations} to local reasoning about (inequalities or) \emph{equations of differentials}.
The increased modularity also enables flexible reasoning by fast contextual equivalence that uniform substitutions provide almost for free.

Overall, uniform substitutions lead to a simple and modular, sound and complete proof calculus for differential dynamic logic that is entirely based on axioms and axiomatic rules.
Prover implementations merely reduce to uniform substitutions using the static semantics, starting from one copy of each axiom and axiomatic rule.
This leads to significantly simpler and more parsimonious implementations.
The soundness-critical core of the uniform substitution prover \KeYmaeraX \cite{DBLP:conf/cade/FultonMQVP15}, for example, is 2.5\% of the size of the core of the sequent calculus prover \KeYmaera \cite{DBLP:conf/cade/PlatzerQ08}, which, even if implemented in a different programming language, has more complex implementations of proof rules and schema variable matching or built-in operators.

\appendix

\section{Proofs} \label{app:proofs}
The proofs use the following classical results, where $\gradient{g}$ denotes the gradient of the function $g$ so the vector of all partial derivatives (if it exists).
\begin{lemma}[{Mean-value theorem \cite[\S10.10]{Walter:Ana1}}] \label{lem:mean-value}
  If \(f:[a,b]\to\reals\) is continuous and differentiable in $(a,b)$,
  then there is a $\xi\in(a,b)$ such that
  \[
  f(b)-f(a) = f'(\xi) (b-a)
  \]
\end{lemma}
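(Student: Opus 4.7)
The plan is to derive the Mean-value theorem from Rolle's theorem by subtracting off the secant line. First I would establish Rolle's theorem: if $g:[a,b]\to\reals$ is continuous, differentiable on $(a,b)$, and satisfies $g(a)=g(b)$, then there exists $\xi\in(a,b)$ with $g'(\xi)=0$. This rests on two classical facts. The extreme value theorem guarantees that $g$, being continuous on the compact interval $[a,b]$, attains both its maximum $M$ and its minimum $m$. If $M=m$ then $g$ is constant and any interior point works. Otherwise, since $g(a)=g(b)$, at least one of the extreme values differs from the common boundary value and is therefore attained at an interior point $\xi\in(a,b)$, at which Fermat's interior extremum principle (a one-line sign argument on left- and right-hand difference quotients) forces $g'(\xi)=0$.

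Given Rolle's theorem, the Mean-value theorem reduces to applying it to the auxiliary function
\[
g(x) = f(x) - \frac{f(b)-f(a)}{b-a}\,(x-a),
\]
which is continuous on $[a,b]$ and differentiable on $(a,b)$ as a linear combination of $f$ with an affine function. A direct computation shows $g(a)=f(a)=g(b)$, so Rolle's theorem supplies some $\xi\in(a,b)$ with $g'(\xi)=0$. Since $g'(\xi) = f'(\xi) - \frac{f(b)-f(a)}{b-a}$, rearranging yields exactly $f(b)-f(a) = f'(\xi)(b-a)$.

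The only genuinely non-elementary obstacle is the invocation of the extreme value theorem inside Rolle's theorem, which rests on the topological completeness of $\reals$ (via Heine--Borel compactness of $[a,b]$, or equivalently via Bolzano--Weierstrass applied to a maximizing sequence). Everything else is algebraic rearrangement together with the elementary sign argument underlying Fermat's principle. Since neither ingredient is specific to the present setting, in this article the result is cited as a classical fact from Walter's analysis rather than reproved in situ.
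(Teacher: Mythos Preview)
Your proof via Rolle's theorem is correct and is the standard textbook argument, and your closing observation is exactly right: the paper does not reprove this lemma at all but simply cites it as a classical result from Walter's analysis, listing it alongside the chain rule and Picard--Lindel\"of as background facts used in the appendix proofs. There is nothing to compare since the paper offers no proof of its own.
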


\begin{lemma}[{Chain rule \cite[\S3.10]{Walter:Ana2}}] \label{lem:chain}
  If \(f:U\to\reals^m\) is differentiable at $t\in U\subseteq\reals$
  and \(g:V\to\reals\), with \(f(U)\subseteq V\subseteq\reals^m\), is differentiable at $f(t)\in V$,
  then \(g\compose f:U\to\reals\) is differentiable at $t$ with derivative
  \[
  \D{(g\compose f)}(t) = (\gradient{g})\big(f(t)\big) \stimes \D{f}(t)
  = \sum_{j=1}^m \Dp[y_j]{g}\big(f(t)\big)\D{f_j}(t)
  \]
\end{lemma}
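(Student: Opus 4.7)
The plan is to derive the chain rule from the definitions of differentiability of $f$ and $g$, following the standard approach from real analysis. By hypothesis, $f$ is differentiable at $t$, so there is a remainder $r_f$ with $f(t+h)=f(t)+\D{f}(t)\,h+r_f(h)$ and $r_f(h)/h \to 0$ as $h\to0$. Likewise, since $g$ is differentiable at $y_0\mdefeq f(t)$, there is a remainder $r_g$ on $\reals^m$ with $g(y_0+k)=g(y_0)+(\gradient{g})(y_0)\stimes k+r_g(k)$ and $r_g(k)/\norm{k} \to 0$ as $k\to0$. The idea is to set $k=k(h)\mdefeq f(t+h)-f(t)=\D{f}(t)\,h+r_f(h)$ and substitute into the expansion of $g$.

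First I would compute $(g\compose f)(t+h)-(g\compose f)(t)=g(y_0+k(h))-g(y_0)=(\gradient{g})(y_0)\stimes k(h)+r_g(k(h))$. Expanding $k(h)$ linearly, this splits into the linear contribution $(\gradient{g})(f(t))\stimes\D{f}(t)\,h$ and a composite remainder $R(h)\mdefeq (\gradient{g})(f(t))\stimes r_f(h)+r_g(k(h))$. The first term of $R(h)$ is $o(h)$ since $r_f(h)=o(h)$ and $(\gradient{g})(f(t))$ is a constant vector, so only the second term requires real care.

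The main obstacle, and the step where one must be careful, is bounding $r_g(k(h))$ by $o(h)$. The key observation is that differentiability of $f$ at $t$ implies $\norm{k(h)}=\norm{\D{f}(t)\,h+r_f(h)}\leq(\norm{\D{f}(t)}+1)\abs{h}$ for all sufficiently small $h$, so $\norm{k(h)}=O(\abs{h})$ and in particular $k(h)\to0$ as $h\to0$. Combining $r_g(k)/\norm{k}\to0$ with $\norm{k(h)}/\abs{h}$ bounded then yields $r_g(k(h))/h\to 0$, i.e.\ $r_g(k(h))=o(h)$. Together this gives $R(h)=o(h)$, establishing that $g\compose f$ is differentiable at $t$ with derivative $(\gradient{g})(f(t))\stimes\D{f}(t)$, which is exactly $\sum_{j=1}^m \Dp[y_j]{g}(f(t))\,\D{f_j}(t)$ by expanding the scalar product in coordinates. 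Since this is a classical textbook result, in the paper itself I would simply cite Walter rather than spelling this out, as is done in the statement.
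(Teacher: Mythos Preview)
Your proof sketch is a correct standard derivation of the chain rule, and you correctly anticipate the paper's treatment: the paper does not prove this lemma at all but simply states it as a classical result with a citation to Walter. So there is nothing to compare beyond noting that your final remark matches exactly what the paper does.
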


\begin{theorem}[{Global uniqueness theorem of Picard-Lindel\"of \cite[\S10.VII]{Walter:ODE}}] \label{thm:PicardLindelof-global}
  Let~$f:\interval{[0,a]}\times\reals^n\to\reals^n$ be a continuous function that is Lipschitz continuous with respect to~$y$ and let $y_0\in\reals^n$.
  Then, there is a unique solution of the following initial value problem on~$\interval{[0,a]}$
  \begin{equation*}
    \D{y}(t)=f(t,y)
    \qquad
    y(0)=y_0
  \end{equation*}
\end{theorem}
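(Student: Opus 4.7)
The plan is to reduce the differential initial value problem to a fixed-point problem for an integral operator on a suitable Banach space of continuous functions, and then invoke the Banach fixed point theorem. Concretely, I would set $X \mdefeq C([0,a],\reals^n)$ and define the Picard operator $T:X\to X$ by
\[
  (Ty)(t) \mdefeq y_0 + \int_0^t f(s,y(s))\,ds.
\]
Continuity of $f$ together with continuity of $y$ makes the integrand continuous on $[0,a]$, so $Ty\in X$ is well-defined. The key observation driving the proof is the equivalence between the IVP on $[0,a]$ and the fixed-point equation $y=Ty$: by the fundamental theorem of calculus, any $C^1$ solution of $\D{y}=f(t,y)$ with $y(0)=y_0$ satisfies $y=Ty$, and conversely any continuous fixed point of $T$ is automatically $C^1$ and solves the IVP.

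To obtain both existence and uniqueness globally on the whole interval $[0,a]$ in one stroke, I would endow $X$ with the Bielecki (weighted supremum) norm
\[
  \norm{y}_L \mdefeq \sup_{t\in[0,a]} e^{-2Lt}\,\norm{y(t)},
\]
where $L$ is the Lipschitz constant of $f$ with respect to $y$. Since $e^{-2La}\norm{y}_\infty \leq \norm{y}_L \leq \norm{y}_\infty$, this is equivalent to the ordinary sup norm and hence makes $(X,\norm{\cdot}_L)$ a Banach space. The main computation is the contraction estimate: for $y_1,y_2\in X$ and $t\in[0,a]$,
\[
  \norm{(Ty_1)(t)-(Ty_2)(t)} \leq \int_0^t L\,\norm{y_1(s)-y_2(s)}\,ds
  \leq L\,\norm{y_1-y_2}_L \int_0^t e^{2Ls}\,ds
  \leq \tfrac{1}{2} e^{2Lt}\,\norm{y_1-y_2}_L,
\]
so multiplying by $e^{-2Lt}$ and taking the supremum yields $\norm{Ty_1-Ty_2}_L \leq \tfrac{1}{2}\norm{y_1-y_2}_L$. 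Thus $T$ is a contraction on $X$, and the Banach fixed point theorem delivers a unique $y\in X$ with $Ty=y$, which is the desired unique solution on $[0,a]$.

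The most delicate step is the contraction estimate, which is exactly where the Bielecki weighting pays off: a naive sup-norm argument only gives $\norm{Ty_1-Ty_2}_\infty \leq L a\,\norm{y_1-y_2}_\infty$, which forces $La<1$ and hence only a local solution, requiring a separate concatenation argument to patch local solutions together and a separate argument for uniqueness across overlapping intervals. The weighted norm absorbs the interval length into the exponential factor, giving the contraction constant $\tfrac{1}{2}$ regardless of $a$, so global existence \emph{and} global uniqueness fall out simultaneously from a single fixed point. The only other mild point to check is that the fixed point $y$ is differentiable with $\D{y}(t)=f(t,y(t))$ on $[0,a]$, which follows from $y=Ty$ by the fundamental theorem of calculus using continuity of $s\mapsto f(s,y(s))$.
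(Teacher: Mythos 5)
Your proof is correct: the reduction of the initial value problem to the fixed-point equation for the Picard integral operator, the Bielecki exponentially weighted supremum norm making that operator a contraction with constant $1/2$ independently of the interval length $a$, and the Banach fixed point theorem together yield global existence and uniqueness on all of $[0,a]$ in one step. The paper does not prove this statement itself but imports it as a classical result with a citation to Walter (\S 10.VII), where the argument is essentially the same weighted-norm contraction, so your proof matches the standard approach of the cited source.
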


\printproofs

\section*{Acknowledgment}
This material is based upon work supported by the National Science Foundation under
NSF CAREER Award CNS-1054246.

The views and conclusions contained in this document are those of the author and should not be interpreted as representing the official policies, either expressed or implied, of any sponsoring institution or government.
Any opinions, findings, and conclusions or recommendations expressed in this publication are  those of the author(s) and do not necessarily reflect the views of any sponsoring institution or government.

\bibliographystyle{plainurl}\bibliography{platzer,bibliography}
\end{document}